\title{Algebras of Quantum Variables for Loop Quantum Gravity\\[5pt]
\textbf{III. The holonomy-flux cross-product $C^*$-algebra}}
\author{Diana Kaminski\\[3pt]
kaminski@math.uni-paderborn.de\\ 
\small{Europe - Germany}}
\date{August 19, 2011}
\newcommand{\Ab}{\begin{large}\bar{\mathcal{A}}\end{large}}
\newcommand{\Alg}{\begin{large}\mathfrak{A}\end{large}}
\newcommand{\Aut}{\begin{large}\mathfrak{Aut}\end{large}}
\newcommand{\B}{\mathfrak{B}}
\newcommand{\BAlg}{\begin{large}\mathfrak{B}\end{large}}
\newcommand{\CB}{\mathbb{C}}
\newcommand{\CD}{\mathcal{C}}
\newcommand{\E}{\mathcal{E}}
\newcommand{\Goid}{\mathcal{G}}
\newcommand{\Gop}{\mathbbm{G}}
\newcommand{\GG}{G}
\newcommand{\HS}{\mathcal{H}}
\newcommand{\KD}{\mathcal{K}}
\newcommand{\la}{\langle}
\newcommand{\LD}{\mathcal{L}}
\newcommand{\N}{\mathbb{N}}
\newcommand{\op}{\mathfrak{o}}
\newcommand{\PD}{\mathcal{P}}
\newcommand{\ra}{\rangle}
\newcommand{\FD}{\mathcal{F}}
\newcommand{\R}{\mathbb{R}}
\newcommand{\SimGroup}{\mathfrak{G}}
\newcommand{\surf}{\mathbb{S}}
\newcommand{\WF}{\mathfrak{W}}
\newcommand{\ZD}{\mathcal{Z}}
\newcommand{\Zop}{\mathcal{Z}(\Gop_{\breve S,\gamma})}
\newcommand{\ho}{\mathfrak{h}}
\newcommand{\go}{\mathfrak{g}}
\DeclareMathOperator{\Act}{Act}
\DeclareMathOperator{\dif}{d}
\DeclareMathOperator{\diff}{surf}
\DeclareMathOperator{\disc}{d}
\DeclareMathOperator{\Hom}{Hom}
\DeclareMathOperator{\id}{id}
\DeclareMathOperator{\loc}{loc}
\DeclareMathOperator{\Map}{Map}
\DeclareMathOperator{\Mor}{Mor}
\DeclareMathOperator{\ori}{or}
\DeclareMathOperator{\Rep}{Rep}
\DeclareMathOperator{\tr}{tr}
\newcommand{\gp}{{\gamma^\prime}}
\newcommand{\gpe}{{\gamma^\prime_1}}
\newcommand{\gpz}{{\gamma^\prime_2}}
\newcommand{\tg}{{\tilde\gamma}}
\newcommand{\Gp}{{\Gamma^\prime}}
\newcommand{\Gpp}{\Gamma^{\prime\prime}}
\newcommand{\Gppp}{\Gamma^{\prime\prime\prime}}
\newcommand{\hgi}{\ho_\Gamma(\gamma_1)}
\newcommand{\hgnn}{\ho_\Gamma(\gamma_N)}
\newcommand{\idf}{\mathbbm{1}}
\newcommand{\bra}{[}
\newcommand{\ket}{]}
\newcommand{\beq}{\begin{equation}\begin{aligned}}
\newcommand{\beqs}{\begin{equation*}\begin{aligned}}
\newcommand{\be}{\begin{flalign}}
\newcommand{\bes}{\begin{equation*}}
\newcommand{\eq}{\end{aligned}\end{equation}}
\newcommand{\eqs}{\end{aligned}\end{equation*}}
\newcommand{\ee}{\end{flalign}}
\newcommand{\ees}{\end{equation}}
\newcommand{\limi}{\underset{i\rightarrow\infty}{\underrightarrow{\lim}}}
\newcommand{\limPDi}{\underset{\PD_{\Gamma_i}\in \PD}{\underrightarrow{\lim}}}
\newtheorem{theo}{Theorem }[section]
\newtheorem{lem}[theo]{Lemma}
\newtheorem{rem}[theo]{Remark}
\newtheorem{prop}[theo]{Proposition}
\newtheorem{cor}[theo]{Corollary}
\newtheorem{defi}[theo]{Definition}
\newenvironment{proofs}[1][Proof ]{\noindent\textbf{#1}: }{\ \begin{flushright}
                                                                         \rule{0.5em}{0.5em}
                                                                        \end{flushright}}
\newcounter{exa}[section]
 \newenvironment{exa}{\refstepcounter{exa}
  \textbf{Example} \thesection.\arabic{exa}: }{ {\begin{flushright}
                                                                         \rule{0.2em}{0.2em}
                                                                        \end{flushright}}}
\newcounter{problem}[subsection]
 \newenvironment{problem}{\refstepcounter{problem}
  \textbf{Problem} \thesection.\arabic{problem}: }{{\begin{flushright}
                                                                         \rule{0.2em}{0.2em}
                                                                        \end{flushright}}}
\newcommand{\GGi}{\xymatrix{
  \Goid_1  \ar@<-2pt>[r] \ar@<2pt>[r] &  \Goid^0_1    \\
}}
\newcommand{\GGii}{\xymatrix{
  \Goid_2  \ar@<-2pt>[r] \ar@<2pt>[r] &  \Goid^0_2    \\
}}
\newcommand{\GGm}{\xymatrix{
  \Goid  \ar@<-1pt>[r]^{s} \ar@<1pt>[r]_{t} &  \Goid^0    \\
}}
\newcommand{\GGim}{\xymatrix{
  \Goid_1  \ar@<-1pt>[r]^{s_1} \ar@<1pt>[r]_{t_1} &  \Goid^0_1    \\
}}
\newcommand{\GGiim}{\xymatrix{
  \Goid_2  \ar@<-1pt>[r]^{s_2} \ar@<1pt>[r]_{t_2} &  \Goid^0_2    \\
}}
\newcommand{\PGm}{\xymatrix{
  \PD  \ar@<-1pt>[r]^{s} \ar@<1pt>[r]_{t} &  \Sigma    \\
}}
\newcommand{\PGoS}{\PD\rightrightarrows\Sigma}
\newcommand{\fPGm}{\xymatrix{
  \PD_\Gamma  \ar@<-1pt>[r]^{s} \ar@<1pt>[r]_{t} &  V_\Gamma    \\
}}
\newcommand{\PGsm}{\xymatrix{
  \PD\Sigma \ar@<-1pt>[r]^{s_{\PD\Sigma}} \ar@<1pt>[r]_{t_{\PD\Sigma}} &  \Sigma   \\
}}
\newcommand{\fPGms}{\xymatrix{
  \PD^s_\Gamma  \ar@<-1pt>[r]^{s} \ar@<1pt>[r]_{t} &  V_\Gamma    \\
}}
\newcommand{\fPG}{\PD_\Gamma\Sigma \rightrightarrows V_\Gamma
}
\newcommand{\fPSGm}{\xymatrix{
  \PD_\Gamma\Sigma  \ar@<-1pt>[r]^{s} \ar@<1pt>[r]_{t} &  V_\Gamma    \\
}}
\newcommand{\fPSG}{\xymatrix{
  \PD_\Gamma\Sigma  \ar@<-2pt>[r] \ar@<2pt>[r] &  V_\Gamma    \\
}}
\newcommand{\fgHGm}{\xymatrix{
  H(\Gamma)  \ar@<-1pt>[r]^/0.3em/{\hat s_H} \ar@<1pt>[r]_/0.3em/{\hat t_H} &  V_\Gamma    \\
}}
\newcommand{\fHGm}{\xymatrix{
  H_\Gamma  \ar@<-1pt>[r]^/0.3em/{\hat s_H} \ar@<1pt>[r]_/0.3em/{\hat t_H} &  V_\Gamma    \\
}}
\newcommand{\fGGm}{\xymatrix{
  \G^G_\Gamma  \ar@<-1pt>[r]^/0.3em/{s_P} \ar@<1pt>[r]_/0.3em/{t_P} &  V_\Gamma    \\
}}
\newcommand{\fGHm}{\xymatrix{
  \G^H_\Gamma  \ar@<-1pt>[r]^/0.3em/{s_P} \ar@<1pt>[r]_/0.3em/{t_P} &  V_\Gamma    \\
}}
\newcounter{count}
\newcommand{\citetableF}{\cite[table 11.6]{KaminskiPHD}}
\begin{document}
\maketitle
\begin{abstract}\noindent In this article a new $C^*$-algebra derived from the basic quantum variables: holonomies along paths and group-valued quantum flux operators in the framework of Loop Quantum Gravity is constructed. This development is based on the theory of cross-products and $C^*$-dynamical systems. In \cite{Kaminski1} the author has presented a set of actions of the flux group associated to a surface set on the analytic holonomy $C^*$-algebra, which define $C^*$-dynamical systems. These objects are used to define the holonomy-flux cross-product $C^*$-algebra associated to a surface set. Furthermore surface-preserving path- and graph-diffeomorphism-invariant states of the new $C^*$-algebra are analysed. Finally the holonomy-flux cross-product $C^*$-algebra is extended such that the graph-diffeomorphisms generate among other operators the holonomy-flux-graph-diffeomorphism cross-product $C^*$-algebra associated to a surface set. 
\end{abstract}

\thispagestyle{plain}
\pdfbookmark[0]{\contentsname}{toc}
\tableofcontents

\section{Introduction}

In \cite{Kaminski0,KaminskiPHD} the quantum configuration variables, the  quantum momentum variables and the spatial diffeomorphisms have been introduced briefly. These objects have been used to define the \textit{Weyl $C^*$-algebra for surfaces} in \cite{Kaminski1,KaminskiPHD}.  In this article the quantum variables are presented in section \ref{sec quantvar}. The quantum configuration variables are \textit{holonomies along paths in a graph}. The finite set of subgraphs of a graph forms a \textit{finite graph system}. The configuration space is denoted by $\Ab_\Gamma$ and is naturally identified with $G^{\vert\Gamma\vert}$, where $G$ is the structure group of a principal fibre bundle $P(\Sigma,G,\pi)$ and $\vert\Gamma\vert$ denotes the number of independent edges of the graph $\Gamma$. For generality it is assumed that, the structure group $G$ is a unimodular locally compact group. The quantum momentum variables are given by the \textit{group-valued quantum flux operators}, which depend on a surface and a graph. For a certain surface set these operators form a group, which is called the \textit{flux group associated to a surface set and a graph} and which is denoted by $\bar G_{\breve S,\Gamma}$. For a certain fixed surface set this group is identified with $G^M$, where $M \leq \vert\Gamma\vert$.

The Weyl algebra of Quantum Geometry \cite{Fleischhack06} or the Weyl algebra for surfaces \cite{Kaminski1},\cite[Chapt. 6]{KaminskiPHD} are not the only $C^*$-algebras which will be constructed from the quantum configuration and momentum operators of the theory of Loop Quantum Gravity. The significant choice for a construction of the Weyl algebra for surfaces was the requirement of the group-valued quantum flux operators to be unitary Hilbert space operators. If this choice is not made, then the flux operators can be represented on a Hilbert space by the \textit{generalised group-valued quantum flux operators}, which are given by the integrated representations of the flux group associated to a surface set. In this article even more general objects, which are given by algebra-valued functions depending on the flux group associated to a surface set, are introduced. The algebras are derived from these objects are cross-product $C^*$-algebras, which have been studied intensively by Williams \cite{Williams07}, Hewitt and Ross \cite{HewittRoss} or Pedersen \cite{Pedersen}. For a short overview refer to Blackadar \cite{Blackadar}. 

\paragraph*{Algebras derived from either quantum configuration or momentum variables\\}\hspace*{10pt}

Let $G$ be a locally compact unimodular group. To start with consider the quantum momentum space, which contains all flux groups associated to surfaces. A certain flux group $\bar G_{\breve S,\Gamma}$ associated to a fixed surface set $\breve S$ can be identified with $G^{\vert\Gamma\vert}$. 
Then the following algebras are studied in section \ref{subsec fluxgroupalg}:
The \textit{convolution flux $^*$-algebra $\CD(\bar G_{\breve S,\Gamma})$ associated to a surface set and a graph}. This algebra is for a  in general a non-commutative $^*$-algebra. Moreover, the \textit{flux group $C^*$-algebra $C^*(\bar G_{\breve S,\Gamma})$ associated to a surface set and a graph} is derived from the generalised group-valued quantum flux operators. 
Finally, a particular cross-product $C^*$-algebra is given by the \textit{flux transformation group $C^*$-algebra $C^*(\bar G_{\breve S,\Gamma},\bar G_{\breve S,\Gamma})$ associated to a surface set and a graph}, which contains algebra-valued functions on the flux group $\bar G_{\breve S,\Gamma}$. 

It is assumed that, the configuration space $\Ab_\Gamma$ restricted to a fixed graph system $\PD_\Gamma$ is naturally identified with $G^{\vert\Gamma\vert}$. Then the \textit{convolution holonomy $^*$-algebra $\CD(\Ab_\Gamma)$ associated to a graph}, the \textit{non-commutative holonomy $C^*$-algebra $C^*(\Ab_\Gamma)$ associated to a graph} and the \textit{heat-kernel holonomy $C^*$-algebra $C^*(\Ab_\Gamma,\Ab_\Gamma)$ associated to a graph} is introduced. Note that, the analytic holonomy $C^*$-algebra $C(\Ab_\Gamma)$ differs from the non-commutative holonomy $C^*$-algebra $C^*(\Ab_\Gamma)$ in the multiplication operation and involution.

The construction of cross-products for the quantum configuration variables restricted to a graph is related to the observation, which has been noticed by Ashtekar and Lewandowski \cite{AshLewDG95}. In the context of heat kernels the authors Lewandowski and Ashtekar \cite[section 6.2]{AshLewDG95} have presented an object, which can be understood as a generalised heat kernel representation $\pi_I^H$ of the non-commutative holonomy $C^*$-algebra $C^*(\Ab_\Gamma)$ associated to a graph $\Gamma$ on the Hilbert space $\HS_\Gamma:=L^2(\Ab_\Gamma,\dif\mu_\Gamma)$. This representation is given by
\beq \pi_I^{H}(\rho_{t,\Gamma})\psi_\Gamma
&=\int_{\Ab_\Gamma}\dif\mu_{\Gamma}(\hat\ho_\Gamma)\rho_{t,\Gamma}(\hat\ho_\Gamma^{-1}\ho_\Gamma)\psi_\Gamma(\ho_\Gamma)\\
&= \rho_{t,\Gamma}\ast \psi_\Gamma
\eq where $\hat\ho_\Gamma,\ho_\Gamma$ are two different holonomies along paths of a graph $\Gamma$, $\rho_{t,\Gamma}\in C^*(\Ab_\Gamma)$ and $\psi_\Gamma\in\HS_\Gamma$. 

The inductive limit of the inductive family $\{C^*(\Ab_\Gamma), \beta_{\Gamma,\Gp}\}$ of $C^*$-algebras is called the \textit{non-commutative holonomy $C^*$-algebra} $C^*(\Ab)$. Furthermore the inductive limit $C^*$-algebra of the inductive family\\ $\{C^*(\Ab_\Gamma,\Ab_\Gamma),\tilde\beta_{\Gamma,\Gp}\}$ is called the \textit{heat-kernel holonomy $C^*$-algebra} $C^*(\Ab,\Ab)$. 

\paragraph*{Algebras derived from quantum configuration and momentum variables\\}\hspace*{10pt}

In section \ref{subsec holflux} algebras are constructed from holonomy along paths and group-valued quantum fluxes by using $C^*$-dynamical systems. The concept of $C^*$-dynamical systems in LQG has been introduced in \cite{Kaminski1,KaminskiPHD}. In particular the analytic holonomy $C^*$-algebra restricted to a finite graph system $\PD_\Gamma$ and an action $\alpha$ of a certain flux group $\bar G_{\breve S,\Gamma}$ associated to a surface set $\breve S$ on this algebra, form a $C^*$-dynamical system. In this article different holonomy-flux cross-product $C^*$-algebras associated to certain surface sets are constructed for these $C^*$-dynamical systems.

Then the \textit{holonomy-flux cross-product $C^*$-algebra $C_0(\Ab_\Gamma)\rtimes_\alpha \bar G_{\breve S,\Gamma}$  associated to a surface set and a graph} contains $C_0(\Ab_\Gamma)$-valued functions depending on the flux group $\bar G_{\breve S,\Gamma}$. Let $G$ be a compact group. Then there is an limit $C^*$-algebra of the inductive family $\{C(\Ab_\Gamma)\rtimes_\alpha \bar G_{\breve S,\Gamma},\hat\beta_{\Gamma,\Gp}\}$, which is called the the \textit{holonomy-flux cross-product $C^*$-algebra} $C(\Ab)\rtimes_\alpha \bar G_{\breve S}$ associated to a surface set. 

Since there are many different $C^*$-dynamical systems presented in \cite{Kaminski1,KaminskiPHD}, there are a lot of different holonomy-flux cross-product $C^*$-algebras associated to suitable surface sets. These algebras are compared with the Weyl $C^*$-algebra for surfaces in section \ref{end}. In particular in theorem \ref{prop multilpiercrossprod} it is proved that, the \textit{multiplier algebra of the holonomy-flux cross-product $C^*$-algebra associated to a certain surface set and a graph $\Gamma$} contains all operators, which are contained in the other cross-product $C^*$-algebras asociated to suitable surface sets and the graph $\Gamma$, in the analytic holonomy $C^*$-algebra restricted to the finite graph system $\PD_\Gamma$ and all Weyl elements for suitable surface sets and the graph $\Gamma$. 

All algebras presented in the previous paragraphs are constructed from the basic quantum variables, which are given by the holonomy along paths and the group-valued quantum fluxes. Hence they are possible algebras of a quantum theory of gravity. 

\paragraph*{Simplified algebras derived from certain quantum configuration and momentum variables\\}\hspace*{10pt}

If both quantum variables: the quantum configuration and momentum variables restricted to a fixed graph $\Gamma$ and a fixed suitable surface set $\breve S$ are considered simultaneously, then the following simplifications can be studied. 

In section \ref{subsec transformalg} the flux transformation $C^*$-algebra associated to a surface set and a graph is presented. Similarly the flux group of a fixed graph $\Gamma$ and a fixed suitable surface set $\breve S$ and the configuration space $\Ab_\Gamma$ are identified with $G^{\vert\Gamma\vert}$. Hence in both cases the cross-product $C^*$-algebras are simplified to $C_0(G^{\vert\Gamma\vert})\rtimes_\alpha G^{\vert\Gamma\vert}$. 

Then it is verified in theorem \ref{Generalised Stone- von Neumann theorem} that the cross-product $C^*$-algebra $C_0(G^{\vert\Gamma\vert})\rtimes_\alpha G^{\vert\Gamma\vert}$ is Morita equivalent to the $C^*$-algebra of compact operators on the Hilbert space $L^2(G^{\vert\Gamma\vert},\mu_\Gamma)$, where $\mu_\Gamma$ denotes the product of Haar measures. Therefore the representation theory of both $C^*$-algebras is the same and, hence, there is only one irreducible representation of the cross-product $C^*$-algebra up to unitary equivalence. 

But this identification is only true for certain surface sets. The cross-product $C^*$-algebra is derived from the quantum momentum variables, which depend on different surface sets. In particular the flux group associated to a suitable surface set is identified with a product group $G^M$ where $M\leq \vert\Gamma\vert$. Then there exists a purely left (or right) action of $G^M$ on the $C^*$-algebra $C_0(G^{\vert\Gamma\vert})$. For $M< \vert\Gamma\vert$ a Morita equivalent $C^*$-algebra is not found in this project. In theorem \ref{theo moritaequivgroup} a Morita equivalent algebra for the $C^*$-algebra $C_0(G^{N})\rtimes_\alpha G^{M}$ whenever $N<M$, is given.

In this article the general case of arbitrary surfaces is studied. Hence the quantum configuration and the momentum variables of the theory are manifestly distinguished from each other. The quantum configuration variables only depends on graphs and holonomy mappings, whereas the quantum momentum variables depend on graphs, maps from graphs to products of the structure group and the intersection behavior of the paths of the graphs and surfaces. But nevertheless the elements of the holonomy-flux cross-product $C^*$-algebra are understood as compact operators on the flux group associated to a surface set with values in the analytic holonomy $C^*$-algebra restricted to a graph, which are acting on the Hilbert space $L^2(\Ab_\Gamma,\mu_\Gamma)$. 

\paragraph*{States of algebras derived from certain quantum configuration and momentum variables\\}\hspace*{10pt}

There exists several inductive limit holonomy-flux cross-product $C^*$-algebras, which are given by the inductive families of holonomy-flux cross-product $C^*$-algebras associated to graphs and a suitable surface set. The states on these algebras always depend on the choice of the surface set and, hence, they are not path- or graph-diffeomorphism invariant. 

\paragraph*{Algebras derived from quantum configuration and momentum variables and quantum spatial diffeomorphisms\\}\hspace*{10pt}

In the last paragraphs new $C^*$-algebras of a special kind have been constructed. All these algebras are based on new operators, which are more general than group-valued quantum flux operators and which take in particluar values in the analytic holonomy $C^*$-algebra. Until now the quantum diffeomorphisms are implemented only as automorphisms on these algebras. In section \ref{subsec holfluxdiffcrossalg} one of the previous algebras is extended such that functions on the group of bisections of a finite graph system to the holonomy-flux cross-product $C^*$-algebra, form this new $C^*$-algebra.

The cross-product $C^*$-algebra construction is particularly based on $C^*$-dynamical systems. In the article \cite[Section 3.2]{Kaminski1},\cite[Section 6.2]{KaminskiPHD} it has been argued that, the action of the group of bisections of a finite graph system on the analytic holonomy $C^*$-algebra restricted to a finite graph system defines a $C^*$-dynamical system, too. Furthermore there is also an action of the group of certain bisections of a finite graph system on the holonomy-flux cross-product $C^*$-algebra associated to the surface set $\breve S$ and a graph. These objects define another $C^*$-dynamical system and a new cross-product $C^*$-algebra, which is called the \textit{holonomy-flux-graph-diffeomorphism cross-product $C^*$-algebra}.

There exists a covariant representation of this $C^*$-dynamical system on a Hilbert space. This pair is given by a unitary representation of the group of surface-orientation-preserving bisections of a finite graph system on the Hilbert space $\HS_\Gamma$ and the multiplication representation $\Phi_M$ of the analytic holonomy $C^*$-algebra restricted to the finite graph system $\PD_\Gamma$ on $\HS_\Gamma$. The unitaries are called the \textit{unitary bisections of a finite graph system and surfaces} in the project \textit{AQV}. Then each unitary bisections of a finite graph system and surfaces is contained in the \textit{multiplier algebra of the holonomy-flux-graph-diffeomorphism cross-product $C^*$-algebra associated to a graph and the surface set}. The remarkable point is that the multiplier algebra of the holonomy-flux cross-product $C^*$-algebra associated to a graph and the surface set does not contain these unitaries. 

In general the multiplier algebra of the holonomy-flux cross-product $C^*$-algebra associated to a fixed surface set contains all operators of holonomy-flux cross-product $C^*$-algebra for other suitable surface sets, elements of the analytic holonomy $C^*$-algebra and all Weyl elements associated to other suitable surface sets. The Weyl $C^*$-algebra for surfaces contains elements of the analytic holonomy $C^*$-algebra and all Weyl elements. The multiplier algebra of the holonomy-flux cross-product $C^*$-algebra associated to the surface set $\breve S$ contains the Weyl algebra for suitable surface sets. The Lie algebra-valued quantum flux operators and the right-invariant vector fields are affiliated with the holonomy-flux cross-product $C^*$-algebra, but they are not affiliated with the Weyl $C^*$-algebra for surfaces. For a detailed overview about the multiplier algebras and affiliated elements with the $C^*$-algebras of quantum variables refer to \citetableF.\\

\section{The basic quantum operators}\label{sec quantvar}
\subsection{Finite path groupoids and graph systems}\label{subsec fingraphpathgroup}

Let $c:[0,1]\rightarrow\Sigma$ be continuous curve in the domain $\bra 0,1\ket$, which is (piecewise) $C^k$-differentiable ($1\leq k\leq \infty$), analytic ($k=\omega$) or semi-analytic ($k=s\omega$) in $\bra 0,1\ket$ and oriented such that the source vertex is $c(0)=s(c)$ and the target vertex is $c(1)=t(c)$. Moreover assume that, the range of each subinterval of the curve $c$ is a submanifold, which can be embedded in $\Sigma$. An \textbf{edge} is given by a \hyperlink{rep-equiv}{reparametrisation invariant} curve of class (piecewise) $C^k$. The maps $s_{\Sigma},t_{\Sigma}:P\Sigma\rightarrow\Sigma$ where $P\Sigma$ is the path space are surjective maps and are called the source or target map.    

A set of edges $\{e_i\}_{i=1,...,N}$ is called \textbf{independent}, if the only intersections points of the edges are source $s_{\Sigma}(e_i)$ or $t_{\Sigma}(e_i)$ target points. Composed edges are called \textbf{paths}. An \textbf{initial segment} of a path $\gamma$ is a path $\gamma_1$ such that there exists another path $\gamma_2$ and $\gamma=\gamma_1\circ\gamma_2$. The second element $\gamma_2$ is also called a \textbf{final segment} of the path $\gamma$.

\begin{defi}
A \textbf{graph} $\Gamma$ is a union of finitely many independent edges $\{e_i\}_{i=1,...,N}$ for $N\in\N$. The set $\{e_1,...,e_N\}$ is called the \textbf{generating set for $\Gamma$}. The number of edges of a graph is denoted by $\vert \Gamma\vert$. The elements of the set $V_\Gamma:=\{s_{\Sigma}(e_k),t_{\Sigma}(e_k)\}_{k=1,...,N}$ of source and target points are called \textbf{vertices}.
\end{defi}

A graph generates a finite path groupoid in the sense that, the set $\PD_\Gamma\Sigma$ contains all independent edges, their inverses and all possible compositions of edges. All the elements of $\PD_\Gamma\Sigma$ are called paths associated to a graph. Furthermore the surjective source and target maps $s_{\Sigma}$ and $t_{\Sigma}$ are restricted to the maps $s,t:\PD_\Gamma\Sigma\rightarrow V_\Gamma$, which are required to be surjective.

\begin{defi}\label{path groupoid} Let $\Gamma$ be a graph. Then a \textbf{finite path groupoid} $\PD_\Gamma\Sigma$ over $V_\Gamma$ is a pair $(\PD_\Gamma\Sigma, V_\Gamma)$ of finite sets equipped with the following structures: 
\begin{enumerate}
 \item two surjective maps \(s,t:\PD_\Gamma\Sigma\rightarrow V_\Gamma\), which are called the source and target map,
\item the set \(\PD_\Gamma\Sigma^2:=\{ (\gamma_i,\gamma_j)\in\PD_\Gamma\Sigma\times\PD_\Gamma\Sigma: t(\gamma_i)=s(\gamma_j)\}\) of finitely many composable pairs of paths,
\item the  composition \(\circ :\PD_\Gamma^2\Sigma\rightarrow \PD_\Gamma\Sigma,\text{ where }(\gamma_i,\gamma_j)\mapsto \gamma_i\circ \gamma_j\), 
\item the inversion map \(\gamma_i\mapsto \gamma_i^{-1}\) of a path,
\item the object inclusion map \(\iota:V_\Gamma\rightarrow\PD_\Gamma\Sigma\) and
\item $\PD_\Gamma\Sigma$ is defined by the set $\PD_\Gamma\Sigma$ modulo the algebraic equivalence relations generated by
\beq\label{groupoid0} \gamma_i^{-1}\circ \gamma_i\simeq \idf_{s(\gamma_i)}\text{ and }\gamma_i\circ \gamma_i^{-1}\simeq \idf_{t(\gamma_i)}
\eq 
\end{enumerate}
Shortly write $\fPSGm$. 
\end{defi} 
Clearly, a graph $\Gamma$ generates freely the paths in $\PD_\Gamma\Sigma$. Moreover the map $s \times t: \PD_\Gamma\Sigma\rightarrow V_\Gamma\times V_\Gamma$ defined by $(s\times t)(\gamma)=(s(\gamma),t(\gamma))$ for all $\gamma\in\PD_\Gamma\Sigma$ is assumed to be surjective ($\PD_\Gamma\Sigma$ over $V_\Gamma$ is a transitive groupoid), too. 

A general groupoid $\GG$ over $\GG^{0}$ defines a small category where the set of morphisms is denoted in general by $\GG$ and the set of objects is denoted by $\GG^{0}$. Hence in particular the path groupoid can be viewed as a category, since,
\begin{itemize}
\item the set of morphisms is identified with $\PD_\Gamma\Sigma$,
\item the set of objects is given by $V_\Gamma$ (the units) 
\end{itemize}

From the condition (\ref{groupoid0}) it follows that, the path groupoid satisfies additionally 
\begin{enumerate}
 \item $ s(\gamma_i\circ \gamma_j)=s(\gamma_i)\text{ and } t(\gamma_i\circ \gamma_j)=t(\gamma_j)\text{ for every } (\gamma_i,\gamma_j)\in\PD_\Gamma^2\Sigma$
\item $s(v)= v= t(v)\text{ for every } v\in V_\Gamma$
\item\label{groupoid1} $ \gamma \circ\idf_{s(\gamma)} = \gamma = \idf_{t(\gamma)}\circ \gamma\text{ for every } \gamma\in \PD_\Gamma\Sigma\text{ and }$
\item $\gamma \circ (\gamma_i\circ \gamma_j)=(\gamma \circ \gamma_i) \circ \gamma_j$
\item $\gamma \circ (\gamma^{-1}\circ \gamma_1)=\gamma_1= (\gamma_1 \circ \gamma) \circ \gamma^{-1}$
\end{enumerate}

The condition \ref{groupoid1} implies that the vertices are units of the groupoid. 

\begin{defi}
Denote the set of all finitely generated paths by
\beqs \PD_\Gamma\Sigma^{(n)}:=\{(\gamma_1,...,\gamma_n)\in \PD_\Gamma\times ...\PD_\Gamma: (\gamma_i,\gamma_{i+1})\in\PD^{(2)}, 1\leq i\leq n-1 \}\eqs
The set of paths with source point $v\in V_\Gamma$ is given by
\beqs \PD_\Gamma\Sigma^{v}:=s^{-1}(\{v\})\eqs
The set of paths with target  point $v\in V_\Gamma$ is defined by
\beqs \PD_\Gamma\Sigma_{v}:=t^{-1}(\{v\})\eqs
The set of paths with source point $v\in V_\Gamma$ and target point $u\in V_\Gamma$ is 
\beqs \PD_\Gamma\Sigma^{v}_u:=\PD_\Gamma\Sigma^{v}\cap \PD_\Gamma\Sigma_{u}\eqs
\end{defi}

A graph $\Gamma$ is said to be \hypertarget{disconnected}{\textbf{disconnected}} if it contains only mutually pairs $(\gamma_i,\gamma_j)$ of non-composable independent paths $\gamma_i$ and $\gamma_j$ for $i\neq j$ and $i,j=1,...,N$. In other words for all $1\leq i,l\leq N$ it is true that $s(\gamma_i)\neq t(\gamma_l)$ and $t(\gamma_i)\neq s(\gamma_l)$ where $i\neq l$ and $\gamma_i,\gamma_l\in\Gamma$.

\begin{defi}
Let $\Gamma$ be a graph. A \textbf{subgraph $\Gp$ of $\Gamma$} is a given by a finite set of independent paths in $\PD_\Gamma\Sigma$. 
\end{defi}
For example let $\Gamma:=\{\gamma_1,...,\gamma_N\}$ then $\Gp:=\{\gamma_1\circ\gamma_2,\gamma_3^{-1},\gamma_4\}$ where $\gamma_1\circ\gamma_2,\gamma_3^{-1},\gamma_4\in\PD_\Gamma\Sigma$ is a subgraph of $\Gamma$, whereas the set $\{\gamma_1,\gamma_1\circ\gamma_2\}$ is not a subgraph of $\Gamma$. Notice if additionally $(\gamma_2,\gamma_4)\in\PD_\Gamma^{(2)}$ holds, then $\{\gamma_1,\gamma_3^{-1},\gamma_2\circ\gamma_4\}$ is a subgraph of $\Gamma$, too. Moreover for $\Gamma:=\{\gamma\}$ the graph $\Gamma^{-1}:=\{\gamma^{-1}\}$ is a subgraph of $\Gamma$. As well the graph $\Gamma$ is a subgraph of $\Gamma^{-1}$. A subgraph of $\Gamma$ that is generated by compositions of some paths, which are not reversed in their orientation, of the set $\{\gamma_1,...,\gamma_N\}$ is called an \textbf{orientation preserved subgraph of a graph}. For example for $\Gamma:=\{\gamma_1,...,\gamma_N\}$ orientation preserved subgraphs are given by $\{\gamma_1\circ\gamma_2\}$, $\{\gamma_1,\gamma_2,\gamma_N\}$ or $\{\gamma_{N-2}\circ\gamma_{N-1}\}$ if $(\gamma_1,\gamma_2)\in\PD_\Gamma\Sigma^{(2)}$ and $(\gamma_{N-2},\gamma_{N-1})\in\PD_\Gamma\Sigma^{(2)}$.   

\begin{defi} 
A \textbf{finite graph system $\PD_\Gamma$ for $\Gamma$} is a finite set of subgraphs of a graph $\Gamma$. A finite graph system $\PD_{\Gp}$ for $\Gp$ is a \hypertarget{finite graph subsystem}{\textbf{finite graph subsystem}} of $\PD_\Gamma$ for $\Gamma$ if the set $\PD_{\Gp}$ is a subset of $\PD_{\Gamma}$ and $\Gp$ is a subgraph of $\Gamma$. Shortly write $\PD_{\Gp}\leq\PD_{\Gamma}$.

A \hypertarget{finite orientation preserved graph system}{\textbf{finite orientation preserved graph system}} $\PD^{\op}_\Gamma$ for $\Gamma$ is a finite set of orientation preserved subgraphs of a graph $\Gamma$. 
\end{defi}

Recall that, a finite path groupoid is constructed from a graph $\Gamma$, but a set of elements of the path groupoid need not be a graph again. For example let $\Gamma:=\{\gamma_1\circ\gamma_2\}$ and $\Gp=\{\gamma_1\circ\gamma_3\}$, then $\Gpp=\Gamma\cup\Gp$ is not a graph, since this set is not independent. Hence only appropriate unions of paths, which are elements of a fixed finite path groupoid, define graphs. The idea is to define a suitable action on elements of the path groupoid, which corresponds to an action of diffeomorphisms on the manifold $\Sigma$. The action has to be transfered to graph systems. But the action of bisection, which is defined by the use of the groupoid multiplication, cannot easily generalised for graph systems. 

\begin{problem}\label{problem group structure on graphs systems}
Let $\breve\Gamma:=\{\Gamma_i\}_{i=1,..,N}$ be a finite set such that each $\Gamma_i$ is a set of not necessarily independent paths such that 
\begin{enumerate}
\item the set contains no loops and
\item each pair of paths satisfies one of the following conditions
\begin{itemize}
\item the paths intersect each other only in one vertex,
\item the paths do not intersect each other or
\item one path of the pair is a segment of the other path.
\end{itemize}
\end{enumerate}

Then there is a map $\circ:\breve\Gamma\times \breve\Gamma\rightarrow\breve\Gamma$ of two elements $\Gamma_1$ and $\Gamma_2$ defined by
\beqs \{\gamma_1,...,\gamma_M\}\circ\{\tg_1,...,\tg_M\}:= &\Big\{ \gamma_i\circ\tg_j:t(\gamma_i)=s(\tg_j)\Big\}_{1\leq i,j\leq M}\\
\eqs for $\Gamma_1:=\{\gamma_1,...,\gamma_M\},\Gamma_2:=\{\tg_1,...,\tg_M\}$. 
Moreover define a map $^{-1}:\breve\Gamma\rightarrow\breve\Gamma$ by
\beqs  \{\gamma_1,...,\gamma_M\}^{-1}:= \{\gamma^{-1}_1,...,\gamma^{-1}_M\}\eqs 

Then the following is derived
\beqs \{\gamma_1,...,\gamma_M\}\circ\{\gamma^{-1}_1,...,\gamma^{-1}_M\}&=\Big\{ \gamma_i\circ\gamma^{-1}_j: t(\gamma_i)=t(\gamma_j)\Big\}_{1\leq i,j\leq M}\\
&=\Big\{ \gamma_i\circ\gamma^{-1}_j:t(\gamma_i)=t(\gamma_j)\text{ and }i\neq j\Big\}_{1\leq i,j\leq M}\\
&\quad\cup\{\idf_{s_{\gamma_j}}\}_{1\leq j\leq M}\\
\neq &\quad\cup\{\idf_{s_{\gamma_j}}\}_{1\leq j\leq M}
\eqs The equality is true, if the set $\breve\Gamma$ contains only graphs such that all paths are mutually non-composable. Consequently this does not define a well-defined multiplication map. Notice that, the same is discovered if a similar map and inversion operation are defined for a finite graph system $\PD_\Gamma$. 
\end{problem}

Consequently the property of paths being independent need not be dropped for the definition of a suitable multiplication and inversion operation. In fact the independence property is a necessary condition for the construction of the holonomy algebra for analytic paths. Only under this circumstance each analytic path is decomposed into a finite product of independent piecewise analytic paths again. 

\begin{defi}
A finite path groupoid $\PD_{\Gp}\Sigma$ over $V_{\Gp}$ is a \textbf{finite path subgroupoid} of $\PD_{\Gamma}\Sigma$ over $V_\Gamma$ if the set $V_{\Gp}$ is contained in $V_\Gamma$ and the set $\PD_{\Gp}\Sigma$ is a subset of $\PD_{\Gamma}\Sigma$. Shortly write $\PD_{\Gp}\Sigma\leq\PD_{\Gamma}\Sigma$.
\end{defi}

Clearly for a subgraph $\Gamma_1$ of a graph $\Gamma_2$, the associated path groupoid $\PD_{\Gamma_1}\Sigma$ over $V_{\Gamma_1}$ is a subgroupoid of $\PD_{\Gamma_2}\Sigma$ over $V_{\Gamma_2}$.  This is a consequence of the fact that, each path in $\PD_{\Gamma_1}\Sigma$ is a composition of paths or their inverses in $\PD_{\Gamma_2}\Sigma$. 

\begin{defi}
A \textbf{family of finite path groupoids} $\{\PD_{\Gamma_i}\Sigma\}_{i=1,...,\infty}$, which is a set of finite path groupoids $\PD_{\Gamma_i}\Sigma$ over $V_{\Gamma_i}$, is said to be \textbf{inductive} if for any $\PD_{\Gamma_1}\Sigma,\PD_{\Gamma_2}\Sigma$ exists a $\PD_{\Gamma_3}\Sigma$ such that $\PD_{\Gamma_1}\Sigma,\PD_{\Gamma_2}\Sigma\leq\PD_{\Gamma_3}\Sigma$.

A \textbf{family of graph systems} $\{\PD_{\Gamma_i}\}_{i=1,...,\infty}$, which is a set of finite path systems $\PD_{\Gamma_i}$ for $\Gamma_i$, is said to be \textbf{inductive} if for any $\PD_{\Gamma_1},\PD_{\Gamma_2}$ exists a $\PD_{\Gamma_3}$ such that $\PD_{\Gamma_1},\PD_{\Gamma_2}\leq \PD_{\Gamma_3}$.
\end{defi}

\begin{defi}
Let $\{\PD_{\Gamma_i}\Sigma\}_{i=1,...,\infty}$ be an inductive family of path groupoids and $\{\PD_{\Gamma_i}\}_{i=1,...,\infty}$ be an inductive family of graph systems.

The \textbf{inductive limit path groupoid $\PD$ over $\Sigma$} of an inductive family of finite path groupoids such that $\PD:=\limi\PD_{\Gamma_i}\Sigma$ is called the \textbf{(algebraic) path groupoid} $\PGoS$.

Moreover there exists an \textbf{inductive limit graph $\Gamma_\infty$} of an inductive family of graphs such that $\Gamma_\infty:=\limi \Gamma_i$.

The \textbf{inductive limit graph system} $\PD_{\Gamma_\infty}$ of an inductive family of graph systems such that $\PD_{\Gamma_\infty}:=\limi \PD_{\Gamma_i}$
\end{defi}

Assume that, the inductive limit $\Gamma_\infty$ of a inductive family of graphs is a graph, which consists of an infinite countable number of independent paths. The inductive limit $\PD_{\Gamma_\infty}$ of a inductive family $\{\PD_{\Gamma_i}\}$ of finite graph systems contains an infinite countable number of subgraphs of $\Gamma_\infty$ and each subgraph is a finite set of arbitrary independent paths in $\Sigma$. 

\subsection{Holonomy maps for finite path groupoids, graph systems and transformations}\label{subsec holmapsfinpath}
In section \ref{subsec fingraphpathgroup} the concept of finite path groupoids for analytic paths has been given. Now the holonomy maps are introduced for finite path groupoids and finite graph systems. The ideas are familar with those presented by Thiemann \cite{Thiembook07}. But for example the finite graph systems have not been studied before. Ashtekar and Lewandowski \cite{AshLew93} have defined the analytic holonomy $C^*$-algebra, which they have based on a finite set of independent hoops. The hoops are generalised for path groupoids and the independence requirement is implemented by the concept of finite graph systems. 

\subsubsection{Holonomy maps for finite path groupoids}\label{subsubsec holmap}

\paragraph*{Groupoid morphisms for finite path groupoids}\hspace{10pt} 

Let $\GGim, \GGiim$ be two arbitrary groupoids.

\begin{defi}
A \hypertarget{groupoid-morphism}{\textbf{groupoid morphism}} between two groupoids $\GG_1$ and $\GG_2$ consists of two maps  $\ho:\GG_1\rightarrow\GG_2$  and $h:\GG_1^0\rightarrow\GG_2^0$ such that
\beqs (\hypertarget{G1}{G1})\qquad \ho(\gamma\circ\gp)&= \ho(\gamma)\ho(\gp)\text{ for all }(\gamma,\gp)\in \GG_1^{(2)}\eqs
\beqs (\hypertarget{G2}{G2})\qquad s_{2}(\ho(\gamma))&=h(s_{1}(\gamma)),\quad t_2(\ho(\gamma))=h(t_{1}(\gamma))\eqs 
 
A \textbf{strong groupoid morphism} between two groupoids $\GG_1$ and $\GG_2$ additionally satisfies
\beqs (\hypertarget{SG2}{SG})\qquad \text{ for every pair }(\ho(\gamma),\ho(\gp))\in\GG_2^{(2)}\text{ it follows that }(\gamma,\gp)\in \GG_1^{(2)}\eqs
\end{defi}

Let $G$ be a Lie group. Then $G$ over $e_G$ is a groupoid, where the group multiplication $\cdot: G^2\rightarrow G$ is defined for all elements  $g_1,g_2,g\in G$ such that $g_1\cdot g_2 = g$. A groupoid morphism between a finite path groupoid $\PD_\Gamma\Sigma$ to $G$ is given by the maps
\[\ho_\Gamma: \PD_\Gamma\Sigma\rightarrow G,\quad h_\Gamma:V_\Gamma\rightarrow e_G \] Clearly
\beq \ho_\Gamma(\gamma\circ\gp)&= \ho_\Gamma(\gamma)\ho_\Gamma(\gp)\text{ for all }(\gamma,\gp)\in \PD_\Gamma\Sigma^{(2)}\\
s_G(\ho_\Gamma(\gamma))&=h_\Gamma(s_{\PD_\Gamma\Sigma}(\gamma)),\quad t_G(\ho_\Gamma(\gamma))=h_\Gamma(t_{\PD_\Gamma\Sigma}(\gamma))
\eq But for an arbitrary pair $(\ho_\Gamma(\gamma_1),\ho_\Gamma(\gamma_2))=:(g_1,g_2)\in G^{(2)}$ it does not follows that, $(\gamma_1,\gamma_2)\in \PD_\Gamma\Sigma^{(2)}$ is true. Hence $\ho_\Gamma$ is not a strong groupoid morphism.

\begin{defi}\label{def sameholanal}Let $\fPG$ be a finite path groupoid.

Two paths $\gamma$ and $\gp$ in $\PD_\Gamma\Sigma$ have the \textbf{same-holonomy for all connections} iff 
\beqs \ho_\Gamma(\gamma)=\ho_\Gamma(\gp)\text{ for all }&(\ho_\Gamma,h_\Gamma)\text{ groupoid morphisms }\\ & \ho_\Gamma:\PD_\Gamma\Sigma\rightarrow G, h:V_\Gamma\rightarrow\{e_G\}
\eqs Denote the relation by $\sim_{\text{s.hol.}}$.
\end{defi}
\begin{lem}
The same-holonomy for all connections relation is an equivalence relation. 
\end{lem}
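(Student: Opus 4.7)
The plan is to verify the three defining properties of an equivalence relation (reflexivity, symmetry, transitivity) directly from the definition of $\sim_{\text{s.hol.}}$, reducing each to the corresponding property of equality in the target group $G$. Since the relation is defined by requiring $\ho_\Gamma(\gamma) = \ho_\Gamma(\gp)$ to hold simultaneously for \emph{every} groupoid morphism $(\ho_\Gamma, h_\Gamma)$ with $h_\Gamma : V_\Gamma \to \{e_G\}$, each property should transfer pointwise from the equality relation on $G$.

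First I would note reflexivity: for any path $\gamma \in \PD_\Gamma\Sigma$ and any groupoid morphism $\ho_\Gamma$, trivially $\ho_\Gamma(\gamma) = \ho_\Gamma(\gamma)$ holds in $G$, hence $\gamma \sim_{\text{s.hol.}} \gamma$. Next, symmetry: assume $\gamma \sim_{\text{s.hol.}} \gp$, so that $\ho_\Gamma(\gamma) = \ho_\Gamma(\gp)$ for every such morphism; equality in $G$ is symmetric, so $\ho_\Gamma(\gp) = \ho_\Gamma(\gamma)$ holds for every such morphism, giving $\gp \sim_{\text{s.hol.}} \gamma$. Finally, transitivity: given $\gamma \sim_{\text{s.hol.}} \gp$ and $\gp \sim_{\text{s.hol.}} \gpp$, we have $\ho_\Gamma(\gamma) = \ho_\Gamma(\gp)$ and $\ho_\Gamma(\gp) = \ho_\Gamma(\gpp)$ for every groupoid morphism; by transitivity of equality in $G$, $\ho_\Gamma(\gamma) = \ho_\Gamma(\gpp)$ for every such morphism, hence $\gamma \sim_{\text{s.hol.}} \gpp$.

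There is essentially no obstacle here. The only subtlety is bookkeeping: one must emphasize that the three properties are verified \emph{uniformly in the choice of groupoid morphism}, which is immediate because the quantifier ``for all $(\ho_\Gamma, h_\Gamma)$'' is preserved under the logical manipulations (identity, flipping an equation, chaining two equations). No existence or nontriviality question about the class of groupoid morphisms enters the argument, so even the pathological case in which $G$ is trivial (where $\sim_{\text{s.hol.}}$ collapses to the total relation) poses no issue.
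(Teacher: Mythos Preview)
Your proof is correct. The paper states this lemma without proof, evidently regarding it as immediate; your direct verification of reflexivity, symmetry, and transitivity from the corresponding properties of equality in $G$ is exactly the routine argument one would supply.
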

Notice that, the quotient of the finite path groupoid and the same-holonomy relation for all connections replace the hoop group, which has been used in \cite{AshLew93}.
\begin{defi}\label{genrestgroupoidforgraph}
Let $\fPG$ be a finite path groupoid modulo same-holonomy for all connections equivalence.

A \hypertarget{holonomy map for a finite path groupoid}{\textbf{holonomy map for a finite path groupoid}} $\PD_\Gamma\Sigma$ over $V_\Gamma$ is a groupoid morphism consisting of the maps $(\ho_\Gamma,h_\Gamma)$, where
\(\ho_\Gamma:\PD_\Gamma\Sigma\rightarrow G,h_\Gamma:V_\Gamma\rightarrow \{e_G\}\). 
The set of all holonomy maps is abbreviated by $\Hom(\PD_\Gamma\Sigma,G)$.
\end{defi}

For a short notation observe the following.
In further sections it is always assumed that, the finite path groupoid $\fPG$ is considered modulo same-holonomy for all connections equivalence although it is not stated explicitly.

\subsubsection{Holonomy maps for finite graph systems}\label{subsec graphhol}

Ashtekar and Lewandowski \cite{AshLew93} have presented the loop decomposition into a finite set of independent hoops (in the analytic category). This structure is replaced by a graph, since a graph is a set of independent edges. Notice that, the set of hoops that is generated by a finite set of independent hoops, is generalised to the set of finite graph systems. A finite path groupoid is generated by the set of edges, which defines a graph $\Gamma$, but a set of elements of the path groupoid need not be a graph again. The appropriate notion for graphs constructed from sets of paths is the finite graph system, which is defined in section \ref{subsec fingraphpathgroup}. Now the concept of holonomy maps is generalised for finite graph systems. Since the set, which is generated by a finite number of independent edges, contains paths that are composable, there are two possibilities to identify the image of the holonomy map for a finite graph system on a fixed graph with a subgroup of $G^{\vert\Gamma\vert}$. One way is to use the generating set of independend edges of a graph, which has been also used in \cite{AshLew93}. On the other hand, it is also possible to identify each graph with a disconnected subgraph of a fixed graph, which is generated by a set of independent edges. Notice that, the author implements two situations. One case is given by a set of paths that can be composed further and the other case is related to paths that are not composable. This is necessary for the definition of an action of the flux operators. Precisely the identification of the image of the holonomy maps along these paths is necessary to define a well-defined action of a flux element on the configuration space. This issue has been studied in \cite{Kaminski1,KaminskiPHD}.

First of all consider a graph $\Gamma$ that is generated by the set $\{\gamma_1,...,\gamma_N\}$ of edges. Then each subgraph of a graph $\Gamma$ contain paths that are composition of edges in $\{\gamma_1,...,\gamma_N\}$ or inverse edges. For example the following set $\Gp:=\{\gamma_1\circ\gamma_2\circ\gamma_3,\gamma_4\}$ defines a subgraph of $\Gamma:=\{\gamma_1,\gamma_2,\gamma_3,\gamma_4\}$. Hence there is a natural identification available.

\begin{defi}
A subgraph $\Gp$ of a graph $\Gamma$ is always generated by a subset $\{\gamma_1,...,\gamma_M\}$ of the generating set $\{\gamma_1,...,\gamma_N\}$ of independent edges that generates the graph $\Gamma$. Hence each subgraph is identified with a subset of $\{\gamma_1^{\pm 1},...,\gamma_N^{\pm 1}\}$. This is called the \hypertarget{natural identification}{\textbf{natural identification of subgraphs}}.
\end{defi}

\begin{exa}\label{exa natidentif}
For example consider a subgraph $\Gp:=\{\gamma_1\circ\gamma_2,\gamma_3\circ\gamma_4,...,\gamma_{M-1}\circ\gamma_M\}$, which is identified naturally with a set $\{\gamma_1,...,\gamma_M\}$. The set $\{\gamma_1,...,\gamma_M\}$ is a subset of $\{\gamma_1,...,\gamma_N\}$ where $N=\vert \Gamma\vert$ and $M\leq N$. 

Another example is given by the graph $\Gpp:=\{\gamma_1,\gamma_2\}$ such that $\gamma_2=\gpe\circ\gpz$, then $\Gpp$ is identified naturally with $\{\gamma_1,\gpe,\gpz\}$. This set is a subset of $\{\gamma_1,\gpe,\gpz,\gamma_3,...,\gamma_{N-1}\}$. 
\end{exa}

\begin{defi}
Let $\Gamma$ be a graph, $\PD_\Gamma$ be the finite graph system. Let $\Gp:=\{\gamma_1,...,\gamma_M\}$be a subgraph of $\Gamma$.

A \hypertarget{holonomy map for a finite graph system}{\textbf{holonomy map for a finite graph system}} $\PD_\Gamma$ is a given by a pair of maps $(\ho_\Gamma,h_\Gamma)$ such that there exists a holonomy map\footnote{In the work the holonomy map for a finite graph system and the holonomy map for a finite path groupoid is denoted by the same pair $(\ho_\Gamma,h_\Gamma)$.} $(\ho_\Gamma,h_\Gamma)$ for the finite path groupoid $\fPG$ and
\beqs &\ho_\Gamma:\PD_\Gamma\rightarrow G^{\vert \Gamma\vert},\quad \ho_\Gamma(\{\gamma_1,...,\gamma_M\})=(\ho_\Gamma(\gamma_1),...,\ho_\Gamma(\gamma_M), e_G,...,e_G)\\
&h_\Gamma:V_\Gamma\rightarrow \{e_G\}
\eqs 
The set of all holonomy maps for the finite graph system is denoted by $\Hom(\PD_\Gamma,G^{\vert \Gamma\vert})$.

The image of a map $\ho_\Gamma$ on each subgraph $\Gp$ of the graph $\Gamma$ is given by
\beqs (\ho_\Gamma(\gamma_1),...,\ho_\Gamma(\gamma_M),e_G,...,e_G)
\eqs is an element of $G^{\vert \Gamma\vert}$. The set of all images of maps on subgraphs of $\Gamma$ is denoted by $\Ab_\Gamma$.
\end{defi}
The idea is now to study two different restrictions of the set $\PD_\Gamma$ of subgraphs. For a short notation of a ''set of  holonomy maps for a certain restricted set of subgraphs of a graph'' in this article the following notions are introduced.
\begin{defi}
If the subset of all disconnected subgraphs of the finite graph system $\PD_\Gamma$ is considered, then the restriction of $\Ab_\Gamma$, which is identified with $G^{\vert \Gamma\vert}$ appropriately, is called the \hypertarget{non-standard identification}{\textbf{non-standard identification of the configuration space}}. If the subset of all natural identified subgraphs of the finite graph system $\PD_\Gamma$ is considered, then the restriction of $\Ab_\Gamma$, which is identified with $G^{\vert \Gamma\vert}$ appropriately, is called the \hypertarget{natural identification}{\textbf{natural identification of the configuration space}}.
\end{defi}

A comment on the non-standard identification of $\Ab_\Gamma$ is the following. If $\Gp:=\{\gamma_1\circ\gamma_2\}$ and $\Gpp:=\{\gamma_2\}$ are two subgraphs of $\Gamma:=\{\gamma_1,\gamma_2,\gamma_3\}$. The graph $\Gp$ is a subgraph of $\Gamma$. Then evaluation of a map $\ho_\Gamma$ on a subgraph $\Gp$ is given by
\beqs \ho_\Gamma(\Gp)=(\ho_\Gamma(\gamma_1\circ\gamma_2),\ho_\Gamma(s(\gamma_2)),\ho_\Gamma(s(\gamma_3)))=(\ho_\Gamma(\gamma_1)\ho_\Gamma(\gamma_2),e_G,e_G)\in G^3
\eqs and the holonomy map of the subgraph $\Gpp$ of $\Gp$ is evaluated by
\beqs \ho_\Gamma(\Gpp)=(\ho_\Gamma(s(\gamma_1)),\ho_\Gamma(s(\gamma_2))\ho_\Gamma(\gamma_2),\ho_\Gamma(s(\gamma_3)))=(\ho_\Gamma(\gamma_2),e_G,e_G)\in G^3
\eqs

\begin{exa}
Recall example \thesubsection.\ref{exa natidentif}.
For example for a subgraph $\Gp:=\{\gamma_1\circ\gamma_2,\gamma_3\circ\gamma_4,...,\gamma_{M-1}\circ\gamma_M\}$, which is naturally identified with a set $\{\gamma_1,...,\gamma_M\}$. Then the holonomy map is evaluated at $\Gp$ such that \[\ho_\Gamma(\Gp)=(\ho_\Gamma(\gamma_1),\ho_\Gamma(\gamma_2),....,\ho_\Gamma(\gamma_M),e_G,...,e_G)\in G^N\] where $N=\vert \Gamma\vert$. For example, let $\Gp:=\{\gamma_1,\gamma_2\}$ such that $\gamma_2=\gpe\circ\gpz$ and which is naturally identified with $\{\gamma_1,\gpe,\gpz\}$. Hence \[\ho_\Gamma(\Gp)=(\ho_\Gamma(\gamma_1),\ho_\Gamma(\gpe),\ho_\Gamma(\gpz),e_G,...,e_G)\in G^N\] is true.

Another example is given by the disconnected graph $\Gp:=\{\gamma_1\circ\gamma_2\circ\gamma_3,\gamma_4\}$, which is a subgraph of $\Gamma:=\{\gamma_1,\gamma_2,\gamma_3,\gamma_4\}$. Then the non-standard identification is given by
\[\ho_\Gamma(\Gp)=(\ho_\Gamma(\gamma_1\circ\gamma_2\circ\gamma_3),\ho_\Gamma(\gamma_4),e_G,e_G)\in G^4\]

If the natural identification is used, then $\ho_\Gamma(\Gp)$ is idenified with 
\[(\ho_\Gamma(\gamma_1),\ho_\Gamma(\gamma_2),\ho_\Gamma(\gamma_3),\ho_\Gamma(\gamma_4))\in G^4\]

Consider the following example. Let $\Gppp:=\{\gamma_1,\alpha,\gamma_2,\gamma_3\}$ be a graph such that 
 \begin{center}
\includegraphics[width=0.2\textwidth]{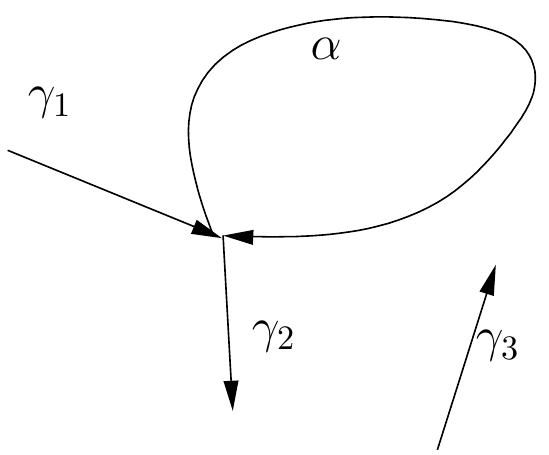}
\end{center}
Then notice the sets $\Gamma_1:=\{\gamma_1\circ\alpha,\gamma_3\}$ and $\Gamma_2:=\{\gamma_1\circ\alpha^{-1},\gamma_3\}$. In the non-standard identification of the configuration space $\Ab_{\Gppp}$ it is true that,
\beqs \ho_{\Gppp}(\Gamma_1)=(\ho_{\Gppp}(\gamma_1\circ\alpha),\ho_{\Gppp}(\gamma_3),e_G,e_G)\in G^4,\\
\ho_{\Gppp}(\Gamma_2)=(\ho_{\Gppp}(\gamma_1\circ\alpha^{-1}),\ho_{\Gppp}(\gamma_3),e_G,e_G)\in G^4
\eqs holds. Whereas in the natural identification of $\Ab_{\Gppp}$
 \beqs \ho_{\Gppp}(\Gamma_1)=(\ho_{\Gppp}(\gamma_1),\ho_{\Gppp}(\alpha),\ho_{\Gppp}(\gamma_3),e_G)\in G^4,\\
\ho_{\Gppp}(\Gamma_2)=(\ho_{\Gppp}(\gamma_1),\ho_{\Gppp}(\alpha^{-1}),\ho_{\Gppp}(\gamma_3),e_G)\in G^4
\eqs yields.
\end{exa}

The equivalence class of similar or equivalent groupoid morphisms defined in definition \ref{def similargroupoidhom} allows to define the following object.
The set of images of all holonomy maps of a finite graph system modulo the similar or equivalent groupoid morphisms equivalence relation is denoted by $\Ab_\Gamma/\bar\SimGroup_\Gamma$. 

\subsection{The group-valued quantum flux operators associated to surfaces and graphs}\label{subsec fluxdef}

Let $G$ be the structure group of a principal fibre bundle $P(\Sigma,G,\pi)$. Then the quantum flux operators, which are associated to a fixed surface $S$, are $G$-valued operators. For the construction of the quantum flux operator $\rho_S(\gamma)$ different maps from a graph $\Gamma$ to a direct product $G\times G$ are considered. This is related to the fact that, one distinguishes between paths that are ingoing and paths that are outgoing with resepect to the surface orientation of $S$. If there are no intersection points of the surface $S$ and the source or target vertex of a path $\gamma_i$ of a graph $\Gamma$, then the map maps the path $\gamma_i$ to zero in both entries. For different surfaces or for a fixed surface different maps refer to different quantum flux operators.  
 
\begin{defi}
Let $\breve S$ be a finite set $\{S_i\}$ of surfaces in $\Sigma$, which is closed under a flip of orientation of the surfaces. Let $\Gamma$ be a graph such that each path in $\Gamma$ satisfies one of the following conditions 
\begin{itemize}
 \item the path intersects each surface in $\breve S$ in the source vertex of the path and there are no other intersection points of the path and any surface contained in $\breve S$,
 \item the path intersects each surface in $\breve S$ in the target vertex of the path and there are no other intersection points of the path and any surface contained in $\breve S$,
 \item the path intersects each surface in $\breve S$ in the source and target vertex of the path and there are no other intersection points of the path and any surface contained in $\breve S$,
 \item the path does not intersect any surface $S$ contained in $\breve S$.
\end{itemize} Finally let $\PD_\Gamma$ denotes the finite graph system associated to $\Gamma$. 

Then define the intersection functions $\iota_L:\breve S\times \Gamma\rightarrow \{\pm 1,0\}$ such that
\beqs \iota_L(S,\gamma):=
\left\{\begin{array}{ll}
1 &\text{ for a path }\gamma\text{ lying above and outgoing w.r.t. }S\\
-1 &\text{ for a path }\gamma\text{ lying below and outgoing w.r.t. }S\\
0 &\text{ the path }\gamma\text{ is not outgoing w.r.t. }S
\end{array}\right.
\eqs
and the intersection functions $\iota_R:\breve S\times \Gamma\rightarrow\{\pm 1,0\}$ such that
\beqs \iota_L(S,\gamma):= \left\{\begin{array}{ll}
-1 &\text{ for a path }\gp\text{ lying above and ingoing w.r.t. }S\\
1 &\text{ for a path }\gp\text{ lying below and ingoing w.r.t. }S\\
0 &\text{ the path }\gp\text{ is not ingoing w.r.t. }S
\end{array}\right.
\eqs whenever $S\in\breve S$ and $\gamma\in\Gamma$.

Define a map $o_L:\breve S\rightarrow G$ such that
\beqs o_L(S)&=o_L(S^ {-1})
\eqs whenever $S\in\breve S$ and $S^ {-1}$ is the surface $S$ with reversed orientation. Denote the set of such maps by $\breve o_L$. Respectively the map $o_R:\breve S\rightarrow G$ such that
\beqs o_R(S)&=o_R(S^ {-1})
\eqs whenever $S\in\breve S$. Denote the set of such maps by $\breve o_R$.
Moreover there is a map $o_L\times o_R:\breve S\rightarrow G\times G$ such that
\beqs (o_L,o_R)(S)&=(o_L,o_R)(S^ {-1})
\eqs whenever $S\in\breve S$. Denote the set of such maps by $\breve o$.

Then define the \textbf{group-valued quantum flux set for paths}
\beqs  \Gop_{\breve S,\Gamma}
:=\bigcup_{o_L\times o_R\in\breve o}\bigcup_{S\in\breve S}\Big\{& (\rho^L,\rho^R)\in\Map(\Gamma,G\times G): 
&(\rho^L, \rho^R)(\gamma):=(o_L(S)^{\iota_L(S,\gamma)},o_R(S)^{\iota_R(S,\gamma)})\Big\}
\eqs
where $\Map(\Gamma,G\times G)$ denotes the set of all maps from the graph $\Gamma$ to the direct product $G\times G$.

Define the \textbf{set of group-valued quantum fluxes for graphs}
\beqs G_{\breve S,\Gamma}:= \bigcup_{o_L\times o_R\in\breve o}\bigcup_{S\in\breve S}\Big\{ \rho_{S,\Gamma}\in\Map(\PD^{\op}_\Gamma,G^{\vert\Gamma\vert}\times G^{\vert\Gamma\vert}):\quad 
&\rho_{S,\Gamma}:=\rho_S\times...\times \rho_S\\&\text{ where }\rho_S(\gamma):=(o_L(S)^{\iota_L(\gamma,S)},o_R(S)^{\iota_R(\gamma,S)}),\\
&\rho_S\in\Gop_{\breve S,\Gamma},S\in\breve S,\gamma\in\Gamma\Big\}\eqs 
\end{defi}
Notice if $H$ is a closed subgroup of $G$, then $H_{\breve S,\Gamma}$ can be defined in analogy to $G_{\breve S,\Gamma}$.
In particular if the group $H$ is replaced by the center $\ZD(G)$ of the group $G$, then the set $\Gop_{\breve S,\Gamma}$ is replaced by $\Zop_{\breve S,\Gamma}$ and $G_{\breve S,\Gamma}$ is changed to $\ZD_{\breve S,\Gamma}$. 

Furthermore observe that, $(\iota_L\times \iota_R)(S^{-1},\gamma)=(-\iota_L\times -\iota_R)(S,\gamma)$ for every $\gamma\in\Gamma$ holds. Remark that, the condition $\rho^L(\gamma)=\rho^R(\gamma^{-1})$ is not required. 

\begin{exa}\label{exa Exa1}
For example the following example can be analysed. Consider a graph $\Gamma$ and two disjoint surface sets $\breve S$ and $\breve T$.
\begin{center}
\includegraphics[width=0.45\textwidth]{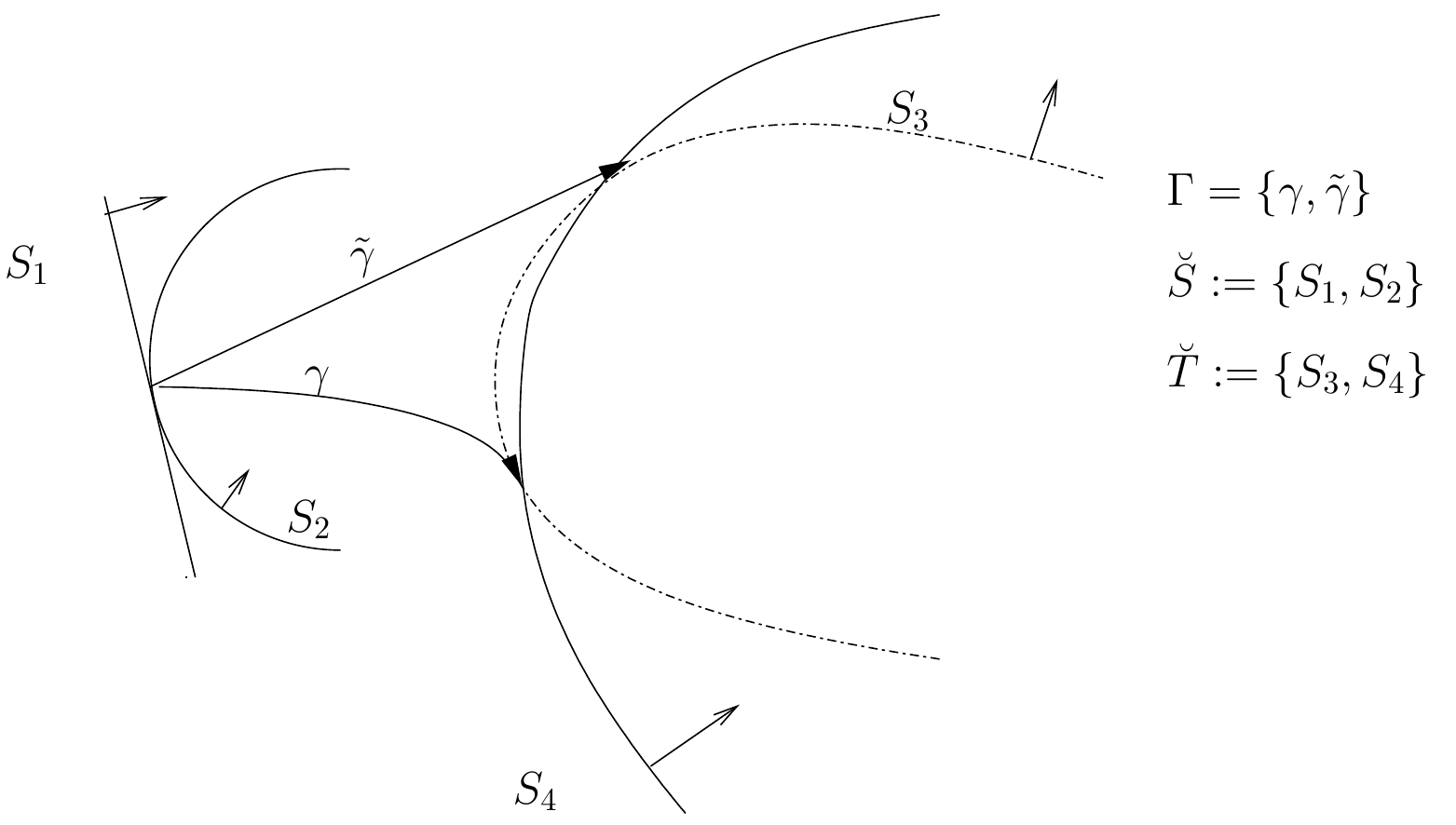}
\end{center}
Then the elements of $\Gop_{\breve S,\Gamma}$ are for example the maps $\rho^L_{i}\times \rho^R_{i}$ for $i=1,2$ such that 
\beqs 
\rho_1(\gamma)&:= (\rho^L_{1}, \rho^R_{1})(\gamma)=(\sigma_L(S_1)^{\iota_L(S_1,\gamma)},\sigma_R(S_1)^{\iota_R(S_1,\gamma)})=(g_{1},0)\\
\rho_1(\tg)&:= (\rho^L_{1}, \rho^R_{1})(\tg)=(\sigma_L(S_1)^{\iota_L(S_1,\tg)},\sigma_R(S_1)^{\iota_R(S_1,\tg)})= (g_1,0)\\
\rho_2(\gamma)&:= (\rho^L_{2}, \rho^R_{2})(\gamma)=(\sigma_L(S_2)^{\iota_L(S_2,\gamma)},\sigma_R(S_2)^{\iota_R(S_2,\gamma)})
=(g_{2},0)\\
\rho_2(\tg)&:= (\rho^L_{2}, \rho^R_{2})(\tg)=(\sigma_L(S_2)^{\iota_L(S_2,\tg)},\sigma_R(S_2)^{\iota_R(S_2,\tg)})
=(g_{2},0)\\
\rho_3(\gamma)&:= (\rho^L_{3}, \rho^R_{3})(\gamma)=(\sigma_L(S_3)^{\iota_L(S_3,\gamma)},\sigma_R(S_3)^{\iota_R(S_3,\gamma)})
=(0,h_{3}^{-1})\\
\rho_3(\tg)&:= (\rho^L_{3}, \rho^R_{3})(\tg)=(\sigma_L(S_3)^{\iota_L(S_3,\tg)},\sigma_R(S_3)^{\iota_R(S_3,\tg)})= (0,h_3^{-1})\\
\rho_4(\gamma)&:= (\rho^L_{4}, \rho^R_{4})(\gamma)=(\sigma_L(S_4)^{\iota_L(S_4,\gamma)},\sigma_R(S_4)^{\iota_R(S_4,\gamma)})
=(0,h_{4})\\
\rho_4(\tg)&:= (\rho^L_{4}, \rho^R_{4})(\tg)=(\sigma_L(S_4)^{\iota_L(S_4,\tg)},\sigma_R(S_4)^{\iota_R(S_4,\tg)})= (0,h_4)
\eqs 

This example shows that, the surfaces $\{S_1,S_2\}$ are similar, whereas the surfaces $\{T_1,T_2\}$ produce different signatures for different paths. Moreover the set of surfaces are chosen such that one component of the direct sum is always zero. 
\end{exa}

For a particular surface set $\breve S$, the following set is defined
\beqs\bigcup_{\sigma_L\times\sigma_R\in\breve\sigma}\bigcup_{S\in\breve S}
\Big\{ (\rho^L,\rho^R)\in\Map(\Gamma,G\times G): \quad(\rho^L, \rho^R)(\gamma):=(\sigma_L(S)^{\iota_L(S,\gamma)},0)\Big\}\eqs can be identified with 
\beqs\bigcup_{\sigma_L\in\breve\sigma_L}\bigcup_{S\in\breve S}\Big\{\rho\in\Map(\Gamma,G): \quad
\rho(\gamma):=\sigma_L(S)^{\iota_L(S,\gamma)}\Big\}
\eqs 
The same is observed for another surface set $\breve T$ and the set $\Gop_{\breve T,\Gamma}$ is identifiable with 
\beqs\bigcup_{\sigma_R\in\breve\sigma_R}\bigcup_{T\in\breve T}
\Big\{\rho\in\Map(\Gamma,G): \quad
\rho(\gamma):=\sigma_R(T)^{\iota_R(T,\gamma)}\Big\}
\eqs

The intersection behavoir of paths and surfaces plays a fundamental role in the definition of the flux operator. There are exceptional configurations of surfaces and paths in a graph. One of them is the following.

\begin{defi}
A surface $S$ has the \textbf{surface intersection property for a graph} $\Gamma$, if the surface intersects each path of $\Gamma$ once in the source or target vertex of the path and there are no other intersection points of $S$ and the path. 
\end{defi}

This is for example the case for the surface $S_1$ or the surface $S_3$, which are presented in example \thesection.\ref{exa Exa1}. Notice that in general, for the surface $S$ there are $N$ intersection points with $N$ paths of the graph. In the example the evaluated map $\rho_1(\gamma)=(g_1,0)=\rho_1(\tg)$ for $\gamma,\tg\in\Gamma$ if the surface $S_1$ is considered.

The property of a path lying above or below is not important for the definition of the surface intersection property for a surface. This indicates that the surface $S_4$ in the example \thesection.\ref{exa Exa1} has the surface intersection property, too.

Let a surface $S$ does not have the surface intersection property for a graph $\Gamma$, which contains only one path $\gamma$. Then for example the path $\gamma$ intersects the surface $S$ in the source and target vertices such that the path lies above the surface $S$. Then the map $\rho^ L\times \rho^ R$ is evaluated for the path $\gamma$ by
\beqs (\rho^ L\times \rho^ R)(\gamma)=(g,h^{-1})
\eqs
Hence simply speaking the surface intersection property reduces the components of the map $\rho^ L\times \rho^ R$, but for different paths to different components.

Now, consider a bunch of sets of surfaces such that for each surface there is only one intersection point.
\begin{defi}\label{def intprop}
A set $\breve S$ of $N$ surfaces has the \textbf{surface intersection property for a graph $\Gamma$} with $N$ independent edges, if it contain only surfaces, for which each path $\gamma_i$ of a graph $\Gamma$ intersects each surface $S_i$ only once in the same source or target vertex of the path $\gamma_i$, there are no other intersection points of each path $\gamma_i$ and each surface in $\breve S$ and there is no other path $\gamma_j$ that intersects the surface $S_i$ for $i\neq j$ where $1 \leq i,j\leq N$.
\end{defi}
Then for example consider the following configuration.

\begin{exa} 
\begin{center}
\includegraphics[width=0.45\textwidth]{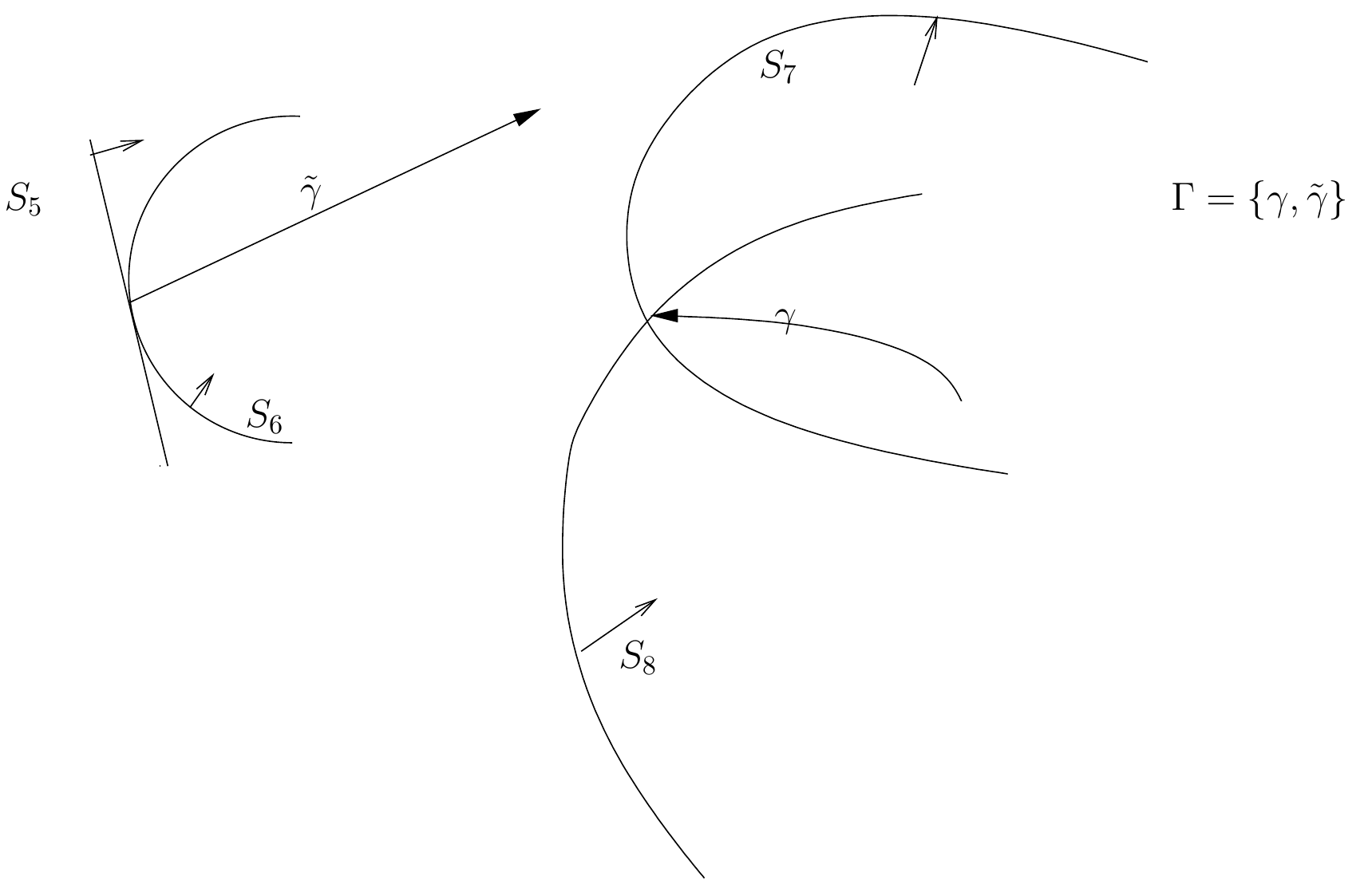}
\end{center} 
The sets $\{S_{6},S_{7}\}$ or $\{S_5,S_{8}\}$ have the surface intersection property for the graph $\Gamma$. 
The images of a map $E$ is
\beqs \rho_5(\tg)=(g_5,0),\quad \rho_{8}(\gamma)=(0,h_{8})
\eqs
\end{exa}
Note that simply speaking the property indicates that each map reduces to a component of $\rho^ L\times \rho^R$.

A set of surfaces that has the surface intersection property for a graph is further specialised by restricting the choice to paths lying ingoing and below with respect to the surface orientations. 
\begin{defi}
A set $\breve S$ of $N$ surfaces has the \hypertarget{simple surface intersection property for a graph}{\textbf{simple surface intersection property for a graph $\Gamma$}} with $N$ independent edges, if it contains only surfaces, for which each path $\gamma_i$ of a graph $\Gamma$ intersects only one surface $S_i$ only once in the target vertex of the path $\gamma_i$, the path $\gamma_i$ lies above and there are no other intersection points of each path $\gamma_i$ and each surface in $\breve S$. 
\end{defi}
\begin{exa}Consider the following example.
\begin{center}
\includegraphics[width=0.45\textwidth]{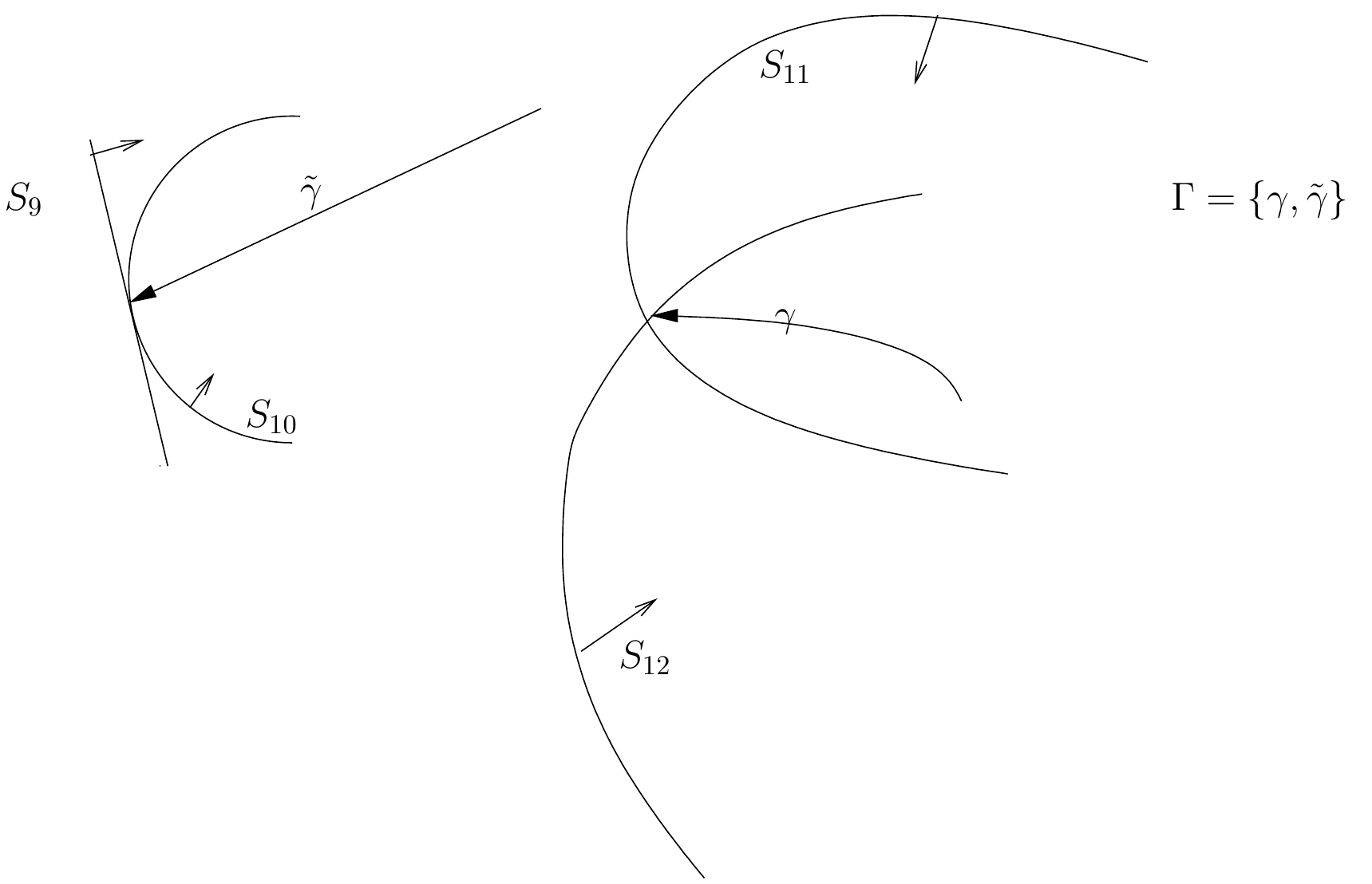}
\end{center} The sets $\{S_{9},S_{11}\}$ or $\{S_{10},S_{12}\}$ have the simple surface intersection property for the graph $\Gamma$.
Calculate
\beqs \rho_{9}(\tg)=(0,h_{9}^{-1}),\quad \rho_{11}(\gamma)=(0,h_{11}^{-1})
\eqs
\end{exa}
In this case the set $\Gop_{\breve S,\Gamma}$ reduces to
\beqs\bigcup_{\sigma_R\in\breve\sigma_R}\bigcup_{S\in\breve S}\Big\{\rho\in\Map(\Gamma,\go): \quad
\rho(\gamma):=\sigma_R(S)^{-1}\text{ for }\gamma\cap S= t(\gamma)\Big\}
\eqs Notice that, the set $\Gamma\cap \breve S=\{t(\gamma_i)\}$ for a surface $S_i\in\breve S$ and $\gamma_i\cap S_j\cap S_i =\{\varnothing\}$ for a path $\gamma_i$ in $\Gamma$ and $i\neq j$.

On the other hand, there exists a set of surfaces such that each path of a graph intersects all surfaces of the set in the same vertex. This contradicts the assumption that each path of a graph intersects only one surface once. 
\begin{defi}Let $\Gamma$ be a graph that contains no loops.

A set $\breve S$ of surfaces has the \hypertarget{same intersection property}{\textbf{same surface intersection property for a graph}} $\Gamma$ iff each path $\gamma_i$ in $\Gamma$ intersects with all surfaces of $\breve S$ in the same source vertex $v_i\in V_\Gamma$ ($i=1,..,N$), all paths are outgoing and lie below each surface $S\in\breve S$ and there are no other intersection points of each path $\gamma_i$ and each surface in $\breve S$. 

A surface set $\breve S$ has the \textbf{same right surface intersection property for a graph} $\Gamma$ iff each path $\gamma_i$ in $\Gamma$ intersects with all surfaces of $\breve S$ in the same target vertex $v_i\in V_\Gamma$ ($i=1,..,N$), all paths are ingoing and lie above each surface $S\in\breve S$ and there are no other intersection points of each path $\gamma_i$ and each surface in $\breve S$. 
\end{defi}

Recall the example \thesection.\ref{exa Exa1}. Then the set $\{S_{1},S_{2}\}$ has the same surface intersection property for the graph $\Gamma$.

Then the set $\Gop_{\breve S,\Gamma}$ reduces to
\beqs\bigcup_{\sigma_L\in\breve\sigma_L}\bigcup_{S\in\breve S}\Big\{\rho\in\Map(\Gamma,\go): \quad
\rho(\gamma):= \sigma_L(S)^{-1}\text{ for }\gamma\cap S= s(\gamma)\Big\}
\eqs Notice that, $\gamma\cap S_1\cap ...\cap S_N=s(\gamma)$ for a path $\gamma$ in $\Gamma$ whereas $\Gamma\cap\breve S=\{s(\gamma_i)\}_{1\leq i\leq N}$. Clearly $\Gamma\cap S_i=s(\gamma_i)$ for a surface $S_i$ in $\breve S$ holds. 
Simply speaking the physical intution behind that is given by fluxes associated to different surfaces that should act on the same path.
 
A very special configuration is the following.
\begin{defi}
A set $\breve S$ of surfaces has the \textbf{same surface intersection property for a graph $\Gamma$ containing only loops} iff each loop $\gamma_i$ in $\Gamma$ intersects with all surfaces of $\breve S$ in the same vertices $s(\gamma_i)=t(\gamma_i)$ in $V_\Gamma$ ($i=1,..,N$), all loops lie below each surface $S\in\breve S$ and there are no other intersection points of each loop in $\Gamma$ and each surface in $\breve S$. 
\end{defi}

Notice that, both properties can be restated for other surface and path configurations. Hence a surface set have the simple or same surface intersection property for paths that are outgoing and lie above (or ingoing and below, or outgoing and below). The important fact is related to the question if the intersection vertices are the same for all surfaces or not.

Finally for the definition of the quantum flux operators notice the following objects.
\begin{defi}
 The set of all images of maps in $\Gop_{\breve S,\Gamma}$ for a fixed surface set $\breve S$ and a fixed path $\gamma$ in $\Gamma$ is denoted by $\bar\Gop_{\breve S,\gamma}$. 

The set of all finite products of images of maps in $G_{\breve S,\Gamma}$ for a fixed surface set $\breve S$ and a fixed graph $\Gamma$ is denoted by $\bar G_{\breve S,\Gamma}$. 
\end{defi}
The product $\cdot$ on $\bar G_{\breve S,\Gamma}$ is given by
\beqs \rho_{S_1,\Gamma}(\Gamma)\cdot \rho_{S_2,\Gamma}(\Gamma)&=(\rho_{S_1}(\gamma_1)\cdot \rho_{S_2}(\gamma_1),...,\rho_{S_1}(\gamma_N)\cdot \rho_{S_2}(\gamma_N))\\
&=(o_L(S_1)^{-1}o_L(S_2)^{-1},...,o_L(S_2)^{-1}o_L(S_1)^{-1})\\
&=((o_L(S_2)o_L(S_1))^{-1},...,(o_L(S_2)o_L(S_1))^{-1})\\
&=\rho_{S_3,\Gamma}(\Gamma)
\eqs

\begin{defi}
Let $S$ be a surface and $\Gamma$ be a graph such that the only intersections of the graph and the surface in $S$ are contained in the vertex set $V_\Gamma$. Moreover let $\fPG$ be a finite path groupoid associated to $\Gamma$. 

Then define the set for a fixed surface $S$ by
\beqs &\Map_S(\PD_\Gamma\Sigma,G\times G)\\&
:= \bigcup_{o_L\times o_R\in\breve o}\bigcup_{S\in\breve S}\Big\{& (\rho^L,\rho^R)\in\Map(\PD_\Gamma\Sigma,G\times G): 
&(\rho^L, \rho^R)(\gamma):=(o_L(S)^{\iota_L(S,\gamma)},o_R(S)^{\iota_R(S,\gamma)})\Big\}
\eqs
\end{defi}

Then the quantum flux operators are elements of the following group.
\begin{prop}Let $\breve S$ a set of surfaces and $\Gamma$ be a fixed graph, which contains no loops, such that the set $\breve S$ has the same surface intersection property for the graph $\Gamma$. 

The set $\bar\Gop_{\breve S,\gamma}$ has the structure of a group.

The group $\bar\Gop_{\breve S,\gamma}$ is called the \textbf{flux group associated a path and a finite set of surfaces}.
\end{prop}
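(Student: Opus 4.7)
The plan is to first unpack the hypothesis of the same surface intersection property to describe elements of $\bar\Gop_{\breve S,\gamma}$ concretely. Since every path in $\Gamma$ is outgoing and lies below every surface $S \in \breve S$, and is never ingoing, we have $\iota_L(S,\gamma) = -1$ and $\iota_R(S,\gamma) = 0$ for every $S \in \breve S$ and every $\gamma \in \Gamma$. Consequently each element $\rho_S(\gamma)$ of $\bar\Gop_{\breve S,\gamma}$ has the explicit form $(o_L(S)^{-1}, e_G)$ for some $(o_L, o_R) \in \breve o$ and some $S \in \breve S$, so the set is parametrised by the single component $o_L(S)^{-1} \in G$.

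Next, I would define the binary operation on $\bar\Gop_{\breve S,\gamma}$ by componentwise multiplication inherited from $G \times G$; this is precisely the restriction to the path $\gamma$ of the multiplication formula already written out for $\bar G_{\breve S,\Gamma}$ immediately before the proposition. The product of two elements $(o_L(S_1)^{-1}, e_G)$ and $(o_L(S_2)^{-1}, e_G)$ equals $((o_L(S_2) o_L(S_1))^{-1}, e_G)$. Closure then reduces to showing that the set $\{o_L(S)^{-1} : S \in \breve S,\ (o_L,o_R) \in \breve o\}$ is closed under the group operation of $G$. Since $\breve o_L$ consists of \emph{all} maps $o_L : \breve S \to G$ constrained only by the orientation-flip symmetry $o_L(S) = o_L(S^{-1})$, fixing any $S_0 \in \breve S$ the set $\{o_L(S_0) : o_L \in \breve o_L\}$ is all of $G$. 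Hence any element of the form $((o_L(S_2) o_L(S_1))^{-1}, e_G)$ is realised as $(\tilde o_L(S_0)^{-1}, e_G)$ for a suitable $\tilde o_L \in \breve o_L$, proving closure.

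The remaining group axioms are then immediate. Associativity follows directly from associativity in $G$ applied to the first component. The identity is provided by the image associated to the constant map $o_L \equiv e_G$, giving $(e_G, e_G)$. The inverse of $(o_L(S)^{-1}, e_G)$ is realised by the map $\tilde o_L$ with $\tilde o_L(S) = o_L(S)^{-1}$, which yields $(o_L(S), e_G) \in \bar\Gop_{\breve S,\gamma}$ by the preceding parametrisation argument. Altogether this exhibits $\bar\Gop_{\breve S,\gamma}$ as isomorphic to (a subgroup determined by the admissible values of) $G$.

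The main obstacle will be the parametrisation step: elements of $\bar\Gop_{\breve S,\gamma}$ are indexed implicitly through pairs $(o_L, S)$ rather than directly by elements of $G$, so one has to argue carefully that the assignment $(o_L, S) \mapsto o_L(S)^{-1}$ surjects onto the subgroup of $G$ that inherits the group structure, and that the induced operation on $\bar\Gop_{\breve S,\gamma}$ is independent of the choice of representatives. Once this identification is nailed down, all four group axioms follow from the corresponding axioms in $G$ with no further calculation.
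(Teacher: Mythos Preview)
Your proposal is correct and follows essentially the same approach as the paper's proof: both reduce $\bar\Gop_{\breve S,\gamma}$ under the same surface intersection hypothesis to the single $G$-component $o_L(S)^{-1}$ (the right component being trivial), then verify closure and inverses using that the maps $o_L$ range freely over $G$. You are more explicit than the paper about why the product $(o_L(S_2)o_L(S_1))^{-1}$ is again realised by some $\tilde o_L$, and about associativity and the identity, but these elaborations do not constitute a different route.
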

\begin{proofs}This follows easily from the observation that in this case 
$\Gop_{\breve S,\gamma}$ reduces to
\beqs\bigcup_{o_L\in\breve o_L}\bigcup_{S\in\breve S}\Big\{& \rho^L\in\Map(\Gamma,G): 
&\rho^L(\gamma):=o_L(S)^{-1}\text{ for }\gamma\cap S=s(\gamma)\Big\}
\eqs

There always exists a map $\rho^L_{S,3}\in\Gop_{\breve S,\gamma}$ such that the following equation defines a  multiplication operation 
\beqs \rho^L_{S,1}(\gamma)\cdot\rho^L_{S,2}(\gamma)=g_{1}g_{2}:=\rho^L_{S,3}(\gamma)\in\bar\Gop_{\breve S,\gamma}\eqs with inverse $(\rho^L_S(\gamma))^{-1}$ such that
\beqs \rho^L_S(\gamma)\cdot (\rho^L_S(\gamma))^{-1}=(\rho^L_S(\gamma))^{-1}\cdot \rho^L_S(\gamma)=e_G\quad\forall\gamma\in\Gamma\eqs
\end{proofs}

Notice that for a loop $\alpha$ an element $\rho_S(\alpha)\in\bar\Gop_{\breve S,\gamma}$ is defined by \beqs\rho_S(\alpha):=(\rho_S^L\times\rho_S^R)(\alpha)= (g, h)\in G^2 
\eqs In the case of a path $\gp$ that intersects a surface $S$ in the source and target vertex there is also an element $\rho_S(\gp)\in\bar\Gop_{\breve S,\gamma}$ defined by \beqs\rho_S(\gp):=(\rho_S^L\times\rho_S^R)(\gp)= (g, h)\in G^2 
\eqs

\begin{prop}
Let $\breve S$ be a set of surfaces and $\Gamma$ be a fixed graph, which contains no loops, such that the set $\breve S$ has the same surface intersection property for the graph $\Gamma$. Let $\PD^{\op}_\Gamma$ be a finite orientation preserved graph system such that the set $\breve S$.

The set $\bar G_{\breve S,\Gamma}$ has the structure of a group.

The  set $\bar G_{\breve S,\Gamma}$ is called the \textbf{flux group associated a graph and a finite set of surfaces}.
\end{prop}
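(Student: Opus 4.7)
The plan is to promote the previous proposition, which exhibits $\bar\Gop_{\breve S,\gamma}$ as a group on a single path $\gamma$, to the graph-level statement by working coordinatewise in $G^{\vert\Gamma\vert}\times G^{\vert\Gamma\vert}$. Since $\Gamma$ contains no loops and $\breve S$ has the same surface intersection property, for every path $\gamma_i\in\Gamma$ every surface $S\in\breve S$ meets $\gamma_i$ only in the common source vertex $s(\gamma_i)$ with the path outgoing and lying below $S$. Consequently $\iota_R(S,\gamma_i)=0$ and $\iota_L(S,\gamma_i)=-1$ for all $i$, so an element $\rho_{S,\Gamma}\in G_{\breve S,\Gamma}$ acts on a generating subgraph $\{\gamma_1,\dots,\gamma_N\}$ as
\begin{equation*}
\rho_{S,\Gamma}(\Gamma)=\bigl(o_L(S)^{-1},\dots,o_L(S)^{-1}\bigr)\in G^{\vert\Gamma\vert}\times\{e_G\}^{\vert\Gamma\vert},
\end{equation*}
and I identify $\bar G_{\breve S,\Gamma}$ with its projection into $G^{\vert\Gamma\vert}$.

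First I would define the multiplication $\cdot:\bar G_{\breve S,\Gamma}\times \bar G_{\breve S,\Gamma}\to \bar G_{\breve S,\Gamma}$ componentwise, using precisely the identity computed at the end of the definition just preceding the proposition: $\rho_{S_1,\Gamma}(\Gamma)\cdot\rho_{S_2,\Gamma}(\Gamma)=\rho_{S_3,\Gamma}(\Gamma)$ where $o_L(S_3)=o_L(S_2)o_L(S_1)$. By construction $\bar G_{\breve S,\Gamma}$ consists of all \emph{finite products} of such images, so closure under $\cdot$ is automatic. Associativity follows from associativity in $G$ applied in each coordinate, and the identity element is the constant map $(e_G,\dots,e_G)$, which lies in $\bar G_{\breve S,\Gamma}$ as the empty product (or, equivalently, as $\rho_{S,\Gamma}\cdot\rho_{S,\Gamma}^{-1}$ once inverses are established).

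The only genuine point to check is the existence of inverses inside $\bar G_{\breve S,\Gamma}$. For this I would invoke the closure of $\breve S$ under orientation-flip together with the identities $o_L(S^{-1})=o_L(S)$ and $\iota_L(S^{-1},\gamma)=-\iota_L(S,\gamma)$ noted earlier in the section. Applying the previous proposition separately to each path $\gamma_i$ yields an element $(\rho^L_S(\gamma_i))^{-1}\in\bar\Gop_{\breve S,\gamma_i}$, and assembling these into a tuple produces a map of the required form $\rho_{\tilde S,\Gamma}$ with $o_L(\tilde S)=o_L(S)^{-1}$; this map still lies in $G_{\breve S,\Gamma}$ because the same intersection data is preserved under flipping the surface orientations, and it inverts $\rho_{S,\Gamma}$ coordinatewise. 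Extending by linearity across finite products completes the verification that $\bar G_{\breve S,\Gamma}$ is a group.

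The main obstacle I anticipate is the bookkeeping that guarantees inversion stays within $\bar G_{\breve S,\Gamma}$: one must check that the flipped surface $S^{-1}$ still belongs to a surface set satisfying the same surface intersection property for $\Gamma$ (so that the resulting map is of the prescribed form $\rho_{S,\Gamma}$), and that the product of inverses of single-path fluxes is indeed realised by a single map $\rho_{\tilde S,\Gamma}$ rather than only by a formal product. This is exactly the content of the hypothesis that $\breve o$ consists of maps invariant under $S\mapsto S^{-1}$ and of the fact that all paths meet all surfaces coherently in the common source vertex, so the argument reduces cleanly to the per-path proposition already proved.
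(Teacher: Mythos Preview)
Your overall architecture matches the paper's proof: both reduce $G_{\breve S,\Gamma}$ to the simplified description
\[
\bigcup_{o_L\in\breve o_L}\bigcup_{S\in\breve S}\Bigl\{\rho_{S,\Gamma}\in\Map(\PD^{\op}_\Gamma,G^{\vert\Gamma\vert}):\rho_{S,\Gamma}=\rho_S\times\cdots\times\rho_S,\ \rho_S(\gamma)=o_L(S)^{-1}\Bigr\}
\]
using the same surface intersection property, then define multiplication componentwise via $(g_{S_1},\dots,g_{S_1})\cdot(g_{S_2},\dots,g_{S_2})=(g_{S_1}g_{S_2},\dots,g_{S_1}g_{S_2})$ and check it lands back in the set.

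The substantive difference is in how inverses are produced. You invoke the orientation flip $S\mapsto S^{-1}$ together with $o_L(S^{-1})=o_L(S)$ and $\iota_L(S^{-1},\gamma)=-\iota_L(S,\gamma)$. The paper does something simpler: it keeps the \emph{same} surface $S$ and instead varies the map $o_L$. Since the definition of $G_{\breve S,\Gamma}$ is a union over \emph{all} $o_L\in\breve o_L$, one may choose a new $o_L'$ with $o_L'(S)=g_S=o_L(S)^{-1}$, and then $\rho'_{S,\Gamma}(\Gamma)=(g_S^{-1},\dots,g_S^{-1})$ is already an image of a map in $G_{\breve S,\Gamma}$ for the very same surface $S$. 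No flip is needed.

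This matters for exactly the obstacle you flag. Under the hypothesis that $\breve S$ has the \emph{same} surface intersection property, every path lies \emph{below} each surface of $\breve S$; flipping $S$ would put the paths \emph{above} $S^{-1}$, so $S^{-1}$ cannot itself satisfy the same-property clause, and your bookkeeping concern becomes a genuine obstruction rather than a formality. The paper's route through the arbitrariness of $o_L$ sidesteps this entirely and is the cleaner way to close the argument.
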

\begin{proofs}
This follows from the observation that the set $G_{\breve S,\Gamma}$ is identified with
\beqs  \bigcup_{\sigma_L\in\breve\sigma_L}\bigcup_{S\in\breve S}\Big\{ \rho_{S,\Gamma}\in\Map(\PD^{\op}_\Gamma,G^{\vert E_\Gamma\vert}):\quad 
&\rho_{S,\Gamma}:=\rho_S\times...\times \rho_S\\&\text{ where }\rho_S(\gamma):=o_L(S)^{-1},
\rho_S\in \Gop_{\breve S,\Gamma},S\in\breve S,\gamma\in\Gamma\Big\}\eqs 

Let $\breve S$ be a surface set having the same intersection property for a fixed graph $\Gamma:=\{\gamma_1,...,\gamma_N\}$. Then for two surfaces $S_1,S_2$ contained in $\breve S$ define
\beqs \rho_{S_1,\Gamma}(\Gamma)\cdot \rho_{S_2,\Gamma}(\Gamma)
&=(\rho_{S_1}(\gamma_1)\cdot \rho_{S_2}(\gamma_1),...,\rho_{S_1}(\gamma_N)\cdot \rho_{S_2}(\gamma_N))\\
&= (g_{S_1},...,g_{S_1})\cdot (g_{S_2},..., g_{S_2}) =( g_{S_1}g_{S_2},..., g_{S_1}g_{S_2})
\eqs where $\Gamma=\{\gamma_1,...,\gamma_N\}$.
Note that, since the maps $o_L$ are arbitrary maps from $\breve S$ to $G$, it is assumed that the maps satisfy $o_L(S_i):=g^{-1}_{S_i}\in G$ for $i=1,2$. 
Clearly this is related to in this particular case of the graph $\Gamma$ and can be generalised. 

The inverse operation is given by
\beqs (\rho_{S,\Gamma}(\Gamma))^{-1}=((\rho_S(\gamma_1))^{-1},...,(\rho_S(\gamma_N))^{-1})
\eqs
where $N=\vert \Gamma\vert$ and $\rho_S\in\Gop_{\breve S,\gamma}$ for $S\in\breve S$. Since it is true that
\beqs \rho_{S,\Gamma}(\Gamma)\cdot \rho_{S,\Gamma}(\Gamma)^{-1}&= (g_{S},...,g_{S})\cdot (g_{S}^{-1},..., g_{S}^{-1})\\
&=(\rho_{S}(\gamma_1)\cdot \rho_{S}(\gamma_1)^{-1},...,\rho_{S}(\gamma_N)\cdot \rho_{S}(\gamma_N)^{-1})\\
& =( g_{S}g_{S}^{-1},..., g_{S}g_{S}^{-1})=(e_G,...,e_G)
\eqs yields.
\end{proofs}
Notice that, it is not defined that
\beqs &\rho_{S_1,\Gamma}(\Gamma)\bullet_R \rho_{S_2,\Gamma}(\Gamma)\\&=(o_L(S_2)^{-1}o_L(S_1)^{-1},...,o_L(S_2)^{-1}o_L(S_1)^{-1})
=((o_L(S_1)o_L(S_2))^{-1},...,(o_L(S_1)o_L(S_2))^{-1})\\
&=\rho_{S_3,\Gamma}(\Gamma)
\eqs is true.
Moreover observe that, if all subgraphs of a finite orientation preserved graph system are \hyperlink{natural identification}{naturally identified}, then $\bar G_{\breve S,\Gp\leq\Gamma}$ is a subgroup of $\bar G_{\breve S,\Gamma}$ for all subgraphs $\Gp$ in $\PD_\Gamma$. If $G$ is assumed to be a compact Lie group, then the flux group $\bar G_{\breve S,\Gamma}$ is called the Lie flux group.

There is another group, if another surface set is considered.
\begin{prop}Let $\breve T$ be a set of surfaces and $\Gamma$ be a fixed graph such that the set $\breve T$ has the simple surface intersection property for the graph $\Gamma$. Let $\PD^{\op}_\Gamma$ be a finite orientation preserved graph system.

The set $\bar G_{\breve T,\Gamma}$ has the structure of a group.
\end{prop}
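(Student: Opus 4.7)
The plan is to proceed in direct analogy with the proof of the previous proposition for the same surface intersection property, adapted to reflect that the simple surface intersection property concerns paths lying above and ingoing with respect to the surfaces. Label the surfaces so that $S_i\in\breve T$ is the unique surface met by $\gamma_i$ at $t(\gamma_i)$, so that $\iota_L(S,\gamma)=0$ identically and $\iota_R(S_j,\gamma_i)=-\delta_{ij}$ (in the Kronecker-delta sense on the labelled subgraph). This reduces $\Gop_{\breve T,\Gamma}$ to
\begin{equation*}
\bigcup_{o_R\in\breve o_R}\bigcup_{S\in\breve T}\bigl\{\rho\in\Map(\Gamma,G):\rho(\gamma):=o_R(S)^{-1}\text{ for }\gamma\cap S=t(\gamma)\bigr\},
\end{equation*}
so each evaluated flux $\rho_{S_i,\Gamma}(\Gamma)\in G^{\vert\Gamma\vert}\times G^{\vert\Gamma\vert}$ has entry $(e_G,e_G)$ in every slot except the $i$-th, whose R-component equals $o_R(S_i)^{-1}$.

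Next I would equip $\bar G_{\breve T,\Gamma}$, which by construction is the set of all finite products of such evaluated fluxes, with the componentwise multiplication inherited from the direct product group. Associativity is inherited from $G$; the identity is obtained by choosing the constant map $o_R(S)=e_G$; closure under the multiplication is immediate from the very definition of $\bar G_{\breve T,\Gamma}$ as a set of finite products. For inverses, I would observe that any element is of the form $\rho_{S_{i_1},\Gamma}(\Gamma)\cdots\rho_{S_{i_n},\Gamma}(\Gamma)$ whose componentwise inverse equals $\rho_{S_{i_n},\Gamma}(\Gamma)^{-1}\cdots\rho_{S_{i_1},\Gamma}(\Gamma)^{-1}$; since $o_R$ may be chosen freely, replacing $o_R(S)$ by $o_R(S)^{-1}$ produces another flux whose evaluation realises the componentwise inverse of each factor, so the product inverse again lies in $\bar G_{\breve T,\Gamma}$.

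The step I expect to require the most care is the verification that inverses of arbitrary finite products genuinely remain in $\bar G_{\breve T,\Gamma}$ rather than escaping to a strictly larger set, and the delicate difference to the previous proposition: in the same intersection case the product $\rho_{S_1,\Gamma}\cdot\rho_{S_2,\Gamma}$ is again of the single-surface form $\rho_{S_3,\Gamma}$ because all components are equal, whereas under the simple intersection property products of distinct single-surface fluxes have several nontrivial slots and cannot be realised by a single surface. The argument therefore relies essentially on two features: first, the decoupling of components, since each $\gamma_i$ meets only $S_i$, so distinct factors contribute to distinct slots and do not interfere; and second, the freedom to choose $o_R$ arbitrarily, which supplies the inverses slot by slot. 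Once these are settled the group axioms reduce to the corresponding identities in $G$, and the R-component identification yields $\bar G_{\breve T,\Gamma}\cong G^{\vert\Gamma\vert}$ as a byproduct.
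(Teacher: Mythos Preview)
Your proposal is correct and follows essentially the same route as the paper: reduce to the $R$-component via the simple surface intersection property, observe that each $\rho_{S_i,\Gamma}(\Gamma)$ is nontrivial only in the $i$-th slot, and verify the group axioms using the componentwise product inherited from $G^{\vert\Gamma\vert}$, arriving at the identification $\bar G_{\breve T,\Gamma}\cong G^N$. Your explicit remark that here, unlike in the same-surface case, products of distinct single-surface fluxes occupy several slots and hence the ``finite products'' in the definition of $\bar G_{\breve T,\Gamma}$ are genuinely needed is a useful clarification that the paper leaves implicit.
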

The same arguments using the identification of $\bar G_{\breve T,\Gamma}$ with
\beqs  \bigcup_{\sigma_R\in\breve\sigma_R}\Big\{ \rho_{T,\Gamma}\in\Map(\PD^{\op}_\Gamma,G^{\vert E_\Gamma\vert}):\quad 
&\rho_{\breve T,\Gamma}:=\rho_{T_1}\times...\times \rho_{T_N}\\&\text{ where }\rho_{T_i}(\gamma):=o_R(T_i)^{-1},
\rho_{T_i}\in \Gop_{\breve T,\Gamma},T_i\in\breve T,\gamma\in\Gamma\Big\}\eqs 
which is given by
\beqs \rho_{T_1,\Gamma}(\Gamma)\cdot ...\cdot \rho_{T_N,\Gamma}(\Gamma)
&=(\rho_{T_1}(\gamma_1)e_G, e_G,...,e_G)\cdot 
(e_G, \rho_{T_2}(\gamma_2) e_G, e_G,...,e_G)\cdot 
 ...\cdot (e_G,...,e_G,\rho_{T_N}(\gamma_N)e_G)\\
&=(\rho^1_{T_1}(\gamma_1),...,\rho^1_{T_N}(\gamma_N))= (g_{1},...,g_{N})\in G^N\\
&=:\rho_{\breve T,\Gamma}(\Gamma)
\eqs
Then the multiplication operation is presented by
\beqs \rho^1_{\breve T,\Gamma}(\Gamma)\cdot \rho^2_{\breve T,\Gamma}(\Gamma)
&=(\rho^1_{T_1}(\gamma_1)\cdot \rho^2_{T_1}(\gamma_1),...,\rho^1_{T_N}(\gamma_N)\cdot \rho^2_{T_N}(\gamma_N))\\
&= (g_{1,1},...,g_{1,N})\cdot (g_{2,1},..., g_{2,N}) =( g_{1,1}g_{2,1},..., g_{1,N}g_{2,N})\in G^N
\eqs where $\Gamma=\{\gamma_1,...,\gamma_N\}$.

It is also possible that, the fluxes are located only in a vertex and do not depend on ingoing or outgoing, above or below orientation properties.
\begin{defi}\label{def Gloc}Let $\PD_\Gamma$ be a finite graph groupoid associated to a graph $\Gamma$ and let $N$ be the number of edges of the graph $\Gamma$. 

Define the set of maps 
\beqs G^{\loc}_{\Gamma}:=
\Big\{\textbf{g}_\Gamma\in\Map(\PD_\Gamma,G^{\vert \Gamma\vert}):& \textbf{g}_\Gamma:=g^1_\Gamma\circ s\times ...\times g^N_\Gamma\circ s\\  
&g^i_\Gamma\in\Map(\Gamma,G)\Big\}
\eqs 

Then $\bar G^{\loc}_{\Gamma}$ is the set of all images of maps in $G^{\loc}_{\Gamma}$ for all graphs in $\PD_\Gamma$ and $\bar G^{\loc}_{\Gamma}$ is called the \textbf{local flux group associated a finite graph system}. 
\end{defi}

\section{The flux and flux transformation group, n.c. and heat-kernel holonomy $C^*$-algebra}\label{subsec fluxgroupalg}

In this section new algebras are constructed from either the quantum configuration or the quantum momentum variables of LQG. In the following the focus is based on the quantum momentum variables, which are given by the group-valued quantum flux operators associated to surfaces and paths.

\subsection{The flux group $C^*$-algebra associated to graphs and a surface set}

Let $\CD(G)$ be the convolution $^*$-algebra of continuous functions $C_c(G)$ on a locally compact unimodular group $G$ equipped with the convolution product, an inversion and supremum norm.

Recall that a surface $S$ has the same surface intersection property for a graph $\Gamma$, if each path of $\Gamma$ intersect the surface $S$ exactly once in a source vertex of the path and the path is outgoing and lies below.

\begin{cor}\label{cor convsame}Let $S$ be a surface with same surface intersection property for a finite graph system associated to a graph $\Gamma$. Let $G$ be a unimodular locally compact group and let $\bar G_{S,\Gamma}$ be the flux group.

Then the \textbf{convolution flux $^*$-algebra $\CD(\bar G_{S,\Gamma})$ associated to a surface} and a graph $\Gamma$ is defined by the following product
\beqs 
&(f_1 \ast f_2)(\rho_{S,\Gamma}(\Gamma))=(f_1 \ast f_2)(\rho_{S}(\gamma_1),...,\rho_{S}(\gamma_N))\\
&=\int_{G_{S,\Gamma}} f_1(\rho_{S}(\gamma_1)\hat\rho_{S}(\gamma_1)^{-1},...,\rho_{S}(\gamma_N)\hat\rho_{S}(\gamma_N)^{-1})f_2(\hat\rho_{S}(\gamma_1),...,\hat\rho_{S}(\gamma_N))\\&\qquad\qquad\dif\mu_{S,\Gamma}(\hat\rho_{S}(\gamma_1),...,\hat\rho_{S}(\gamma_N))\\
&=\int_{G_{S,\Gamma}} f_1(\rho_{S,\Gamma}(\Gamma)\hat\rho_{S,\Gamma}(\Gamma)^{-1})f_2(\hat\rho_{S,\Gamma}(\Gamma))\dif\mu_{S,\Gamma}(\hat\rho_{S}(\Gamma))
\eqs for $\Gamma=\{\gamma_1,...,\gamma_N\}$ and  where $\rho_{S,\Gamma},\hat\rho_{S,\Gamma}\in G_{S,\Gamma}$ and $\rho_{S},\hat\rho_{S}\in \Gop_{S,\Gamma}$
which reduces to 
\beqs (f_1 \ast f_2)(\rho_{S,\Gamma}(\Gp))=(f_1 \ast f_2)(\rho_{S}(\gamma_i))=\int_{G_{S,\Gamma}} f_1(\rho_{S}(\gamma_i)\hat\rho_{S}(\gamma_i)^{-1})f_2(\hat\rho_{S}(\gamma_i))\dif\mu_{S,\Gamma}(\hat\rho_{S}(\gamma_i))
\eqs for any $i=1,...,N$ and $\Gp=\{\gamma_i\}\in\PD_\Gamma$, the involution 
\beqs
f_\Gamma(\rho_{S,\Gamma}(\Gp))^*:= \overline{ f_\Gamma(\rho_{S,\Gamma}(\Gp)^{-1})}
\eqs for any $i=1,...,N$
and equipped with the supremum norm.
\end{cor}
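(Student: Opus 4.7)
\begin{proofs}[Proof proposal]
The plan is to reduce the statement to the standard fact that for any locally compact unimodular group $H$ the space $C_c(H)$ equipped with convolution, involution $f^{*}(h)=\overline{f(h^{-1})}$ and supremum norm is a (normed) $^{*}$-algebra. The first step is to verify that the hypotheses of that standard construction are met by $\bar G_{S,\Gamma}$. By the proposition preceding the corollary, the same-surface-intersection hypothesis guarantees that $\bar G_{S,\Gamma}$ is in fact a group; moreover, under this hypothesis the identification spelled out in the proof of that proposition realises $\bar G_{S,\Gamma}$ as the diagonal subgroup $\{(g,\ldots,g):g\in G\}\subset G^{N}$ with $N=\vert\Gamma\vert$. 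Since $G$ is locally compact and unimodular, this subgroup is closed, hence locally compact in the subspace topology, and inherits a left- and right-invariant Haar measure $\mu_{S,\Gamma}$ obtained by restriction of the product Haar measure.

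Next, I would check that the integrals appearing in the statement are well defined. Because the identification above collapses the $N$-fold product to a single copy of $G$, the formula
\[
(f_{1}\ast f_{2})(\rho_{S,\Gamma}(\Gamma))
=\int_{\bar G_{S,\Gamma}} f_{1}(\rho_{S,\Gamma}(\Gamma)\hat\rho_{S,\Gamma}(\Gamma)^{-1})\,f_{2}(\hat\rho_{S,\Gamma}(\Gamma))\,\dif\mu_{S,\Gamma}(\hat\rho_{S}(\Gamma))
\]
reduces to ordinary convolution on a single locally compact unimodular group, so the integrand is compactly supported and continuous, and the integral exists. The reduced formula on a subgraph $\Gp=\{\gamma_{i}\}$ is then just the projection of this convolution onto one of the identified copies, which is compatible with the full formula.

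The remaining steps are routine verifications lifted from the standard group-algebra construction: associativity of $\ast$ uses a Fubini argument together with left-invariance of $\mu_{S,\Gamma}$; the involution is an isometric antilinear involution because $\bar G_{S,\Gamma}$ is unimodular (so that $\int f(\rho^{-1})\,\dif\mu_{S,\Gamma}(\rho)=\int f(\rho)\,\dif\mu_{S,\Gamma}(\rho)$); compatibility $(f_{1}\ast f_{2})^{*}=f_{2}^{*}\ast f_{1}^{*}$ then follows by a change of variables; and the supremum norm is submultiplicative and $^{*}$-invariant on $C_{c}(\bar G_{S,\Gamma})$ by standard estimates.

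The only real obstacle is the first step, namely to justify cleanly that the same-surface-intersection property forces $\bar G_{S,\Gamma}$ to be a unimodular locally compact group (not merely a group in the abstract sense). Once the identification with the diagonal subgroup of $G^{N}$ is established and the restricted Haar measure is in place, the rest of the corollary is a direct transcription of the standard $C_{c}(H)$ construction in the notation used for flux group elements, and in particular the compatibility with the subgraph reduction is automatic from the identification.
\end{proofs}
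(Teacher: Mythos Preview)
Your approach is correct and is essentially what the paper does implicitly: the paper gives no proof of this corollary at all. It opens Section~3.1 by fixing the standard convolution $^*$-algebra $\CD(G)=C_c(G)$ with convolution, involution and supremum norm, and then simply specialises this construction to $\bar G_{S,\Gamma}$, relying on the preceding proposition (that the same-surface-intersection hypothesis makes $\bar G_{S,\Gamma}$ a genuine group, identified with the diagonal in $G^N$). Your write-up makes explicit exactly the two points the paper leaves tacit: that the identification yields a closed, hence locally compact and unimodular, subgroup, and that the displayed formulae are then nothing but the usual $C_c(H)$ convolution in flux notation.

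One small overreach: you assert that the supremum norm is submultiplicative for convolution ``by standard estimates''. That is false for a general non-compact locally compact group (one only has $\|f_1\ast f_2\|_\infty\le\|f_1\|_1\|f_2\|_\infty$), and the paper does not claim it either---it merely says the algebra is ``equipped with'' the supremum norm, not that it is a normed algebra in the Banach-algebra sense. Drop that clause and your argument is a clean unpacking of what the paper takes for granted.
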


Set $\Gp:=\{\gamma_i\}$. Remark that, if all paths $\gamma_i$ are ingoing and above, then the product reads
\beq 
&(f_1 \ast f_2)(\rho_{S,\Gamma}(\Gp))=\int_{G} f_1(g_S\hat g_S^{-1})f_2(\hat g_S)\dif\mu(\hat g_S)
\eq otherwise
\beq 
&(f_1 \ast f_2)(\rho_{S,\Gamma}(\Gp))=\int_{G} f_1(g_S^{-1}\hat g_S)f_2(\hat g_S)\dif\mu(\hat g_S)
\eq This implies that, only for one surface the structure is identified with $\CD(G)$. The convolution algebra $\CD(\bar G_{\breve S,\Gamma})$ is defined similarly to the one defined in corollary \ref{cor convsame} for a surface set $\breve S$ with same surface intersection property for a finite graph system associated to a graph $\Gamma$.

Recall that, a set $\breve S$ of $N$ surfaces has the \hyperlink{simple surface intersection property for a graph}{simple surface intersection property for a graph $\Gamma$} with $N$ independent edges, if it contain only surfaces, for which each path $\gamma_i$ of a graph $\Gamma$ intersects only one surface $S_i$ only once in the target vertex of the path $\gamma_i$, the path $\gamma_i$ lies above and there are no other intersection points of each path $\gamma_i$ and each surface in $\breve S$. Then the convolution algebra can be defined as follows.

\begin{cor}
Let $\breve S:=\{S_i\}_{1\leq i\leq N}$ be a set of surfaces with \hypertarget{simple surface intersection property for a graph system}{simple surface intersection property for a finite graph system} associated to a graph $\Gamma$.

Then the \textbf{convolution flux $^*$-algebra $\CD(\bar G_{\breve S,\Gamma})$ associated to a surface set} and a graph $\Gamma$ is defined by the following product
\beqs 
&(f_1 \ast f_2)(\rho_{S_1}(\gamma_1),...,\rho_{S_N}(\gamma_N))\\
&=\int_{G} f_1(\rho_{S_1}(\gamma_1)\hat\rho_{S_1}(\gamma_1)^{-1},...,\rho_{S_N}(\gamma_N)\hat\rho_{S_N}(\gamma_N)^{-1})f_2(\hat\rho_{S_1}(\gamma_1),...,\hat\rho_{S_N}(\gamma_N))\dif\mu(\hat\rho_{\breve S,\Gamma}(\Gamma))
\eqs where $\rho_{\breve S,\Gamma}\in G_{\breve S,\Gamma}$, $\rho_{S_i}\in \Gop_{\breve S,\Gamma}$ for $i=1,...,N$, $\rho_{\breve S,\Gamma}(\Gamma):=(\rho_{S_1}(\gamma_1),...,\rho_{S_N}(\gamma_N))$, the involution is defined by
\beqs
f_\Gamma(\rho_{S_1}(\gamma_1),...,\rho_{S_N}(\gamma_N))^*:= \overline{ f_\Gamma(\rho_{S_1}(\gamma_1)^{-1},...,\rho_{S_N}(\gamma_N)^{-1})}
\eqs 
and equipped with the supremum norm.
\end{cor}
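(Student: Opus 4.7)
The plan is to reduce the statement to the standard construction of the convolution $*$-algebra on a locally compact unimodular group, adapted from the preceding Corollary \ref{cor convsame} via the product structure of $\bar G_{\breve S,\Gamma}$ in the simple-intersection setting.

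First, I would invoke the proposition just above that gives $\bar G_{\breve S,\Gamma}$ its group structure under the simple surface intersection property. Since every path $\gamma_i$ meets exactly one surface $S_i$, in its target vertex, from above, each factor $\rho_{S_i}(\gamma_i) = o_R(S_i)^{-1}$ varies independently over $G$, so the identification
\[
\rho_{\breve S,\Gamma}(\Gamma) = (\rho_{S_1}(\gamma_1),\ldots,\rho_{S_N}(\gamma_N)) \;\longmapsto\; (g_1,\ldots,g_N)
\]
exhibits $\bar G_{\breve S,\Gamma}\cong G^N$ as groups, where $N = \vert\Gamma\vert$. Note that this is \emph{not} the case for the same-intersection setting of Corollary \ref{cor convsame}, where all factors are forced to coincide; here the independence of distinct $(\gamma_i,S_i)$ pairs is what produces a genuine $N$-fold product.

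Second, I would observe that $G^N$ is locally compact and unimodular whenever $G$ is, with Haar measure given by the product $\mu := \mu_G \otimes \cdots \otimes \mu_G$. The classical theory of group algebras (see e.g.\ Pedersen or Williams, cited in the Introduction) then produces a Banach $*$-algebra $C_c(G^N)$ with convolution
\[
(f_1 \ast f_2)(g) = \int_{G^N} f_1(g h^{-1}) f_2(h)\,\dif\mu(h),
\]
involution $f^*(g) = \overline{f(g^{-1})}$, and supremum norm; unimodularity of $G^N$ is what removes any modular-function correction in the involution.

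Finally, I would transport these formulas through the identification $\bar G_{\breve S,\Gamma}\cong G^N$. Substituting $g = \rho_{\breve S,\Gamma}(\Gamma)$ and $h = \hat\rho_{\breve S,\Gamma}(\Gamma)$, writing $gh^{-1}$ componentwise, and expanding $\dif\mu$ as the product of $N$ Haar measures on $G$ reproduces the convolution formula displayed in the statement; the involution formula is the analogous transcription using the componentwise inversion $\rho_{\breve S,\Gamma}(\Gamma)^{-1} = (\rho_{S_1}(\gamma_1)^{-1},\ldots,\rho_{S_N}(\gamma_N)^{-1})$. The only real obstacle is notational bookkeeping: one must check that the product structure on $\bar G_{\breve S,\Gamma}$ is compatible with the same identification on every subgraph in $\PD_\Gamma^{\op}$, so that the convolution restricts sensibly. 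All analytic content is inherited from the standard locally compact group theory, and the construction parallels Corollary \ref{cor convsame} verbatim, with the single surface $S$ there replaced by the tuple $\breve S$ here.
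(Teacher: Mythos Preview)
Your proposal is correct and matches the paper's approach: the paper gives no formal proof for this corollary, merely remarking immediately afterward that ``Clearly $\bar G_{\breve S,\Gamma}$ is identified with $G^N$ for $N$ being the number of independent paths in $\Gamma$ such that each of the path $\gamma_i$ intersects a surface $S_i$,'' which is exactly the identification you carry out before invoking the standard convolution $^*$-algebra construction on a locally compact unimodular group. Your write-up is, if anything, more explicit than the paper about why the simple intersection property yields a genuine $N$-fold product (as opposed to the diagonal embedding in the same-intersection case of Corollary~\ref{cor convsame}).
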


Clearly $\bar G_{\breve S,\Gamma}$ is identified with $G^N$ for $N$ being the number of independent paths in $\Gamma$ such that each of the path $\gamma_i$ intersects a surface $S_i$. 

The convolution algebra $\CD(\bar G_{\breve S,\Gamma})$ is also studied for other situations as far as the surface set $\breve S$ has one of the surface intersection properties, which have been given in section \ref{subsec fluxdef}.

The dual space $C_0(\bar G_{\breve S,\Gamma})^*$ is identified by the Riesz-Markov theorem with the Banach space of bounded complex Baire measures on $\bar G_{\breve S,\Gamma}$. Moreover, each Baire measure has a unique extension to a regular Borel measure on $\bar G_{\breve S,\Gamma}$. The Banach space of all regular Borel measures is denoted by $\textbf{M}(\bar G_{\breve S,\Gamma})$. There is a convolution multiplication
\beq &\int_{\bar G_{\breve S,\Gamma}}f(\rho_{\breve S,\Gamma}(\Gp))\dif (\mu\ast\nu)(\rho_{\breve S,\Gamma}(\Gp))\\
&=\int_{\bar G_{\breve S,\Gamma}}\int_{\bar G_{\breve S,\Gamma}}f(\rho_{\breve S,\Gamma}(\Gp)\hat\rho_{\breve S,\Gamma}(\Gp))\dif\mu(\rho_{\breve S,\Gamma}(\Gp))\dif\nu(\hat\rho_{\breve S,\Gamma}(\Gp))
\eq where $\rho_{\breve S,\Gamma}\in G_{\breve S,\Gamma}$, $\rho_{S_i}\in \Gop_{\breve S,\Gamma}$ for $i=1,...,N$, $\rho_{\breve S,\Gamma}(\Gp):=(\rho_{S_1}(\gamma_1),...,\rho_{S_N}(\gamma_M))$, $\Gp:=\{\gamma_1,...,\gamma_M\}$, $\mu,\nu\in\textbf{M}(\bar G_{\breve S,\Gamma})$ and $f\in C_0(\bar G_{\breve S,\Gamma})$ and an inversion
\beq \int_{\bar G_{\breve S,\Gamma}}f(\rho_{\breve S,\Gamma}(\Gp))\dif\mu^*(\rho_{\breve S,\Gamma}(\Gp))
=\overline{\int_{\bar G_{\breve S,\Gamma}}\bar f(\rho_{\breve S,\Gamma}(\Gp)^{-1})\dif\mu(\rho_{\breve S,\Gamma}(\Gp))}
\eq which transfers $\textbf{M}(\bar G_{\breve S,\Gamma})$ to a Banach $^*$-algebra. Then restrict $\textbf{M}(\bar G_{\breve S,\Gamma})$ to the norm closed subspace consisting of measures absolutely continuous w.r.t. the Haar measure $\mu_{\breve S,\Gamma}$, which is identified with $L^1(\bar G_{\breve S,\Gamma})$ by $\dif\mu(\rho_{\breve S,\Gamma}(\Gp))=f_\Gamma(\rho_{\breve S,\Gamma}(\Gp))\dif\mu_{\breve S,\Gamma}(\rho_{\breve S,\Gamma}(\Gp))$ for $f_\Gamma\in L^1(\bar G_{\breve S,\Gamma})$.

\begin{cor}Let $\breve S:=\{S_i\}_{1\leq i\leq N}$ be a set of surfaces with same surface intersection property for a finite graph system associated to a graph $\Gamma$.

The Banach $^*$-algebra $L^1(\bar G_{\breve S,\Gamma},\mu_{\breve S,\Gamma})$ is the continuous extension of $\CD (\bar G_{\breve S,\Gamma})$ in the $L^1$-norm. 
\end{cor}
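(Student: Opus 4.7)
The plan is to reduce the statement to the classical fact that, for any locally compact unimodular group $H$ equipped with a Haar measure $\mu_H$, the space $C_c(H)$ with convolution product, involution and supremum norm completes in the $L^1$-norm to the Banach $^*$-algebra $L^1(H,\mu_H)$. By the preceding discussion $\bar G_{\breve S,\Gamma}$ has been identified with a product $G^N$ (where $N$ is the number of independent paths in $\Gamma$ meeting the surfaces), so under the assumption that $G$ is unimodular locally compact, the group $\bar G_{\breve S,\Gamma}$ is again a unimodular locally compact group, and $\mu_{\breve S,\Gamma}$ is the corresponding Haar measure, realised as a product of Haar measures on $G$.

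First I would check that $\CD(\bar G_{\breve S,\Gamma})=C_c(\bar G_{\breve S,\Gamma})$ with the convolution and involution given in the previous corollary is exactly the restriction to compactly supported continuous functions of the standard group convolution algebra, by writing out the measure-theoretic integral formulas and observing that the formulas in the corollary are special cases, where the factorisation of $\mu_{\breve S,\Gamma}$ as a product of Haar measures on $G$ reproduces the multi-variable convolution written componentwise.

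Next I would establish the two analytic estimates that make $\|\cdot\|_1$ compatible with the $^*$-algebra structure: Young's inequality
\begin{equation*}
\|f_1\ast f_2\|_1\le \|f_1\|_1\,\|f_2\|_1
\end{equation*}
for $f_1,f_2\in C_c(\bar G_{\breve S,\Gamma})$, which follows from Fubini's theorem and left-invariance of $\mu_{\breve S,\Gamma}$, and the isometry property $\|f^*\|_1=\|f\|_1$, which follows from unimodularity. Together these show that the operations $\ast$ and $^*$ are uniformly continuous with respect to $\|\cdot\|_1$ on $\CD(\bar G_{\breve S,\Gamma})$ and therefore extend uniquely to continuous operations on the $\|\cdot\|_1$-completion.

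Finally I would invoke the classical density statement that $C_c(H)$ is $L^1$-norm dense in $L^1(H,\mu_H)$ for any locally compact group $H$ (a consequence of regularity of Radon measures and Urysohn's lemma), apply it to $H=\bar G_{\breve S,\Gamma}$, and conclude that the $\|\cdot\|_1$-completion of $\CD(\bar G_{\breve S,\Gamma})$ is $L^1(\bar G_{\breve S,\Gamma},\mu_{\breve S,\Gamma})$ as a Banach space, with the extended convolution and involution turning it into a Banach $^*$-algebra. The main obstacle, though only a bookkeeping one, is keeping the identifications straight when the surface set $\breve S$ has only the simple or same surface intersection property for $\Gamma$: one must verify that in each case the algebraic product given in the corollary is literally the product group convolution on $G^N$, so that the classical results apply without modification.
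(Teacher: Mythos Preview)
Your proposal is correct and is the standard textbook argument. The paper, however, gives no proof of this corollary at all: it is stated as a classical fact from abstract harmonic analysis, sandwiched between the identification of $L^1(\bar G_{\breve S,\Gamma})$ with the absolutely continuous part of the measure algebra $\textbf{M}(\bar G_{\breve S,\Gamma})$ and the construction of the representation $\pi_0$. So there is nothing to compare against; your outline simply supplies the routine details (Young's inequality, isometry of the involution under unimodularity, and density of $C_c$ in $L^1$) that the paper takes for granted.
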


There is a non-degenerate $^*$-representation $\pi_0$ of $L^1(\bar G_{\breve S,\Gamma},\mu_{\breve S,\Gamma})$ on the Hilbert space $\HS_\Gamma=L^2(\bar G_{\breve S,\Gamma},\mu_{\breve S,\Gamma})$, which is of the form
\beq \pi_0(f_\Gamma):=\int_{\bar G_{\breve S,\Gamma}} f_\Gamma(\rho_{S_1}(\gamma_1),...,\rho_{S_N}(\gamma_N))\dif\mu_{\breve S,\Gamma}(\rho_{S_1}(\gamma_1),...,\rho_{S_N}(\gamma_N))\eq for $f_\Gamma\in L^1(\bar G_{\breve S,\Gamma},\mu_{\breve S,\Gamma})$ (defined in the sense of a Bochner integral).

Notice that, the Banach $^*$-algebra $L^1(\bar G_{\breve S,\Gamma},\mu_{\breve S,\Gamma})$ has an approximate unit. Then for a $^*$-representation $\pi_0$ of $L^1(\bar G_{\breve S,\Gamma},\mu_{\breve S,\Gamma})$ on $\HS_\Gamma$ exists a GNS-triple $(\HS_\Gamma,\pi_0,\Omega_0)$ and an associated state $\omega_0$ \cite[section 8.6]{Schmuedgen90}. Furthermore, there is a left regular unitary representation $U_{\overleftarrow{L}}^N$ of $\bar G_{\breve S,\Gamma}$ on $\HS_\Gamma$ associated to an action $\alpha_{\overleftarrow{L}}^N$ \cite[Lemma 3.4 and Lemma 3.16]{Kaminski1} or \cite[Lemma 6.1.4 and Lemma 6.1.16]{KaminskiPHD}.
Then observe that, for $f_\Gamma\in L^1(\bar G_{\breve S,\Gamma},\mu_{\breve S,\Gamma})$ and $\rho_{\breve S,\Gamma}^N,\hat\rho_{\breve S,\Gamma}^N\in\bar G_{\breve S,\Gamma}$ the unitary $U_{\overleftarrow{L}}^N$ satisfies
\beq
&U_{\overleftarrow{L}}^N(\hat\rho_{\breve S,\Gamma}^N)\pi_0(f_\Gamma)\Omega_0\\
&= \int_{\bar G_{\breve S,\Gamma}}d\mu_{\breve S,\Gamma}(\rho_{\breve S,\Gamma}^N(\Gamma))U_{\overleftarrow{L}}^N(\hat\rho_{S,\Gamma}^N)f_\Gamma(\rho_{\breve S,\Gamma}^N)\Omega_0\\
&=\int_{\bar G_{\breve S,\Gamma}}d\mu_{\breve S,\Gamma}(\rho_{S_1}(\gamma_1),...,\rho_{S_N}(\gamma_N))f_\Gamma(\hat\rho_{S_1}(\gamma_1)\rho_{S_1}(\gamma_1),...,\hat\rho_{S_N}(\gamma_N)\rho_{S_N}(\gamma_N))\Omega_0\\ 
&= \pi_0(\alpha_{\overleftarrow{L}}^N(\rho_{\breve S,\Gamma}^N)f_\Gamma)\Omega_0\eq 
whenever $\rho_{\breve S,\Gamma}^N(\Gamma):=(\rho_{S_1}(\gamma_1),...,\rho_{S_N}(\gamma_N))$ and $\Omega_0$ is the cyclic vector. This implies 
\beq\omega_0(\alpha_{\overleftarrow{L}}^N(\hat\rho_{\breve S,\Gamma}^N)f_\Gamma)=\omega_0(f_\Gamma)
\eq and hence that the state $\omega_0$ on the Banach $^*$-algebra $L^1(\bar G_{\breve S,\Gamma},\mu_{\breve S,\Gamma})$ associated to the representation $\pi_0$ is $\bar G_{\breve S,\Gamma}$-invariant.

The same is true if all paths in $\Gamma$ intersect in vertices of the set $V_\Gamma$ with a surface $S$ such that all paths are outgoing and lie below the surface $S$ and the unitaries $U_{\overleftarrow{L}}^1(\hat\rho_{S,\Gamma}^1)$ are analysed. Clearly this can be also studied for other situations presented in \cite[Section 3.1]{Kaminski1} or \cite[Section 6.1]{KaminskiPHD}.

Notice that, the Banach $^*$-algebra $L^1(\bar G^{\disc}_{\breve S,\Gamma})$ is generated by all Dirac point measures $\{\delta(\rho_{S,\Gamma}(\Gp)):\rho_{S,\Gamma}(\Gp)\in \bar G_{\breve S,\Gamma}\}$ such that
\beqs &\delta(\rho_{S,\Gamma}(\Gp))\ast\delta(\hat\rho_{S,\Gamma}(\Gp))=\delta(\rho_{S,\Gamma}(\Gp)\hat\rho_{S,\Gamma}(\Gp))\\
&\delta^*(\rho_{S,\Gamma}(\Gp))=\delta(\rho_{S,\Gamma}(\Gp)^{-1})
\eqs
Moreover, recognize that,
\beqs &(\delta(\rho_{S,\Gamma}(\Gp))\ast f_\Gamma)(\hat\rho_{S,\Gamma})=f_\Gamma(\rho_{S,\Gamma}(\Gp)^{-1}\rho_{S,\Gamma}(\Gp))\\
&(f_\Gamma\ast \delta(\rho_{S,\Gamma}(\Gp)))(\hat\rho_{S,\Gamma})=f_\Gamma(\rho_{S,\Gamma}(\Gp)\rho_{S,\Gamma}^{-1}(\Gp))
\eqs yields for all $f_\Gamma\in L^1(\bar G_{\breve S,\Gamma},\mu_{\breve S,\Gamma})$ and $\rho_{S,\Gamma}\in \bar G_{\breve S,\Gamma}$.

Observe that, for $A=\sum_{i=1}^n a_i\delta(\rho_{S_i,\Gamma}(\Gp)))\in L^1(\bar G^{\disc}_{\breve S,\Gamma})$ and $\breve S:=\{S_i\}_{1\leq i\leq N}$, 
there is a state $\hat\omega_0$ on $L^1(\bar G^{\disc}_{\breve S,\Gamma})$ such that
\beq \hat\omega_0(A^*A)&=\sum_{n,m}\overline{a_n}a_m\hat\omega_0(\delta^*(\rho_{S_n,\Gamma}(\Gp))\delta(\rho_{S_m,\Gamma}(\Gp)))\\
&=\sum_{n,m}\overline{a_n}a_m\hat\omega_0(\delta(\rho_{S_n,\Gamma}(\Gp)^{-1}\rho_{S_m,\Gamma}(\Gp)))
\eq
Moreover, for an action $\alpha$ of $\bar G^{\disc}_{\breve S,\Gamma}$ on $L^1(\bar G^{\disc}_{\breve S,\Gamma})$ the action is automorphic and point-norm continuous. 
The state is defined by
\beqs \hat\omega_0(\delta(\rho_{S,\Gamma}(\Gp))):=\left\{\begin{array}{ll} 1&\text{ for  }\rho_{S,\Gamma}(\Gp)=e_G\\
0 &\text{ for  }\rho_{S,\Gamma}(\Gp)\neq e_G\end{array}
\right.
\eqs
Derive
\beq \hat\omega_0(\alpha(\tilde\rho_{S,\Gamma})(\delta(\rho_{S_n,\Gamma}(\Gp))))
&=\hat\omega_0(\delta(\tilde\rho_{S_n,\Gamma}(\Gp)\rho_{S_n,\Gamma}(\Gp)\tilde\rho_{S_n,\Gamma}(\Gp)^{-1}))\\
&=\hat\omega_0(\delta(\rho_{S_n,\Gamma}(\Gp)))
\eq
and conclude that, the state $\hat\omega_0$ is $\bar G^{\disc}_{\breve S,\Gamma}$-invariant.

\begin{defi}\label{pi rep U}Let the surface $S$ has the same surface intersection property for a graph $\Gamma$, let $\breve S$ be a set of surfaces $S_1,...,S_N$ having the same surface intersection property for a graph $\Gamma$. 

The \textbf{generalised group-valued quantum flux operator for a surface $S$} is given by the following non-degenerate representation $\pi_{S,\Gamma}$ of $L^1(\bar G_{S,\Gamma},\mu_{S,\Gamma})$ on the Hilbert space $L^2(\bar G_{S,\Gamma},\mu_{S,\Gamma})$, which satisfies $\|\pi_{S,\Gamma}(f_\Gamma)\|_2\leq\| f_\Gamma\|_1$ and is defined as a $L^2(\bar G_{S,\Gamma},\mu_{S,\Gamma})$-valued Bochner integral
\beqs &\pi_{S,\Gamma}(f_\Gamma)\psi_\Gamma
&:=\int_{\bar G_{S,\Gamma}} \dif\mu_{S,\Gamma}(\rho_{S,\Gamma}(\Gamma)) f_\Gamma(\rho_{S,\Gamma}(\Gamma))U(\rho_{S,\Gamma}(\Gamma))\psi_\Gamma\quad\text{ for }f_\Gamma\in L^1(\bar G_{S,\Gamma},\mu_{S,\Gamma})
\eqs 
and a weakly continuous unitary representation $U$ of $\bar G_{S,\Gamma}$ acting on a vector $\psi_\Gamma$ in $L^2(\bar G_{S,\Gamma},\mu_{S,\Gamma})$ is considered.

The \textbf{generalised group-valued quantum flux operator for a set of surfaces $\check S$} is given by the following non-degenerate representation $\pi_{\breve S,\Gamma}$ of $L^1(\bar G_{\breve S,\Gamma},\mu_{\breve S,\Gamma})$ on the Hilbert space $L^2(\bar G_{\breve S,\Gamma},\mu_{\breve S,\Gamma})$, which satisfies $\|\pi_{\breve S,\Gamma}(f_\Gamma)\|_2\leq\| f_\Gamma\|_1$ and is defined as a $L^2(\bar G_{\breve S,\Gamma},\mu_{\breve S,\Gamma})$-valued Bochner integral
\beqs &\pi_{\breve S,\Gamma}(f_\Gamma)\psi_\Gamma\\
&:=\int_{\bar G_{\breve S,\Gamma}} \dif\mu_{\breve S,\Gamma}(\rho_{S_1}(\gamma_1),...,\rho_{S_N}(\gamma_N))\\&\qquad\qquad f_\Gamma(\rho_{S_1}(\gamma_1),...,\rho_{S_N}(\gamma_N))U_{\breve S,\Gamma}(\rho_{S_1}(\gamma_1),...,\rho_{S_N}(\gamma_N))\psi_\Gamma(\hat\rho_{S_1}(\gamma_1),...,\hat\rho_{S_N}(\gamma_N))
\eqs 
whenever $f_\Gamma\in L^1(\bar G_{\breve S,\Gamma},\mu_{\breve S,\Gamma})$ and a weakly continuous unitary representation $U$ of $G_{\breve S,\Gamma}$ acting on a vector $\psi_\Gamma$ in $L^2(\bar G_{\breve S,\Gamma},\mu_{\breve S,\Gamma})$ is considered. 
\end{defi}
It is easy to show that, for example the representation associated to a left regular representation $U_{\overleftarrow{L}}^N$ of $\bar G_{\breve S,\Gamma}$ on $L^2(\bar G_{S,\Gamma},\mu_{S,\Gamma})$ fulfill 
\beq &\Phi_M(f_\Gamma)\psi_\Gamma(\hat\rho_{S_1}(\gamma_1),..,\hat\rho_{S_N}(\gamma_N))\\
&\qquad=\int_{\bar G_{\breve S,\Gamma}} d\mu_{\breve S,\Gamma}(\rho_{S_1}(\gamma_1),...,\rho_{S_N}(\gamma_N)) f_\Gamma(\rho_{S_1}(\gamma_1),...,\rho_{S_N}(\gamma_N))\\ &\qquad\qquad\qquad U_{\overleftarrow{L}}^N(\rho_{S_1}(\gamma_1),...,\rho_{S_N}(\gamma_N))\psi_\Gamma(\hat\rho_{S_1}(\gamma_1),..,\hat\rho_{S_N}(\gamma_N))\\
&\qquad= \int_{\bar G_{\breve S,\Gamma}} d\mu_{\breve S,\Gamma}(\rho_{S_1}(\gamma_1),...,\rho_{S_N}(\gamma_N)) f_\Gamma(\rho_{S_1}(\gamma_1),...,\rho_{S_N}(\gamma_N))\\ &\qquad\qquad\qquad\psi_\Gamma(\rho_{S_1}(\gamma_1)^{-1}\hat\rho_{S_1}(\gamma_1),..,\rho_{S_N}(\gamma_N)^{-1}\hat\rho_{S_N}(\gamma_N))\\
&\qquad= f_\Gamma\ast \psi_\Gamma\text{ for }\psi_\Gamma\in L^2(\bar G_{S,\Gamma},\mu_{S,\Gamma})
\eq 
It is a $^*$-representation on the Hilbert space $L^2(\bar G_{S,\Gamma},\mu_{S,\Gamma})$, since it is true that,
\beq &\Phi_M(f^1_\Gamma\ast f^2_\Gamma)\psi_\Gamma=\Phi_M(f^1_\Gamma)\Phi_M(f^2_\Gamma)\psi_\Gamma\\
&\Phi_M(\lambda_1f_\Gamma^1+\lambda_2f_\Gamma^2)\psi_\Gamma=\lambda_1\Phi_M(f_\Gamma^1)\psi_\Gamma+\lambda_2\Phi_M(f_\Gamma^2)\psi_\Gamma\\
&\Phi_M(f_\Gamma^*)\psi_\Gamma=\Phi_M(f_\Gamma)^*\psi_\Gamma
\eq yields whenever $f_\Gamma,f^1_\Gamma,f^2_\Gamma\in L^1(\bar G_{\breve S,\Gamma},\mu_{\breve S,\Gamma})$ and $\lambda_1,\lambda_2\in\CB$.

The representation associated to the right regular representation $U_{\overleftarrow{R}}^N$ of $\bar G_{\breve S,\Gamma}$ on $L^2(\bar G_{S,\Gamma},\mu_{S,\Gamma})$ is equivalent to  
\beq &\Phi_M(f_\Gamma)\psi_\Gamma(\hat\rho_{S_1}(\gamma_1),...,\hat\rho_{S_N}(\gamma_N))
\\&:=\int_{\bar G_{\breve S,\Gamma}} \dif\mu_{\breve S,\Gamma}(\rho_{\breve S,\Gamma}(\Gamma))\\&\qquad f_\Gamma(\rho_{\breve S,\Gamma}(\Gamma))
U_{\overleftarrow{R}}^N(\rho_{\breve S,\Gamma}^N)\psi_\Gamma(\hat\rho_{S_1}(\gamma_1),...,\hat\rho_{S_N}(\gamma_N))\\
&=\psi_\Gamma\ast f_\Gamma 
\eq 

Clearly there is a representation of $L^1(\bar G_{S,\Gamma},\mu_{\breve S,\Gamma})$, which correponds to the situation of all paths intersecting with one surface $S$ and such that all paths are outgoing and lie below the surface $S$, on the Hilbert space $L^2(\bar G_{\breve S,\Gamma},\mu_{\breve S,\Gamma})$. The representation is illustrated by
\beq &\pi_{\overleftarrow{L},N}(f_\Gamma)\psi_\Gamma(\hat\rho_{S_1}(\gamma_1),..,\hat\rho_{S_N}(\gamma_N))\\
&\qquad=\int_{G} d\mu(\rho_{S}(\gamma_1),...,\rho_{S}(\gamma_N)) f_\Gamma(\rho_{S}(\gamma_1),...,\rho_{S}(\gamma_N))\\ &\qquad\qquad\qquad U_{\overleftarrow{L}}^{N}(\rho_{S}(\gamma_1),...,\rho_{S}(\gamma_N))\psi_\Gamma(\hat\rho_{S_1}(\gamma_1),..,\hat\rho_{S_N}(\gamma_N))\\
&\qquad= \int_{G} d\mu(\rho_{S}(\gamma_1),...,\rho_{S}(\gamma_N)) f_\Gamma(\rho_{S}(\gamma_1),...,\rho_{S}(\gamma_N))\\ &\qquad\qquad\qquad\psi_\Gamma(\rho_{S}(\gamma_1)\hat\rho_{S_1}(\gamma_1),..,\rho_{S}(\gamma_N)\hat\rho_{S_N}(\gamma_N))\\
&\qquad= \int_{G} d\mu(g_S) f_\Gamma(g_S)\psi_\Gamma(g_S \hat g_{S_1},..,g_S\hat g_{S_N})\\
&\qquad= f_\Gamma\ast \psi_\Gamma\text{ for }\psi_\Gamma\in \HS_\Gamma
\eq for any $i=1,...,N$ and where all surfaces $S_i$ are elements of the surface set $\breve S$. 

Another representation of $L^1(\bar G_{S,\Gamma},\mu_{S,\Gamma})$ on the Hilbert space $L^2(\bar G_{S,\Gamma},\mu_{S,\Gamma})$ is defined by
\beq &\pi_{\overleftarrow{L},1}(f_\Gamma)\psi_\Gamma(\hat\rho_{S}(\gamma_1),...,\hat\rho_{S}(\gamma_N))\\
&\qquad=\int_{G} d\mu_{S,\Gamma}(\rho_{S}(\gamma_1),...,\rho_{S}(\gamma_N)) f_\Gamma(\rho_{S}(\gamma_1),...,\rho_{S}(\gamma_N))\\&\qquad\qquad U_{\overleftarrow{L}}^1(\rho_{S}(\gamma_1),...,\rho_{S}(\gamma_N))\psi_\Gamma(\hat\rho_{S}(\gamma_1),...,\hat\rho_{S}(\gamma_N))\\
&\qquad= \int_{G} d\mu_{S,\Gamma}(\rho_{S}(\gamma_i)) f_\Gamma(\rho_{S}(\gamma_i))\psi_\Gamma(\rho_{S}(\gamma_i)\hat\rho_{S}(\gamma_i))\\
&\qquad= \int_{G} d\mu_{S,\Gamma}(g_S) f_\Gamma(g_S)\psi_\Gamma(g_S \hat g_{S})\\
&\qquad= f_\Gamma\ast \psi_\Gamma\text{ for }\psi_\Gamma\in L^2(\bar G_{S,\Gamma},\mu_{S,\Gamma})
\eq 
Moreover, a general representation $\pi_{\breve S,\Gamma}$ is a faithful regular \footnote{A representation $(\pi,\HS)$ of a $C^*$-algebra $\Alg$ of the form (\ref{pi rep U}) is called \textbf{regular} iff the unitary representation $U$ of a locally compact group $G$ is weak operator continuous on $\HS$. } $^*$-representation of $\CD_r^*(\bar G_{\breve S,\Gamma})$ in $L^2(\bar G_{\breve S,\Gamma},\mu_{\breve S,\Gamma})$. It is a faithful representation, since from $f_\Gamma\ast \psi_\Gamma=0$ it is deducible that, $f_\Gamma=0$ holds. The left and the right regular representations $U_{\overleftarrow{L}}^{1}$ and $U_{\overleftarrow{R}}^{1}$ are unitarily equivalent, hence the generalised representations $\pi_{\overleftarrow{L},1}$ and $\pi_{\overleftarrow{R},1}$ are unitarily equivalent, too.

\begin{defi}
Let $S$ be a surface and $\breve S$ be a set of surfaces such that $S$ and $\breve S$ have the same surface intersection property for a graph $\Gamma$. 

The \textbf{reduced flux group $C^*$-algebra $\CD_r^*(\bar G_{S,\Gamma})$ for a surface $S$ or $\CD_r^*(\bar G_{\breve S,\Gamma})$ for a set $\breve  S$ of surfaces} is defined as the closure of $L^1(\bar G_{S,\Gamma},\mu_{S,\Gamma})$, or respectively $L^1(\bar G_{\breve S,\Gamma},\mu_{\breve S,\Gamma})$, in the norm $\|f_\Gamma\|_r:=\|\pi_{S,\Gamma}(f_\Gamma)\|_2$ or $\|f_\Gamma\|_r:=\|\pi_{\breve S,\Gamma}(f_\Gamma)\|_2$.
\end{defi}

In fact all continuous unitary representations $U$ of the flux group $\bar G_{S,\Gamma}$ on $L^2(\bar G_{S,\Gamma},\mu_{S,\Gamma})$ give a non-degenerate representation $\pi_{S,\Gamma}$ of $L^1(\bar G_{S,\Gamma},\mu_{S,\Gamma})$. Each representation is given by
\beq\label{pi} &\pi_{S,\Gamma}(f_\Gamma):=\int_{\bar G_{S,\Gamma}} \dif\mu_{S,\Gamma}(\rho_{S,\Gamma}(\Gamma)) f_\Gamma(\rho_{S,\Gamma}(\Gamma))U(\rho_{S,\Gamma}(\Gamma))\eq

\begin{defi}
Let $S$ be a surface and $\breve S$ be a set of surfaces such that $S$ and $\breve S$ have the same surface intersection property for a graph $\Gamma$. 

The \textbf{flux group $C^*$-algebras $C^*(\bar G_{S,\Gamma})$ for a surface $S$ or $C^*(\bar G_{\breve S,\Gamma})$ for a set $\breve S$ of surfaces} is the closure of $L^1(\bar G_{S,\Gamma},\mu_{S,\Gamma})$ or $L^1(\bar G_{\breve S,\Gamma},\mu_{\breve S,\Gamma})$ in the norm $\|f_\Gamma\|:=\sup_{\pi}\|\pi(f_\Gamma)\|_2$ where the supremum is taken over all non-degenerate $L^1$-norm decreasing\footnote{A norm $\|.\|$ of $\Alg$ is called $L^1$-norm decreasing if $\|\pi_I(f)\|_2\leq \|f\|$ for all $f\in \Alg$. } $^*$-representations of $L^1(\bar G_{S,\Gamma},\mu_{S,\Gamma})$, or respectively all representations $\pi$ of the form \eqref{pi}, where $U$ is a continuous unitary representation (one representative of each equivalence class) of the flux group $\bar G_{S,\Gamma}$ on a Hilbert space.
\end{defi}

\begin{rem}\label{rem peterweyl}
In the case of a (second countable) compact group $G$ the structures above are well known. Let $\hat G$ be the unitary dual consisting of all unitary equivalence classes of irreducible, continuous and unitary and therefore finite-dimensional representations $\pi_{s,\gamma_i}$ of $G$ w.r.t. a graph $\Gamma:=\{\gamma_i\}$ on a finite dimensional Hibert space $\HS_{s,\gamma_i}$. Notice that, every element of $\hat G$ is one-dimensional, iff $G$ is commutative. The dual $\hat G$ is discrete and countable. The set $\hat G$ is finite, iff $G$ is finite. The finite-dimensional representation $U_{s,\gamma_i}$ is equivalent to the left-regular representation $U_L:G\rightarrow U(L^2(G))$.

There exists an isomorphisms betweeen Hilbert spaces such that
\beqs \HS_\Gamma:=L^2_{\gamma_i}(G)\simeq L^2_{\gamma_i}(\hat G):=\hat\HS_\Gamma 
= \bigoplus_{s\in\hat G}M_{d_{s,\gamma_i}}(\CB)
\eqs where $d_{s,\gamma_i}$ is the dimension of $s$ in $\hat G$, given by the unitary Plancherel transform $\FD: L^2_{\gamma_i}(G)\rightarrow  L^2_{\gamma_i}(\hat G)$ with
\beq \hat \psi_{\gamma_i}(s):=(\FD\psi_{\gamma_i})(s)=\sqrt{d_{s,\gamma_i}}\int_G\dif \mu(\rho_{S_i}(\gamma_i)) U_{s,\gamma_i}(\rho_{S_i}(\gamma_i))\psi_{\gamma_i}(\rho_{S_i}(\gamma_i))
\eq where $\rho_{S_i}(\gamma_i)\in\bar\Gop_{\breve S,\Gamma}$ is identified
 with $G$ if $\breve S:=\{S_i\}_{1\leq i\leq N}$ has the same intersection property for $\Gamma$. The inverse transform is given by 
\beqs \FD^{-1}\hat \psi_{\gamma_i}(s):=\sum_{s\in\hat G}\sqrt{\dim \pi_{s,\gamma_i}}\tr(\hat \psi_{\gamma_i}(s)U_{s,\gamma_i}(\rho_{S_i}(\gamma_i))^*)
\eqs
Clearly if $\psi_\Gamma\in L^2_{\gamma_i}(G)$ and $\hat\psi_\Gamma\in  L^2_{\gamma_i}(\hat G)$ it is true that,
\beq \int_G\vert \psi_\Gamma(\rho_{S_i}(\gamma_i))\vert^2\dif\mu(\rho_{S_i}(\gamma_i))=\sum_{s\in\hat G}(\dim\pi_\sigma)\tr(\hat\psi_\Gamma(s)\hat\psi_\Gamma(s)^*)\eq holds.
Let $\Gamma$ be equivalent to $\{\gamma\}$ and $S$ has the same intersection property for $\Gamma$. The representation $\pi_{S,\Gamma}$ of the $C^*$-algebra $C^*_r(\bar G_{S,\Gamma})$ on the Hilbert space $\HS_\Gamma:=L^2(\bar G_{S,\gamma},\mu_{S,\gamma})$ is given for a path $\gamma$ that intersects $S$ such that the path is outgoing and lies below by 
\beq \pi_{S,\Gamma}(f_\Gamma)\psi_\Gamma &:= \int_G\dif \mu_{S,\gamma}(\rho_{S}(\gamma)) 
f_{\Gamma}(\rho_{S}(\gamma))U_{s,\Gamma}(\rho_{S}(\gamma))\psi_\Gamma(\rho_{S}(\gamma))
\eq for $\psi_\Gamma\in\HS_\Gamma$. 
Notice that, for an abelian (locally) compact flux group $\bar G_{S,\Gamma}$ there is an isomorphism
$\FD:C_r^*(\bar G_{S,\Gamma})\rightarrow C_0(\widehat{ \bar G_{S,\Gamma}})$ given by
\beqs \FD(f_\Gamma)(s)&:= \int_G\dif \mu_{S,\gamma}(\rho_{S}(\gamma)) 
f_{\Gamma}(\rho_{S}(\gamma))U_{s,\Gamma}(\rho_{S}(\gamma))
\eqs which is called the generalised Fourier transform. The set of characters is denoted by $\widehat{ \bar G_{S,\Gamma}}$.
\end{rem}

\begin{exa}
For an abelian locally compact group $G$ the group algebra $C^*(G)$ coincides with $C^*_r(G)$. This is true, since for $s\in\hat G$ the representation $\pi_s$ of $G$ on $L^2(G)$ coincides with $\hat f(s)\in\CB$ and consequently the norm $\|.\|_r$ and $\|.\|$ are the same. 

Moreover, since $\R$ and $\hat \R$ are equal, there are the following isomorphisms
\beqs C_0(\R)\simeq C^*(\R)\simeq C^*_r(\R)
\eqs     
Notice that, this statement generalises for an abelian locally compact group $G$. There is an isomorphism $C^*(G)$ and $C(\hat G)$.

For a general locally compact group $\bar G_{\breve S,\Gamma}$ it is true that,
\beqs C^*_r(\bar G_{\breve S,\Gamma}):=\pi_{S,\Gamma}(C^*(\bar G_{\breve S,\Gamma}))\simeq C^*(\bar G_{\breve S,\Gamma})\setminus \ker (\pi_{S,\Gamma})
\eqs holds. Therefore a Lie group is called amenable, if $C^*(\bar G_{\breve S,\Gamma})$ coincides with $C^*_r(\bar G_{\breve S,\Gamma})$ and hence iff $\pi_{S,\Gamma}$ is faithful.  Since for locally compact groups, the representation $\pi_{S,\Gamma}$ is always faithful, these groups are always amenable.  
\end{exa}

\begin{prop}\label{prop compLiefluxgroup}Let $S$ be a surface with the same surface intersection property for a graph $\Gamma$.

For a compact Lie group $G$ the flux group $C^*$-algebras for surface $S$ and a graph $\Gamma:=\{\gamma\}$ is given by
\beqs C^*_r(\bar G_{\breve S,\Gamma})\simeq C^*(\bar G_{\breve S,\Gamma})\simeq\bigoplus_{\pi_{s,\Gamma}\in \hat G} M_{d_{s,\Gamma}}(\CB)=:M_\Gamma\eqs or
\beqs C^*_r(\bar G_{S,\Gamma})\simeq C^*(\bar G_{S,\Gamma})\simeq\bigoplus_{\pi_{s,\Gamma}\in \hat{\bar G}_{S,\Gamma}} M_{d_{s,\Gamma}}(\CB)\eqs
and, hence, $\bar G_{\breve S,\Gamma}$ is amenable.
\end{prop}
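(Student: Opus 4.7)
\begin{proofi}
The plan is to reduce the statement to the classical Peter--Weyl decomposition for the compact Lie group $\bar G_{\breve S,\Gamma}$, using what has already been established in the preceding text.

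First I would unpack the flux group itself. Since $\breve S$ has the same surface intersection property for $\Gamma$, the preceding propositions identify $\bar G_{\breve S,\Gamma}$ with a finite product $G^{M}$ (for a single surface and $\Gamma=\{\gamma\}$ this is just $G$). As $G$ is a compact Lie group, so is $\bar G_{\breve S,\Gamma}$, and in particular it is second countable and carries a finite bi-invariant Haar measure $\mu_{\breve S,\Gamma}$. This places us in precisely the framework of remark \ref{rem peterweyl}.

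Next I would establish the amenability statement (and with it the coincidence $C^*_r \simeq C^*$). For a compact group the constant function $1$ is in $L^1$ and provides an invariant mean, so $\bar G_{\breve S,\Gamma}$ is amenable. Equivalently, the left regular representation weakly contains every unitary representation of the group, which means that for any $f_\Gamma \in L^1(\bar G_{\breve S,\Gamma},\mu_{\breve S,\Gamma})$ the supremum defining $\|f_\Gamma\|$ is attained already by the regular representation, i.e.\ $\|f_\Gamma\| = \|f_\Gamma\|_r$. Passing to completions yields the first isomorphism $C^*(\bar G_{\breve S,\Gamma}) \simeq C^*_r(\bar G_{\breve S,\Gamma})$, and hence $\pi_{\breve S,\Gamma}$ is faithful on $C^*(\bar G_{\breve S,\Gamma})$.

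The main step is then to produce the matrix-block decomposition. For this I would invoke the Peter--Weyl theorem for the compact Lie group $\bar G_{\breve S,\Gamma}$: the unitary dual $\widehat{\bar G}_{\breve S,\Gamma}$ is discrete and countable, every irreducible continuous unitary representation $\pi_{s,\Gamma}$ is finite-dimensional of some dimension $d_{s,\Gamma}$, and the Plancherel/Fourier transform
\beqs
\FD: L^2(\bar G_{\breve S,\Gamma},\mu_{\breve S,\Gamma}) \longrightarrow \bigoplus_{s \in \widehat{\bar G}_{\breve S,\Gamma}} M_{d_{s,\Gamma}}(\CB), \qquad \FD(f_\Gamma)(s) = \int_{\bar G_{\breve S,\Gamma}} f_\Gamma(\rho_{\breve S,\Gamma}(\Gamma))\, U_{s,\Gamma}(\rho_{\breve S,\Gamma}(\Gamma))\, \dif\mu_{\breve S,\Gamma}
\eqs
is a unitary isomorphism that intertwines the convolution product on $L^1(\bar G_{\breve S,\Gamma},\mu_{\breve S,\Gamma})$ with the block-wise matrix product, and the involution with the block-wise adjoint, i.e.\ $\FD(f_\Gamma \ast g_\Gamma)(s) = \FD(f_\Gamma)(s)\,\FD(g_\Gamma)(s)$ and $\FD(f_\Gamma^*)(s) = \FD(f_\Gamma)(s)^*$. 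Using the previously noted fact that $\pi_{\breve S,\Gamma}$ is a faithful $^*$-representation on $\HS_\Gamma$, the Fourier transform extends to a $^*$-isomorphism from $C^*_r(\bar G_{\breve S,\Gamma})$ onto the $c_0$-direct sum $\bigoplus_{\pi_{s,\Gamma} \in \widehat{\bar G}_{\breve S,\Gamma}} M_{d_{s,\Gamma}}(\CB)$; since $\widehat{\bar G}_{\breve S,\Gamma}$ is countable and each block is finite-dimensional, this is precisely $M_\Gamma$. The specialisation to $\Gamma=\{\gamma\}$ with $\widehat{\bar G}_{\breve S,\Gamma} = \hat G$ follows from the identification $\bar G_{\breve S,\Gamma} \simeq G$ noted at the beginning.

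The main obstacle, if any, is bookkeeping: one must check that the formal Fourier/Plancherel decomposition recalled in remark \ref{rem peterweyl} is indeed compatible with the $C^*$-norm $\|\cdot\|_r = \|\pi_{\breve S,\Gamma}(\cdot)\|_2$ used to define $C^*_r(\bar G_{\breve S,\Gamma})$, so that $\FD$ descends to the completion. This is a standard consequence of the fact that the left regular representation decomposes as $U_L \simeq \bigoplus_{s} (\pi_{s,\Gamma} \otimes \idf_{d_{s,\Gamma}})$ with each irreducible $\pi_{s,\Gamma}$ appearing with multiplicity equal to its dimension, and poses no new difficulty beyond quoting Peter--Weyl.
\end{proofi}
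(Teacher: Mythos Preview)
Your proposal is correct and follows essentially the same route as the paper: the paper's proof is a single line, ``This is due to the remark \ref{rem peterweyl},'' and you have simply unpacked that reference by spelling out the Peter--Weyl decomposition, the resulting Fourier/Plancherel isomorphism onto $\bigoplus_{s} M_{d_{s,\Gamma}}(\CB)$, and the amenability of compact groups. Your additional discussion of why $\FD$ is compatible with the reduced norm is a reasonable elaboration of what the paper leaves implicit in the remark.
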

\begin{proofs}
This is due to the remark \ref{rem peterweyl}. 
\end{proofs}

\subsection{The flux transformation group $C^*$-algebra associated to graphs and a surface set}\label{subsec transformalg}

In the general theory for arbitrary locally compact groups the left regular representation $\pi_{\overleftarrow{L},1}$ of $\bar G_{S,\Gamma}$ is defined by $\pi_{\overleftarrow{L},1}(f_\Gamma)\psi_\Gamma:=f_\Gamma\ast\psi_\Gamma$ for $f_{\Gamma}\in L^1(\bar G_{S,\Gamma},\mu_{S,\Gamma})$ on the Hilbert space $L^2(\bar G_{S,\Gamma},\mu_{S,\Gamma})$. The operator $\pi_{\overleftarrow{L},1}(f_\Gamma)$ is compact for every $f_{\Gamma}\in L^1(\bar G_{S,\Gamma},\mu_{S,\Gamma})$. 
The set of functions $\CD(\bar G_{S,\Gamma},\bar G_{S,\Gamma})$ for a locally compact group $G$ is a linear subspace of $\CD(\bar G_{S,\Gamma},C_0(\bar G_{S,\Gamma}))$.
\begin{theo}\label{Generalised Stone- von Neumann theorem}\cite[Theorem 4.24]{Williams07} \textbf{(Generalised Stone- von Neumann theorem): }\\ 
Let $\breve S$ be be a set of surfaces with the simple surface intersection property for a graph $\Gamma$. 

Let $G$ be a locally compact unimodular group, $\bar G_{\breve S,\Gamma}$ be the flux group and let $U$ be a continuous, irreducible and unitary representation of $\bar G_{\breve S,\Gamma}$ on $L^2(\bar G_{\breve S,\Gamma},\mu_{\breve S,\Gamma})=:\HS_\Gamma$. Hence $U\in\Rep(\bar G_{\breve S,\Gamma},\KD(\HS_\Gamma))$.

Let $C_0(\bar G_{\breve S,\Gamma})$ be the $C^*$-algebra of continuous functions vanishing at infinity on $\bar G_{\breve S,\Gamma}$ with a pointwise multiplication and $\sup$-norm and let $\Phi_M$ is the multiplication representation of $C_0(\bar G_{\breve S,\Gamma})$ on $\HS_\Gamma$. Therefore $\Phi_M\in\Mor(C_0(\bar G_{\breve S,\Gamma}),\LD(\HS_\Gamma))$.

Then the linear map $\pi_I: \CD(\bar G_{\breve S,\Gamma},C_0(\bar G_{\breve S,\Gamma}))\rightarrow \LD(L^2(\bar G_{\breve S,\Gamma},\mu_{\breve S,\Gamma}))$ of the form
\beq\label{eq formrepL1} &(\pi_{I}(F_\Gamma)\psi_\Gamma)(\hat\rho_{S_1}(\gamma_1),...,\hat\rho_{S_N}(\gamma_N))
\\&:=\int_{\bar G_{\breve S,\Gamma}} \dif\mu_{\breve S,\Gamma}(\rho_{\breve S,\Gamma}(\Gamma))\Phi_M(F_\Gamma(\rho_{S_1}(\gamma_1),....,\rho_{S_N}(\gamma_N);\hat\rho_{S_1}(\gamma_1),...,\hat\rho_{S_N}(\gamma_N)))\\&\qquad\qquad 
U(\rho_{\breve S,\Gamma}(\Gamma))\psi_\Gamma(\hat\rho_{S_1}(\gamma_1),...,\hat\rho_{S_N}(\gamma_N))
\eq is a faithful and irreducible representation of the convolution $^*$-algebra $\CD(\bar G_{\breve S,\Gamma},C_0(\bar G_{\breve S,\Gamma}))$ of continuous functions $\bar G_{\breve S,\Gamma}\rightarrow C_0(\bar G_{\breve S,\Gamma})$ with compact support acting on the Hilbert space $L^2(\bar G_{\breve S,\Gamma},\mu_{\breve S,\Gamma})$. The convolution $^*$-algebra $\CD(\bar G_{\breve S,\Gamma},C_0(\bar G_{\breve S,\Gamma}))$ is equipped with a norm $\|.\|_1$ such that its completion is given by the Banach $^*$-algebra $L^1(\bar G_{\breve S,\Gamma},C_0( \bar G_{\breve S,\Gamma}))$. Consequently $\pi_I\in\Rep(L^1(\bar G_{\breve S,\Gamma},C_0(\bar G_{\breve S,\Gamma})),\LD(\HS_\Gamma))$.

Set $\|F_\Gamma\|_u:= \sup_{\pi_I} \|\pi_I(F_\Gamma)\|$, where the supremum is taken over all non-degenerate $L^1$-norm decreasing\footnote{A norm $\|.\|$ of $\Alg$ is called $L^1$-norm decreasing if $\|\pi_I(f)\|\leq \|f\|_1$ for all $f\in \Alg$. } $^*$-representations $\pi_I$ of the Banach $^*$-algebra $L^1(\bar G_{\breve S,\Gamma},C_0(\bar G_{\breve S,\Gamma}))$, or respectively over all representations $\pi_I$ of the form \eqref{eq formrepL1} where $(\Phi_M,U_{\overleftarrow{L}}^N)$ is a covariant Hilbert space representation of the $C^*$-dynamical system\\ $(C_0(\bar G_{\breve S,\Gamma}),\alpha_{\overleftarrow{L}}^N,\bar G_{\breve S,\Gamma})$.

Then the range of the closure of $L^1(\bar G_{\breve S,\Gamma},C_0(\bar G_{\breve S,\Gamma}))$ w.r.t. the norm $\|.\|_u$ is called the \textbf{flux transformation group $C^*$-algebra $C^*(\bar G_{\breve S,\Gamma},\bar G_{\breve S,\Gamma})$ for a set $\breve S$ of surfaces and a graph $\Gamma$}.

Moreover, $C^*(\bar G_{\breve S,\Gamma},\bar G_{\breve S,\Gamma})$ is isomorphic to the $C^*$-algebra $\KD(L^2(\bar G_{\breve S,\Gamma},\mu_{\breve S,\Gamma}))$ of compact operators. The $C^*$-algebras $C^*(\bar G_{\breve S,\Gamma},\bar G_{\breve S,\Gamma})$ and $\KD(L^2(\bar G_{\breve S,\Gamma},\mu_{\breve S,\Gamma}))$ are Morita equivalent $C^*$-algebras.
\end{theo}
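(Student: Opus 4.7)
The plan is to invoke the standard crossed-product machinery applied to the transformation $C^*$-dynamical system $(C_0(\bar G_{\breve S,\Gamma}),\alpha,\bar G_{\breve S,\Gamma})$, where $\alpha$ is induced by left translation, and then establish the identification with compact operators via an explicit construction of rank-one operators.

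First I would verify that the pair $(\Phi_M,U_{\overleftarrow{L}}^N)$ satisfies the covariance relation
\beqs U_{\overleftarrow{L}}^N(\rho_{\breve S,\Gamma}(\Gamma))\Phi_M(f_\Gamma)U_{\overleftarrow{L}}^N(\rho_{\breve S,\Gamma}(\Gamma))^*=\Phi_M(\alpha_{\overleftarrow{L}}^N(\rho_{\breve S,\Gamma}(\Gamma))f_\Gamma) \eqs
on a dense domain in $\HS_\Gamma$, which is a direct computation using the explicit form of the left regular representation and pointwise multiplication. Once covariance is in hand, a routine check using Fubini together with the convolution and involution formulas on $\CD(\bar G_{\breve S,\Gamma},C_0(\bar G_{\breve S,\Gamma}))$ shows that $\pi_I$ is a $^*$-homomorphism, and the $L^1$-norm estimate $\|\pi_I(F_\Gamma)\|\leq\|F_\Gamma\|_1$ follows from Minkowski's integral inequality. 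This lets one pass to the Banach $^*$-algebra completion $L^1(\bar G_{\breve S,\Gamma},C_0(\bar G_{\breve S,\Gamma}))$, and then to the enveloping $C^*$-completion defining $C^*(\bar G_{\breve S,\Gamma},\bar G_{\breve S,\Gamma})$.

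Next I would establish the identification with $\KD(\HS_\Gamma)$. The key observation is that for $F_\Gamma(\rho;\hat\rho)=\xi(\hat\rho)\overline{\eta(\rho^{-1}\hat\rho)}$ built from $\xi,\eta\in C_c(\bar G_{\breve S,\Gamma})$, the operator $\pi_I(F_\Gamma)$ acts as the rank-one operator $\psi\mapsto\langle\eta,\psi\rangle\xi$; a change-of-variables computation using the unimodularity of $\bar G_{\breve S,\Gamma}$ makes this explicit. Since such rank-one operators span a dense subset of $\KD(\HS_\Gamma)$, the image of $\pi_I$ lies in $\KD(\HS_\Gamma)$ and is dense. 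Faithfulness on the $C^*$-completion then gives the isomorphism $C^*(\bar G_{\breve S,\Gamma},\bar G_{\breve S,\Gamma})\simeq\KD(\HS_\Gamma)$, and irreducibility of $\pi_I$ follows because $\KD(\HS_\Gamma)$ acts irreducibly on $\HS_\Gamma$.

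The main obstacle will be verifying that no larger $C^*$-seminorm on $L^1(\bar G_{\breve S,\Gamma},C_0(\bar G_{\breve S,\Gamma}))$ can exist; in other words that the universal $C^*$-norm $\|\cdot\|_u$ actually coincides with the norm coming from $\pi_I$. This requires showing that every covariant Hilbert space representation $(\Phi,V)$ of the system $(C_0(\bar G_{\breve S,\Gamma}),\alpha_{\overleftarrow{L}}^N,\bar G_{\breve S,\Gamma})$ is, up to amplification, unitarily equivalent to the regular representation $(\Phi_M,U_{\overleftarrow{L}}^N)$; this is the content of the Stone--von Neumann uniqueness theorem in the locally compact setting and it is precisely what is asserted in \cite[Theorem 4.24]{Williams07}. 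Finally, Morita equivalence with $\KD(\HS_\Gamma)$ is immediate once the isomorphism $C^*(\bar G_{\breve S,\Gamma},\bar G_{\breve S,\Gamma})\simeq\KD(\HS_\Gamma)$ is in place, since any $C^*$-algebra is Morita equivalent to itself and $\KD(\HS_\Gamma)$ is in turn Morita equivalent to $\CB$, giving unitary equivalence of representation theories as claimed.
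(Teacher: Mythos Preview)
Your proposal is correct and essentially complete, but it diverges from the paper's route in two places worth noting.

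For the identification with compacts, the paper does not build rank-one operators directly. Instead, in its Step~B it computes the Hilbert--Schmidt norm of $\pi_{I,\overleftarrow{L}}^N(F_\Gamma)$ for $F_\Gamma\in\CD(\bar G_{\breve S,\Gamma},\bar G_{\breve S,\Gamma})$ and finds it finite, so the image lands in the Hilbert--Schmidt class $\KD_{HS}(\HS_\Gamma)$, which is operator-norm dense in $\KD(\HS_\Gamma)$; equality then follows from faithfulness. Your rank-one construction via $F_\Gamma(\rho;\hat\rho)=\xi(\hat\rho)\overline{\eta(\rho^{-1}\hat\rho)}$ is a cleaner and more explicit way to see both density in $\KD(\HS_\Gamma)$ and irreducibility at once.

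The more substantive difference is in how the universal norm is handled. You invoke the abstract Stone--von Neumann uniqueness theorem from \cite{Williams07} to force $\|\cdot\|_u$ to agree with the regular-representation norm, and then read off Morita equivalence as a consequence of the isomorphism with $\KD(\HS_\Gamma)$. The paper goes in the opposite direction: in its Step~C it constructs by hand a $\CD(\bar G_{\breve S,\Gamma},\bar G_{\breve S,\Gamma})$--$\CB$ imprimitivity bimodule on $\E_\Gamma=C_c(\bar G_{\breve S,\Gamma})$, verifying the left and right inner products and the compatibility relation $\phi_\Gamma\la\psi_\Gamma,\varphi_\Gamma\ra_{\CB}=\la\phi_\Gamma,\psi_\Gamma\ra_{\CD(\bar G_{\breve S,\Gamma},\bar G_{\breve S,\Gamma})}\varphi_\Gamma$ explicitly. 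Morita equivalence with $\CB$ then gives the bijection of representation theories, from which uniqueness up to unitary equivalence follows. Your route is more economical if one is willing to quote \cite{Williams07}; the paper's route is self-contained and makes the bimodule structure visible, which it reuses in the subsequent Theorem~\ref{theo moritaequivgroup}.
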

\begin{proofs}
\textit{Step A.: Existence of the flux transformation group algebra for a graph}\\
The convolution $^*$-algebra $\CD(\bar G_{\breve S,\Gamma},C_0(\bar G_{\breve S,\Gamma}))$ is given by the convolution product
\beqs &(F_\Gamma^1\ast F_\Gamma^2)(\rho_{S_1}(\gamma_1),...,\rho_{S_N}(\gamma_N),\hat\rho_{S_1}(\gamma_1),...,\hat\rho_{S_N}(\gamma_N))\\&=\int_{\bar G_{\breve S,\Gamma}}\dif\mu((\gamma_1),...,\rho_{S_N}(\gamma_N))F_\Gamma^1(\tilde \rho_{S_1}(\gamma_1),...,\tilde \rho_{S_N}(\gamma_N),\hat\rho_{S_1}(\gamma_1),...,\hat\rho_{S_N}(\gamma_N))\\
&\qquad\qquad F_\Gamma^2(\tilde \rho_{S_1}(\gamma_1)^{-1}\rho_{S_1}(\gamma_1),...,\tilde \rho_{S_N}(\gamma_N)^{-1}\rho_{S_N}(\gamma_N),\tilde \rho_{S_1}(\gamma_1)^{-1}\hat\rho_{S_1}(\gamma_1),...,\tilde \rho_{S_N}(\gamma_N)^{-1}\hat\rho_{S_N}(\gamma_N))
\eqs and involution
\beqs F_\Gamma(\rho_{S_1}(\gamma_1),...,\rho_{S_N}(\gamma_N),\hat\rho_{S_1}(\gamma_1),...,\hat\rho_{S_N}(\gamma_N))^*=\overline{F(\rho_{S_1}(\gamma_1)^{-1},...,\rho_{S_N}(\gamma_N)^{-1},\hat\rho_{S_1}(\gamma_1)^{-1},...,\hat\rho_{S_N}(\gamma_N)^{-1})}\
\eqs
Equipp the convolution $^*$-algebra $\CD(\bar G_{\breve S,\Gamma},C_0(\bar G_{\breve S,\Gamma}))$ with the $\|.\|_1$-norm, which is defined by
\beqs &\|F_\Gamma\|_1\\&=\int_{\bar G_{\breve S,\Gamma}}\dif\mu((\gamma_1),...,\rho_{S_N}(\gamma_N))\sup_{\overset{(\hat\rho_{S_1}(\gamma_1),...,\hat\rho_{S_N}(\gamma_N))}{\in \bar G_{\breve S,\Gamma}}}\vert F_\Gamma(\rho_{S_1}(\gamma_1),...,\rho_{S_N}(\gamma_N),\hat\rho_{S_1}(\gamma_1),...,\hat\rho_{S_N}(\gamma_N))\vert
\eqs
and complete the algebra to the Banach $^*$-algebra $L^1(\bar G_{\breve S,\Gamma},C_0(\bar G_{\breve S,\Gamma}))$.

Set $\HS_\Gamma:=L^2(\bar G_{\breve S,\Gamma},C_0(\bar G_{\breve S,\Gamma}))$. Assume that, the surface set has the simple surface property for a graph $\Gamma$ and all paths lie below and are outgoing. Let $U_{\overleftarrow{L}}^N\in\Rep(\bar G_{\breve S,\Gamma},\KD(\HS_\Gamma))$, $F_\Gamma\in \CD(\bar G_{\breve S,\Gamma},\bar G_{\breve S,\Gamma})$. Then the map
\beqs &\pi_{I,\overleftarrow{L}}^N(F_\Gamma)\psi_\Gamma(\hat\rho_{S_1}(\gamma_1),...,\hat\rho_{S_N}(\gamma_N))
\\&:=\int_{\bar G_{\breve S,\Gamma}} \dif\mu(\rho_{S_1}(\gamma_1))...\dif\mu(\rho_{S_N}(\gamma_N))\\
&\qquad F_\Gamma(\rho_{S_1}(\gamma_1),....,\rho_{S_N}(\gamma_N);\hat\rho_{S_1}(\gamma_1),...,\hat\rho_{S_N}(\gamma_N))
U_{\overleftarrow{L}}^N(\rho_{\breve S,\Gamma}^N)\psi_\Gamma(\hat\rho_{S_1}(\gamma_1),...,\hat\rho_{S_N}(\gamma_N)) 
\\&=\int_{\bar G_{\breve S,\Gamma}} \dif\mu(\rho_{S_1}(\gamma_1))...\dif\mu(\rho_{S_N}(\gamma_N))\\
&\qquad F_\Gamma(\rho_{S_1}(\gamma_1),....,\rho_{S_N}(\gamma_N);\hat\rho_{S}(\gamma_1),...,\hat\rho_{S}(\gamma_N))\psi_\Gamma(\rho_{S_1}(\gamma_1)\hat\rho_{S_1}(\gamma_1),...,\rho_{S_N}(\gamma_N)\hat\rho_{S_N}(\gamma_N)) 
\eqs defines a $^*$-homomorphism $\pi_I:\CD(\bar G_{\breve S,\Gamma},\bar G_{\breve S,\Gamma})\rightarrow \LD(\HS_\Gamma)$, which is extended to a $^*$-homomorphism from $\CD(\bar G_{\breve S,\Gamma},C_0(\bar G_{\breve S,\Gamma}))$ to $\LD(\HS_\Gamma)$. Therefore this defines a $^*$-representation of $\CD(\bar G_{\breve S,\Gamma},C_0(\bar G_{\breve S,\Gamma}))$. Furthermore, it extends to a $^*$-representation of $L^1(\bar G_{\breve S,\Gamma},C_0(\bar G_{\breve S,\Gamma}))$ on $\HS_\Gamma$. The representation is faithful, since from $\pi_{I,\overleftarrow{L}}^N(F_\Gamma)\psi_\Gamma=F_\Gamma\ast \psi_\Gamma=0$ it follows that, $F_\Gamma=0$ holds. Clearly this investigation carries over for arbitrary surface sets, which have the simple surface intersection property for $\Gamma$.

\textit{Step B.: Isomorphism between $C^*(\bar G_{\breve S,\Gamma},\bar G_{\breve S,\Gamma})$ and $\KD(L^2(\bar G_{\breve S,\Gamma},\mu_{\breve S,\Gamma}))$}\\ 
Secondly $\pi_I(F_\Gamma)$ is Hilbert-Schmidt if $\|\pi_I(F_\Gamma)\|_2^2<\infty$, which is verified by the following computation
\beqs &\|\pi_{I,\overleftarrow{L}}^N(F_\Gamma)\|_2^2
=\int_{\bar G_{\breve S,\Gamma}} \dif\mu(\rho_{\breve S,\Gamma}(\Gamma))\int_{\bar G_{\breve S,\Gamma}}\dif\mu(\hat\rho_{\breve S,\Gamma}(\Gamma))\\
&\qquad\qquad\qquad\quad \vert F_\Gamma(\rho_{S_1}(\gamma_1),....,\rho_{S_N}(\gamma_N);
\rho_{S_1}(\gamma_1)^{-1}\hat\rho_{S_1}(\gamma_1),...,\rho_{S_N}(\gamma_N)^{-1}\hat\rho_{S_N}(\gamma_N))\vert^2 
\eqs which is finite for every $F_\Gamma\in C^*(\bar G_{\breve S,\Gamma},\bar G_{\breve S,\Gamma})$. Consequently $\pi_{I,\overleftarrow{L}}^N(\CD(\bar G_{\breve S,\Gamma},\bar G_{\breve S,\Gamma}))$ is a subset of the Hilbert Schmidt class $\KD_{HS}(L^2(\bar G_{\breve S,\Gamma}))$, which is a dense subspace (w.r.t. in the usual operator norm) of the $C^*$-algebra $\KD(L^2(\bar G_{\breve S,\Gamma}))$ of compact operators. Hence the closure of $\pi_{I,\overleftarrow{L}}^N(\CD(\bar G_{\breve S,\Gamma},\bar G_{\breve S,\Gamma}))$ is equivalent to $\KD(L^2(\bar G_{\breve S,\Gamma}))$ in the operator norm and equality of the $C^*$-algebra $\pi_{I,\overleftarrow{L}}^N(\CD(\bar G_{\breve S,\Gamma},C_0(\bar G_{\breve S,\Gamma}))$ and $\KD(L^2(\bar G_{\breve S,\Gamma}))$ is due to the fact that, $\pi_{I,\overleftarrow{L}}^N$ is faithful.

\textit{Step C.: All non-degenerate representations of $L^1(\bar G_{\breve S,\Gamma},C_0(\bar G_{\breve S,\Gamma}))$ are unitarily equivalent to $\pi_{I,\overleftarrow{L}}^N$}\\
To show that, there is an isomorphism between the categories of representations of $C^*(\bar G_{\breve S,\Gamma},\bar G_{\breve S,\Gamma})$ and $\KD(L^2(\bar G_{\breve S,\Gamma}))$, which is isomorphic to the representations of $\CB$, on a Hilbert space. This is equivalent to the property of $C^*(\bar G_{\breve S,\Gamma},\bar G_{\breve S,\Gamma})$ and $\KD(L^2(\bar G_{\breve S,\Gamma}))$ being Morita equivalent $C^*$-algebras.
 
\textit{Step 1.: two pre-$C^*$-algebras $\Alg_\Gamma$,$\BAlg$ and a full pre-Hilbert $\BAlg$-module $\E_\Gamma$ }\\  
Assume that, the surface set has the simple surface property for a graph $\Gamma$ and all paths lie below and are outgoing. Let $U_{\overleftarrow{L}}^N\in\Rep(\bar G_{\breve S,\Gamma},\KD(\HS_\Gamma))$.

Consider the pre-$C^*$-algebras $\Alg_\Gamma=\CD(\bar G_{\breve S,\Gamma},\bar G_{\breve S,\Gamma})$ and $\BAlg=\CB$. Moreover, let $\E_\Gamma=C_c(\bar G_{\breve S,\Gamma})$ be a full pre-Hilbert $\CB$-module, which is defined by the $\CB$-action $\pi_R$ on $C_c(\bar G_{\breve S,\Gamma})$, i.o.w. $\pi_R(\lambda)\psi_\Gamma=\psi_\Gamma\lambda$, and the inner product
\beqs \la \psi_\Gamma,\phi_\Gamma\ra_{\CB}:=\la\psi_\Gamma,\phi_\Gamma\ra_2\eqs

\textit{Step 2.: full right Hilbert $\BAlg$-module $\E_\Gamma$}\\ 
The completition of $\E_\Gamma$ is a Hilbert $\CB$-module. 

\textit{Step 3.: left-action of $\Alg_\Gamma$ on $\E_\Gamma$ s.t. $\E_\Gamma$ is a full left pre-Hilbert $\Alg_\Gamma$-module }\\
The left-action of $\Alg_\Gamma$ on $\E_\Gamma$ is defined by $F_\Gamma\psi_\Gamma:=\pi_{I,\overleftarrow{L}}^N(F_\Gamma)\psi_\Gamma$ and therefore 
\beqs &(\pi_{I,\overleftarrow{L}}^N(F_\Gamma)\psi_\Gamma)(\hat\rho_{S_1}(\gamma_1),...,\hat\rho_{S_N}(\gamma_N))
\\&=\int_{\bar G_{\breve S,\Gamma}} \dif\mu_{\breve S,\Gamma}(\rho_{\breve S,\Gamma}(\Gamma))
F_\Gamma(\rho_{S_1}(\gamma_1),....,\rho_{S_N}(\gamma_N);
\rho_{S_1}(\gamma_1)^{-1}\hat\rho_{S_1}(\gamma_1),...,\rho_{S_N}(\gamma_N)^{-1}\hat\rho_{S_N}(\gamma_N))\\
&\qquad\psi_\Gamma(\hat\rho_{S_1}(\gamma_1),...,\hat\rho_{S_N}(\gamma_N))\eqs
and for $F^*_\Gamma\psi_\Gamma:=\pi_{I,\overleftarrow{L}}^N(F^*_\Gamma)\psi_\Gamma$
\beqs 
&(\pi_{I,\overleftarrow{L}}^N(F^*_\Gamma)\psi_\Gamma)(\rho_{S_1}(\gamma_1),...,\rho_{S_N}(\gamma_N))\\
&\qquad=\int_{\bar G_{\breve S,\Gamma}} \dif\mu_{\breve S,\Gamma}(\rho_{\breve S,\Gamma}(\Gamma))
\psi_\Gamma(\hat\rho_{S_1}(\gamma_1),...,\hat\rho_{S_N}(\gamma_N))\\
&\qquad\qquad F_\Gamma(\hat\rho_{S_1}(\gamma_1)\rho_{S_1}(\gamma_1)^{-1},....,\hat\rho_{S_1}(\gamma_N)\rho_{S_N}(\gamma_N)^{-1};
\rho_{S_1}(\gamma_1),....,\rho_{S_N}(\gamma_N))^*\\
&\qquad=\int_{\bar G_{\breve S,\Gamma}} \dif\mu_{\breve S,\Gamma}(\rho_{\breve S,\Gamma}(\Gamma))
\psi_\Gamma(\rho_{S_1}(\gamma_1)^{-1}\hat\rho_{S_1}(\gamma_1),...,\rho_{S_N}(\gamma_N)^{-1}\hat\rho_{S_N}(\gamma_N))\\
&\qquad\qquad \overline{F_\Gamma(\hat\rho_{S_1}(\gamma_1)^{-1},....,\hat\rho_{S_1}(\gamma_N)^{-1};
\rho_{S_1}(\gamma_1),....,\rho_{S_N}(\gamma_N))}\\
\eqs
and there is a $\CD(\bar G_{\breve S,\Gamma}\times \bar G_{\breve S,\Gamma})$-valued inner product on $C_c(\bar G_{\breve S,\Gamma})$ given by
\beqs &\la \phi_\Gamma,\varphi_\Gamma\ra_{\CD(\bar G_{\breve S,\Gamma}\times \bar G_{\breve S,\Gamma})}\\&:=\phi_\Gamma(\rho_{\breve S,\Gamma}(\Gamma))\overline{\varphi_\Gamma
(\hat\rho_{S_1}(\gamma_1)^{-1}\rho_{S_1}(\gamma_1),....,\hat\rho_{S_N}(\gamma_N)^{-1}\rho_{S_N}(\gamma_N))}
\eqs Notice $\CD(\bar G_{\breve S,\Gamma}\times \bar G_{\breve S,\Gamma})\subset \CD(\bar G_{\breve S,\Gamma},\bar G_{\breve S,\Gamma})$. Consequently $C_c(\bar G_{\breve S,\Gamma})$ is a full pre-Hilbert $\Alg_\Gamma$-module. 

\textit{Step 4.: full left Hilbert $\Alg$-module $\E_\Gamma$}\\ 
The completition of $\E_\Gamma$ is a Hilbert $C^*(\bar G_{\breve S,\Gamma},\bar G_{\breve S,\Gamma})$-module. 

\textit{Step 4.: $\Alg_\Gamma$-$\BAlg$-imprimitivity bimodule $\E_\Gamma$}\\ 
\textit{Step 4.1:}\\
Derive
\beqs
&\la\psi_\Gamma, F_\Gamma\phi_\Gamma\ra_{\CB}=\int_G\dif\mu_{\breve S,\Gamma}(\rho_{\breve S,\Gamma})\overline{\psi_\Gamma(\rho_{\breve S,\Gamma})} \pi_{I,\overleftarrow{L}}^N(F_\Gamma)\phi_\Gamma(\rho_{\breve S,\Gamma}(\Gamma))\\
&\la \psi_\Gamma,F_\Gamma\phi_\Gamma\ra_{\CB}
=\int_{\bar G_{\breve S,\Gamma}}\dif\mu_{\breve S,\Gamma}(\hat\rho_{\breve S,\Gamma}(\Gamma)) \int_{\bar G_{\breve S,\Gamma}}  \dif\mu_{\breve S,\Gamma}(\rho_{\breve S,\Gamma}(\Gamma))
\overline{\psi_\Gamma(\hat\rho_{S_1}(\gamma_1),...,\hat\rho_{S_N}(\gamma_N))}\\
&\qquad\qquad\qquad F_\Gamma(\rho_{\breve S,\Gamma}(\Gamma);\hat\rho_{\breve S,\Gamma}(\Gamma))\phi_\Gamma(\rho_{S_1}(\gamma_1)\hat\rho_{S_1}(\gamma_1),...,\rho_{S_N}(\gamma_N)\hat\rho_{S_N}(\gamma_N))\\
&=\int_{\bar G_{\breve S,\Gamma}}\dif\mu_{\breve S,\Gamma}(\hat\rho_{\breve S,\Gamma}(\Gamma)) \int_{\bar G_{\breve S,\Gamma}}  \dif\mu_{\breve S,\Gamma}(\rho_{\breve S,\Gamma}(\Gamma))
\overline{\psi_\Gamma(\rho_{S_1}(\gamma_1)^{-1}\hat\rho_{S_1}(\gamma_1),...,\rho_{S_N}(\gamma_N)^{-1}\hat\rho_{S_N}(\gamma_N))}\\
&\qquad F_\Gamma(\rho_{\breve S,\Gamma}(\Gamma);\rho_{S_1}(\gamma_1)^{-1}\hat\rho_{S_1}(\gamma_1),...,\rho_{S_N}(\gamma_N)^{-1}\hat\rho_{S_N}(\gamma_N))
\phi_\Gamma(\rho_{\breve S,\Gamma}(\Gamma))\\
&=\int_{\bar G_{\breve S,\Gamma}}\dif\mu_{\breve S,\Gamma}(\hat\rho_{\breve S,\Gamma}(\Gamma))\int_{\bar G_{\breve S,\Gamma}} \dif\mu_{\breve S,\Gamma}(\rho_{\breve S,\Gamma}(\Gamma))\overline{\psi_\Gamma(\rho_{S_1}(\gamma_1)^{-1}\hat\rho_{S_1}(\gamma_1),...,\rho_{S_N}(\gamma_N)^{-1}\hat\rho_{S_N}(\gamma_N))}\\
&\qquad\overline{\left(\overline{F_\Gamma(\rho_{\breve S,\Gamma}(\Gamma);
\rho_{S_1}(\gamma_1)^{-1}\hat\rho_{S_1}(\gamma_1),....,\rho_{S_N}(\gamma_N)^{-1}\hat\rho_{S_N}(\gamma_N))}\right)}\phi_\Gamma(\hat\rho_{\breve S,\Gamma}(\Gamma))\\
\eqs
\beqs
\la\psi_\Gamma, F_\Gamma\phi_\Gamma\ra_{\CB}&=
\int_{\bar G_{\breve S,\Gamma}} \dif\mu_{\breve S,\Gamma}(\rho_{\breve S,\Gamma}(\Gamma))\int_{\bar G_{\breve S,\Gamma}} \dif\mu_{\breve S,\Gamma}(\hat\rho_{\breve S,\Gamma}(\Gamma))\\
&\qquad\overline{\psi_\Gamma(\rho_{S_1}(\gamma_1)^{-1}\hat\rho_{S_1}(\gamma_1),...,\rho_{S_N}(\gamma_N)^{-1}\hat\rho_{S_N}(\gamma_N))}\\
&\qquad\overline{F^*_\Gamma(\rho_{\breve S,\Gamma}(\Gamma);
\hat\rho_{S_1}(\gamma_1)\rho_{S_1}(\gamma_1)^{-1},....,\hat\rho_{S_N}(\gamma_N)\rho_{S_N}(\gamma_N)^{-1})}\\
&\qquad\phi_\Gamma(\hat\rho_{\breve S,\Gamma}(\Gamma)) \\
&=\la F_\Gamma^*\psi_\Gamma,\phi_\Gamma\ra_{\CB}=\la \pi_I( F_\Gamma^*)\psi_\Gamma,\phi_\Gamma\ra_{\CB} 
\eqs
for $F_\Gamma\in \CD(\bar G_{\breve S,\Gamma},\bar G_{\breve S,\Gamma})$, $\psi_\Gamma,\phi_\Gamma\in C_c(\bar G_{\breve S,\Gamma})$
and
\beqs \la \lambda\psi_\Gamma ,\phi_\Gamma\ra_{C(\bar G_{\breve S,\Gamma},\bar G_{\breve S,\Gamma})}=\la \psi_\Gamma,\lambda^* \phi_\Gamma\ra_{C(\bar G_{\breve S,\Gamma},\bar G_{\breve S,\Gamma})}=\la \psi_\Gamma,\overline{\lambda} \phi_\Gamma\ra_{C(\bar G_{\breve S,\Gamma},\bar G_{\breve S,\Gamma})}
\eqs 
for $\lambda\in\CB$ and $\psi_\Gamma,\phi_\Gamma\in C_c(\bar G_{\breve S,\Gamma})$.

\textit{Step 4.2:}\\
\beqs \phi_\Gamma\la \psi_\Gamma,\varphi_\Gamma\ra_{\CB}= \pi_R(\la \psi_\Gamma,\varphi_\Gamma\ra_{\CB})\phi_\Gamma 
=\la \phi_\Gamma,\psi_\Gamma\ra_{\CD(\bar G_{\breve S,\Gamma},\bar G_{\breve S,\Gamma})}\varphi_\Gamma =\pi_I(\la \phi_\Gamma,\psi_\Gamma\ra_{\CD(\bar G_{\breve S,\Gamma},\bar G_{\breve S,\Gamma})})\varphi_\Gamma
\eqs for $\phi_\Gamma,\psi_\Gamma,\varphi_\Gamma\in\E$.

\textit{Step 5.: Morita equivalence}\\
Hence conclude that, the $C^*$-algebras $C^*(\bar G_{\breve S,\Gamma},\bar G_{\breve S,\Gamma})$ and $\CB$ are Morita equivalent. Moreover, for two Morita equivalent $C^*$-algebras there is a bijective correspondence between the non-degenerate representations of those two $C^*$-algebras. Consequently all irreducible representations of the $^*$-algebra $\CD(\bar G_{\breve S,\Gamma}, C_0( \bar G_{\breve S,\Gamma}))$ are unitarily equivalent to $\pi_{I,\overleftarrow{L}}^N$. Clearly for different unitarily inequivalent irreducible representations of $\bar G_{\breve S,\Gamma}$, there are different inequivalent irreducible representations of $\CD(\bar G_{\breve S,\Gamma},C_0(\bar G_{\breve S,\Gamma}))$, which corresponds, therefore, to possible superselections of the system. Remark that, every non-degenerate representation of the compact operators $\KD(\HS_\Gamma)$ is equivalent to a direct sum of copies of the identity representation. Hence it follows that, every non-degenerate representation of $C^*(\bar G_{\breve S,\Gamma},\bar G_{\breve S,\Gamma})$ is equivalent to a direct sum of copies of $\pi_{I,\overleftarrow{L}}^N:=\Phi_{M}\ltimes U_{\overleftarrow{L}}^N$, where $\Phi_{M}$ is the multiplication representation of $C_0(\bar G_{\breve S,\Gamma})$ on $\HS_\Gamma$. 
\end{proofs}

To summarise the Generalised Stone- von Neumann theorem \ref{Generalised Stone- von Neumann theorem} states that, there is a bijective correspondence strongly continuous unitary representations of a group $\bar G_{\breve S,\Gamma}$ on the $C^*$-algebra $\LD(\HS_\Gamma)$ and elements of $\Mor(C^*(\bar G_{\breve S,\Gamma},\bar G_{\breve S,\Gamma}),\KD(\HS_\Gamma))$. This correspondence preserves direct sums and irreducibility. 

Furthermore, all unitary representations of $\bar G_{\breve S,\Gamma}$ on $C_0(\Ab_\Gamma)$ for surface sets, which have the simple surface property for $\Gamma$, are naturally elements of the multiplier algebra $M(C^*(\bar G_{\breve S,\Gamma},\bar G_{\breve S,\Gamma}))$. Or equivalently all unitary representations of $\bar G_{S,\Gamma}$ for a surface $S$ having the same surface intersection property for $\Gamma$ are naturally elements of the multiplier algebra $M(C^*(\bar G_{S,\Gamma},\bar G_{S,\Gamma}))$.
Clearly the closed linear span $\{U(\rho_{S,\Gamma}(\Gamma)):\rho_{S,\Gamma}(\Gamma)\in\bar G_{\breve S,\Gamma}\}$ of all unitary representations of $\bar G_{\breve S,\Gamma}$ on the $C^*$-algebra $C(\bar G_{\breve S,\Gamma})$ forms a $C^*$-subalgebra of $M(C^*(\bar G_{\breve S,\Gamma},\bar G_{\breve S,\Gamma}))$.

In the next investigations the question is what happen if different surface sets are used for the construction of the flux transformation group $C^*$-algebra. In particular is there a generalised von Neumann theorem available?

For a simplification the following identifications are used. The flux group $\bar G_{\breve S,\Gamma}$ is identified with $G^N$. Then the following coset spaces (or space of orbits) are defined by the sets 
\beqs 
G^N/ G:=\{&(\rho_{S_1}(\gamma_1)\rho_{S}(\gamma_1),...,\rho_{S_N}(\gamma_N)\rho_{S}(\gamma_N)):\quad\rho_{S}\in \Gop_{S,\Gamma},\rho_{S_i}\in \Gop_{\breve S,\Gamma},\\&\qquad\quad \rho_S(\gamma_i)=\rho_S(\gamma_j)=g_S\in G; 1\leq i,j\leq N\}\\
G^N\setminus G:=\{&(\rho_{S}(\gamma_1)\rho_{S_1}(\gamma_1),...,\rho_{S}(\gamma_N)\rho_{S_N}(\gamma_N))\quad\rho_{S}\in \Gop_{S,\Gamma},\rho_{S_i}\in \Gop_{\breve S,\Gamma},\\&\qquad\quad \rho_S(\gamma_i)=\rho_S(\gamma_j)=g_S\in G; 1\leq i,j\leq N\}
\eqs whenever $\breve S$ is a surface set with simple surface intersection property for $\Gamma$ and $S$ has the same surface intersection property for $\Gamma$.

The space $G^N/G^2$ is identified with $G^2/G^{2}\times G^{N-2}$, which is given by
\beqs G^2/G^2\times G^{N-2}&:=\{(\rho_{S_1}(\gamma_1)\rho_{\bar S_1}(\gamma_1),\rho_{S_2}(\gamma_2)\rho_{\bar S_2}(\gamma_2),\rho_{S_3}(\gamma_3)...,\rho_{S_N}(\gamma_N)):\\
&\qquad\rho_{S_i}\in \Gop_{\breve S,\Gamma},\rho_{\bar S_l}\in \Gop_{\check S,\Gamma},\forall l=1,2; i=1,...,N\text{ and } (\rho_{\bar S_1}(\gamma_1),\rho_{\bar S_2}(\gamma_2))\in G^2 \}\\
&=G^{N-2}
\eqs whenever $\breve S$ is a surface set with simple surface intersection property for $\Gamma$ and $\check S:=\{\bar S_1,\bar S_2\}$ has the simple surface intersection property for $\{\gamma_1,\gamma_2\}$. The space $G^2/G\times G^{N-2}$ is derivable as
\beqs
 G^2/G\times G^{N-2}:=\{&(\rho_{S_1}(\gamma_1)\rho_{S}(\gamma_1),\rho_{S_2}(\gamma_2)\rho_{S}(\gamma_2),\rho_{S_3}(\gamma_3)...,\rho_{S_N}(\gamma_N)):\\
&\qquad\rho_{S_i}\in \Gop_{\breve S,\Gamma},\forall i=1,...,N;\rho_{S}\in \Gop_{S,\Gamma}\text{ and } \rho_{S}(\gamma_k)\in G,\forall k=1,2 \}
\eqs whenever $\breve S$ is a surface set with simple surface intersection property for $\Gamma$.

Or more general define  
\beqs &G^N/G^{N-M}=G^{N-M}/G^{N-M}\times G^{M}\\
&:=\{(\rho_{S_1}(\gamma_1)\rho_{\bar S_1}(\gamma_1),...,\rho_{S_{N-M}}(\gamma_{N-M})\rho_{\bar S_{N-M}}(\gamma_{N-M}),\rho_{S_{N-M+1}}(\gamma_{N-M+1}),...,\rho_{S_N}(\gamma_N)):\\
&\qquad\rho_{S_i}\in \Gop_{\breve S,\Gamma},\rho_{\bar S_i}\in \Gop_{\check S,\Gamma}\text{ and } (\rho_{\bar S_1}(\gamma_1),...,\rho_{\bar S_{N-M}}(\gamma_{N-M}))\in G^{N-M} \}\\
&\text{ or }\\
&G^{N-M}/G\times G^{M}\\
&:=\{(\rho_{S_1}(\gamma_1)\rho_{S}(\gamma_1),...,\rho_{S_{N-M}}(\gamma_{N-M})\rho_{S}(\gamma_{N-M}),\rho_{S_{N-M+1}}(\gamma_{N-M+1}),...,\rho_{S_N}(\gamma_N)):\\&\qquad \rho_{S_i}\in \Gop_{\breve S,\Gamma},\rho_{S}\in \Gop_{S,\Gamma},(\rho_{\bar S_1}(\gamma_1),...,\rho_{\bar S_{N-M}}(\gamma_{N-M}))\in G^{N-M}\text{ and }\\&\qquad\quad \rho_S(\gamma_i)=\rho_S(\gamma_j)\in G\quad i,j=1,...,N\}
\eqs for suitable surface sets $\breve S$ and $\check S$ and a surface $S$.
Hence the coset $G^N/G^{N-1}$ of a group $G^N$ and a closed subgroup $G^{N-1}$ is the set 
\beqs &G^N/G^{N-1}=G^{N-1}/G^{N-1}\times G \\&:=\{(\rho_{S_1}(\gamma_1)\rho_{\bar S_1}(\gamma_1),...,\rho_{S_{N-1}}(\gamma_{N-1})\rho_{\bar S_{N-1}}(\gamma_{N-1}),\rho_{S_N}(\gamma_N)):\\
&\qquad\rho_{S_i}\in \Gop_{\breve S,\Gamma},\rho_{\bar S_i}\in \Gop_{\check S,\Gamma}\text{ and } (\rho_{\bar S_1}(\gamma_1),...,\rho_{\bar S_{N-1}}(\gamma_{N-1}))\in G^{N-1} \}\eqs
for suitable surface sets $\breve S$ and $\check S$. For suitable surface sets $\breve S$, $\check S$ and a graph $\Gamma$ the following theorem is derivable. 

\begin{theo}\label{theo moritaequivgroup} It is true that:
\begin{enumerate}
\item\label{GGN} the algebras $C_0(G^N/G^{N-1})\rtimes G^N$ and $C^*(G^{N-1})$ are Morita equivalent $C^*$-algebras (for $N>1$).
\item\label{GMGNM} The algebras $C_0(G^N/G^{N-M})\rtimes G^N$ and $C^*(G^{N-M})$ are Morita equivalent $C^*$-algebras (for $N>1$ and $1\leq M<N$) 
\end{enumerate}
\end{theo}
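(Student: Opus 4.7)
The plan is to recognise both (i) and (ii) as instances of Green's symmetric imprimitivity theorem (in Rieffel's form) applied to the transformation group system $(C_0(G^N/G^{N-M}),G^N,\tau)$, where $\tau$ is the translation action of $G^N$ on the coset space and $G^{N-M}$ is embedded in $G^N$ as the closed subgroup of the first $N-M$ coordinates. Since $G^N/G^{N-M}\cong G^M$ as $G^N$-spaces and (i) is just the case $M=1$, it suffices to prove (ii).

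First I would follow the structure already used in the proof of Theorem \ref{Generalised Stone- von Neumann theorem}: put $\Alg_\Gamma=\CD(G^N,C_0(G^N/G^{N-M}))$ and $\BAlg_\Gamma=\CD(G^{N-M})$, and take $\E_\Gamma=C_c(G^N)$ as the candidate pre-imprimitivity bimodule. The left $\Alg_\Gamma$-action on $\E_\Gamma$ is the analogue of $\pi_{I,\overleftarrow L}^N$, built from the covariant pair consisting of the multiplication representation $\Phi_M$ of $C_0(G^N/G^{N-M})$ on $L^2(G^N)$ and the left-translation unitary representation $U_{\overleftarrow L}^N$ of $G^N$. The right $\BAlg_\Gamma$-action on $\E_\Gamma$ is right convolution by elements of the closed subgroup $G^{N-M}\subset G^N$.

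Next I would introduce the Rieffel inner products, namely the $\BAlg_\Gamma$-valued right inner product
\[
\la\phi,\psi\ra_{\BAlg_\Gamma}(h):=\int_{G^N}\overline{\phi(t)}\,\psi(th)\,\dif\mu_{G^N}(t),\qquad h\in G^{N-M},
\]
and the $\Alg_\Gamma$-valued left inner product
\[
\la\phi,\psi\ra_{\Alg_\Gamma}(t,sG^{N-M}):=\int_{G^{N-M}}\phi(sx)\,\overline{\psi(t^{-1}sx)}\,\dif\mu_{G^{N-M}}(x).
\]
Then I would verify the standard imprimitivity bimodule axioms: positivity of both inner products, bimodule associativity, the adjointness relations for both actions, and the compatibility condition
\[
\phi\cdot\la\psi,\xi\ra_{\BAlg_\Gamma}=\la\phi,\psi\ra_{\Alg_\Gamma}\cdot\xi
\]
for $\phi,\psi,\xi\in\E_\Gamma$. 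Completing $\E_\Gamma$ in the induced norm then produces a $(C_0(G^N/G^{N-M})\rtimes G^N,C^*(G^{N-M}))$-imprimitivity bimodule, from which Morita equivalence follows.

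The hard step is establishing simultaneously the compatibility condition and the fullness of the left $\Alg_\Gamma$-valued inner product, since these two statements carry the core content of the imprimitivity theorem. Both reduce to careful manipulations based on the translation invariance of the Haar measures on $G^N$ and $G^{N-M}$ together with a Fubini interchange, plus an approximation argument by compactly supported functions on $G^{N-M}$ to show that $\la\E_\Gamma,\E_\Gamma\ra_{\Alg_\Gamma}$ is dense in $\Alg_\Gamma$. Once this is achieved, statement (i) is immediate from (ii) by setting $M=1$.
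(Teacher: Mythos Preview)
Your proposal is correct and takes essentially the same approach as the paper: both construct the imprimitivity bimodule on $\E_\Gamma=C_c(G^N)$ with the left crossed-product action coming from the covariant pair $(\Phi_M,U_{\overleftarrow{L}}^N)$, the right action by convolution with $\CD(G^{N-M})$, and the corresponding Rieffel-type inner products, then verify the bimodule axioms. The only difference is one of framing: you identify the result upfront as an instance of Green's imprimitivity theorem and explicitly flag fullness and positivity as items to check, whereas the paper carries out the bimodule construction by hand in the style of its Theorem~\ref{Generalised Stone- von Neumann theorem} without naming the general machinery.
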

\begin{proofs}
In the following the case \ref{GMGNM} is considered.

\textit{Step 1.: two pre-$C^*$-algebras $\Alg_\Gamma$,$\BAlg_\Gamma$ and a full pre-Hilbert $\BAlg$-module $\E_\Gamma$ }\\ 
Set $N$ be equivalent to $\vert \Gamma\vert$ for a graph $\Gamma$. Let $\Alg_\Gamma=\CD(G^N,G^N/G^{N-1})$ be the dense subalgebra of $C^*(G^N,G^N/G^{N-1})$ such that $\CD(G^N,G^N/G^{N-1})$ is a pre-$C^*$-algebra. Similarly let $\BAlg_\Gamma=\CD(G^{N-1})$ be a dense subalgebra of $C^*(G^{N-1})$ such that $\CD(G^{N-1})$ is a pre-$C^*$-algebra. Identify $G^N/G^{N-1}$ with $G$.

The full pre-Hilbert $\CD(G^{N-1})$-module is given by $C_c(G^N)$ and the right action $\psi_\Gamma f_\Gamma:=\pi_R(f_\Gamma)\psi_\Gamma$, which is of the form
\beqs &\pi_R(f_\Gamma)\psi_\Gamma
:= \int_{G^{N-1}}\dif\mu(\rho_{\bar S_1}(\gamma_1),...,\rho_{\bar S_{N-1}}(\gamma_{N-1}))\\
&\qquad\qquad\qquad\qquad 
\psi_\Gamma( \hat\rho_{S_1}(\gamma_1)\rho_{\bar S_1}(\gamma_1),...,\hat\rho_{S_{N-1}}(\gamma_{N-1})\rho_{\bar S_{N-1}}(\gamma_{N-1}),\hat\rho_{S_N}(\gamma_{N}))\\
&\qquad\qquad\qquad\qquad f_\Gamma(\rho_{\bar S_1}(\gamma_1),...,\rho_{\bar S_{N-1}}(\gamma_{N-1}))\\
&\pi_R(f^*_\Gamma)\psi_\Gamma
:= \int_{G^{N-1}}\dif\mu(\rho_{\bar S_1}(\gamma_1),...,\rho_{\bar S_{N-1}}(\gamma_{N-1}))f^*_\Gamma(\rho_{\bar S_1}(\gamma_1),...,\rho_{\bar S_{N-1}}(\gamma_{N-1}))\\
&\qquad\qquad\qquad\qquad 
\psi_\Gamma(\hat\rho_{S_1}(\gamma_1)\rho_{\bar S_1}(\gamma_1)^{-1},...,\hat\rho_{S_{N-1}}(\gamma_{N-1})\rho_{\bar S_{N-1}}(\gamma_{N-1})^{-1},\hat\rho_{S_N}(\gamma_{N}))
\eqs 
for $\psi_\Gamma\in C_c(G^N)$ and $f_\Gamma\in \CD(G^{N-1})$. The $\CD(G^{N-1})$-valued product on $C_c(G^N)$ is given by
\beqs \la \psi_\Gamma,\phi_\Gamma\ra_{\CD(G^{N-1})}&:=\int_{G^N}\dif\mu(\hat\rho_{S_1,\Gamma},...,\hat\rho_{S_N,\Gamma})\overline{\psi_\Gamma(\hat\rho_{S_1}(\gamma_1),...,\hat\rho_{S_N}(\gamma_N))}\\
&\qquad \phi_\Gamma(\hat\rho_{S_1}(\gamma_1)\rho_{\bar S_1}(\gamma_1),...,
\hat\rho_{S_{N-1}}(\gamma_{N-1})\rho_{\bar S_{N-1}}(\gamma_{N-1}),\hat\rho_{S_{N}}(\gamma_{N}))
\eqs
\textit{Step 2.: full right Hilbert $\BAlg$-module $\E_\Gamma$}\\ 
The completition of $C_c(G^{N})$ is a Hilbert $\CD(G^{N-1})$-module. 

\textit{Step 3.: left-action $\pi_L$ of $\Alg_\Gamma$ on $\E_\Gamma$ s.t. $\E_\Gamma$ is a full left pre-Hilbert }\\
Then there is a pre-Hilbert $\CD(G^N,G)$-module is given by $C_c(G^{N})$ and the left action $F_\Gamma\psi_\Gamma:=\pi_L(F_\Gamma)\psi_\Gamma$, which is of the form
\beqs \pi_{L}(F_\Gamma)\psi_\Gamma&:= \int_{G}\dif\mu(\rho_{S}(\gamma_N))\int_{G^N}\dif\mu(\rho_{S_1}(\gamma_1),...,\rho_{S_N}(\gamma_N))\\
&\quad\qquad F_\Gamma(\rho_{S}(\gamma_N);
\hat\rho_{S_1}(\gamma_1)\rho_{S_1}(\gamma_1)^{-1},...,\hat\rho_{S_{N}}(\gamma_{N})\rho_{S_N}(\gamma_{N})^{-1})\\
&\quad\qquad\psi_\Gamma(\rho_{S_1}(\gamma_1),...,\rho_{S_{N}}(\gamma_N))\\
&= \int_{G}\dif\mu(\rho_{S}(\gamma_N))\int_{G^N}\dif\mu(\rho_{S_1}(\gamma_1),...,\rho_{S_N}(\gamma_N))\\
&\quad\qquad F_\Gamma(\rho_{S}(\gamma_N);
\rho_{S_1}(\gamma_1),...,\rho_{S_N}(\gamma_{N}))\\
&\quad\qquad\psi_\Gamma(\rho_{S_1}(\gamma_1)^{-1}\hat\rho_{S_1}(\gamma_1),...,\rho_{S_{N}}(\gamma_{N})^{-1}\hat\rho_{S_{N}}(\gamma_N))
\eqs where $\rho_S(\gamma_i)=\rho_S(\gamma_j)$ for $i,j=1,...,N$
and
\beqs \pi_L(F^*_\Gamma)\psi_\Gamma&:= \int_{G}\dif\mu(\rho_{S}(\gamma_N))\int_{G^N}\dif\mu(\rho_{S_1}(\gamma_1),...,\rho_{S_N}(\gamma_N))\\
&\quad\qquad F^*_\Gamma(\rho_{S}(\gamma_N);
\rho_{S_1}(\gamma_1)\hat\rho_{S_1}(\gamma_1)^{-1},...,\rho_{S_N}(\gamma_{N})\hat\rho_{S_N}(\gamma_N)^{-1})\\
&\quad\qquad\psi_\Gamma(\rho_{S_1}(\gamma_1),...,\rho_{S_{N}}(\gamma_N))\\
&= \int_{G}\dif\mu(\rho_{S}(\gamma_N))\int_{G^N}\dif\mu(\rho_{S_1}(\gamma_1),...,\rho_{S_N}(\gamma_N))\\
&\quad\qquad F^*_\Gamma(\rho_{S}(\gamma_N);
\rho_{S_1}(\gamma_1),...,\rho_{S_N}(\gamma_{N}))\\
&\quad\qquad\psi_\Gamma(\rho_{S_1}(\gamma_1)\hat\rho_{S_1}(\gamma_1),...,\rho_{S_{N}}(\gamma_N)\hat\rho_{S_{N}}(\gamma_{N}))
\eqs
for $F_\Gamma\in \CD(G^N,G)$ and $\psi_\Gamma\in C_c(G^{N})$. 
The $\CD(G^N,G)$-valued inner product on $C_c(G^{N})$ is equal to
\beqs &\la\psi_\Gamma,\phi_\Gamma\ra_{\CD(G^N,G)}\\
&:=\int_{G^{N}}\dif\mu(\hat\rho_{S_1}(\gamma_1),...,\hat\rho_{S_N}(\gamma_N))
\psi_\Gamma(\rho_{S_1}(\gamma_1)^{-1}\hat\rho_{S_1}(\gamma_{1}),...,\rho_{S_N}(\gamma_N)^{-1}\hat\rho_{S_N}(\gamma_{N}))\\
&\qquad\qquad 
\overline{\phi_\Gamma(\hat\rho_{S_1}(\gamma_1),..., \rho_{S}(\gamma_N)^{-1}\hat\rho_{S_{N}}(\gamma_{N}))}
\eqs
for $\psi_\Gamma,\phi_\Gamma\in C_c(G^{N})$.

\textit{Step 4.: $\Alg_\Gamma$-$\BAlg_\Gamma$-imprimitivity bimodule $\E_\Gamma$}\\ 
\textit{Step 4.1:}\\
\beqs &\la \psi_\Gamma f_\Gamma ,\phi_\Gamma\ra_{\CD(G^{N},G)}
=\la \pi_R(f_\Gamma)\psi_\Gamma,\phi_\Gamma\ra_{\CD(G^{N},G)}\\
&=\int_{G^N}\dif\mu(\hat\rho_{S_1}(\gamma_1),...,\hat\rho_{S_N}(\gamma_N))\int_{G}\dif\mu(\rho_{S}(\gamma_N))\int_{G^N}\dif\mu(\rho_{S_1}(\gamma_1),...,\rho_{S_N}(\gamma_N))\\
&\qquad\psi_\Gamma(\rho_{S_1}(\gamma_1)^{-1}\hat\rho_{S_1}(\gamma_{1})\rho_{\bar S_1}(\gamma_{1}),...,\rho_{S_{N-1}}(\gamma_{N-1})^{-1}\hat\rho_{S_{N-1}}(\gamma_{N-1})\rho_{\bar S_{N-1}}(\gamma_{N-1}),\rho_{S_N}(\gamma_N)^{-1}\hat\rho_{S_{N}}(\gamma_{N}))\\ 
&\qquad f_\Gamma(\rho_{\bar S_1}(\gamma_1),...,\rho_{\bar S_{N-1}}(\gamma_{N-1}))
\overline{\phi_\Gamma(\rho_{S_1}(\gamma_1)^{-1}\hat\rho_{S_1}(\gamma_1),...,\rho_{S_N}(\gamma_N)^{-1}\rho_S(\gamma_N)^{-1}\hat\rho_{S_N}(\gamma_{N}))}\\
&=\int_{G^N}\dif\mu(\hat\rho_{S_1}(\gamma_1),...,\hat\rho_{S_N}(\gamma_N))\int_{G}\dif\mu(\rho_{S}(\gamma_N))\int_{G^N}\dif\mu(\rho_{S_1}(\gamma_1),...,\rho_{S_N}(\gamma_N))\\
&\qquad \psi_\Gamma(\rho_{S_1}(\gamma_1)\rho_{\bar S_1}(\gamma_1),...,\rho_{S_{N-1}}(\gamma_{N-1})\rho_{\bar S_{N-1}}(\gamma_{N-1}),\rho_{S_N}(\gamma_N))\\
&\qquad
\overline{\overline{f_\Gamma(\rho_{\bar S_1}(\gamma_1),...,\rho_{\bar S_{N-1}}(\gamma_{N-1}))}
\phi_\Gamma(\hat\rho_{S_1}(\gamma_1),...,\rho_S(\gamma_N)^{-1}\hat\rho_{S_{N}}(\gamma_{N}))}\\
&=\int_{G^N}\dif\mu(\hat\rho_{S_1}(\gamma_1),...,\hat\rho_{S_N}(\gamma_N))\int_{G}\dif\mu(\rho_{S}(\gamma_N))\int_{G^N}\dif\mu(\rho_{S_1}(\gamma_1),...,\rho_{S_N}(\gamma_N))\\
&\qquad \psi_\Gamma(\rho_{S_1}(\gamma_1),...,\rho_{S_{N-1}}(\gamma_{N-1}),\rho_{S_N}(\gamma_N))\\
&\qquad
\overline{f^*_\Gamma(\rho_{\bar S_1}(\gamma_1)^{-1},...,\rho_{\bar S_{N-1}}(\gamma_{N-1})^{-1})}\\
&\qquad\overline{
\phi_\Gamma(\hat\rho_{S_1}(\gamma_1)\rho_{\bar S_1}(\gamma_1)^{-1},...,
\hat\rho_{S_{N-1}}(\gamma_{N-1})\rho_{\bar S_{N-1}}(\gamma_{N-1})^{-1},
\rho_S(\gamma_N)^{-1}\hat\rho_{S_{N}}(\gamma_{N}))}\\
&=\la \psi_\Gamma,\pi_R(f^*_\Gamma)\phi_\Gamma\ra_{\CD(G^{N-1})}\\
&=\la \psi_\Gamma,\phi_\Gamma f^*_\Gamma\ra_{\CD(G^{N-1})}
\eqs
and
\beqs &\la \psi_\Gamma,F_\Gamma\phi_\Gamma\ra_{\CD(G^{N-1})}=\la \psi_\Gamma,\pi_L(F_\Gamma)\phi_\Gamma\ra_{\CD(G^{N-1})}\\
&=\int_{G}\dif\mu(\rho_{S}(\gamma_N))\int_{G^N}\dif\mu(\hat\rho_{S_1}(\gamma_1),...,\hat\rho_{S_N}(\gamma_N))\int_{G^N}\dif\mu(\rho_{S_1}(\gamma_1),...,\rho_{S_N}(\gamma_N))\\
&\qquad\overline{\psi_\Gamma(\hat\rho_{S_1}(\gamma_1),...,\hat\rho_{S_N}(\gamma_N))} 
F_\Gamma(\rho_{S}(\gamma_N);\rho_{S_1}(\gamma_1),...,
\rho_{S_{N}}(\gamma_{N}))\\
&\qquad \phi_\Gamma(\rho_{S_1}(\gamma_1)^{-1}\hat\rho_{S_1}(\gamma_1)\rho_{\bar S_1}(\gamma_1),...,\rho_{S_{N-1}}(\gamma_{N-1})^{-1}\hat\rho_{S_{N-1}}(\gamma_{N-1})\rho_{\bar S_{N-1}}(\gamma_{N-1}), \rho_{S_N}(\gamma_N)^{-1}\hat\rho_{S_{N}}(\gamma_{N}))
\eqs
\beqs\la \psi_\Gamma,F_\Gamma\phi_\Gamma\ra_{\CD(G^{N-1})}
&=\int_{G}\dif\mu(\rho_{S}(\gamma_N))\int_{G^N}\dif\mu(\hat\rho_{S_1}(\gamma_1),...,\hat\rho_{S_N}(\gamma_N))\int_{G^N}\dif\mu(\rho_{S_1}(\gamma_1),...,\rho_{S_N}(\gamma_N))\\
&\qquad\overline{\psi_\Gamma(\rho_{S_1}(\gamma_1)\hat\rho_{S_1}(\gamma_1),...,\rho_{S_N}(\gamma_N)\hat\rho_{S_N}(\gamma_N)) 
F^*_\Gamma(\rho_{S}(\gamma_N);\rho_{S_1}(\gamma_1),...,
\rho_{S_{N}}(\gamma_{N}))}\\
&\qquad\phi(\hat\rho_{S_1}(\gamma_1)\rho_{\bar S_1}(\gamma_1),...,\hat\rho_{S_{N}}(\gamma_{N})\rho_{\bar S_{N}}(\gamma_{N}))\\
&=\la \pi_L(F^*_\Gamma)\psi_\Gamma,\phi_\Gamma\ra_{\CD(G^{N-1})}\\
&=\la F^*_\Gamma\psi_\Gamma,\phi_\Gamma\ra_{\CD(G^{N-1})}
\eqs
\textit{Step 4.2:}\\
The following is true
\beqs &\phi_\Gamma\la \psi_\Gamma,\varphi_\Gamma\ra_{\CD(G^{N-1})}= \pi_R(\la \psi_\Gamma,\varphi_\Gamma\ra_{\CD(G^{N-1})})\phi_\Gamma\\ 
&= \int_{G^{N-1}}\dif\mu(\rho_{\bar S_1}(\gamma_1),...,\rho_{\bar S_{N-1}}(\gamma_{N-1}))\int_{G^{N}}\dif\mu(\hat\rho_{S_1}(\gamma_1),...,\hat\rho_{S_N}(\gamma_N))\\
&\qquad \phi_\Gamma(\hat\rho_{S_1}(\gamma_1)\rho_{\bar S_1}(\gamma_1),...,\hat\rho_{S_{N-1}}(\gamma_{N-1})\rho_{\bar S_{N-1}}(\gamma_{N-1}),\hat\rho_{S_{N}}(\gamma_{N}))\overline{\psi_\Gamma(\hat\rho_{S_{1}}(\gamma_{1}),...,\hat\rho_{S_{N}}(\gamma_{N}))}\\
&\qquad \varphi_\Gamma(\hat\rho_{S_{1}}(\gamma_{1})\rho_{\bar S_{1}}(\gamma_{1}),...,\hat\rho_{S_{N-1}}(\gamma_{N-1})\rho_{\bar S_{N-1}}(\gamma_{N-1}),\hat\rho_{S_{N}}(\gamma_{N}))\\
&= \int_{G^{N-1}}\dif\mu(\rho_{\bar S_1}(\gamma_1),...,\rho_{\bar S_{N-1}}(\gamma_{N-1}))\int_{G^{N}}\dif\mu(\hat\rho_{S_1}(\gamma_1),...,\hat\rho_{S_N}(\gamma_N))\\
&\qquad \phi_\Gamma(\rho_{\bar S_1}(\gamma_1)\hat\rho_{S_1}(\gamma_1),...,\rho_{\bar S_{N-1}}(\gamma_{N-1})\hat\rho_{S_{N-1}}(\gamma_{N-1}),\hat\rho_{S_{N}}(\gamma_{N}))\overline{\psi_\Gamma(\hat\rho_{S_{1}}(\gamma_{1}),...,\hat\rho_{S_{N}}(\gamma_{N}))}\\
&\qquad \varphi_\Gamma(\rho_{\bar S_{1}}(\gamma_{1})\hat\rho_{S_{1}}(\gamma_{1}),...,\rho_{\bar S_{N-1}}(\gamma_{N-1})\hat\rho_{S_{N-1}}(\gamma_{N-1}),\hat\rho_{S_{N}}(\gamma_{N}))
\eqs
\beqs
&\la \phi_\Gamma,\psi_\Gamma\ra_{\CD(G^N,G)}\varphi_\Gamma =\pi_L(\la \phi_\Gamma,\psi_\Gamma\ra_{\CD(G^N,G)})\varphi_\Gamma\\
&= \int_{G}\dif\mu(\rho_{S}(\gamma_N))\int_{G^{N}}\dif\mu(\rho_{S_1}(\gamma_1),...,\rho_{S_N}(\gamma_N))\int_{G^{N}}\dif\mu(\hat\rho_{S_1}(\gamma_1),...,\hat\rho_{S_N}(\gamma_N))\\
&\qquad\phi_\Gamma(\rho_{S_{1}}(\gamma_{1})^{-1}\hat\rho_{S_{1}}(\gamma_{1}),...,\rho_{S_N}(\gamma_N)^{-1}\hat\rho_{S_{N}}(\gamma_{N}))
\overline{\psi_\Gamma(\hat\rho_{S_{1}}(\gamma_{1}),...,\rho_{S}(\gamma_{N})^{-1}\hat\rho_{S_{N}}(\gamma_{N}))}\\
&\qquad\varphi_\Gamma(\rho_{S_1}^{-1}(\gamma_{1})\hat\rho_{S_{1}}(\gamma_{1}),...,\rho_{S_N}^{-1}(\gamma_{N})\hat\rho_{S_{N}}(\gamma_{N}))\\
&= \int_{G}\dif\mu(\rho_{S}(\gamma_N))\int_{G^{N}}\dif\mu(\rho_{S_1}(\gamma_1),...,\rho_{S_N}(\gamma_N))\int_{G^{N}}\dif\mu(\hat\rho_{S_1}(\gamma_1),...,\hat\rho_{S_N}(\gamma_N))\\
&\qquad\phi_\Gamma(\rho_{S_{1}}(\gamma_{1})^{-1}\hat\rho_{S_{1}}(\gamma_{1}),...,\rho_{S_N}(\gamma_N)^{-1}\rho_{S}(\gamma_{N})\hat\rho_{S_{N}}(\gamma_{N}))
\overline{\psi_\Gamma(\hat\rho_{S_{1}}(\gamma_{1}),...,\hat\rho_{S_{N}}(\gamma_{N}))}\\
&\qquad\varphi_\Gamma(\rho_{S_1}^{-1}(\gamma_{1})\hat\rho_{S_{1}}(\gamma_{1}),...,\rho_{S_N}^{-1}(\gamma_{N})\rho_{S}(\gamma_{N})\hat\rho_{S_{N}}(\gamma_{N}))
\eqs for $\phi_\Gamma,\psi_\Gamma,\varphi_\Gamma\in C_c(G^N)$.
Then 
\beqs &\phi_\Gamma\la \psi_\Gamma,\varphi_\Gamma\ra_{\CD(G^{N-1})}=\la \phi_\Gamma,\psi_\Gamma\ra_{\CD(G^N,G)}\varphi_\Gamma
\eqs
since the properties of the surfaces and paths force the identity
\beqs
&\int_{G^{N-1}}\dif\mu(\rho_{\bar S_1}(\gamma_1),...,\rho_{\bar S_{n-1}}(\gamma_{N-1}))\\
&\qquad\phi_\Gamma(\rho_{\bar S_1}(\gamma_1)\hat\rho_{S_1}(\gamma_1),...,\rho_{\bar S_{N-1}}(\gamma_{N-1})\hat\rho_{S_{N-1}}(\gamma_{N-1}),\hat\rho_{S_{N}}(\gamma_{N}))\\
&\qquad \varphi_\Gamma(\rho_{\bar S_{1}}(\gamma_{1})\hat\rho_{S_{1}}(\gamma_{1}),...,\rho_{\bar S_{N-1}}(\gamma_{N-1})\hat\rho_{S_{N-1}}(\gamma_{N-1}),\hat\rho_{S_{N}}(\gamma_{N}))\\
&= \int_{G}\dif\mu(\rho_{S}(\gamma_N))\int_{G^{N}}\dif\mu(\rho_{S_1}(\gamma_1),...,\rho_{S_N}(\gamma_N))\\
&\qquad\phi_\Gamma(\rho_{S_{1}}(\gamma_{1})^{-1}\hat\rho_{S_{1}}(\gamma_{1}),...,\rho_{S_N}(\gamma_N)^{-1}\rho_{S}(\gamma_{N})\hat\rho_{S_{N}}(\gamma_{N}))\\
&\qquad\varphi_\Gamma(\rho_{S_1}^{-1}(\gamma_{1})\hat\rho_{S_{1}}(\gamma_{1}),...,\rho_{S_N}^{-1}(\gamma_{N})\rho_{S}(\gamma_{N})\hat\rho_{S_{N}}(\gamma_{N}))\\
&= \int_{G}\dif\mu(\rho_{S}(\gamma_N))\int_{G^{N}}\dif\mu(\rho_{S_1}(\gamma_1),...,\rho_{S_N}(\gamma_N))\\
&\qquad\phi_\Gamma(\rho_{S_{1}}(\gamma_{1})^{-1}\hat\rho_{S_{1}}(\gamma_{1}),...,\rho_{S_{N-1}}(\gamma_{N-1})^{-1}\hat\rho_{S_{N-1}}(\gamma_{N-1}),
\hat\rho_{S_{N}}(\gamma_{N}))\\
&\qquad\varphi_\Gamma(\rho_{S_1}^{-1}(\gamma_{1})\hat\rho_{S_{1}}(\gamma_{1}),...,\rho_{S_{N-1}}^{-1}(\gamma_{N})\hat\rho_{S_{N-1}}(\gamma_{N-1}),
\hat\rho_{S_{N}}(\gamma_{N}))
\eqs

The case \ref{GGN} is derivable for the dense subalgebra $\Alg_\Gamma:=\CD(G^N,G^M)$ of $C^*(G^N,G^M)$ such that $\CD(G^N,G^M)$ is a pre-$C^*$-algebra. Similarly, let $\BAlg_\Gamma=\CD(G^{N-M})$ be a dense subalgebra of $C^*(G^{N-M})$ such that $\CD(G^{N-M})$ is a pre-$C^*$-algebra. Then $C_c(G^{N})$ is a full left Hilbert $\CD(G^{N},G^M)$-module or full right Hilbert $\CD(G^{N-M})$-module. Moreover, $C_c(G^{N})$ is a $\Alg_\Gamma$-$\BAlg_\Gamma$-imprimitivity bimodule.  
\end{proofs}

\subsection{The non-commutative holonomy and the heat-kernel holonomy $C^*$-algebra for graphs and a surface set}

Assume that, the configuration set $\Ab_\Gamma$ of generalised connections is naturally identified with $G^{\vert \Gamma\vert}$. Consider the convolution algebra $\CD(\Ab_\Gamma)$. Observe that, the convolution product is for example for a graph $\Gamma:=\{\gp\}$ defined by
\beqs (f_\Gamma\ast k_\Gamma)(\ho_\Gamma(\gp))
&=\int_{\Ab_\Gamma}\dif\mu_\Gamma(\go_\Gamma(\gp))f_\Gamma(\go_\Gamma(\gp))k_\Gamma(\go_\Gamma(\gp)^{-1}\ho_\Gamma(\gp))
\eqs
 
The \textbf{non-commutative holonomy $C^*$-algebra for a graph} is given by the object $C^*_{r}(\Ab_\Gamma)$ and reduces in the case of a compact Lie group $G$ to the following object. 

\begin{rem}
In the case of a compact group $G$ the holonomy algebra $C^*(\Ab_\Gamma)$ for a graph $\Gamma$ is equivalent to a $C^*$-algebra of matrices. 

The new algebra is given by the infinite matrix algebra \[M_\Gamma:=\bigotimes_{\gamma_i\in\Gamma}\bigoplus_{\pi_{s,\gamma_i}\in \hat G} M_{d_{s,\gamma_i}}(\CB),\] where $\hat G$ is the dual of $G$, $\pi_{s,\gamma_i}$ is a representation of $G$ associated to a path $\gamma_i$ and $d_{s,\gamma_i}$ is the dimension of the representation $\pi_{s,\gamma_i}$. Finally, the inductive limit of a increasing family of matrix algebras $M_{\Gamma_i}$ associated to graphs is considered.

For an inductive family $\{\Gamma_i\}$ of graphs there is an injective $^*$-homomorphism $\hat\beta_{\Gamma,\Gp}:C^*(\Ab_\Gamma)\rightarrow C^*(\Ab_\Gp)$ for all $\PD_\Gamma\leq\PD_\Gp$. This $^*$-homomorphism is for example given for a subgraph $\Gamma:=\{\gamma\}$ of $\Gp:=\{\gamma\circ\gp\}$ by
\beqs (\hat\beta_{\Gamma,\Gp}(f_{\Gamma}))(\ho_\Gamma(\gamma)):=f_{\Gp}(\ho_\Gp(\gamma\circ\gp))
\eqs  Consequently, there exists an inductive family of $C^*$-algebras $\{(C^*(\Ab_\Gamma),\hat\beta_{\Gamma,\Gp}):\PD_\Gamma\leq\PD_\Gp\}$.

An increasing family of finite matrix algebras are used to define UHF (uniformly hyperfinite) algebras, which are often used in quantum statistical mechanical systems. Furthermore, some of these algebras lead to KMS-states, which are fruitful states such that the dynamics of Loop Quantum Gravity is implemented. 
For further discussion refer to \cite{Kaminski0,Kaminski4,KaminskiPHD}. 
\end{rem}

Clearly, a similar algebra for the flux group and the flux transformation group $C^*$-algebra is constructed for generalised connections. In this case the holonomy transformation group $C^*$-algebra $C^*(\Ab_\Gamma,\Ab_\Gamma)$ is called the \textbf{heat-kernel holonomy $C^*$-algebra}.

\section{The holonomy-flux cross-product $C^*$-algebra for surface sets}\label{subsec holflux}

After the considerations of algebras generated by either quantum configuration or quantum momentum variables, algebras generated by both quantum variables simultaneously is studied in this section. 

There is no particular holonomy-flux cross-product $C^*$-algebra generated by all group-valued quantum flux operators and certain functions depending on holonomies along paths. But there exists a bunch of holonomy-flux cross-product $C^*$-algebra associated to a finite graph system and many different suitable surface sets. These algebras are developed in section \ref{subsubsec holfluxgraph}. The existence of this variety is the consequence of the following facts.

The group-valued quantum flux operators associated to certain surfaces and a graph $\Gamma$ form the flux group associated to a surface set $\breve S$ and a graph $\Gamma$. These elements are implemented as point-norm continuous and automorphic actions on the analytic holonomy $C^*$-algebra $C_0(\Ab_\Gamma)$ restricted to the finite orientation preserved graph system $\PD^{\op}_\Gamma$. For a short notation the analytic holonomy $C^*$-algebra $C_0(\Ab_\Gamma)$ is abreviated by the term \textit{analytic holonomy $C^*$-algebra associated to the graph $\Gamma$}.
It is assumed that, the configuration space is naturally identified with $G^{\vert\Gamma\vert}$. Then the elements of the flux group are represented as unitary operators on the Hilbert space $\HS_\Gamma$, which is given by $L^2(\Ab_\Gamma,\mu_{\Gamma})$. 

For each automorphic action of a certain flux group, which has been presented in \cite[Section 3.1]{Kaminski1}, \cite[Section 6.1]{KaminskiPHD}, a holonomy-flux cross-product $C^*$-algebra is constructed. Precisely, an automorphic action $\alpha$ of the flux group $\bar G_{\breve S,\Gamma}$ on $C_0(\Ab_\Gamma)$ defines a \textit{holonomy-flux cross-product $C^*$-algebra associated to the graph $\Gamma$ and the surface set $\breve S$}. This $C^*$-algebra is denoted by $C_0(\Ab_\Gamma)\rtimes_{\alpha}\bar G_{\breve S,\Gamma}$. 

There are many different possible actions of flux groups depending on a surface or a surface set. For example in \cite[Lemma 3.16]{Kaminski1}, \cite[Lemma 6.1.16]{KaminskiPHD} there is the point-norm continuous automorphic action $\alpha_{\overleftarrow{L}}^1$ of the flux group $\bar G_{S,\Gamma}$ associated to one suitable surface $S$ on the analytic holonomy $C^*$-algebra $C_0(\Ab_\Gamma)$. Moreover, the action $\alpha_{\overleftarrow{L}}^N$ is defined for a flux group associated to a set $\breve S$ of surfaces, which has the simple surface intersection property for a finite orientation preserved graph system associated to a graph $\Gamma$. In the following these two actions are often used.

Finally, there is an algebra, which unifies all cross-product algebras associated to a graph and different suitable sets of surfaces. This algebra is given by the \textit{multiplier algebra of the cross-product algebra $C_0(\Ab_\Gamma)\rtimes_{\alpha_{\overleftarrow{L}}^N}\bar G_{\breve S,\Gamma}$ of a certain surface set $\breve S$}. In theorem \ref{prop multilpiercrossprod} it is proven that this algebra contains the cross-product $C^*$-algebra associated to the graph $\Gamma$ and suitable surface sets and every Weyl element, which is obtained by the unitary representation of flux groups associated to the graph and suitable surface sets.

The inductive limit of  the inductive families of holonomy-flux cross-product $C^*$-algebras is studied in section \ref{subsec holfluxlimit}. There the inductive limit $C^*$-algebra is derived from the inductive limit of $C^*$-algebras restricted to finite orientation preserved graph systems. This algebra is called the \textit{holonomy-flux cross-product $C^*$-algebra} (of a special surface configuration $\breve S$).

\subsection{The holonomy-flux cross-product $C^*$-algebra associated to a graph and a surface set}\label{subsubsec holfluxgraph}

For the development of such a cross-product algebra generated by holonomies along paths and quantum fluxes the following Banach $^*$-algebra is fundamental.

\begin{defi}Let $\breve S$ be a set of surfaces with same surface intersection property for a finite orientation preserved graph system associated to a graph $\Gamma$ with $N$ independent edges. Furthermore, let $(\bar G_{\breve S,\Gamma},C_0(\Ab_\Gamma),\alpha)$ be a $C^*$-dynamical system defined by a point-norm continuous automorphic flux action $\alpha$ of $\bar G_{\breve S,\Gamma}$ on the analytic holonomy $C^*$-algebra $C_0(\Ab_\Gamma)$ associated to a graph $\Gamma$. 

The space $L^1(\bar G_{\breve S,\Gamma},C_0(\Ab_\Gamma),\alpha)$ consists of all measurable functions $F_\Gamma: \bar G_{\breve S,\Gamma}\rightarrow C_0(\Ab_\Gamma)$ for which
\beqs \|F_\Gamma\|_1:=\int_{\bar G_{\breve S,\Gamma}}d\mu_{\breve S,\Gamma}(\rho_{S_1}(\gamma_1),...,\rho_{S_N}(\gamma_N))\|F_\Gamma(\rho_{S_1}(\gamma_1),...,\rho_{S_N}(\gamma_N))\|_2<\infty\eqs yields whenever $\rho_{S}\in G_{\breve S,\Gamma}$.
\end{defi}

\begin{prop}\label{convolution}Let $\breve S$ be a set of surfaces with simple surface intersection property for a finite orientation preserved graph system associated to a graph $\Gamma$. Furthermore, let $(\bar G_{\breve S,\Gamma},C_0(\Ab_\Gamma),\alpha_{\overleftarrow{L}}^N)$ be a $C^*$-dynamical system where $\alpha_{\overleftarrow{L}}^N\in\Act(\bar G_{\breve S,\Gamma},C_0(\Ab_\Gamma))$. 

Then the multiplication operation between functions in $L^1(\bar G_{\breve S,\Gamma},C_0(\Ab_\Gamma),\alpha_{\overleftarrow{L}}^N)$:
\beqs &(F_\Gamma\ast \hat F_\Gamma) (\tilde\rho_{S_1}(\gamma_1),...,\tilde\rho_{S_N}(\gamma_N))\\
&= \int_{\bar G_{\breve S,\Gamma}}\dif\mu_{\breve S,\Gamma}(\rho_{\breve S,\Gamma}(\Gamma))\\ &\qquad\qquad F_\Gamma(\rho_{\breve S,\Gamma}(\Gamma))\left(\alpha_{\overleftarrow{L}}^N(\rho^N_{\breve S,\Gamma})(\hat F_\Gamma)\right)(\rho_{S_1}(\gamma_1)^{-1}\tilde\rho_{S_1}(\gamma_1),....,\rho_{S_N}(\gamma_N)^{-1}\tilde\rho_{S_N}(\gamma_N))
\eqs whenever $\rho_{\breve S,\Gamma}(\Gamma)=(\rho_{S_1}(\gamma_1),...,\rho_{S_N}(\gamma_N))=:\rho^N_{\breve S,\Gamma}$,  $\rho_{S_i},\tilde\rho_{S_i}\in G_{\breve S,\Gamma}$\\
the involution on $L^1(\bar G_{\breve S,\Gamma},C_0(\Ab_\Gamma),\alpha_{\overleftarrow{L}}^N)$:
\beqs
&F_\Gamma(\rho_{S_1}(\gamma_1),...,\rho_{S_N}(\gamma_N))^*=\left(\alpha_{\overleftarrow{L}}^N(\rho^N_{\breve S,\Gamma})(F_\Gamma^+)\right)(\rho_{S_1}(\gamma_1)^{-1},...,\rho_{S_N}(\gamma_N)^{-1})
\eqs where the involution $^+$ on $C_0(\Ab_\Gamma)$ is given by
\beqs &F_\Gamma^+(\rho_{S_1}(\gamma_1)^{-1},...,\rho_{S_N}(\gamma_N)^{-1}) := \overline{F_\Gamma(\rho_{S_1}(\gamma_1)^{-1},...,\rho_{S_N}(\gamma_N)^{-1})}\eqs
turn $L^1(\bar G_{\breve S,\Gamma},C_0(\Ab_\Gamma),\alpha_{\overleftarrow{L}}^N)$ into a Banach $^*$-algebra.
\end{prop}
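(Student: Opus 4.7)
\begin{proofo}[Proof proposal]
The plan is to verify the standard Banach $^*$-algebra axioms, modelled on the classical construction of $L^1(G,\Alg,\alpha)$ for a $C^*$-dynamical system (cf.\ Williams \cite{Williams07}), but paying attention to the product structure $\bar G_{\breve S,\Gamma}\simeq G^N$ and to the specific form of $\alpha_{\overleftarrow{L}}^N$ recalled in Section \ref{subsec fluxgroupalg}. First I would fix a measurable cross section of the $C_0(\Ab_\Gamma)$-valued integrand and show that $F_\Gamma\ast \hat F_\Gamma$ is well-defined as a $C_0(\Ab_\Gamma)$-valued Bochner integral: pointwise in $\rho_{\breve S,\Gamma}(\Gamma)$, the map $\rho_{\breve S,\Gamma}\mapsto F_\Gamma(\rho_{\breve S,\Gamma})\cdot\alpha_{\overleftarrow{L}}^N(\rho_{\breve S,\Gamma}^N)(\hat F_\Gamma)(\rho_{\breve S,\Gamma}^{-1}\tilde\rho_{\breve S,\Gamma})$ is measurable because $\alpha_{\overleftarrow{L}}^N$ is point-norm continuous, and integrability follows from $\|F_\Gamma\|_1<\infty$ together with the fact that $\alpha_{\overleftarrow{L}}^N$ is isometric on $C_0(\Ab_\Gamma)$.

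Next I would check the norm estimate $\|F_\Gamma\ast \hat F_\Gamma\|_1\leq \|F_\Gamma\|_1\|\hat F_\Gamma\|_1$. The computation is
\beqs
\|F_\Gamma\ast \hat F_\Gamma\|_1 &\leq \int\!\!\int \dif\mu_{\breve S,\Gamma}(\tilde\rho_{\breve S,\Gamma})\dif\mu_{\breve S,\Gamma}(\rho_{\breve S,\Gamma})\,\|F_\Gamma(\rho_{\breve S,\Gamma})\|_2\,\|\alpha_{\overleftarrow{L}}^N(\rho_{\breve S,\Gamma}^N)(\hat F_\Gamma(\rho_{\breve S,\Gamma}^{-1}\tilde\rho_{\breve S,\Gamma}))\|_2 \\
&= \|F_\Gamma\|_1\|\hat F_\Gamma\|_1,
\eqs
using Fubini, left-invariance of the Haar measure on the unimodular group $\bar G_{\breve S,\Gamma}$, and the fact that $\alpha_{\overleftarrow{L}}^N$ acts by $^*$-automorphisms. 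The invariance statement requires unimodularity; since $G$ is assumed unimodular and locally compact, $\bar G_{\breve S,\Gamma}\simeq G^N$ inherits this property, so the change of variable $\rho_{\breve S,\Gamma}\mapsto \tilde\rho_{\breve S,\Gamma}\rho_{\breve S,\Gamma}$ preserves $\mu_{\breve S,\Gamma}$.

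Associativity $(F_1\ast F_2)\ast F_3 = F_1\ast (F_2\ast F_3)$ reduces to a Fubini calculation combined with the cocycle identity $\alpha_{\overleftarrow{L}}^N(\rho_{\breve S,\Gamma}^{(1)}\rho_{\breve S,\Gamma}^{(2)})=\alpha_{\overleftarrow{L}}^N(\rho_{\breve S,\Gamma}^{(1)})\circ\alpha_{\overleftarrow{L}}^N(\rho_{\breve S,\Gamma}^{(2)})$, which is the homomorphism property of the action. For the involution I would first verify $(F_\Gamma^*)^*=F_\Gamma$: applying the formula twice and using that $\alpha_{\overleftarrow{L}}^N(\rho^{-1})\alpha_{\overleftarrow{L}}^N(\rho)=\id$ together with $(f^+)^+=f$ on $C_0(\Ab_\Gamma)$ yields the identity. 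The main obstacle is the anti-multiplicativity identity $(F_\Gamma\ast\hat F_\Gamma)^*=\hat F_\Gamma^*\ast F_\Gamma^*$; expanding both sides one needs the shift of argument inside $\alpha_{\overleftarrow{L}}^N$ to absorb the inverse of the integration variable. The key algebraic input is
\beqs
\alpha_{\overleftarrow{L}}^N(\rho_{\breve S,\Gamma}^N)\bigl(\alpha_{\overleftarrow{L}}^N(\hat\rho_{\breve S,\Gamma}^N)(g)\bigr)
= \alpha_{\overleftarrow{L}}^N(\rho_{\breve S,\Gamma}^N\hat\rho_{\breve S,\Gamma}^N)(g),
\eqs
combined with a substitution $\hat\rho_{\breve S,\Gamma}\mapsto \rho_{\breve S,\Gamma}^{-1}\tilde\rho_{\breve S,\Gamma}$ in the integral and unimodularity. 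Finally, $\|F_\Gamma^*\|_1=\|F_\Gamma\|_1$ is immediate from isometricity of $\alpha_{\overleftarrow{L}}^N$ and of the involution $^+$ on $C_0(\Ab_\Gamma)$, together with the inversion invariance of $\mu_{\breve S,\Gamma}$ available under the unimodularity hypothesis.

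Completeness of $L^1(\bar G_{\breve S,\Gamma},C_0(\Ab_\Gamma),\alpha_{\overleftarrow{L}}^N)$ in the $\|\cdot\|_1$ norm follows from the general theory of Bochner-integrable Banach-space-valued functions on a locally compact group, applied to the Banach space $C_0(\Ab_\Gamma)$. Since all the remaining axioms are routine once the unimodularity and point-norm continuity have been invoked, the only conceptual step is the anti-multiplicativity of the involution, and this is handled by the bookkeeping indicated above. I would conclude by remarking that the same computation carries over verbatim for other surface sets having the same or simple intersection property with $\Gamma$, which justifies the use of this Banach $^*$-algebra in the subsequent construction of the holonomy-flux cross-product $C^*$-algebra.
\end{proofo}
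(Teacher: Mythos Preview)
Your argument is correct and follows the standard verification of the Banach $^*$-algebra axioms for $L^1(G,\Alg,\alpha)$ as in Williams~\cite{Williams07}. The paper, however, does not actually supply a proof of this proposition: it is stated as a fact and immediately followed by a discussion of special cases (the surface $S$ with the same surface intersection property, the actions $\alpha_{\overleftarrow{R}}^M$, $\alpha_{\overleftarrow{L}}^{\overleftarrow{R},M}$, etc.), treating the result as a direct instance of the general crossed-product machinery cited in the introduction. So there is nothing to compare at the level of proof strategy; your write-up simply fills in what the paper takes for granted from the reference.
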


In particular let $S$ be a surface having the same surface intersection property for a finite orientation preserved graph system associated to a graph $\Gamma$. Then the action $\alpha_{\overleftarrow{L}}^1$ is defined in \cite[Lemma 3.11]{Kaminski1} or \cite[Lemma 6.1.11]{KaminskiPHD} for a graph $\Gamma$ and the convolution product reads
\beq &(F_\Gamma\ast \hat F_\Gamma) (\tilde\rho_{S}(\gamma_i))\\
&= \int_{G}\dif\mu(\rho_{S}(\gamma_i)) F_\Gamma(\rho_{S}(\gamma_i))\left(\alpha_{\overleftarrow{L}}^1(\rho^1_{S,\Gamma}) (\hat F_\Gamma)\right)(\rho_{S}^{-1}(\gamma_i)\tilde\rho_{S}(\gamma_i))\\
&= \int_{G}\dif\mu(\rho_{S}(\gamma_i)) 
F_\Gamma(\rho_{S}(\gamma_i);\ho_\Gamma(\gamma_1),...,\ho_\Gamma(\gamma_N))\\
&\qquad \quad \hat F_\Gamma(\rho_{S}^{-1}(\gamma_i)\tilde\rho_{S}(\gamma_i);\rho_{S}(\gamma_i)\ho_\Gamma(\gamma_1),....,\rho_{S}(\gamma_i)\ho_\Gamma(\gamma_N))
\eq for any $i=1,..,N$. Since it is true that, $\rho_{S}(\gamma_i)=\rho_{S}(\gamma_j)=g_S\in G$ yields for all $i,j=1,...,N$. Clearly, this convolution algebra equipped with an appropriate involution and norm forms the $^*$-Banach algebra $L^1(\bar G_{S,\Gamma},C_0(\Ab_\Gamma),\alpha_{\overleftarrow{L}}^1)$. 
Notice that, the $^*$-Banach algebras  $L^1(\bar G_{S,\Gamma},C_0(\Ab_\Gamma),\alpha_{\overleftarrow{L}}^M)$ for $1\leq M\leq N$ exists. 

Indeed, there are a lot of different Banach $^*$-algebras depending on the choice of the set of surfaces $\breve S$. Let $\breve S$ has the same surface intersection property for a graph $\Gamma$ such that each path $\gamma_i$, that intersect the surface $S_i$, lie above and ingoing w.r.t. the surface orientation of $S_i$. There are no other intersection points of each path $\gamma_i$ with any other surface $S_j$ where $i\neq j$. Then for the map $F_\Gamma: \bar G_{\breve S,\Gamma}\rightarrow C_0(\Ab_\Gamma)$ write for the image of this function  $F_\Gamma(\rho_{S_1}(\gamma_1),...,\rho_{S_N}(\gamma_N))= F_\Gamma(\rho_{S_1}(\gamma_1),...,\rho_{S_N}(\gamma_N);\ho_\Gamma(\gamma_1),...,\ho_\Gamma(\gamma_N))$ and derive
\beqs &(F_\Gamma\ast \hat F_\Gamma) (\tilde\rho_{S_1}(\gamma_1),...,\tilde\rho_{S_N}(\gamma_N))\\
&= \int_{\bar G_{\breve S,\Gamma}}\dif\mu_{\breve S,\Gamma}(\rho_{\breve S,\Gamma}(\Gamma)) \\&\qquad \quad F_\Gamma(\rho_{\breve S,\Gamma}(\Gamma))\left(\alpha^{\overleftarrow{R}}_N(\rho^N_{\breve S,\Gamma})(\hat F_\Gamma)\right)\big(\rho_{S_1}(\gamma_1)^{-1}\tilde\rho_{S_1}(\gamma_1),....,\rho_{S_N}(\gamma_N)^{-1}\tilde\rho_{S_N}(\gamma_N)\big)\\
\eqs
Hence for a redefined convolution product and involution the $^*$-Banach algebras $L^1(\bar G_{S,\Gamma},C_0(\Ab_\Gamma),\alpha^{\overleftarrow{R}}_M)$ for $1\leq M\leq N$ are studied. Furthermore, it is also possible to construct the $^*$-Banach algebras $L^1(\bar G_{S,\Gamma},C_0(\Ab_\Gamma),\alpha_{\overleftarrow{L}}^{\overleftarrow{R},M})$ for $1\leq M\leq N$ and other algebras of that form for a modified convolution product, which is given in general by
\beqs &(F_\Gamma\ast \hat F_\Gamma) (\tilde\rho_{\breve S,\Gamma}(\Gamma))\\
&= \int_{\bar G_{\breve S,\Gamma}}\dif\mu_{\breve S,\Gamma}(\rho_{\breve S,\Gamma}(\Gamma)) F_\Gamma(\rho_{\breve S,\Gamma}(\Gamma))\left(\alpha(\rho_{\breve S,\Gamma}(\Gamma))\hat F_\Gamma\right)(L(\rho_{\breve S,\Gamma}(\Gamma)^{-1})(\tilde\rho_{\breve S,\Gamma}(\Gamma))
\eqs whenever $\rho_{\breve S,\Gamma}(\Gamma)=(\rho_{S_1}(\gamma_1),...,\rho_{S_N}(\gamma_N))$,  $\rho_{\breve S,\Gamma},\tilde\rho_{\breve S,\Gamma}\in G_{\breve S,\Gamma}$ and a modified involution
\beqs
&F^*_\Gamma(\rho_{\breve S,\Gamma}(\Gamma))=\alpha(\rho_{\breve S,\Gamma}(\Gamma))\left(F_\Gamma^+(\rho_{\breve S,\Gamma}(\Gamma)^{-1})\right)
\eqs is used whenever $\alpha\in\Act(\bar G_{\breve S,\Gamma},C_0(\Ab_\Gamma))$.
Hence, for all well-defined $C^*$-dynamical system $(\bar G_{\breve S,\Gamma},\Alg_\Gamma,\alpha)$ there exists a general Banach $^*$-algebra $L^1(\bar G_{\breve S,\Gamma},\Alg_\Gamma,\alpha)$.

\begin{theo}\label{representation}Let $\breve S$ be a set of surfaces with simple surface intersection property for a finite orientation preserved graph system associated to a graph $\Gamma$. Furthermore, let $(\bar G_{\breve S,\Gamma},C_0(\Ab_\Gamma),\alpha_{\overleftarrow{L}}^N)$ be a $C^*$-dynamical system where $\alpha_{\overleftarrow{L}}^N\in\Act(\bar G_{\breve S,\Gamma},C_0(\Ab_\Gamma))$.

There is a bijective correspondence between non-degenerate $L^1$-norm decreasing $^*$- representations $\pi$ of the Banach $^*$-algebra $L^1(\bar G_{S,\Gamma},C_0(\Ab_\Gamma),\alpha_{\overleftarrow{L}}^N)$ and covariant representations $(\Phi_M,U_{\overleftarrow{L}}^N)$ of the $C^*$-dynamical system\\ $(\bar G_{\breve S,\Gamma},C_0(\Ab_\Gamma),\alpha_{\overleftarrow{L}}^N)$ in $\LD(\HS_\Gamma)$. 

This correspondence is given in one direction by the fact that, the representation $\pi_{I,\overleftarrow{L}}^N$ of $L^1(\bar G_{S,\Gamma},C_0(\Ab_\Gamma),\alpha_{\overleftarrow{L}}^N)$ is defined by a covariant pair $(\Phi_M,U_{\overleftarrow{L}}^N)$ via
\beqs \pi_{I,\overleftarrow{L}}^N(F_\Gamma)\psi_\Gamma
:=\int_{\bar G_{\breve S,\Gamma}}\dif\mu_{\breve S,\Gamma}(\rho_{\breve S,\Gamma}^N)\Phi_M(F_\Gamma(\rho_{\breve S,\Gamma}^N))U_{\overleftarrow{L}}^N(\rho_{\breve S,\Gamma}^N)\psi_\Gamma
\eqs whenever $\rho_{\breve S,\Gamma}^N\in \bar G_{\breve S,\Gamma}$, $F_\Gamma\in L^1(\bar G_{S,\Gamma},C_0(\Ab_\Gamma),\alpha_{\overleftarrow{L}}^N)$ and $\psi_\Gamma\in \HS_\Gamma$.

The other direction is given by the definition of the covariant pair $(\Phi_M,U_{\overleftarrow{L}}^N)$ through the maps\\ 
$F_\Gamma: \rho_{\breve S,\Gamma}^N\mapsto F_\Gamma (\rho_{\breve S,\Gamma}^N)$ and\\ 
$\alpha_{\overleftarrow{L}}^N(\rho_{\breve S,\Gamma}^N)(F_\Gamma): \tilde\rho_{\breve S,\Gamma}^N\mapsto \left(\alpha_{\overleftarrow{L}}^N(\tilde\rho_{\breve S}^N)(F_\Gamma)\right)(L((\tilde\rho_{\breve S,\Gamma}^N)^{-1})(\rho_{\breve S,\Gamma}^N))$ such that
\beqs U_{\overleftarrow{L}}^N(\rho_{\breve S,\Gamma}^N)\pi_{I,\overleftarrow{L}}^N(F_\Gamma)\Omega 
&:=\pi_{I,\overleftarrow{L}}^N\left(\alpha_{\overleftarrow{L}}^N(\rho_{\breve S,\Gamma}^N)(F_\Gamma)\right)\Omega\\
\Phi_M(f_\Gamma )\pi_{I,\overleftarrow{L}}^N(F_\Gamma)\Omega 
&:= \pi_{I,\overleftarrow{L}}^N(f_\Gamma F_\Gamma)\Omega
\eqs where $\Omega$ denotes a cyclic vector for $\pi_{I,\overleftarrow{L}}^N(\CD(\bar G_{\breve S,\Gamma},C_0(\Ab_\Gamma)))$,
$f_\Gamma \in C_0(\Ab_\Gamma)$, $\rho_{\breve S,\Gamma}^N,\tilde\rho_{\breve S,\Gamma}^N\in\bar G_{\breve S,\Gamma}$ and\\ $F_\Gamma\in L^1(\bar G_{S,\Gamma},C_0(\Ab_\Gamma),\alpha_{\overleftarrow{L}}^N)$. 
This bijection preserves unitary equivalence, direct sums and irreducibility.

The \textbf{reduced holonomy-flux group $C^*$-algebra $C^*_r(\bar G_{\breve S,\Gamma},C_0(\Ab_\Gamma))$ associated to a graph $\Gamma$ and a set $\breve S$ of surfaces} is defined as the norm-closure of $L^1(\bar G_{S,\Gamma},C_0(\Ab_\Gamma),\alpha_{\overleftarrow{L}}^N)$ with respect to the norm $\|F_\Gamma\|:=\|\pi_{I,\overleftarrow{L}}^N(F_\Gamma)\|_2$.
\end{theo}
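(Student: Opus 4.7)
\begin{proofs}[Proof proposal]
The plan is to adapt the standard crossed-product representation theorem (cf.\ Williams, Thm.\ 2.40, and Pedersen 7.6) to the present setting, where the group is $\bar G_{\breve S,\Gamma}$, the $C^*$-algebra is $C_0(\Ab_\Gamma)$ and the action is the point-norm continuous automorphic action $\alpha_{\overleftarrow{L}}^N$ established in \cite{Kaminski1}. There are two directions to verify, followed by a check that the correspondence is functorial.

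\textbf{Direction 1: Covariant pair $\Rightarrow$ representation.} Starting from a covariant pair $(\Phi_M,U_{\overleftarrow{L}}^N)$ in $\LD(\HS_\Gamma)$, I define $\pi_{I,\overleftarrow{L}}^N(F_\Gamma)\psi_\Gamma$ by the stated Bochner integral, which converges in the strong topology because $\Phi_M$ is a bounded $^*$-homomorphism, $U_{\overleftarrow{L}}^N$ is unitary, and $F_\Gamma\in L^1$. The estimate
\beqs
\|\pi_{I,\overleftarrow{L}}^N(F_\Gamma)\psi_\Gamma\|\leq \int_{\bar G_{\breve S,\Gamma}}\!\!\dif\mu_{\breve S,\Gamma}(\rho_{\breve S,\Gamma}^N)\,\|F_\Gamma(\rho_{\breve S,\Gamma}^N)\|_2\,\|\psi_\Gamma\|=\|F_\Gamma\|_1\|\psi_\Gamma\|
\eqs
shows that $\pi_{I,\overleftarrow{L}}^N$ is $L^1$-norm decreasing. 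Multiplicativity $\pi_{I,\overleftarrow{L}}^N(F_\Gamma\ast \hat F_\Gamma)=\pi_{I,\overleftarrow{L}}^N(F_\Gamma)\pi_{I,\overleftarrow{L}}^N(\hat F_\Gamma)$ follows by Fubini together with the covariance relation $U_{\overleftarrow{L}}^N(\rho_{\breve S,\Gamma}^N)\Phi_M(f_\Gamma)U_{\overleftarrow{L}}^N(\rho_{\breve S,\Gamma}^N)^*=\Phi_M(\alpha_{\overleftarrow{L}}^N(\rho_{\breve S,\Gamma}^N)f_\Gamma)$ and invariance of the Haar measure $\mu_{\breve S,\Gamma}$; the $^*$-compatibility $\pi_{I,\overleftarrow{L}}^N(F_\Gamma^*)=\pi_{I,\overleftarrow{L}}^N(F_\Gamma)^*$ is a direct unimodular computation using the involution defined in proposition \ref{convolution}. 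Non-degeneracy is obtained by approximating with $F_\Gamma(\rho_{\breve S,\Gamma}^N)=f_\Gamma\cdot h(\rho_{\breve S,\Gamma}^N)$ where $f_\Gamma$ runs through an approximate unit of $C_0(\Ab_\Gamma)$ and $h$ through an approximate unit in $L^1(\bar G_{\breve S,\Gamma})$.

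\textbf{Direction 2: Representation $\Rightarrow$ covariant pair.} Given a non-degenerate $L^1$-norm decreasing $^*$-representation $\pi$ of $L^1(\bar G_{\breve S,\Gamma},C_0(\Ab_\Gamma),\alpha_{\overleftarrow{L}}^N)$ on $\HS_\Gamma$, I extend $\pi$ to the multiplier algebra $M(\CD(\bar G_{\breve S,\Gamma},C_0(\Ab_\Gamma)))$ via a double-centralizer argument; this is the standard trick and relies on the existence of a bounded approximate unit in $L^1(\bar G_{\breve S,\Gamma},C_0(\Ab_\Gamma),\alpha_{\overleftarrow{L}}^N)$ built as a product $(u_i\otimes v_j)$ of approximate units of $L^1(\bar G_{\breve S,\Gamma})$ and $C_0(\Ab_\Gamma)$. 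The unitary $U_{\overleftarrow{L}}^N(\rho_{\breve S,\Gamma}^N)$ is defined on the dense subspace $\pi_{I,\overleftarrow{L}}^N(L^1)\HS_\Gamma$ by the formula
\beqs
U_{\overleftarrow{L}}^N(\rho_{\breve S,\Gamma}^N)\pi(F_\Gamma)\Omega:=\pi(\alpha_{\overleftarrow{L}}^N(\rho_{\breve S,\Gamma}^N)F_\Gamma)\Omega
\eqs
where $\alpha_{\overleftarrow{L}}^N(\rho_{\breve S,\Gamma}^N)F_\Gamma$ denotes the translated function $\tilde\rho^N\mapsto \alpha_{\overleftarrow{L}}^N(\rho_{\breve S,\Gamma}^N)(F_\Gamma((\rho_{\breve S,\Gamma}^N)^{-1}\tilde\rho^N))$; the analogous formula for $\Phi_M(f_\Gamma)$ multiplying pointwise on the $C_0(\Ab_\Gamma)$-slot yields a $^*$-homomorphism $\Phi_M\colon C_0(\Ab_\Gamma)\to\LD(\HS_\Gamma)$. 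The point-norm continuity of $\alpha_{\overleftarrow{L}}^N$ together with strong continuity of translation in $L^1$ gives strong continuity of $U_{\overleftarrow{L}}^N$. Covariance is then straightforward.

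\textbf{Main obstacle and closing steps.} The main technical difficulty is verifying that $U_{\overleftarrow{L}}^N$ is well-defined, isometric and unitary on the dense subspace: one has to show that $\pi(F_\Gamma)\Omega=0$ implies $\pi(\alpha_{\overleftarrow{L}}^N(\rho_{\breve S,\Gamma}^N)F_\Gamma)\Omega=0$, which requires the $L^1$-isometry of the translation $F_\Gamma\mapsto \alpha_{\overleftarrow{L}}^N(\rho_{\breve S,\Gamma}^N)F_\Gamma$ together with the $L^1$-norm decreasing hypothesis on $\pi$; the surface-set hypothesis (simple surface intersection property) enters here because it guarantees the identification $\bar G_{\breve S,\Gamma}\cong G^N$ used in defining $\mu_{\breve S,\Gamma}$ and verifying unimodular invariance. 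Once bijectivity at the level of objects is established, preservation of unitary equivalence and direct sums is immediate from the fact that intertwiners pass between $\pi$ and the pair $(\Phi_M,U_{\overleftarrow{L}}^N)$ via the defining formulas; irreducibility follows because any bounded operator commuting with $\pi_{I,\overleftarrow{L}}^N(L^1)$ automatically commutes with both $\Phi_M(C_0(\Ab_\Gamma))$ and $U_{\overleftarrow{L}}^N(\bar G_{\breve S,\Gamma})$, by the same approximate unit arguments. Finally, the reduced algebra $C^*_r(\bar G_{\breve S,\Gamma},C_0(\Ab_\Gamma))$ is obtained by singling out the covariant pair $(\Phi_M,U_{\overleftarrow{L}}^N)$ from Theorem \ref{Generalised Stone- von Neumann theorem} and completing in the operator norm $\|\pi_{I,\overleftarrow{L}}^N(\cdot)\|_2$; well-definedness of this norm follows because $\pi_{I,\overleftarrow{L}}^N$ is already known to be a faithful $^*$-representation of the dense subalgebra $\CD(\bar G_{\breve S,\Gamma},C_0(\Ab_\Gamma))$.
\end{proofs}
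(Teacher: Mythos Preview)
The paper does not supply a proof of this theorem: it is stated as a direct specialisation of the standard crossed-product representation theorem (Williams \cite{Williams07}, Pedersen \cite{Pedersen}) and followed immediately by the remark ``With no doubt there are a big bunch of reduced holonomy-flux group $C^*$-algebras\dots'', with no intervening proof environment. So your proposal actually provides more detail than the paper does, and the route you take---adapting the general $L^1(G,A,\alpha)$ argument of Williams/Pedersen---is exactly the intended one.

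One technical point in your Direction~2: the justification you give for well-definedness of $U_{\overleftarrow{L}}^N$ on vectors $\pi(F_\Gamma)\Omega$ (``$L^1$-isometry of translation together with the $L^1$-norm decreasing hypothesis on $\pi$'') is not sufficient as stated. Knowing that $F_\Gamma\mapsto\alpha_{\overleftarrow{L}}^N(\rho)F_\Gamma$ is an $L^1$-isometry and that $\pi$ is norm-decreasing does \emph{not} give $\pi(F_\Gamma)\Omega=0\Rightarrow\pi(\alpha_{\overleftarrow{L}}^N(\rho)F_\Gamma)\Omega=0$. The clean way is the one you already outlined just above: extend $\pi$ to the multiplier algebra using the bounded approximate unit, observe that the translated function is $u_\rho\cdot F_\Gamma$ for a unitary multiplier $u_\rho$, and set $U_{\overleftarrow{L}}^N(\rho):=\bar\pi(u_\rho)$. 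This automatically handles well-definedness, unitarity and strong continuity in one stroke, and is how the references you cite actually proceed. With that adjustment your argument is complete and matches the literature the paper defers to.
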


With no doubt there are a big bunch of reduced holonomy-flux group $C^*$-algebra $C^*_r(\bar G_{\breve S,\Gamma},C_0(\Ab_\Gamma))$ for different graph systems and different sets of surfaces.

\begin{defi}\label{defi int hol-flux-repr}Let $\breve S$ be a set of surfaces with simple surface intersection property for a finite orientation preserved graph system associated to a graph $\Gamma$.

Then the \textbf{Weyl-integrated holonomy-flux representation} w.r.t. a finite orientation preserved graph system associated to a graph $\Gamma$ and a set $\breve S$ of surfaces is given by
\beqs
&\pi^{I,\Gamma}_{E(\breve S)}( F_\Gamma) \psi_\Gamma
= \int_{\bar G_{\breve S,\Gamma}} \dif\mu_{\breve S,\Gamma}(\rho_{\breve S,\Gamma}(\Gamma)) \Phi_M\left(F_\Gamma(\rho_{\breve S,\Gamma}(\Gamma))\right)U(\rho_{\breve S,\Gamma}(\Gamma))\psi_\Gamma
\eqs
for $F_\Gamma\in C^*(\bar G_{\breve S,\Gamma}, C_0(\Ab_\Gamma))$, $\rho_{\breve S,\Gamma}(\Gamma)\in \bar G_{\breve S,\Gamma}$, $U\in\Rep(\bar G_{\breve S,\Gamma},\KD(L^2(\Ab_\Gamma,\mu_\Gamma)))$ and $\psi_\Gamma\in L^2(\Ab_\Gamma,\mu_{\Gamma})$. The Weyl-integrated holonomy-flux representation $\pi^{I,\Gamma}_{E(\breve S)}$ is a $^*$-representation of the $C^*$-algebra $C^*(\bar G_{\breve S,\Gamma}, C_0(\Ab_\Gamma))$ with a norm inherited from the representations $\pi^{I,\Gamma}_{E(\breve S)}$ on $L^2(\Ab_\Gamma,\mu_{\Gamma})$. The representation $\pi^{I,\Gamma}_{E(\breve S)}$ is also denoted by $\Phi_M\rtimes U$.  
\end{defi}

\begin{prop}Let $\breve S$ be a set of surfaces with simple surface intersection property for a finite orientation preserved graph system associated to a graph $\Gamma$. Furthermore, let $(\bar G_{\breve S,\Gamma},C_0(\Ab_\Gamma),\alpha_{\overleftarrow{L}}^N)$ is a $C^*$-dynamical system.

Define for each $F_\Gamma\in \CD (\bar G_{\breve S,\Gamma},C_0(\Ab_\Gamma))$ the norm
\beqs &\|F_\Gamma\|_u:=\sup\Big\{\|(\Phi_M\rtimes U_{\overleftarrow{L}}^N)(F_\Gamma)\|\Big\}\eqs 
where the supremum is taken over all covariant Hilbert space representations $(\Phi_M,U_{\overleftarrow{L}}^N)$ of the $C^*$-dynamical system $(\bar G_{\breve S,\Gamma},C_0(\Ab_\Gamma),\alpha^N_{\overleftarrow{L}})$.

Then $\|.\|_u$ is a norm on $\CD(\bar G_{\breve S,\Gamma},C_0(\Ab_\Gamma))$, which is called the universal norm. The universal norm is dominated by the $\|.\|_1$-norm, and the completition of $\CD(\bar G_{\breve S,\Gamma},C_0(\Ab_\Gamma))$ with respect to $\|.\|_u$ is a $C^*$-algebra called the \textbf{holonomy-flux cross-product $C^*$-algebra} of $C_0(\Ab_\Gamma)$ by $\bar G_{\breve S,\Gamma}$ for a finite orientation preserved graph system associated to a graph $\Gamma$ and a set $\breve S$ of surfaces. Shortly this algebra is denoted by $C_0(\Ab_\Gamma)\rtimes_{\alpha^N_{\overleftarrow{L}}}\bar G_{\breve S,\Gamma}$. 
\end{prop}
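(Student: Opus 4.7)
The plan is to follow the standard argument for cross-product $C^*$-algebras (as in Williams or Pedersen) adapted to this holonomy-flux setting, using the bijection between covariant representations and $L^1$-representations established in Theorem \ref{representation}.

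First I would show that the supremum defining $\|F_\Gamma\|_u$ is finite and in fact dominated by $\|F_\Gamma\|_1$. Fixing any covariant Hilbert space representation $(\Phi_M,U_{\overleftarrow{L}}^N)$ of the $C^*$-dynamical system $(\bar G_{\breve S,\Gamma},C_0(\Ab_\Gamma),\alpha_{\overleftarrow{L}}^N)$, the integrated form is a Bochner integral and the estimate
\beqs
\|(\Phi_M\rtimes U_{\overleftarrow{L}}^N)(F_\Gamma)\|
\leq \int_{\bar G_{\breve S,\Gamma}}\|\Phi_M(F_\Gamma(\rho_{\breve S,\Gamma}^N))\|\,\|U_{\overleftarrow{L}}^N(\rho_{\breve S,\Gamma}^N)\|\,\dif\mu_{\breve S,\Gamma}(\rho_{\breve S,\Gamma}^N)
\leq \|F_\Gamma\|_1
\eqs
holds because $\Phi_M$ is norm-decreasing and $U_{\overleftarrow{L}}^N$ is unitary. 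Hence $\|F_\Gamma\|_u\leq\|F_\Gamma\|_1<\infty$ and the supremum is attained over a non-empty set.

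Next I would verify the norm axioms. Triangle inequality, positive homogeneity, submultiplicativity and $*$-invariance follow immediately by taking suprema, since each $(\Phi_M\rtimes U_{\overleftarrow{L}}^N)$ is a $*$-homomorphism on $\CD(\bar G_{\breve S,\Gamma},C_0(\Ab_\Gamma))$ by Proposition \ref{convolution} combined with Theorem \ref{representation}. The $C^*$-identity $\|F_\Gamma^*\ast F_\Gamma\|_u=\|F_\Gamma\|_u^2$ also transfers from each representation, because for every covariant pair one has $\|(\Phi_M\rtimes U_{\overleftarrow{L}}^N)(F_\Gamma^*\ast F_\Gamma)\|=\|(\Phi_M\rtimes U_{\overleftarrow{L}}^N)(F_\Gamma)\|^2$; taking the supremum on both sides (and using the fact that the representation attaining near-optimal value can be chosen simultaneously) yields the identity.

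The step I expect to be the main obstacle is establishing definiteness, namely that $\|F_\Gamma\|_u=0$ forces $F_\Gamma=0$. For this I would invoke the specific representation $\pi_{I,\overleftarrow{L}}^N=\Phi_M\rtimes U_{\overleftarrow{L}}^N$ with $\Phi_M$ the multiplication representation on $\HS_\Gamma=L^2(\Ab_\Gamma,\mu_\Gamma)$ and $U_{\overleftarrow{L}}^N$ the regular unitary representation described in Theorem \ref{representation}. This pair constitutes a \emph{bona fide} covariant representation of $(\bar G_{\breve S,\Gamma},C_0(\Ab_\Gamma),\alpha_{\overleftarrow{L}}^N)$, and by the argument of Theorem \ref{Generalised Stone- von Neumann theorem} (Steps A and C) it is a faithful $^*$-representation of the convolution $^*$-algebra $\CD(\bar G_{\breve S,\Gamma},C_0(\Ab_\Gamma))$. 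Hence $\|F_\Gamma\|_u\geq\|\pi_{I,\overleftarrow{L}}^N(F_\Gamma)\|>0$ whenever $F_\Gamma\neq 0$.

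Finally, the completion of $\CD(\bar G_{\breve S,\Gamma},C_0(\Ab_\Gamma))$ with respect to the $C^*$-norm $\|.\|_u$ is by construction a $C^*$-algebra, which I would denote $C_0(\Ab_\Gamma)\rtimes_{\alpha^N_{\overleftarrow{L}}}\bar G_{\breve S,\Gamma}$; the bijection of Theorem \ref{representation} extends all non-degenerate $L^1$-norm decreasing $^*$-representations continuously to this completion, so the universal property characterising the cross-product is automatic.
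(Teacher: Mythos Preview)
Your argument is correct and follows the standard route for constructing a crossed product $C^*$-algebra from a $C^*$-dynamical system. The paper itself does not supply a proof of this proposition: it is stated as a direct specialisation of the general construction in Williams \cite{Williams07} and Pedersen \cite{Pedersen}, and the text moves on immediately to remarks about the $L^2$-norm and the various surface configurations. So there is nothing to compare your proof against except the general references, and your write-up is precisely the argument one finds there.

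One small comment on your definiteness step: you invoke the faithfulness argument from Step~A of Theorem~\ref{Generalised Stone- von Neumann theorem}, but note that theorem is formulated for $\CD(\bar G_{\breve S,\Gamma},C_0(\bar G_{\breve S,\Gamma}))$ rather than $\CD(\bar G_{\breve S,\Gamma},C_0(\Ab_\Gamma))$. Under the simple surface intersection hypothesis both $\bar G_{\breve S,\Gamma}$ and $\Ab_\Gamma$ are identified with $G^{\vert\Gamma\vert}$, so the transfer is harmless; but it would be cleaner to argue directly via the left regular representation induced from a faithful representation of $C_0(\Ab_\Gamma)$ on $L^2(\bar G_{\breve S,\Gamma},\HS_\Gamma)$ (as Pedersen does), which gives faithfulness on $L^1$ without appealing to the Stone--von~Neumann identification. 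This is the representation the paper later calls $\pi^{\Gamma,\breve S}_{\overleftarrow{L}}=\Phi^M_{\overleftarrow{L}}\rtimes U_{\overleftarrow{L}}^N$ on $\HS^\Gamma_{E(\breve S)}$, and it would make your argument robust against the case where $\bar G_{\breve S,\Gamma}$ is only a proper factor of $G^{\vert\Gamma\vert}$.
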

Notice that, for a surface $S$ having the same surface intersection property for a finite orientation preserved graph system associated to $\Gamma$, the $L^2(\Ab_\Gamma,\mu_\Gamma)$-norm of an element $F_\Gamma\in C_0(\Ab_\Gamma)\rtimes_{\alpha_{\overleftarrow{L}}^1}\bar G_{S,\Gamma}$ is given by 
\beq\|\pi^{I,\Gamma}_{E(S)}( F_\Gamma)\psi_\Gamma\|_2 
=&\int_{\Ab_{\Gamma}} \int_{\bar G_{S,\Gamma}}\dif\mu_{S,\Gamma}(\rho_{S}(\gamma_i))\dif\mu_{\Gamma}(\ho_\Gamma(\Gamma))\\
&\vert f_\Gamma(\rho_{S,\Gamma}(\gamma_i); \ho_\Gamma(\Gamma))\psi_\Gamma(L(\rho_{S}(\gamma_i))(\hgi),...,L(\rho_{S}(\gamma_i))(\hgnn))\vert^2\eq
whenever $\rho_S(\gamma_i)=\rho_S(\gamma_j)=g_S\in \bar G_{S,\Gamma}$ for $i\neq j$ and $1\leq i,j\leq N$.

The general holonomy-flux cross-product algebra $C_0(\Ab_\Gamma)\rtimes_\alpha\bar G_{\breve S,\Gamma}$ for an action $\alpha\in\Act(\bar G_{\breve S,\Gamma},C_0(\Ab_\Gamma))$ is in the case of a locally compact group $G$ a non-commutative and non-unital $C^*$-algebra.   

Refer to the definitions of resticted graph-diffeomorphisms presented in \cite[Definition 6.2.10]{KaminskiPHD} and consider the non-standard identification of the configuration space $\Ab_\Gamma$ with $G^{\vert\Gamma\vert}$.
\begin{prop}\label{prop crossprodstatenotdiffeo}
The state $\omega_{E(\breve S)}^\Gamma$ on $C_0(\Ab_\Gamma)\rtimes_{\alpha^N_{\overleftarrow{L}}}\bar\ZD_{\breve S,\Gamma}$ associated to the GNS-representation $(\HS_\Gamma,\pi^{I,\Gamma}_{E(\breve S)},\Omega^{I,\Gamma}_{E(\breve S)})$ is not surface-orientation-preserving graph-diffeomorphism invariant, but it is a surface-preserving graph-diffeomorphism invariant state.
\end{prop}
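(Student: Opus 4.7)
The plan is to compute the state $\omega_{E(\breve S)}^\Gamma$ explicitly via the GNS construction and then track how each type of graph-diffeomorphism acts on the input. Using the covariant pair $(\Phi_M,U_{\overleftarrow L}^N)$ inducing $\pi^{I,\Gamma}_{E(\breve S)}$, I would first write
\beqs
\omega_{E(\breve S)}^\Gamma(F_\Gamma)
=\big\langle \Omega^{I,\Gamma}_{E(\breve S)},\pi^{I,\Gamma}_{E(\breve S)}(F_\Gamma)\,\Omega^{I,\Gamma}_{E(\breve S)}\big\rangle_{\HS_\Gamma}
=\int_{\bar\ZD_{\breve S,\Gamma}}\!\!\dif\mu_{\breve S,\Gamma}(\rho_{\breve S,\Gamma}(\Gamma))\,
\big\langle \Omega,\Phi_M(F_\Gamma(\rho_{\breve S,\Gamma}(\Gamma)))\,U_{\overleftarrow L}^N(\rho_{\breve S,\Gamma}(\Gamma))\,\Omega\big\rangle
\eqs
so that the state splits into an integral over the centre-valued flux variables times a matrix element built from the Ashtekar--Lewandowski vector $\Omega=1\in L^2(\Ab_\Gamma,\mu_\Gamma)$. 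This form is what the two invariance statements have to be tested against.

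Next I would describe the action of a graph-diffeomorphism $\varphi$ on $C_0(\Ab_\Gamma)\rtimes_{\alpha_{\overleftarrow L}^N}\bar\ZD_{\breve S,\Gamma}$: it acts on the $C_0(\Ab_\Gamma)$-coefficient by the usual pull-back $\ho_\Gamma\mapsto \ho_\Gamma\circ\varphi^{-1}$, and on the flux argument via the induced map on the intersection data $\iota_L,\iota_R$ of each surface in $\breve S$. The key observation is that a surface-preserving diffeomorphism merely permutes the elements of $\breve S$ while leaving the data $(o_L(S),o_R(S),\iota_L(S,\cdot),\iota_R(S,\cdot))$ unchanged up to this permutation, whereas a surface-orientation-preserving diffeomorphism is allowed to send a surface $S$ to $S^{-1}$ (reverse its co-orientation) while preserving the orientation of the embedding.

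For surface-preserving diffeomorphisms, invariance would follow from three ingredients combined: (i) the Haar measure $\mu_\Gamma$ on $\Ab_\Gamma\simeq G^{|\Gamma|}$ is invariant under the pullback action on $C_0(\Ab_\Gamma)$ (this is where the naturally identified configuration space enters); (ii) the product Haar measure $\mu_{\breve S,\Gamma}$ on $\bar\ZD_{\breve S,\Gamma}$ is invariant under the permutation of surfaces, since all factors carry the same Haar measure; and (iii) the covariance relation $U_{\overleftarrow L}^N(\rho)\,\Phi_M(f)\,U_{\overleftarrow L}^N(\rho)^*=\Phi_M(\alpha_{\overleftarrow L}^N(\rho)f)$ together with $U_{\overleftarrow L}^N(\rho)\Omega=\Omega$ on the constant vector means the integrand transforms exactly by the measure-preserving reparametrisation. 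Executing the change of variables should give $\omega_{E(\breve S)}^\Gamma\circ\alpha_\varphi=\omega_{E(\breve S)}^\Gamma$.

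For the negative part I would construct a concrete counterexample: pick a surface $S\in\breve S$ and a path $\gamma$ intersecting $S$ in its source with $\iota_L(S,\gamma)=1$, and a graph-diffeomorphism $\varphi$ that preserves the orientation of $\gamma$ and of the surface embedding but flips the co-orientation of $S$, so that $\iota_L(\varphi(S),\varphi(\gamma))=-1$. Evaluate $\omega_{E(\breve S)}^\Gamma$ on a test element $F_\Gamma$ whose $C_0(\Ab_\Gamma)$-part separates $g$ from $g^{-1}$ in the relevant slot; the two values will differ because the flux variable enters the representation with inverted exponent. The main obstacle will be organising the bookkeeping: keeping straight which piece of the diffeomorphism action reaches the flux argument of $F_\Gamma$ and which reaches the holonomy argument, and verifying that the counterexample genuinely lives in $\bar\ZD_{\breve S,\Gamma}$ (central-valued) rather than only in the full $\bar G_{\breve S,\Gamma}$, which is needed so that the left/right-action distinction in $\alpha_{\overleftarrow L}^N$ does not already absorb the discrepancy.
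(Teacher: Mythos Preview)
Your proposal reverses the roles of the two classes of diffeomorphisms relative to the paper's usage, and this is not a cosmetic issue: it makes your positive argument prove the wrong thing and your counterexample aim at the wrong target.

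In the paper's proof the distinction is read off directly from the action on the surface labels. A \emph{surface-preserving} graph-diffeomorphism is one with $\varphi_\Gamma(S_i)=S_i$ for every $i$; only under this condition does the double integral
\[
\omega_{E(\breve S)}^\Gamma(F_\Gamma)=\int_{\Ab_\Gamma}\int_{\bar G_{\breve S,\Gamma}}\dif\mu_\Gamma\,\dif\mu_{\breve S,\Gamma}\,\big\vert F_\Gamma(\rho_{S_1}(\gamma_1),\dots,\rho_{S_N}(\gamma_N))\big\vert^2
\]
return unchanged after the change of variables. A \emph{surface-orientation-preserving} graph-diffeomorphism is allowed to send $S_i$ to some other $\tilde S_i\in\breve S$, and the paper's computation shows this replaces the integrand by $\vert F_\Gamma(\rho_{\tilde S_1}(\gamma_1),\dots,\rho_{\tilde S_N}(\gamma_N))\vert^2$, which is in general a different number because the labelling of flux variables by surfaces is part of the data of the state. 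Your step (ii), asserting that permuting the surfaces leaves the state invariant because the product Haar measure is symmetric, is therefore precisely the step the paper says \emph{fails}: the surface labels are not dummy indices but encode which flux variable is tied to which surface, and $F_\Gamma$ need not be symmetric in them.

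Your counterexample for non-invariance, based on flipping a co-orientation $S\mapsto S^{-1}$ so that $\iota_L$ changes sign, addresses a different phenomenon than the one the paper exploits. The paper's non-invariance comes simply from $\varphi_\Gamma(S_i)=\tilde S_i\neq S_i$ with $\tilde S_i\in\breve S$; no orientation flip is invoked. Also note the paper expresses the state via the $\vert F_\Gamma\vert^2$ formula (coming from the $L^2$-norm of $\pi^{I,\Gamma}_{E(\breve S)}(F_\Gamma)\Omega$) rather than the linear matrix element you wrote down, which changes the bookkeeping somewhat. The fix is to take the definitions of the two diffeomorphism classes from \cite[Definition~6.2.10]{KaminskiPHD} at face value and rerun your computation with them; then the positive part becomes the trivial observation that fixing every $S_i$ leaves all labels in place, and the negative part is witnessed by any $F_\Gamma$ that is not symmetric under the surface relabelling $S_i\mapsto\tilde S_i$.
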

Notice that,
\beqs \zeta_{\sigma}\circ\alpha (\rho_{S,\Gamma}(\Gamma))\neq \alpha (\rho_{S,\Gamma}(\Gamma_\sigma))\circ\zeta_{\sigma}
\eqs holds for every bisection $\sigma\in\mathfrak{B}(\PD_{\Gamma}^{\op})$ and  $\rho_{S,\Gamma}\in G_{\breve S,\Gamma}$. Therefore, it is necessary to restrict the holonomy-flux cross-product $C^*$-algebra to $C_0(\Ab_\Gamma)\rtimes_{\alpha^N_{\overleftarrow{L}}}\bar \ZD_{\breve S,\Gamma}$.

\begin{proofs} Let $(\varphi_\Gamma,\Phi_\Gamma)$ be a graph-diffeomorphism on $\PD_\Gamma$ over $V_\Gamma$, which is surface orientation preserving. Then investigate the following computation:
\beqs
&\omega^\Gamma_E( \theta_{(\varphi_\Gamma,\Phi_\Gamma)}(F_\Gamma))\\ 
&=
\int_{\Ab_{\Gamma}}\int_{\bar G_{\breve S,\Gamma}}\dif\mu_{\Gamma}(\ho_\Gamma(\Phi_\Gamma(\gamma_1)),...,\ho_\Gamma(\Phi_\Gamma(\gamma_N))) 
\dif\mu_{\breve S,\Gamma}(\rho_{\varphi_\Gamma(S_1)}(\Phi_\Gamma(\gamma_1)),...,\rho_{\varphi_\Gamma(S_N)}(\Phi_\Gamma(\gamma_N)))\\ &\qquad\qquad\quad \vert F_\Gamma(\rho_{\breve S,\Gamma}(\Gamma_\sigma); \rho_{\breve S,\Gamma}(\Gamma_\sigma)^{-1}\ho_\Gamma(\Gamma_\sigma))\vert^2\\
&= \int_{\Ab_{\Gamma}}\int_{\bar G_{\breve S,\Gamma}}\dif\mu_{\Gamma}(\ho_\Gamma(\gamma_1),...,\ho_\Gamma(\gamma_N)) \dif\mu_{\breve S,\Gamma}(\rho_{\tilde S_1} (\gamma_1),...,\rho_{\tilde S_N}(\gamma_N)) \vert F_\Gamma(\rho_{\tilde S_1}(\gamma_1),...,\rho_{\tilde S_N}(\gamma_N))\vert^2\\
&\neq \int_{\Ab_{\Gamma}}\int_{\bar G_{\breve S,\Gamma}}\dif\mu_{\Gamma}(\ho_\Gamma(\gamma_1),...,\ho_\Gamma(\gamma_N)) \dif\mu_{\breve S,\Gamma}(\rho_{S_1} (\gamma_1),...,\rho_{S_N}(\gamma_N)) \vert F_\Gamma(\rho_{S_1}(\gamma_1),...,\rho_{S_N}(\gamma_N))\vert^2\\
&= \omega^\Gamma_E(F_\Gamma)
\eqs whenever $\varphi_\Gamma(S_i)=\tilde S_i$, $\tilde S_i\in\breve S$ for all $1\leq i\leq N$ and $\Gamma_\sigma=(\Phi_\Gamma(\gamma_1),...,\Phi_\Gamma(\gamma_N))$. Clearly for $\varphi_\Gamma(S_i)=S_i$ the invariance property is easy to deduce.
\end{proofs}

The different possibilities of orientation of surfaces and the graphs allow to define a bulk of automorphic actions and $C^*$-dynamical systems for the holonomy algebra $C_0(\Ab_\Gamma)$. Therefore, speak about different surface configurations with respect to graphs and define many different holonomy-flux cross-product $C^*$-algebras. For example there are the following holonomy-flux cross-product $C^*$-algebras constructable:  $C_0(\Ab_\Gamma)\rtimes_{\alpha_{\overleftarrow{L}}^M}\bar G_{\breve S_2,\Gamma}$, $C_0(\Ab_\Gamma)\rtimes_{\alpha_{\overrightarrow{L}}^M}\bar G_{\breve S_3,\Gamma}$,  $C_0(\Ab_\Gamma)\rtimes_{\alpha^{\overleftarrow{R}}_M}\bar G_{\breve S_5,\Gamma}$,  $C_0(\Ab_\Gamma)\rtimes_{\alpha^{\overrightarrow{R}}_M}\bar G_{\breve S_6,\Gamma}$, $C_0(\Ab_\Gamma)\rtimes_{\alpha^{\overleftarrow{R},M}_{\overleftarrow{L}}}\bar G_{\breve S_7,\Gamma}$ and $C_0(\Ab_\Gamma)\rtimes_{\alpha^{\overrightarrow{R},M}_{\overrightarrow{L}}}\bar G_{\breve S_8,\Gamma}$ whenever $1\leq M\leq N$ and for a set $\{\breve S_i\}$ of suitable surface sets. 

If the tensor $C^*$-algebra $C_0(\Ab_\Gamma)\otimes C_0(\Ab_\Gp)$ is used, then the $C^*$-algebra $C_0(\Ab_\Gamma)\rtimes_{\alpha_{\overleftarrow{R}}^N} \bar G_{\breve S,\Gamma}\otimes C_0(\Ab_\Gp)\rtimes_{\alpha_{\overleftarrow{L}}^{N^\prime}} \bar G_{\breve S,\Gp}$ with respect to the minimal $C^*$-norm is constructed.

Observe that, a generalised Stone - von Neumann theorem presented in \cite[Theorem 4.24]{Williams07} is not achievable, since the objects $\bar G_{\breve S, \Gamma}$ and $\Ab_{\Gamma}$ are not identified in general. It is necessary to distinguish between the two objects, since the holonomies are independent whereas the fluxes are dependent on a surface or surface set. Nevertheless, if it is assumed that $\bar G_{\breve S, \Gamma}$ is identified with $G^M$ and $\Ab_{\Gamma}$ is identified with $G^N$, then the holonomy-flux cross-product algebra is identified with $C_0(G^N)\rtimes_{\alpha_{\overleftarrow{L}}^M}G^M$. 
But a generalised Stone - von Neumann theorem is only available for $M$ equal to $N$. This is the result of theorem \ref{Generalised Stone- von Neumann theorem} and theorem \ref{theo moritaequivgroup}. 

Hence only for $M=N$ the $C^*$-algebra $C_0(G^N)\rtimes_{\alpha_{\overleftarrow{L}}^M}G^N$ is isomorphic to $\KD(L^2(G^N,\mu_N))$. 
Notice that, the state $\omega_E^\Gamma$ is now given in this particular case by
\beqs \omega_E^\Gamma(F_\Gamma)=\int_{G^N}\int_{G^N}\dif\mu_N(\textbf{g})\dif\mu_N(\textbf{h})\vert F_\Gamma(\textbf{g},\textbf{h})\vert^2\eqs for $F_\Gamma\in C_0(G^N)\rtimes_{\alpha_{\overleftarrow{L}}^M}G^N$ and does not depend on the surfaces anymore.
If $\bar G_{\breve S,\Gamma}$ for example is identified with $G^{N-1}$, then a problem occurs. The Morita equivalent $C^*$-algebra to $C_0(G^N)\rtimes_{\alpha_{\overleftarrow{L}}^M} G^M$ where $M<N$ is not of the form $C^*(G^K)$ for a suitable $K$ where $1\leq K\leq N$. The author does not know any Morita equivalent $C^*$-algebra to the $C^*$-algebra $C_0(G^N)\rtimes_{\alpha_{\overleftarrow{L}}^M} G^M$ where $M<N$. 

Pedersen \cite[section 7.7 ]{Pedersen} has presented a generalisation of regular representations of cross-products. His results are adapted to the case of a set $\breve S$ of surfaces with simple surface intersection property for a finite orientation preserved graph system associated to $\Gamma$ in the following paragraphs.

Set $\HS_{E(\breve S)}^\Gamma:=L^2(\bar G_{\breve S,\Gamma},\HS_\Gamma)$, where this Hilbert space is identified with $L^2(\bar G_{\breve S,\Gamma})\otimes\HS_\Gamma$.
In the following investigation the elements $F_\Gamma$ are understood as elements of $\KD(\bar G_{\breve S,\Gamma},C_0(\Ab_\Gamma))$.

First observe that, $\Psi_\Gamma(\rho_{\breve S,\Gamma}(\Gamma))$ is an element of $\HS_{E(\breve S)}^\Gamma$, if there is a map
\beqs \rho_{\breve S,\Gamma}(\Gamma)\mapsto \Psi^\Gamma_{E(\breve S)}(\ho_\Gamma(\Gamma),\rho_{\breve S,\Gamma}(\Gamma))
\eqs such that $\Psi^\Gamma_{E(\breve S)}(\rho_{\breve S,\Gamma}(\Gamma))\in \HS_\Gamma$.

Then recall the $C^*$-algebra dynamical system $(\bar G_{\breve S,\Gamma},C_0(\Ab_\Gamma),\alpha_{\overleftarrow{L}}^N)$ and the covariant pair $(\Phi_M,U_{\overleftarrow{L}}^N)$ of this $C^ *$-dynamical system. There is a morphism $\Phi^M_{\overleftarrow{L}}$ of the $C^*$-algebras, which maps from $C_0(\Ab_\Gamma)$ to $\KD\big(\HS_{E(\breve S)}^\Gamma\big)$, and a representation $U_{\overleftarrow{L}}^N$ of the group $\bar G_{\breve S,\Gamma}$ on the $C^*$-algebra $\KD\big(\HS_{E(\breve S)}^\Gamma\big)$. Both objects are defined by 
\beqs \left(\Phi^M_{\overleftarrow{L}}(f_\Gamma)\Psi^\Gamma_{E(\breve S)}\right)(\rho_{\breve S,\Gamma}(\Gamma))
\:=\Phi_M\left(\alpha_{\overleftarrow{L}}^N(\rho_{\breve S,\Gamma}(\Gamma))(f_\Gamma)\right)\Psi^\Gamma_{E(\breve S)}(\rho_{\breve S,\Gamma}(\Gamma))
\eqs for $\Psi^\Gamma_{E(\breve S)}\in\HS_{E(\breve S)}^\Gamma,f_\Gamma\in C_0(\Ab_\Gamma)$ and 
\beqs (U_{\overleftarrow{L}}^N(\hat\rho_{\breve S,\Gamma}(\Gamma))\Psi^\Gamma_{E(\breve S)})(\rho_{\breve S,\Gamma}(\Gamma))
:=\Psi^\Gamma_{E(\breve S)}(L(\hat\rho_{\breve S,\Gamma}(\Gamma))(\rho_{\breve S,\Gamma}(\Gamma)))
\eqs for $U_{\overleftarrow{L}}^N\in\Rep(\bar G_{\breve S,\Gamma},\KD(\HS^\Gamma_{E(\breve S)}))$, $\rho_{\breve S,\Gamma},\hat\rho_{\breve S,\Gamma},\tilde\rho_{\breve S,\Gamma}\in G_{\breve S,\Gamma}$. Then $(\Phi_{\overleftarrow{L}}^M,U_{\overleftarrow{L}}^N)$ defines a covariant representation of $(\bar G_{\breve S,\Gamma},C_0(\Ab_\Gamma),\alpha_{\overleftarrow{L}}^N)$ in $\KD\big(\HS_{E(\breve S)}^\Gamma\big)$.

\begin{defi}Let $\breve S$ be a set of surfaces with simple surface intersection property for a finite orientation preserved graph system associated to a graph $\Gamma$.

The \textbf{left regular representation of the holonomy-flux cross-product $C^*$-algebra} \\$C_0(\Ab_\Gamma)\rtimes_{\alpha_{\overleftarrow{L}}^N}\bar G_{\breve S,\Gamma}$ induced by $(\Phi_M,\HS_\Gamma)$ is the representation $\pi^{\Gamma,\breve S}_{\overleftarrow{L}}$ on $L^2(\bar G_{\breve S,\Gamma},\HS_\Gamma))$, which is expressed by
\beqs 
&((\pi^{\Gamma,\breve S}_{\overleftarrow{L}}(F_\Gamma))\Psi_{E(\breve S)}^\Gamma)(\rho_{\breve S,\Gamma}(\Gamma))
=\left(((\Phi^M_{\overleftarrow{L}}\rtimes U_{\overleftarrow{L}}^N)(F_\Gamma))\Psi_{E(\breve S)}^\Gamma\right)(\rho_{\breve S,\Gamma}(\Gamma))\\
&:=\int_{\bar G_{\breve S,\Gamma}}\Phi_M\left(\alpha_{\overleftarrow{L}}^N(\rho_{\breve S,\Gamma}(\Gamma))(F_\Gamma(\hat\rho_{\breve S,\Gamma}(\Gamma)))\right)U_{\overleftarrow{L}}^N(\hat\rho_{\breve S,\Gamma}(\Gamma))\Psi_{E(\breve S)}^\Gamma(\rho_{\breve S,\Gamma}(\Gamma))\dif \mu_{\breve S,\Gamma}(\hat\rho_{\breve S}(\Gamma))
\eqs for $F_\Gamma(\hat\rho_{\breve S,\Gamma}(\Gamma))\in C_0(\Ab_\Gamma)$, $\rho_{\breve S,\Gamma},\tilde\rho_{\breve S,\Gamma},\hat\rho_{\breve S,\Gamma}\in G_{\breve S,\Gamma}$ and $\Psi_{E(\breve S)}^\Gamma\in\HS_{E(\breve S)}^\Gamma$. The representation $\pi^{\Gamma,\breve S}_{\overleftarrow{L}}$ is also denoted by $\Phi^M_{E(\breve S)}\rtimes U_{\overleftarrow{L}}^N$.
\end{defi}

Then recall a general $C^*$-algebra dynamical system given by $(\bar G_{\breve S,\Gamma},C_0(\Ab_\Gamma),\alpha)$. There is a morphism $\Phi^M_{E(\breve S)}$ from the $C^*$-algebra $C_0(\Ab_\Gamma)$ to $\KD\big(\HS_{E(\breve S)}^\Gamma\big)$ and a representation $U$ of the group $\bar G_{\breve S,\Gamma}$ on the $C^*$-algebra $\KD\big(\HS_{E(\breve S)}^\Gamma\big)$. They are defined by 
\beqs \left(\Phi^M_{E(\breve S)}(f_\Gamma)\Psi^\Gamma_{E(\breve S)}\right)(\rho_{\breve S,\Gamma}(\Gamma))
:=\Phi_M\left(\alpha(\rho_{\breve S,\Gamma}(\Gamma))(f_\Gamma)\right)\Psi^\Gamma_{E(\breve S)}(\rho_{\breve S,\Gamma}(\Gamma))
\eqs for $\Psi^\Gamma_{E(\breve S)}\in\HS_{E(\breve S)}^\Gamma$, $f_\Gamma\in C_0(\Ab_\Gamma)$ and $\tilde\rho_{\breve S,\Gamma},\rho_{\breve S,\Gamma}\in G_{\breve S,\Gamma}$. Consequently, a general regular representation of the holonomy-flux cross-product is given by
\beq \pi^\Gamma_{E(\breve S)}(f_\Gamma)\Psi_{E(\breve S)}^\Gamma
=(\Phi^M_{E(\breve S)}\rtimes U)(f_\Gamma)\Psi_{E(\breve S)}^\Gamma
\eq whenever $U\in \Rep(\bar G_{\breve S,\Gamma},\KD(\HS_{E(\breve S)}^\Gamma))$ and $f_\Gamma\in C_0(\Ab_\Gamma)$.

Until now a unification of the different holonomy-flux cross-product $C^*$-algebras for certain surface sets has been not presented. The following algebra plays an important role.
\begin{defi}Let $\breve S$ be a set of surfaces with simple surface intersection property for a finite orientation preserved graph system associated to a graph $\Gamma$.

The \textbf{multiplier algebra of the holonomy-flux cross-product $C^*$-algebra}\\ $C_0(\Ab_\Gamma)\rtimes_{\alpha_{\overleftarrow{L}}^N}\bar G_{\breve S,\Gamma}$ is given by all linear operators \[M:C_0(\Ab_\Gamma)\rtimes_{\alpha_{\overleftarrow{L}}^N}\bar G_{\breve S,\Gamma}\longrightarrow C_0(\Ab_\Gamma)\rtimes_{\alpha_{\overleftarrow{L}}^N}\bar G_{\breve S,\Gamma}\] such that for any $\hat F_\Gamma\in C_0(\Ab_\Gamma)\rtimes_{\alpha_{\overleftarrow{L}}^N}\bar G_{\breve S,\Gamma}$ there exists a $\tilde F_\Gamma\in C_0(\Ab_\Gamma)\rtimes_{\alpha_{\overleftarrow{L}}^N}\bar G_{\breve S,\Gamma}$ such that for all\\ $F_\Gamma\in C_0(\Ab_\Gamma)\rtimes_{\alpha_{\overleftarrow{L}}^N}\bar G_{\breve S,\Gamma}$ it is true that
\beqs \hat F^*_\Gamma M(F_\Gamma)&=
\Big \la \hat F_\Gamma, M(F_\Gamma)\Big\ra_{C_0(\Ab_\Gamma)\rtimes_{\alpha_{\overleftarrow{L}}^N}\bar G_{\breve S,\Gamma}}
= \Big \la\tilde F_\Gamma, F_\Gamma\Big\ra_{C_0(\Ab_\Gamma)\rtimes_{\alpha_{\overleftarrow{L}}^N}\bar G_{\breve S,\Gamma}}\\
&=\tilde F^*_\Gamma F_\Gamma
\eqs 

In particular, the multiplier algebra of the reduced holonomy-flux group $C^*$-algebra $\CD_r^*(\bar G_{\breve S,\Gamma},C_0(\Ab_\Gamma))$ consists of such linear maps $M$ such that for any $\hat F_\Gamma(\hat\rho_{S,\Gamma}(\Gamma))\in C_0(\Ab_\Gamma)$ there exists a $\tilde F_\Gamma(\hat\rho_{S,\Gamma}(\Gamma))\in C_0(\Ab_\Gamma)$ such that for all $F_\Gamma(\hat\rho_{S,\Gamma}(\Gamma))\in C_0(\Ab_\Gamma)$ it is true that
\beq\label{eq multiholflux} &\Big\la (\pi^{\Gamma,\breve S}_{\overleftarrow{L}}(\hat F_\Gamma(\hat\rho_{S,\Gamma}(\Gamma)))\Psi_{E(\breve S)}^\Gamma,\pi^{\Gamma,\breve S}_{\overleftarrow{L}}(M(F_\Gamma(\hat\rho_{S,\Gamma}(\Gamma))))\Phi_{E(\breve S)}^\Gamma\Big\ra_{\HS_{E(\breve S)}^\Gamma}
\\&=\Big\la \pi^{\Gamma,\breve S}_{\overleftarrow{L}}(\tilde F_\Gamma(\hat\rho_{S,\Gamma}(\Gamma)))\Psi_{E(\breve S)}^\Gamma,\pi^{\Gamma,\breve S}_{\overleftarrow{L}}(F_\Gamma(\hat\rho_{S,\Gamma}(\Gamma)))\Phi_{E(\breve S)}^\Gamma\Big\ra_{\HS_{E(\breve S)}^\Gamma}
\eq yields whenever $\Psi^\Gamma_{E(\breve S)},\Phi_{E(\breve S)}^\Gamma\in\HS_{E(\breve S)}^\Gamma$.
\end{defi}
\begin{exa}\label{exa surfsetdiff}
In \cite[Definition 3.19]{Kaminski1},\cite[Definition 6.1.19]{KaminskiPHD} the following map $I$ has been introduced. The map $I: C_0(\Ab_\Gamma)\rightarrow C_0(\Ab_{\Gamma^{-1}})$ is given by
\beqs I:f_\Gamma \mapsto f_{\Gamma^{-1}}, \text{ where } (I\circ f_\Gamma)(\ho_{\Gamma}(\gamma_1),...,\ho_{\Gamma}(\gamma_N)):= f_{\Gamma^{-1}}(\ho_{\Gamma^{-1}}(\gamma_1)^{-1},...,\ho_{\Gamma^{-1}}(\gamma_N)^{-1})
\eqs such that $I^2=\text{id}$, where $\text{id}$ is the identical automorphism on $C_0(\Ab_\Gamma)$. 

Consider a suitable set $\bar S$ of surfaces that is contained in the set $\breve S$ and let $M\leq N$. Note that, if $M<N$, then there is a set of paths $\Gpp:=\Gamma\setminus\Gp$ such that each path of this set does not intersect a surface in $\bar S$. Each path in $\Gp$ intersects only one surface in $\bar S$ at the source vertes of this path. Then $\bar G_{\bar S,\Gp\leq \Gamma}$ is a subgroup of $\bar G_{\bar S,\Gamma}$ and is embedded by $\bar G_{\bar S,\Gamma}:=\bar G_{\bar S,\Gp} \times\{e_G\}\times....\times\{e_G\}$ in $\bar G_{\breve S,\Gamma}$. Denote the set of surfaces, which has the simple surface intersection property for the finite orientation preserved graph system $\PD^{\op}_{\Gp}$, which is contained in $\breve S$ and which is not contained in $\bar S$, by $\breve R$. Note that $\bar G_{\breve R,\Gpp\leq \Gamma}$ is a subgroup of $\bar G_{\breve R,\Gamma}$ and is embedded by $\bar G_{\breve R,\Gamma}:=\bar G_{\breve R,\Gpp} \times\{e_G\}\times....\times\{e_G\}$ in $\bar G_{\breve S,\Gamma}$. Let $\bar R$ be a set of surfaces, which has the same surface intersection property for a path $\gp$ in a graph, which is contained in the finite orientation preserved graph system $\PD_{\Gp}^{\op}$.

\textit{Situation 1}:\\
Then there is a $C^*$-dynamical system in $\KD(\HS^{\Gamma^{-1}}_{\E(\bar S)})$, which is given by $(\bar G_{\bar S,\Gamma^{-1}},C_0(\Ab_{\Gamma^{-1}}),\alpha_{\overleftarrow{R}}^M)$. Let $(\Phi_M,U_{\overleftarrow{R}}^M)$ be a covariant pair associated to the $C^*$-dynamical system.

Then observe that $\alpha_{\overleftarrow{R}}^M=I\circ\alpha_{\overleftarrow{L}}^M\circ I ^{-1}$ and $U_{\overleftarrow{R}}^M=I\circ U_{\overleftarrow{L}}^M\circ I ^{-1}$ hold. Then $(\bar G_{\bar S,\Gamma },C_0(\Ab_{\Gamma^{-1}}),I\circ\alpha_{\overleftarrow{L}}^M\circ I^{-1})$ is a $C^*$-dynamical system in $\KD(\HS^{\Gamma^{-1}}_{\E(\bar S)})$. Respectively, $(\bar G_{\bar S,\Gamma },C_0(\Ab_{\Gamma}),\alpha_{\overleftarrow{L}}^M)$ is a $C^*$-dynamical system in $\KD(\HS^{\Gamma}_{\E(\bar S)})$.

Note that, if $\bar S$ is equal to $\breve S$, then $\bar S$ has the simple surface intersection property for the finite orientation preserved graph system $\PD^{\op}_{\Gamma^{-1}}$ and $M=N$.Then $(\bar G_{\breve S,\Gamma },C_0(\Ab_{\Gamma^{-1}}),I\circ\alpha_{\overleftarrow{L}}^N\circ I^{-1})$ and $(\bar G_{\breve S,\Gamma },C_0(\Ab_{\Gamma}),\alpha_{\overleftarrow{L}}^N)$ are two $C^*$-dynamical systems in $\KD(\HS^{\Gamma}_{\E(\bar S)})$.

\textit{Situation 2}:\\
Furthermore, there is a $C^*$-dynamical system in $\KD(\HS^{\Gamma}_{\E(\breve R)})$ given by $(\bar G_{\breve R,\Gamma},C_0(\Ab_{\Gamma}),\alpha_{\overleftarrow{L}}^K)$ for $K$ suitable. 

\textit{Situation 3}:\\
There is a $C^*$-dynamical system in $\KD(\HS^{\gp}_{\E(\bar R)})$ given by $(\bar G_{\bar R,\gp},C_0(\Ab_{\gp}),\alpha_{\overleftarrow{L}}^1)$.

\textit{Situation 4}:\\
Finally, there is $C^*$-dynamical system in $\KD(\HS^{\Gamma}_{\E(\breve S)})$ given by $(\bar G_{\bar S,\Gamma^{\prime\text{ }-1}}\times\bar G_{\breve R,\Gpp},C_0(\Ab_{\Gamma}),(I^{-1}\circ\alpha_{\overleftarrow{R}}^M\circ I)\circ\alpha_{\overleftarrow{L}}^K)$. Note that, $(I^{-1}\circ\alpha_{\overleftarrow{R}}^M\circ I)\circ\alpha_{\overleftarrow{L}}^K=\alpha_{\overleftarrow{L}}^K\circ(I^{-1}\circ\alpha_{\overleftarrow{R}}^M\circ I)$. Reformulate 
$(\bar G_{\bar S,\Gamma^{\prime} }\times\bar G_{\breve R,\Gpp},C_0(\Ab_{\Gamma}),\alpha_{\overleftarrow{L}}^M\circ\alpha_{\overleftarrow{L}}^K)$.

For each $C^*$-dynamical system given above there is a cross-product $C^*$-algebra.
\end{exa}

In the following proposition the \textit{situation 1} is studied. 
\begin{prop}\label{prop proofmultiplier}Let $\breve T:=\{T_1,...,T_N\}$ be a set of surfaces with simple surface intersection property for the orientation preserved graph system $\PD^{\op}_{\Gamma}$. Let $\breve S:=\{S_1,...,S_M\}$ be a set of surfaces that is contained in $\breve T$ an such that $M\leq N$.
 
The unitaries $U_{\overleftarrow{R}}^M(\rho_{\breve S,\Gamma^{-1}}(\Gamma^{-1}))$, whenever $\rho_{\breve S,\Gamma^{-1}}(\Gamma^{-1})\in\bar G_{\breve S,\Gamma^{-1}}$, are elements of the multiplier algebra of the $C^*$-algebra $C_0(\Ab_\Gamma)\rtimes_{\alpha_{\overleftarrow{L}}^N}\bar G_{\breve T,\Gamma}$. Moreover, the elements of the holonomy-flux cross-product algebra $C_0(\Ab_\Gamma)\rtimes_{\alpha_{\overleftarrow{R}}^M\circ I}\bar G_{\breve S,\Gamma^{-1}}$ are multipliers of the $C^*$-algebra $C_0(\Ab_\Gamma)\rtimes_{\alpha_{\overleftarrow{L}}^N}\bar G_{\breve S,\Gamma}$. 
\end{prop}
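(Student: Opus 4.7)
\begin{proofi}
The plan is to work inside a faithful regular representation of both cross-product algebras on a common Hilbert space and verify the defining identity \eqref{eq multiholflux} of the multiplier algebra by direct computation. Fix the faithful left regular representation $\pi^{\Gamma,\breve T}_{\overleftarrow{L}}=\Phi^M_{E(\breve T)}\rtimes U_{\overleftarrow{L}}^N$ of $C_0(\Ab_\Gamma)\rtimes_{\alpha_{\overleftarrow{L}}^N}\bar G_{\breve T,\Gamma}$ on the Hilbert space $\HS^\Gamma_{E(\breve T)}$, and recognise, using \textit{Situation 1} of Example \thesection.\ref{exa surfsetdiff}, that $(\bar G_{\breve S,\Gamma^{-1}},C_0(\Ab_\Gamma),\alpha_{\overleftarrow{R}}^M\circ I)$ is a bona fide $C^*$-dynamical system whose covariant pair $(\Phi_M,U_{\overleftarrow{R}}^M)$ naturally extends to the same enlarged Hilbert space.

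For the first claim I would first show that the unitary $U_{\overleftarrow{R}}^M(\rho_{\breve S,\Gamma^{-1}}(\Gamma^{-1}))$ \emph{commutes} with every left translation $U_{\overleftarrow{L}}^N(\tilde\rho_{\breve T,\Gamma}(\Gamma))$, which is just the associativity identity $L(\rho)\circ R(\tilde\rho)=R(\tilde\rho)\circ L(\rho)$ on $G^{\vert\Gamma\vert}$, and \emph{satisfies the covariance relation}
\beqs
U_{\overleftarrow{R}}^M(\rho_{\breve S,\Gamma^{-1}})\,\Phi_M(f_\Gamma)=\Phi_M\bigl((\alpha_{\overleftarrow{R}}^M\circ I)(\rho_{\breve S,\Gamma^{-1}})(f_\Gamma)\bigr)\,U_{\overleftarrow{R}}^M(\rho_{\breve S,\Gamma^{-1}})
\eqs
coming from the fact that $(\Phi_M,U_{\overleftarrow{R}}^M)$ is covariant for the right action. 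Combining these two identities, conjugation by $U_{\overleftarrow{R}}^M$ sends the dense $^*$-subalgebra $\pi^{\Gamma,\breve T}_{\overleftarrow{L}}(\CD(\bar G_{\breve T,\Gamma},C_0(\Ab_\Gamma)))$ into itself; by point-norm continuity of $\alpha_{\overleftarrow{R}}^M\circ I$ the same holds on the full cross-product, and the multiplier identity \eqref{eq multiholflux} follows by a routine manipulation of the inner products on $\HS^\Gamma_{E(\breve T)}$.

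The second claim is obtained from the first by a Bochner integration argument. Every dense element $G_\Gamma\in\CD(\bar G_{\breve S,\Gamma^{-1}},C_0(\Ab_\Gamma))$ is written, under its regular representation, as a Bochner integral
\beqs
\int_{\bar G_{\breve S,\Gamma^{-1}}}\Phi_M\bigl(G_\Gamma(\rho)\bigr)\,U_{\overleftarrow{R}}^M(\rho)\,\dif\mu_{\breve S,\Gamma^{-1}}(\rho)
\eqs
Each factor $\Phi_M(G_\Gamma(\rho))$ is a multiplier of $C_0(\Ab_\Gamma)\rtimes_{\alpha_{\overleftarrow{L}}^N}\bar G_{\breve S,\Gamma}$ because $C_0(\Ab_\Gamma)$ embeds in the multiplier algebra of any regular cross-product with coefficient algebra $C_0(\Ab_\Gamma)$, and each $U_{\overleftarrow{R}}^M(\rho)$ is a multiplier by the first claim. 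Since the set of multipliers is a norm-closed $^*$-subalgebra of $\LD(\HS^\Gamma_{E(\breve T)})$, it is closed under Bochner integration, and the conclusion extends to the whole completion by density and an approximate-unit argument.

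The main obstacle will be the careful bookkeeping of orientations and surface indexing: the action $\alpha_{\overleftarrow{L}}^N$ translates holonomies from the left on $\Gamma$ while $\alpha_{\overleftarrow{R}}^M\circ I$ effectively translates them from the right after reversing orientation on $\Gamma^{-1}$, and because $\breve S\subseteq\breve T$ one has to match coordinates so that the two actions genuinely commute on the nose. This uses the simple surface intersection property of $\breve T$, the \hyperlink{non-standard identification}{non-standard identification} of $\Ab_\Gamma$ with $G^{\vert\Gamma\vert}$, and the observation that the embedding $\bar G_{\breve S,\Gamma^{-1}}\hookrightarrow\bar G_{\breve T,\Gamma^{-1}}$ acts trivially on the complementary $N-M$ components, which is precisely the content that makes the covariance computation in the step above collapse to the desired form.
\end{proofi}
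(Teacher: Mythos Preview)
Your strategy is the same one the paper follows: work in the faithful left regular representation $\pi^{\Gamma,\breve T}_{\overleftarrow{L}}$ on $\HS^\Gamma_{E(\breve T)}$, exploit the interplay between the left and right flux actions, and verify the adjointability condition \eqref{eq multiholflux} directly. The paper, however, organises the argument differently, and that difference matters for one step you treat too lightly.

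The paper does \emph{not} deduce the multiplier property from the abstract facts ``$U_{\overleftarrow{R}}^M$ commutes with $U_{\overleftarrow{L}}^N$'' and ``$U_{\overleftarrow{R}}^M$ is covariant for $\Phi_M$''. Those two facts only give you that conjugation by $U_{\overleftarrow{R}}^M$ is an automorphism of the crossed product; they do \emph{not} by themselves force $U_{\overleftarrow{R}}^M\cdot\pi^{\Gamma,\breve T}_{\overleftarrow{L}}(F_\Gamma)$ to lie in $\pi^{\Gamma,\breve T}_{\overleftarrow{L}}(C_0(\Ab_\Gamma)\rtimes\bar G_{\breve T,\Gamma})$. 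What the paper actually does is to use the orientation-reversal map $I$ to establish the operator identity $I^{-1}\circ U_{\overleftarrow{R}}^M(\rho_{\breve S,\Gamma^{-1}})\circ I = U_{\overleftarrow{L}}^M(\rho_{\breve S,\Gamma})$ (and likewise for the actions), so that after this identification the unitaries in question are literally the canonical subgroup unitaries $i_G(\rho)$ for $\rho\in\bar G_{\breve S,\Gamma}\subset\bar G_{\breve T,\Gamma}$, which are multipliers by construction. The long inner-product computation in the paper is the explicit verification of \eqref{eq multiholflux} once this identification is in place. You mention the $I$-map only as ``bookkeeping'' at the end, but it is the substantive step; the commutation/covariance relations alone would leave a gap.

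For the second claim your Bochner-integral argument is a genuine simplification over the paper. The paper instead writes out the convolution $F_\Gamma\ast\hat F_\Gamma$ with $F_\Gamma$ in the $\breve S$-crossed product and $\hat F_\Gamma$ in the $\breve T$-crossed product, expresses $\pi^{\Gamma,\breve T}_{\overleftarrow{L}}(F_\Gamma\ast\hat F_\Gamma)$ explicitly, and checks the adjoint identity by another direct computation. Your route---``$\Phi_M(G_\Gamma(\rho))$ is a multiplier, $U_{\overleftarrow{R}}^M(\rho)$ is a multiplier by the first part, multipliers form a norm-closed algebra, integrate''---reaches the same conclusion with less work, provided the first part has been secured via the $I$-identification rather than the bare conjugation argument.
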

\begin{proof}
Choose the two surface sets $\breve S$ and $\breve T$ and a graph $\Gamma$ such that $(C_0(\Ab_\Gamma),\bar G_{\breve S,\Gamma^{-1}},I^{-1}\circ\alpha_M^{\overleftarrow{R}}\circ I)$ and\\ $(C_0(\Ab_\Gamma),\bar G_{\breve T,\Gamma^{-1}},I^{-1}\circ\alpha_N^{\overleftarrow{R}}\circ I)$ are two $C^*$-dynamical systems.
Then notice that
\beqs &(F_\Gamma\ast \hat F_\Gamma) (\tilde\rho_{T_1}(\gamma_1^{-1}),...,\tilde\rho_{T_N}(\gamma_N^{-1}))\\
&= \int_{\bar G_{\breve S,\Gamma^{-1}}}\dif\mu_{\breve S,\Gamma^{-1}}(\rho_{\breve S,\Gamma^{-1}}(\Gamma^{-1})) \\&\qquad \quad F_\Gamma(\rho_{\breve S,\Gamma^{-1}}(\Gamma^{-1}))\left((I^{-1}\circ\alpha^{\overleftarrow{R}}_M(\rho^M_{\breve S,\Gamma^{-1}})\circ I)(\hat F_\Gamma)\right)\big(\rho_{S_1}(\gamma_1^{-1})^{-1}\tilde\rho_{T_1}(\gamma_1^{-1}),....,\rho_{S_N}(\gamma_N^{-1})^{-1}\tilde\rho_{T_N}(\gamma_N^{-1})\big)
\eqs holds whenever $F_\Gamma\in L^1(\bar G_{\breve S,\Gamma^{-1}},C_0(\Ab_\Gamma),I^{-1}\circ\alpha_M^{\overleftarrow{R}}\circ I)$ and $\hat F_\Gamma\in L^1(\bar G_{\breve S,\Gamma^{-1}},C_0(\Ab_\Gamma),I^{-1}\circ\alpha_N^{\overleftarrow{R}}\circ I)$. Furthermore recognize that,
\beqs 
&F^*_\Gamma(\tilde\rho_{\breve S,\Gamma^{-1}}(\Gamma^{-1}))=(I^{-1}\circ\alpha^{\overleftarrow{R}}_M(\tilde\rho_{\breve S,\Gamma^{-1}}^M)\circ I)\left(F_\Gamma^+(\tilde\rho_{\breve S,\Gamma^{-1}}(\Gamma^{-1})^{-1})\right)
\eqs is true.

Notice that,
\beqs \alpha^{\overleftarrow{R}}_N(\tilde\rho_{\breve S,\Gamma^{-1}}^N)(f_{\Gamma^{-1}})(\ho_{\Gamma^{-1}}( \Gamma^{-1}))
&= (I\circ \alpha_{\overleftarrow{L}}^N(\tilde\rho_{\breve S,\Gamma}^N)\circ I^{-1})(f_{\Gamma^{-1}})(\ho_{\Gamma^{-1}}( \Gamma^{-1}))\\
&= (I\circ f_{\Gamma})(\tilde\rho_{\breve S,\Gamma}(\Gamma)^{-1}\ho_{\Gamma}( \Gamma))\\
&=f_{\Gamma^{-1}}(\ho_{\Gamma^{-1}}( \Gamma^{-1})\tilde\rho_{\breve S,\Gamma^{-1}}(\Gamma^{-1}))
\eqs and
\beqs \int_{\Ab_{\Gamma}}\dif\mu_{\Gamma}(\ho_\Gamma( \Gamma)) (I^{-1}\circ\alpha^{\overleftarrow{R}}_N(\tilde\rho_{\breve S,\Gamma^{-1}}^N)\circ I)(f_\Gamma)(\ho_\Gamma( \Gamma))
= \int_{\Ab_{\Gamma}}\dif\mu_{\Gamma}(\ho_\Gamma( \Gamma)) \alpha^{\overleftarrow{L}}_N(\tilde\rho_{\breve S,\Gamma}^N)(f_\Gamma)(\ho_\Gamma( \Gamma))
\eqs yields whenever $f_\Gamma\in C_0(\Ab_\Gamma)$ and $\tilde\rho_{\breve S,\Gamma}^N\in\bar G_{\breve T,\Gamma}$.

Clearly, there is a representation $\pi^M_{I,\overleftarrow{R}}$ of $L^1(\bar G_{\breve S,\Gamma^{-1}},C_0(\Ab_\Gamma),I^{-1}\circ\alpha_M^{\overleftarrow{R}}\circ I)$ on $\HS_\Gamma$, which is given by
\beqs \pi_{I,\overleftarrow{R}}^M(F_\Gamma)\psi_\Gamma
&:=\int_{\bar G_{\breve S,\Gamma^{-1}}}\dif\mu_{\breve S,\Gamma^{-1}}(\rho_{\breve S,\Gamma^{-1}}^M)\Phi_M(F_\Gamma(\rho_{\breve S,\Gamma^{-1}}^M))(I^{-1}\circ U_{\overleftarrow{R}}^M(\rho_{\breve S,\Gamma^{-1}}^M)\circ I)\psi_\Gamma\\
&=\int_{\bar G_{\breve S,\Gamma}}\dif\mu_{\breve S,\Gamma}(\rho_{\breve S,\Gamma}^M)\Phi_M(F_\Gamma(\rho_{\breve S,\Gamma}^M))\big(U_{\overleftarrow{L}}^M(\rho_{\breve S,\Gamma}^M)\big)\psi_\Gamma
\eqs where $\rho_{\breve S,\Gamma^{-1}}^M\in \bar G_{\breve S,\Gamma^{-1}}$, $F_\Gamma\in L^1(\bar G_{S,\Gamma^{-1}},C_0(\Ab_\Gamma),I^{-1}\circ\alpha_M^{\overleftarrow{R}}\circ I)$ and $\psi_\Gamma\in \HS_\Gamma$. Then derive that there is an isomorphism $\mathcal{I}$ from $ L^1(\bar G_{\breve S,\Gamma^{-1}},C_0(\Ab_\Gamma),I^{-1}\circ\alpha_M^{\overleftarrow{R}}\circ I)$ to $ L^1(\bar G_{\breve S,\Gamma},C_0(\Ab_{\Gamma}),\alpha^M_{\overleftarrow{L}})$. 

Then the Hilbert space $L^2(\bar G_{\breve S,\Gamma},\mu_{\breve S,\Gamma})$ is embedded into $L^2(\bar G_{\breve T,\Gamma},\mu_{\breve T,\Gamma})$. The left regular representation of $C_0(\Ab_\Gamma)\rtimes_{\alpha_M^{\overleftarrow{R}}\circ I}\bar G_{\breve S,\Gamma^{-1}}$ on $L^2(\bar G_{\breve T,\Gamma},\mu_{\breve T,\Gamma})\otimes\HS_\Gamma$ is given by
\beqs 
&((\pi^{\Gamma^{-1},\breve S}_{\overleftarrow{R}}(F_\Gamma))\Psi_{E(\breve T)}^\Gamma)(\rho_{\breve T,\Gamma}(\Gamma))
=\left(((\Phi^M_{\overleftarrow{R}}\rtimes(I^{-1}\circ U_{\overleftarrow{R}}^M)\circ I)(F_\Gamma))\Psi_{E(\breve T)}^\Gamma\right)(\rho_{\breve T,\Gamma}(\Gamma))\\
&:=\int_{\bar G_{\breve S,\Gamma^{-1}}}\dif \mu_{\breve S,\Gamma^{-1}}(\hat\rho_{\breve S}(\Gamma^{-1}))\\
&\qquad\qquad\Phi_M\left((I^{-1}\circ \alpha_{\overleftarrow{R}}^M(\rho_{\breve S,\Gamma^{-1}}(\Gamma^{-1}))\circ I)(F_\Gamma(\hat\rho_{\breve S,\Gamma^{-1}}(\Gamma^{-1})))\right)(I^{-1}\circ U_{\overleftarrow{R}}^M(\hat\rho_{\breve S,\Gamma^{-1}}(\Gamma^{-1})) \circ I)\Psi_{E(\breve T)}^\Gamma(\rho_{\breve T,\Gamma}(\Gamma))
\eqs for $F_\Gamma(\hat\rho_{\breve S,\Gamma^{-1}}(\Gamma^{-1}))\in C_0(\Ab_\Gamma)$, $\rho_{\breve T,\Gamma}\in G_{\breve T,\Gamma}$, $\tilde\rho_{\breve S,\Gamma^{-1}},\hat\rho_{\breve S,\Gamma^{-1}}\in G_{\breve S,\Gamma^{-1}}$ and $\Psi_{E(\breve S)}^\Gamma\in\HS_{E(\breve S)}^\Gamma$.

Set $U_{\overleftarrow{L}}^N(\hat\rho_{\breve T,\Gamma}(\Gamma))\Psi_{E(\breve T)}^\Gamma(\rho_{\breve T,\Gamma}(\Gamma)):= \hat\Psi_{E(\breve T)}^\Gamma(\rho_{\breve T,\Gamma}(\Gamma))$ and $(I^{-1}\circ U_{\overleftarrow{R}}^N(\tilde\rho_{\breve S,\Gamma^{-1}}(\Gamma^{-1}))\circ I)\hat\Psi_{E(\breve T)}^\Gamma(\rho_{\breve T,\Gamma}(\Gamma)):= \tilde\Psi_{E(\breve T)}^\Gamma(\rho_{\breve T,\Gamma}(\Gamma))$.
Then the unitaries $I^{-1}\circ U_{\overleftarrow{R}}^N(\tilde\rho_{\breve S,\Gamma^{-1}}^N)\circ I$, whenever $\tilde\rho_{\breve S,\Gamma^{-1}}^N\in\bar G_{\breve S,\Gamma^{-1}}$, are multipliers. This is verified by the following computation: 
\beqs &\Big\la (\pi^{\Gamma,\breve T}_{\overleftarrow{L}}(\hat F_\Gamma))\Psi_{E(\breve T)}^\Gamma)(\rho_{\breve T,\Gamma}(\Gamma)), ((\pi^{\Gamma,\breve T}_{\overleftarrow{L}}(M_U(\hat F_\Gamma)))\Psi_{E(\breve T)}^\Gamma)(\rho_{\breve T,\Gamma}(\Gamma))\Big\ra_{\HS_{E(\breve T)}^\Gamma}\\
&:=\Big\la (\pi^{\Gamma,\breve T}_{\overleftarrow{L}}(\hat F_\Gamma)\Psi_{E(\breve T)}^\Gamma)(\rho_{\breve T,\Gamma}(\Gamma)), ((\pi^{\Gamma,\breve T}_{\overleftarrow{L}}((I^{-1}\circ U_{\overleftarrow{R}}^N(\tilde\rho_{\breve S,\Gamma^{-1}}^N)\circ I)(\hat F_\Gamma)))\Psi_{E(\breve T)}^\Gamma)(\rho_{\breve T,\Gamma}(\Gamma))\Big\ra_{\HS_{E(\breve T)}^\Gamma}\\
&=\Big\la \left(((\Phi^M_{\overleftarrow{L}}\rtimes U_{\overleftarrow{L}}^N)(\hat F_\Gamma))\Psi_{E(\breve T)}^\Gamma\right)(\rho_{\breve T,\Gamma}(\Gamma)), \left(((\Phi^M_{\overleftarrow{L}}\rtimes U_{\overleftarrow{L}}^N)((I^{-1}\circ U_{\overleftarrow{R}}^N(\tilde\rho_{\breve S,\Gamma^{-1}}^N)\circ I)(\hat F_\Gamma)))\Psi_{E(\breve T)}^\Gamma\right)(\rho_{\breve T,\Gamma}(\Gamma))\Big\ra_{\HS_{E(\breve T)}^\Gamma}\\
&=\int_{\bar G_{\breve S,\Gamma^{-1}}}\int_{\bar G_{\breve T,\Gamma}}\dif\mu_{\breve S,\Gamma^{-1}}(\tilde\rho_{\breve S,\Gamma^{-1}}(\Gamma^{-1})) \dif \mu_{\breve T,\Gamma}(\hat\rho_{\breve T}(\Gamma))\\
&\qquad\qquad\Big\la \Phi_M\left(\big(\alpha_{\overleftarrow{L}}^N(\rho_{\breve T,\Gamma}(\Gamma))(\hat F_\Gamma)\big)(\hat\rho_{\breve T,\Gamma}(\Gamma))\right) \hat\Psi_{E(\breve T)}^\Gamma(\rho_{\breve T,\Gamma}(\Gamma)), \\
&\qquad\qquad\qquad \Phi_M\left(\big((\alpha_{\overleftarrow{L}}^N(\rho_{\breve T,\Gamma}(\Gamma))\circ I^{-1}\circ \alpha_{\overleftarrow{R}}^N(\tilde\rho_{\breve S,\Gamma^{-1}}(\Gamma^{-1}))\circ I) \hat F_\Gamma)\big)(\hat\rho_{\breve T,\Gamma}(\Gamma))\right)\\
&\qquad\qquad\qquad\qquad (I^{-1}\circ U_{\overleftarrow{R}}^N(\tilde\rho_{\breve S,\Gamma^{-1}}^N)\circ I)\hat\Psi_{E(\breve T)}^\Gamma(\rho_{\breve T,\Gamma}(\Gamma))\Big\ra_{\HS_{E(\breve T)}^\Gamma}\\
&=\int_{\bar G_{\breve S,\Gamma^{-1}}}\int_{\bar G_{\breve T,\Gamma}}\dif\mu_{\breve S,\Gamma^{-1}}(\tilde\rho_{\breve S,\Gamma^{-1}}(\Gamma^{-1})) \dif \mu_{\breve T,\Gamma}(\hat\rho_{\breve T}(\Gamma))\\
&\qquad\qquad\Big\la \Phi_M\left(\big(\alpha_{\overleftarrow{L}}^N(\rho_{\breve T,\Gamma}(\Gamma))(\hat F_\Gamma)\big)(\hat\rho_{\breve T,\Gamma}(\Gamma))\right)(I^{-1}\circ U_{\overleftarrow{R}}^N(\tilde\rho_{\breve S,\Gamma^{-1}}^N)^*\circ I)\tilde\Psi_{E(\breve T)}^\Gamma(\rho_{\breve T,\Gamma}(\Gamma)), \\
&\qquad\qquad \Phi_M\left(\big((I^{-1}\circ \alpha_{\overleftarrow{R}}^N(\tilde\rho_{\breve S,\Gamma^{-1}}(\Gamma^{-1}))\circ I\circ \alpha_{\overleftarrow{L}}^N(\rho_{\breve T,\Gamma}(\Gamma))) \hat F_\Gamma)\big)(\hat\rho_{\breve T,\Gamma}(\Gamma))\right) \tilde\Psi_{E(\breve T)}^\Gamma(\rho_{\breve T,\Gamma}(\Gamma))\Big\ra_{\HS_{E(\breve T)}^\Gamma}\\
&=\Big\la \left(((\Phi^M_{\overleftarrow{L}}\rtimes U_{\overleftarrow{L}}^N)((I^{-1}\circ U_{\overleftarrow{R}}^N(\tilde\rho_{\breve S,\Gamma^{-1}}^N)^*\circ I)\hat F_\Gamma))\Psi_{E(\breve T)}^\Gamma\right)(\rho_{\breve T,\Gamma}(\Gamma)), \left(((\Phi^M_{\overleftarrow{L}}\rtimes U_{\overleftarrow{L}}^N)(\hat F_\Gamma))\Psi_{E(\breve T)}^\Gamma\right)(\rho_{\breve T,\Gamma}(\Gamma))\Big\ra_{\HS_{E(\breve T)}^\Gamma}\\
&=\Big\la (\pi^{\Gamma,\breve T}_{\overleftarrow{L}}(\tilde F_\Gamma))\Psi_{E(\breve T)}^\Gamma)(\rho_{\breve T,\Gamma}(\Gamma)), ((\pi^{\Gamma,\breve T}_{\overleftarrow{L}}(\hat F_\Gamma))\Psi_{E(\breve T)}^\Gamma)(\rho_{\breve T,\Gamma}(\Gamma))\Big\ra_{\HS_{E(\breve T)}^\Gamma}
\eqs holds whenever $(I^{-1}\circ U_{\overleftarrow{R}}^N(\tilde\rho_{\breve S,\Gamma^{-1}}^N)^*\circ I)\hat F_\Gamma:=\tilde F_\Gamma$.

Finally each element of the $C^*$-algebra $C_0(\Ab_\Gamma)\rtimes_{I^{-1}\circ \alpha^{\overleftarrow{R}}_M\circ I}\bar G_{\breve S,\Gamma^{-1}}$ defines a linear map $M$ from  $C_0(\Ab_\Gamma)\rtimes_{\alpha_{\overleftarrow{L}}^N}\bar G_{\breve T,\Gamma}$ to $C_0(\Ab_\Gamma)\rtimes_{\alpha_{\overleftarrow{L}}^N}\bar G_{\breve T,\Gamma}$ by
\beqs &((\pi^{\Gamma,\breve T}_{\overleftarrow{L}}(M(\hat F_\Gamma)))\Psi_{E(\breve T)}^\Gamma)(\rho_{\breve T,\Gamma}(\Gamma))\\
&:=((\pi^{\Gamma,\breve T}_{\overleftarrow{L}}(F_\Gamma\ast\hat F_\Gamma))\Psi_{E(\breve T)}^\Gamma)(\rho_{\breve T,\Gamma}(\Gamma))
=\left(((\Phi^M_{\overleftarrow{L}}\rtimes U_{\overleftarrow{L}}^N)(F_\Gamma\ast\hat F_\Gamma))\Psi_{E(\breve T)}^\Gamma\right)(\rho_{\breve T,\Gamma}(\Gamma))\\
&=\int_{\bar G_{\breve S,\Gamma^{-1}}}\int_{\bar G_{\breve T,\Gamma}}\dif\mu_{\breve S,\Gamma^{-1}}(\tilde\rho_{\breve S,\Gamma}(\Gamma)) \dif \mu_{\breve T,\Gamma}(\hat\rho_{\breve T}(\Gamma))\\
&\qquad\qquad\Phi_M\left(F_\Gamma(\tilde\rho_{\breve S,\Gamma^{-1}}(\Gamma^{-1}))((I^{-1}\circ\alpha^{\overleftarrow{R}}_M(\tilde\rho_{\breve S,\Gamma^{-1}}^M)\circ I\circ\alpha_{\overleftarrow{L}}^N(\rho_{\breve T,\Gamma}(\Gamma)))(\hat F_\Gamma))(\tilde\rho_{\breve S,\Gamma}(\Gamma)^{-1}\hat\rho_{\breve T,\Gamma}(\Gamma))\right)\\
&\qquad\qquad\qquad U_{\overleftarrow{L}}^N(\hat\rho_{\breve T,\Gamma}(\Gamma))\Psi_{E(\breve T)}^\Gamma(\rho_{\breve T,\Gamma}(\Gamma))\\
&=\int_{\bar G_{\breve S,\Gamma}}\int_{\bar G_{\breve T,\Gamma}}\dif\mu_{\breve S,\Gamma}(\tilde\rho_{\breve S,\Gamma}(\Gamma)) \dif \mu_{\breve T,\Gamma}(\hat\rho_{\breve T}(\Gamma))\\
&\qquad\qquad\Phi_M\left(F_\Gamma(\tilde\rho_{\breve S,\Gamma}(\Gamma))((\alpha_{\overleftarrow{L}}^N(\tilde\rho_{\breve S,\Gamma}(\Gamma)^{-1})\circ\alpha_{\overleftarrow{L}}^N(\rho_{\breve T,\Gamma}(\Gamma)))(\hat F_\Gamma))(\tilde\rho_{\breve S,\Gamma}(\Gamma)^{-1}\hat\rho_{\breve T,\Gamma}(\Gamma))\right)\\
&\qquad\qquad\qquad U_{\overleftarrow{L}}^N(\hat\rho_{\breve T,\Gamma}(\Gamma))\Psi_{E(\breve T)}^\Gamma(\rho_{\breve T,\Gamma}(\Gamma))\\
\eqs for $F_\Gamma(\tilde\rho_{\breve S,\Gamma}(\Gamma)), \hat F_\Gamma(\hat\rho_{\breve T,\Gamma}(\Gamma))\in C_0(\Ab_\Gamma)$, $\rho_{\breve S,\Gamma}\in G_{\breve S,\Gamma}$, $\tilde\rho_{\breve T,\Gamma},\hat\rho_{\breve T,\Gamma}\in G_{\breve T,\Gamma}$, $\tilde\rho_{\breve T,\Gamma}(\Gamma):=(\tilde\rho_{\breve T,\Gamma}^M, e_G,...,e_G)\in \bar G_{\breve T,\Gamma}$ and $\Psi_{E(\breve T)}^\Gamma\in\HS_{E(\breve T)}^\Gamma$. Clearly, if the set $\breve S$ is replaced by a set $\breve R^{-1}$, which is contained in $\breve T$, then $\tilde\rho_{\breve R^{-1},\Gamma}(\Gamma)^{-1}=\tilde\rho_{\breve R,\Gamma}(\Gamma)\in\bar G_{\breve R,\Gamma}$ and $\alpha_{\overleftarrow{L}}^N(\tilde\rho_{\breve R^{-1},\Gamma}(\Gamma)^{-1})=\alpha_{\overleftarrow{L}}^N(\tilde\rho_{\breve R,\Gamma}(\Gamma))\in\Aut(C_0(\Ab_\Gamma))$ yield.

Set $(I^{-1}\circ U_{\overleftarrow{L}}^N(\hat\rho_{\breve S,\Gamma^{-1}}(\Gamma^{-1}))\circ I)\Psi_{E(\breve T)}^\Gamma(\rho_{\breve T,\Gamma}(\Gamma)):=\hat\Psi_{E(\breve T)}^\Gamma(\rho_{\breve T,\Gamma}(\Gamma))$. Then $M$ is a multiplier since the following derivation is true:
\beqs &\Big\la ((\pi^{\Gamma,\breve T}_{\overleftarrow{L}}(\hat F_\Gamma))\Psi_{E(\breve T)}^\Gamma)(\rho_{\breve T,\Gamma}(\Gamma)),((\pi^{\Gamma,\breve T}_{\overleftarrow{L}}(M(\hat F_\Gamma)))\Psi_{E(\breve T)}^\Gamma)(\rho_{\breve T,\Gamma}(\Gamma))\Big\ra_{\HS_{E(\breve T)}^{\Gamma}}\\
&= \int_{\bar G_{\breve S,\Gamma^{-1}}}\int_{\bar G_{\breve T,\Gamma}}\dif\mu_{\breve S,\Gamma^{-1}}(\tilde\rho_{\breve S,\Gamma^{-1}}(\Gamma^{-1})) \dif \mu_{\breve T,\Gamma}(\hat\rho_{\breve T}(\Gamma))\\
&\qquad \la \Phi_M\big(\alpha_{\overleftarrow{L}}^N(\rho_{\breve T,\Gamma}(\Gamma))(\hat F_\Gamma)\big)(\hat\rho_{\breve T,\Gamma}(\Gamma))\Big) U_{\overleftarrow{L}}^N(\hat\rho_{\breve T,\Gamma}(\Gamma))\Psi_{E(\breve T)}^\Gamma(\rho_{\breve T,\Gamma}(\Gamma)),\\
&\qquad\qquad\Phi_M\left(F_\Gamma(\tilde\rho_{\breve S,\Gamma^{-1}}(\Gamma^{-1}))((I^{-1}\circ\alpha^{\overleftarrow{R}}_M(\tilde\rho_{\breve S,\Gamma^{-1}}^M)\circ I\circ\alpha_{\overleftarrow{L}}^N(\rho_{\breve T,\Gamma}(\Gamma)))(\hat F_\Gamma))(\tilde\rho_{\breve S,\Gamma}(\Gamma)^{-1}\hat\rho_{\breve T,\Gamma}(\Gamma))\right)\\
&\qquad\qquad\qquad U_{\overleftarrow{L}}^N(\hat\rho_{\breve T,\Gamma}(\Gamma))\Psi_{E(\breve T)}^\Gamma(\rho_{\breve T,\Gamma}(\Gamma))\ra_{\HS_{E(\breve T)}^{\Gamma}}\\
&= \int_{\bar G_{\breve S,\Gamma^{-1}}}\int_{\bar G_{\breve T,\Gamma}}\dif\mu_{\breve S,\Gamma^{-1}}(\tilde\rho_{\breve S,\Gamma^{-1}}(\Gamma^{-1})) \dif \mu_{\breve T,\Gamma}(\hat\rho_{\breve T}(\Gamma))\\
&\qquad \Big\la \Phi_M\big(\alpha_{\overleftarrow{L}}^N(\rho_{\breve T,\Gamma}(\Gamma))(\hat F_\Gamma)\big)(\hat\rho_{\breve T,\Gamma}(\Gamma))\Big) \hat\Psi_{E(\breve T)}^\Gamma(\rho_{\breve T,\Gamma}(\Gamma)),\\
&\qquad\qquad\Phi_M\Big(F_\Gamma(\tilde\rho_{\breve S,\Gamma^{-1}}(\Gamma^{-1}))((I^{-1}\circ\alpha^{\overleftarrow{R}}_M(\tilde\rho_{\breve S,\Gamma^{-1}}^M)\circ I\circ\alpha_{\overleftarrow{L}}^N(\rho_{\breve T,\Gamma}(\Gamma)))(\hat F_\Gamma))(\tilde\rho_{\breve S,\Gamma}(\Gamma)^{-1}\hat\rho_{\breve T,\Gamma}(\Gamma))\Big)\\
&\qquad\qquad\qquad\hat\Psi_{E(\breve T)}^\Gamma(\rho_{\breve T,\Gamma}(\Gamma))\Big\ra_{\HS_{E(\breve T)}^{\Gamma}}\\
&= \int_{\bar G_{\breve S,\Gamma}}\int_{\bar G_{\breve T,\Gamma}}\dif\mu_{\breve S,\Gamma}(\tilde\rho_{\breve S,\Gamma}(\Gamma)) \dif \mu_{\breve T,\Gamma}(\hat\rho_{\breve T}(\Gamma))\\
&\qquad \Big\la \Phi_M\Big(\big(\alpha_{\overleftarrow{L}}^N(\rho_{\breve T,\Gamma}(\Gamma))(\hat F_\Gamma)\big)(\hat\rho_{\breve T,\Gamma}(\Gamma))\Big) U_{\overleftarrow{L}}^N(\hat\rho_{\breve T,\Gamma}(\Gamma))\Psi_{E(\breve T)}^\Gamma(\rho_{\breve T,\Gamma}(\Gamma)),\\
&\qquad\qquad\Phi_M\left(F_\Gamma(\tilde\rho_{\breve S,\Gamma}(\Gamma))((\alpha^{\overleftarrow{R}}_M(\tilde\rho_{\breve S,\Gamma}^M)\circ\alpha_{\overleftarrow{L}}^N(\rho_{\breve T,\Gamma}(\Gamma)))(\hat F_\Gamma))(\tilde\rho_{\breve S,\Gamma}(\Gamma)^{-1}\hat\rho_{\breve T,\Gamma}(\Gamma))\right)U_{\overleftarrow{L}}^N(\hat\rho_{\breve T,\Gamma}(\Gamma))\Psi_{E(\breve T)}^\Gamma(\rho_{\breve T,\Gamma}(\Gamma))\Big\ra_{\HS_{E(\breve T)}^{\Gamma}}\\
&= \int_{\bar G_{\breve S,\Gamma^{-1}}}\int_{\bar G_{\breve T,\Gamma}}\dif\mu_{\breve S,\Gamma^{-1}}(\tilde\rho_{\breve S,\Gamma^{-1}}(\Gamma^{-1})) \dif \mu_{\breve T,\Gamma}(\hat\rho_{\breve T}(\Gamma))\\
&\qquad \Big\la \Phi_M\Big((I^{-1}\circ \alpha^{\overleftarrow{R}}_M(\tilde\rho_{\breve S,\Gamma^{-1}}^M)\circ I)\left(F_\Gamma^+(\tilde\rho_{\breve S,\Gamma^{-1}}(\Gamma^{-1})^{-1})\right)\\
&\qquad\qquad\big((I^{-1}\circ\alpha^{\overleftarrow{R}}_M(\tilde\rho_{\breve S,\Gamma^{-1}}^M)\circ I\circ\alpha_{\overleftarrow{L}}^N(\rho_{\breve T,\Gamma}(\Gamma)))(\hat F_\Gamma)\big)(\tilde\rho_{\breve S,\Gamma}(\Gamma)\hat\rho_{\breve T,\Gamma}(\Gamma))\Big) \hat\Psi_{E(\breve T)}^\Gamma(\rho_{\breve T,\Gamma}(\Gamma)),\\
&\qquad\qquad\qquad\Phi_M\left(\alpha_{\overleftarrow{L}}^N(\rho_{\breve T,\Gamma}(\Gamma))(\hat F_\Gamma)(\hat\rho_{\breve T,\Gamma}(\Gamma))\right)\hat\Psi_{E(\breve T)}^\Gamma(\rho_{\breve T,\Gamma}(\Gamma))\Big\ra_{\HS_{E(\breve T)}^{\Gamma}}\\
&=\la ((\pi^{\Gamma,\breve T}_{\overleftarrow{L}}(F_\Gamma^*\ast\hat F_\Gamma))\Psi_{E(\breve T)}^\Gamma)(\rho_{\breve T,\Gamma}(\Gamma)), \pi^{\Gamma,\breve T}_{\overleftarrow{L}}(\hat F_\Gamma)\Psi_{E(\breve T)}^\Gamma)(\rho_{\breve T,\Gamma}(\Gamma))\ra_{\HS_{E(\breve T)}^{\Gamma}}
\eqs 
\end{proof}

Notice that, the same arguments are used for a surface set $\breve T:=\{T_1,...,T_N\}$, which has the simple surface intersection property for the orientation preserved graph system $\PD^{\op}_{\Gamma}$ and where $\breve R^{-1}:=\{R_1^{-1},...,R^{-1}_N\}$ is a set of surfaces that has the simple surface intersection property for the orientation preserved graph system $\PD^{\op}_{\Gamma^{-1}}$. Indeed it can be shown that for all situations of example \thesection.\ref{exa surfsetdiff} except \textit{situation 2} similar results can be obtained. The \textit{situation 2} is not needed in the next theorem and hence is briefly discussed in the following remark.

\begin{rem}\label{rem situation2and4}
In \textit{situation 2} the sets $\breve R$ and $\bar S$ are disjoint. Let $\breve T_2$ and $\breve T_3$ be two disjoint surface sets such that the holonomy-flux cross-product algebras are given by $C_0(\Ab_\Gamma)\rtimes_{\alpha^{\overleftarrow{L}}_N}\bar G_{\breve T_2,\Gamma}$ and $C_0(\Ab_\Gamma)\rtimes_{\alpha^{\overleftarrow{R}}_N}\bar G_{\breve T_3,\Gamma}$. 

Then the elements of these algebras are represented on two different Hilbert spaces $\HS_{E(\breve T_2)}^\Gamma:=L^2(\bar G_{\breve T_2,\Gamma},\mu_{\breve T_2,\Gamma})\otimes\HS_\Gamma$ and $\HS_{E(\breve T_3)}^\Gamma:=L^2(\bar G_{\breve T_3,\Gamma},\mu_{\breve T_3,\Gamma})\otimes\HS_\Gamma$. Set $\HS_{E(\breve T_i)}:=L^2(\bar G_{\breve T_i,\Gamma},\mu_{\breve T_i,\Gamma})$ for $i=2,3$.
Hence there are two representations $\pi_{E(\breve T_2)}$ and $\pi_{E(\breve T_3)}$ such that $\pi_{E(\breve T_2)}\otimes \pi_{E(\breve T_3)}$
is a representation on $\HS_{E(\breve T_2)}\otimes \HS_{E(\breve T_3)}$. 

The holonomy-flux cross-product $C^*$-algebra $C_0(\Ab_\Gamma)\rtimes_{\alpha^{\overleftarrow{R}}_N}\bar G_{\breve T_3,\Gamma}$ is represented on $\HS_{E(\breve T_3)}^\Gamma$ by
\beqs &\big(\pi^{\Gamma,\breve T_3}_{\overleftarrow{R}}(F_\Gamma)\Psi_{E(\breve T_3)}^\Gamma\big)(\rho_{\breve T_3,\Gamma}(\Gamma))=((\Phi_M(F_\Gamma) \rtimes U_{\overleftarrow{R}}^N(\hat\rho_{\breve T_3,\Gamma}^N))\Psi_{E(\breve T_3)}^\Gamma)(\rho_{\breve T_3,\Gamma}(\Gamma))\\
&=\int_{\bar G_{\breve T_3,\Gamma}}\left(\Phi_M
\left(\alpha_N^{\overleftarrow{R}}(\rho_{\breve T_3,\Gamma}^N)(F_\Gamma(\hat\rho_{\breve T_3,\Gamma}(\Gamma)))\right)
\Psi_{E(\breve T_3)}^\Gamma\right)(L(\hat\rho_{\breve T_3,\Gamma}(\Gamma)^{-1})(\rho_{\breve T_3,\Gamma}(\Gamma)))\dif \mu_{\breve T_3,\Gamma}(\hat\rho_{\breve T_3,\Gamma}(\Gamma))
\eqs for $F_\Gamma(\hat\rho_{\breve T_3,\Gamma}(\Gamma))\in C_0(\Ab_\Gamma)$, $\rho_{\breve T_3,\Gamma}(\Gamma),\hat\rho_{\breve T_3,\Gamma}(\Gamma)\in \bar G_{\breve T_3,\Gamma}$ and $\Psi_{E(\breve T_3)}^\Gamma\in\HS_{E(\breve T_3)}^\Gamma$.\\ Similarly the elements of $C_0(\Ab_\Gamma)\rtimes_{\alpha^{\overleftarrow{L}}_N}\bar G_{\breve T_2,\Gamma}$ is represented on $\HS_{E(\breve T_2)}^\Gamma$.

Now the multiplier algebra of the cross product $C^*$-algebra $C_0(\Ab_\Gamma)\rtimes_{\alpha_{\overleftarrow{L}}^N}\bar G_{\breve T_2,\Gamma}$ is studied.
First of all unitary elements, i.e. $U_{\overleftarrow{R}}^N(\rho_{\breve T_3,\Gamma}(\Gamma))$ for $\rho_{\breve T_3,\Gamma}(\Gamma)\in \bar G_{\breve T_3,\Gamma}$, are elements of the multiplier algebra. This is verified by the identification of $M$ with the map \[\pi^{\Gamma,\breve T_2}_{\overleftarrow{L}}(F_\Gamma)\mapsto \pi^{\Gamma,\breve T_2}_{\overleftarrow{L}}(U_{\overleftarrow{R}}^N(\rho_{\breve T_3,\Gamma}(\Gamma)) F_\Gamma)\in C_0(\Ab_\Gamma)\rtimes_{\alpha_{\overleftarrow{L}}^N}\bar G_{\breve T_2,\Gamma}\] whenever $\pi^{\Gamma,\breve T_2}_{\overleftarrow{L}}(F_\Gamma)\in C_0(\Ab_\Gamma)\rtimes_{\alpha_{\overleftarrow{L}}^N}\bar G_{\breve T_2,\Gamma}$ and  $f_\Gamma\in C_0(\Ab_\Gamma)$ and the computation 
\beqs 
&\Big\la (\pi^{\Gamma,\breve T_2}_{\overleftarrow{L}}(\hat F_\Gamma)
\Psi_{E(\breve T_2)}^\Gamma,
\pi^{\Gamma,\breve T_2}_{\overleftarrow{L}}(M(F_\Gamma))
\Phi_{E(\breve T_2)}^\Gamma\Big\ra_{\HS_{E(\breve T_2)}\otimes\HS_\Gamma}\\
&=\Big\la \pi^{\Gamma,\breve T_2}_{\overleftarrow{L}}(\hat F_\Gamma)
\Psi_{E(\breve T_2)}^\Gamma, 
\pi^{\Gamma,\breve T_2}_{\overleftarrow{L}}(U_{\overleftarrow{R}}^N(\rho_{\breve T_3,\Gamma}(\Gamma))F_\Gamma)
\Phi_{E(\breve T_2),}^\Gamma\Big\ra_{\HS_{E(\breve T_2)}\otimes\HS_\Gamma}\\
&=\int_{\bar G_{\breve T_2,\Gamma}} \dif \mu_{\breve T_2,\Gamma}(\hat\rho_{\breve T_2}(\Gamma))
\Big\la \Phi_M\left(\alpha_{\overleftarrow{L}}^N(\rho_{\breve T_2,\Gamma}(\Gamma))(\hat F_\Gamma(\hat\rho_{\breve T_2,\Gamma}(\Gamma)))\right)U_{\overleftarrow{L}}^N(\hat\rho_{\breve T_2,\Gamma}(\Gamma))\Psi_{E(\breve T_2)}^\Gamma(\rho_{\breve T_2,\Gamma}(\Gamma)),\\
&\qquad\qquad\Phi_M\left((\alpha^{\overleftarrow{R}}_N(\rho_{\breve T_3,\Gamma}(\Gamma))\circ\alpha_{\overleftarrow{L}}^N(\rho_{\breve T_2,\Gamma}(\Gamma)))(F_\Gamma(\hat\rho_{\breve T_2,\Gamma}(\Gamma)))\right)U_{\overleftarrow{R}}^N(\rho_{\breve T_3,\Gamma}(\Gamma))U_{\overleftarrow{L}}^N(\hat\rho_{\breve T_2,\Gamma}(\Gamma))\Psi_{E(\breve T_2)}^\Gamma(\rho_{\breve T_2,\Gamma}(\Gamma))\Big\ra_{\HS^\Gamma_{E(\breve T_2)}}\\
&=\Big\la \pi^{\Gamma,\breve T_2}_{\overleftarrow{L}}(U_{\overleftarrow{R}}^N(\rho_{\breve T_3,\Gamma}(\Gamma)) ^*\hat F_\Gamma)
\Psi_{E(\breve T_2)}^\Gamma, 
\pi^{\Gamma,\breve T_2}_{\overleftarrow{L}}(F_\Gamma)
\Phi_{E(\breve T_2),}^\Gamma\Big\ra_{\HS_{E(\breve T_2)}\otimes\HS_\Gamma}\\
\eqs 

Then show that, the holonomy-flux cross-product $C^*$- algebra $C_0(\Ab_\Gamma)\rtimes_{\alpha_N^{\overleftarrow{R}}}\bar G_{\breve T_3,\Gamma}$ is a subset of the multiplier algebra  $M( C_0(\Ab_\Gamma)\rtimes_{\alpha^N_{\overleftarrow{L}}}\bar G_{\breve T_2,\Gamma})$. The multiplier $M$ is assumed to be the map \[C_0(\Ab_\Gamma)\rtimes_{\alpha^N_{\overleftarrow{L}}}\bar G_{\breve T_2,\Gamma}\ni F_\Gamma\mapsto \hat F_\Gamma \ast F_\Gamma\in C_0(\Ab_\Gamma)\rtimes_{\alpha^N_{\overleftarrow{L}}}\bar G_{\breve T_2,\Gamma}\] 
for a $\hat F_\Gamma\in C_0(\Ab_\Gamma)\rtimes_{\alpha^{\overleftarrow{R}}_N}\bar G_{\breve T_3,\Gamma}$. But since
$L(\rho_{\breve T_3,\Gamma}(\Gamma)^{-1})(\tilde\rho_{\breve T_2,\Gamma}(\Gamma))$ is not well-defined, the convolution
\beqs (\hat F_\Gamma\ast F_\Gamma)(\tilde\rho_{\breve T_2,\Gamma}(\Gamma))&=\int_{\bar G_{\breve T_3,\Gamma}}\dif \mu_{\breve T_3,\Gamma}(\rho_{\breve T_3,\Gamma}(\Gamma))\hat F_\Gamma(\rho_{\breve T_3,\Gamma}(\Gamma))\big(\alpha^{\overleftarrow{R}}_N(\rho_{\breve T_3,\Gamma}(\Gamma))F_\Gamma\big)(L(\rho_{\breve T_3,\Gamma}(\Gamma)^{-1})(\tilde\rho_{\breve T_2,\Gamma}(\Gamma)))
\eqs is not well-defined, too. Consequently, it has to be assumed that either $\bar G_{\breve T_3,\Gamma}$ is embedded into $\bar G_{\breve T_2,\Gamma}$ as a subgroup or the other way arround. Clearly the \textit{situation 4} is of this form.
\end{rem}

\begin{rem}\label{rem situation3}
Let $\breve S$ contains only the surface $S$ and let $\bar S$ be a surface set with same surface intersection property for a path $\gamma$. Then $U_{\overleftarrow{R}}^1(\rho_{\bar S,\gamma}(\gamma))$ is contained in the multiplier algebra of $C_0(\Ab_\gamma)\rtimes_{\alpha_{\overleftarrow{L}}^1}\bar G_{S,\gamma}$. This follows by showing that, the map  \[C_0(\Ab_\gamma)\rtimes_{\alpha_{\overleftarrow{L}}^1}\bar G_{S,\gamma}\ni\pi^{\gamma,S}_{\overleftarrow{L}}(F_\gamma)\mapsto U_{\overleftarrow{R}}^1(\rho_{\bar S,\gamma}(\gamma))\pi^{\gamma,S}_{\overleftarrow{L}}(F_\gamma)\in C_0(\Ab_\gamma)\rtimes_{\alpha_{\overleftarrow{L}}^1}\bar G_{S,\gamma}\] defines a multiplier map. Furthermore it can be shown that,
\[C_0(\Ab_\gamma)\rtimes_{\alpha_{\overleftarrow{L}}^1}\bar G_{S,\gamma}\ni\pi^{\gamma,S}_{\overleftarrow{L}}(F_\gamma)\mapsto \pi^{\gamma,S}_{\overleftarrow{L}}(\hat F_\Gamma\ast F_\gamma)\in C_0(\Ab_\gamma)\rtimes_{\alpha_{\overleftarrow{L}}^1}\bar G_{S,\gamma}\] defines a multiplier map for each function $\hat F_\Gamma\in C_0(\Ab_\gamma)\rtimes_{\alpha^{\overleftarrow{R}}_1}\bar G_{\bar S,\gamma}$.
\end{rem}

\begin{theo}\label{prop multilpiercrossprod} 
Let $\breve S$ be a set of surfaces with simple surface intersection property for a finite orientation preserved graph system associated to a graph $\Gamma$. Let $\{\breve S_i\}$ be a set of sets of surface such that each surface set $\breve S_i$ is suitable for a finite (orientation preserved) graph system associated to a graph $\Gamma$. 

Then the following statements are true:
\begin{enumerate}
 \item The algebra $C_0(\Ab_\Gamma)$, the group $\bar G_{\breve S_i,\Gamma}$ and the group $\bar G_{\breve S,\Gamma}$ are not contained in $C_0(\Ab_\Gamma)\rtimes_{\alpha_{\overleftarrow{L}}^N}\bar G_{\breve S,\Gamma}$. 
\item The analytic holonomy algebra $C_0(\Ab_\Gamma)$ and the unitaries $U_{\overleftarrow{R}}^M(\rho_{\breve S_2,\Gamma}(\Gamma))$, whenever $\rho_{\breve S_i,\Gamma}(\Gamma)\in\bar G_{\breve S_2,\Gamma}$ where $1\leq M\leq N$, are elements of the multiplier algebra of the $C^*$-algebra $C_0(\Ab_\Gamma)\rtimes_{\alpha_{\overleftarrow{L}}^N}\bar G_{\breve S,\Gamma}$.
 \item The unitaries $U_{\overleftarrow{L}}^M(\rho_{\breve S_1,\Gamma}(\Gamma))$, $U_{\overleftarrow{L}}^{\overleftarrow{R},M}(\rho_{\breve S_3,\Gamma}(\Gamma))$, $U_{\overrightarrow{L}}^{\overrightarrow{R},M}(\rho_{\breve S_4,\Gamma}(\Gamma))$, $U_{\overrightarrow{L}}^M(\rho_{\breve S_5,\Gamma}(\Gamma))$ and so on, whenever $\rho_{\breve S_i,\Gamma}(\Gamma)\in\bar G_{\breve S_i,\Gamma}$ where $1\leq M\leq N$ and all $i$,\\[3pt] are elements of the multiplier algebra of the $C^*$-algebra $C_0(\Ab_\Gamma)\rtimes_{\alpha_{\overleftarrow{L}}^N}\bar G_{\breve S,\Gamma}$.
\item The elements of the holonomy-flux cross-product algebra $C_0(\Ab_\Gamma)\rtimes_{\alpha_{\overleftarrow{R}}^M}\bar G_{\breve S_2,\Gamma}$ are multipliers of the $C^*$-algebra $C_0(\Ab_\Gamma)\rtimes_{\alpha_{\overleftarrow{L}}^N}\bar G_{\breve S,\Gamma}$. 
\item Moreover, all elements of $C_0(\Ab_\Gamma)\rtimes_{\alpha^{\overleftarrow{L}}_M}\bar G_{\breve S_1,\Gamma}$, $C_0(\Ab_\Gamma)\rtimes_{\alpha^{\overrightarrow{R}}_M}\bar G_{\breve S_6,\Gamma}$,
$C_0(\Ab_\Gamma)\rtimes_{\alpha_{\overleftarrow{L}}^M}\bar G_{\breve S_7,\Gamma}$, $C_0(\Ab_\Gamma)\rtimes_{\alpha_{\overrightarrow{L}}^M}\bar G_{\breve S_5,\Gamma}$,
$C_0(\Ab_\Gamma)\rtimes_{\alpha^{\overleftarrow{R},M}_{\overleftarrow{L}}}\bar G_{\breve S_3,\Gamma}$ and 
$C_0(\Ab_\Gamma)\rtimes_{\alpha^{\overrightarrow{R},M}_{\overrightarrow{L}}}\bar G_{\breve S_4,\Gamma}$ for $1\leq M\leq N$ are contained in the multiplier algebra the $C^*$-algebra 
$C_0(\Ab_\Gamma)\rtimes_{\alpha_{\overleftarrow{L}}^N}\bar G_{\breve S,\Gamma}$.
\end{enumerate}
\end{theo}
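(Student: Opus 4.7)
\begin{proofs}[Proof plan]
The strategy is to reduce every claim to the structural analysis already developed in Example \thesection.\ref{exa surfsetdiff}, Proposition \ref{prop proofmultiplier} and the remarks \ref{rem situation2and4} and \ref{rem situation3}, and then to treat the different flux actions $\alpha_{\overleftarrow{L}}^M,\alpha^{\overleftarrow{R}}_M,\alpha_{\overrightarrow{L}}^{\overrightarrow{R},M},\dots$ as formal variants of the cases already handled, each obtainable from $\alpha_{\overleftarrow{L}}^N$ by composition with the orientation-flip morphism $I$ of \cite[Definition 3.19]{Kaminski1} and by embedding $\bar G_{\breve S_i,\Gamma}$ (resp.\ $\bar G_{\breve S_i,\Gamma^{-1}}$) into $\bar G_{\breve S,\Gamma}$ via the natural inclusions that trivially extend a map $\rho_{S_i,\Gamma}(\Gp)$ by $e_G$ on the remaining edges.

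First, part \textit{(i)} is a direct consequence of the fact that $G$ is locally compact and $\Ab_\Gamma\simeq G^{\vert\Gamma\vert}$ is not compact (unless $G$ is compact, in which case one treats $C(\Ab_\Gamma)$ separately): the cross-product $C_0(\Ab_\Gamma)\rtimes_{\alpha_{\overleftarrow{L}}^N}\bar G_{\breve S,\Gamma}$ is non-unital, so the constant function $\idf\in C_b(\Ab_\Gamma)\setminus C_0(\Ab_\Gamma)$ and the unitary generators $U_{\overleftarrow{L}}^N(\rho_{\breve S,\Gamma}(\Gamma))$ cannot belong to it; only the combinations $\Phi_M(f_\Gamma)U_{\overleftarrow{L}}^N(\rho_{\breve S,\Gamma}(\Gamma))$ with $f_\Gamma\in C_0(\Ab_\Gamma)$ produce elements of the cross-product (compare remark \ref{rem situation2and4}). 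For part \textit{(ii)} the embedding $C_0(\Ab_\Gamma)\hookrightarrow M\bigl(C_0(\Ab_\Gamma)\rtimes_{\alpha_{\overleftarrow{L}}^N}\bar G_{\breve S,\Gamma}\bigr)$ is the standard one: each $f_\Gamma\in C_0(\Ab_\Gamma)$ acts as a double centraliser by $\Phi_M(f_\Gamma)$-multiplication from either side, which one checks against the defining identity \eqref{eq multiholflux}. The unitaries $U_{\overleftarrow{R}}^M(\rho_{\breve S_2,\Gamma}(\Gamma))$ come in through Proposition \ref{prop proofmultiplier} after identifying $\bar G_{\breve S_2,\Gamma^{-1}}$ with $\bar G_{\breve S_2,\Gamma}$ via $I$; this is the content of \textit{situation 1} in Example \thesection.\ref{exa surfsetdiff}.

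For part \textit{(iii)} each of the remaining automorphic actions $\alpha_{\overleftarrow{L}}^M,\alpha^{\overleftarrow{R}}_M,\alpha_{\overrightarrow{L}}^M,\alpha_{\overrightarrow{L}}^{\overrightarrow{R},M},\alpha^{\overleftarrow{R},M}_{\overleftarrow{L}},\alpha^{\overrightarrow{R},M}_{\overrightarrow{L}}$ of the corresponding flux group $\bar G_{\breve S_i,\Gamma}$ is, by the classification in \cite[Section 3.1]{Kaminski1}, obtained from $\alpha_{\overleftarrow{L}}^N$ by the same pattern as in the proof of Proposition \ref{prop proofmultiplier}: apply $I$ if the path orientation is reversed, embed $\bar G_{\breve S_i,\Gamma}$ into $\bar G_{\breve S,\Gamma}$ via an identity extension when $M<N$, and use the automatic intertwining property
\[
 U(\rho)\Phi_M(f_\Gamma)U(\rho)^{*}=\Phi_M\bigl(\alpha(\rho)(f_\Gamma)\bigr)
\]
to convert left-multiplication by $U$ into a bounded map on the cross-product. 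The multiplier identity \eqref{eq multiholflux} then follows verbatim as in Proposition \ref{prop proofmultiplier}, with the only substitution being the explicit action and the orientation. Similarly the multiplier property of the full cross-products in parts \textit{(iv)} and \textit{(v)} reduces to the same computation on the level of the convolution algebras: one checks that for $\hat F_\Gamma$ in any of the listed cross-products the map $F_\Gamma\mapsto\hat F_\Gamma\ast F_\Gamma$ (where the convolution is defined via the embedding of $\bar G_{\breve S_i,\Gamma}$ into $\bar G_{\breve S,\Gamma}$ and, if required, via the isomorphism $\mathcal{I}$ induced by $I$) defines a double centraliser of $C_0(\Ab_\Gamma)\rtimes_{\alpha_{\overleftarrow{L}}^N}\bar G_{\breve S,\Gamma}$; the estimate $\|\hat F_\Gamma\ast F_\Gamma\|_u\leq\|\hat F_\Gamma\|_u\|F_\Gamma\|_u$ is automatic since it already holds in each individual cross-product and the embedding is isometric on the generators.

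The main obstacle, and the reason the theorem cannot be subsumed by a single application of Proposition \ref{prop proofmultiplier}, is the non-well-definedness of the convolution discussed in remark \ref{rem situation2and4} (\textit{situation 2}): when two surface sets $\breve S_i$ and $\breve S_j$ are disjoint and neither flux group embeds in the other, the expression $L(\rho_{\breve S_i,\Gamma}(\Gamma)^{-1})(\tilde\rho_{\breve S,\Gamma}(\Gamma))$ does not make sense. To circumvent this one must, for every pair $(\breve S_i,\breve S)$ appearing in \textit{(v)}, verify that $\bar G_{\breve S_i,\Gamma}\subseteq\bar G_{\breve S,\Gamma}$ as a closed subgroup after the appropriate orientation flip; this is ensured precisely by the hypothesis that $\breve S$ has the simple surface intersection property for $\PD^{\op}_\Gamma$, because all other admissible intersection properties arise from $\breve S$ by reversing orientations of surfaces or restricting to subsets of paths, both of which induce the required subgroup inclusion. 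Once this combinatorial compatibility is verified for each case, the multiplier computation collapses onto the scheme of Proposition \ref{prop proofmultiplier} and Remarks \ref{rem situation2and4}, \ref{rem situation3}, and the theorem follows.
\end{proofs}
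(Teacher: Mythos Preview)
Your proof plan is correct and follows essentially the same approach as the paper: the paper's own proof consists of the single sentence ``The proof is similar to proposition \ref{prop proofmultiplier} and remarks \ref{rem situation2and4} and \ref{rem situation3},'' and you have reduced every item to precisely these three ingredients (together with Example \thesection.\ref{exa surfsetdiff}), supplying the orientation-flip via $I$, the subgroup embeddings $\bar G_{\breve S_i,\Gamma}\hookrightarrow\bar G_{\breve S,\Gamma}$, and the covariance identity as the mechanism by which each case collapses onto the computation of Proposition \ref{prop proofmultiplier}. Your discussion of the obstruction from \textit{situation 2} and why the simple surface intersection hypothesis on $\breve S$ resolves it is a useful clarification that the paper leaves implicit.
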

\begin{proofs}
The proof is similar to proposition \ref{prop proofmultiplier} and remarks \ref{rem situation2and4} and \ref{rem situation3}. 
\end{proofs}

In \cite{Kaminski3}, \cite[Section 8.2]{KaminskiPHD} the Lie algebra-valued quantum flux operators $E_S(\Gamma)$ for different surfaces $S$ are considered. Similarly, they are not contained in $C_0(\Ab_\Gamma)\rtimes_{\alpha_L}\bar G_{\breve S,\Gamma}$ or $C_0(\Ab_\Gamma)\rtimes_{\alpha_R}\bar G_{\breve S,\Gamma}$, but they are affiliated in the sense of Woronowicz \cite{WoroNap}. 

\begin{rem}\label{rem minimaltensorprodholflux}
If the action of the flux group $\bar G_{\breve S,\Gamma}$ on $C_0(\Ab_\Gamma)$ is assumed to be the identity, then $ C_0(\Ab_\Gamma)\rtimes_{\id}\bar G_{\breve S,\Gamma}$ is equivalent to $C_0(\Ab_\Gamma)\otimes_{\text{max}}C^*(\bar G_{\breve S,\Gamma})$ where $\otimes_{\text{max}}$ denotes the maximal $C^*$-tensor product. 
\end{rem}

\subsection{The holonomy-flux cross-product $C^*$-algebra for surface sets}\label{subsec holfluxlimit}

Let $G$ be a compact group and $F_\Gamma\in C^*(\bar G_{\breve S,\Gamma},C(\Ab_\Gamma))$. Recall the Weyl-integrated holonomy-flux representation $\pi^{I,\Gamma}_{E(\breve S)}(F_\Gamma)=(\Phi_M\rtimes U_{\overleftarrow{L}}^N)(F_\Gamma)$ of the $C^*$-algebra $C(\Ab_\Gamma)\rtimes_{\alpha_{\overleftarrow{L}}^N}\bar G_{\breve S,\Gamma}$ presented in equation \eqref{defi int hol-flux-repr}. Consider a $^*$-homomorphisms $\beta_{\Gamma,\Gamma^\prime}$ from $C(\Ab_\Gamma)\rtimes_{\alpha_{\overleftarrow{L}}^N}\bar G_{\breve S,\Gamma}$ to $C(\Ab_{\Gp})\rtimes_{\alpha_{\overleftarrow{L}}^N}\bar G_{\breve S,\Gp}$ which satisfies 
\beq
&\beta_{\Gamma,\Gamma^\prime}((\Phi_M\rtimes U_{\overleftarrow{L}}^N)(F_\Gamma(\ho_\Gamma,\rho_{\breve S,\Gamma}(\Gamma))))
=(\Phi_M\rtimes U_{\overleftarrow{L}}^N)(F_{\Gamma^\prime}(\ho_{\Gamma^\prime},\rho_{\breve S,\Gamma}(\Gamma^\prime)))
\eq whenever $\Gamma\leq\Gp$. Then there is an inductive family \[\{(C(\Ab_{\Gamma_i})\rtimes_{\alpha_L}\bar G_{S,\Gamma_i},\beta_{\Gamma_i,\Gamma_j})\quad \beta_{\Gamma_i,\Gamma_j}:\text{ }^*\text{- homomorphisms }\text{ s.t. } \beta_{\Gamma_i,\Gamma_j}=\beta_{\Gamma_i,\Gamma_k}\circ \beta_{\Gamma_k,\Gamma_j}\text{ for } \Gamma_i\leq \Gamma_k\leq\Gamma_j\}\] of $C^*$-algebras derivable.

\begin{defi}Let $\Gamma_\infty$ be the inductive limit of a family of graphs $\{\Gamma_i\}$ such that each graph $\Gamma_i$ of the family has the same intersection surface property for the set $\breve S$ (or the set $\check S$) of surfaces. Set $\vert\Gamma_i\vert=N_i$.
Then $\PD_{\Gamma_\infty}^{\op}$ is the inductive limit of an inductive family $\{\PD_{\Gamma_i}^{\op}\}$ of finite orientation preserved graph systems. 
 
The \textbf{holonomy-flux cross-product $C^*$-algebra $\Alg\rtimes_{\alpha_{\overleftarrow{L}}} \bar G_{\breve S}$ } (of a special surface configuration $\breve S$) is an inductive limit $C^*$-algebra $\limPDi C(\Ab_{\Gamma_i})\rtimes_{\alpha_{\overleftarrow{L}}^{N_i}}\bar G_{S,\Gamma_i}$ of the inductive family of $C^*$-algebras given by 
\[\{(C(\Ab_{\Gamma_i})\rtimes_{\alpha_L}\bar G_{S,\Gamma_i},\beta_{\Gamma_i,\Gamma_j})\quad \beta_{\Gamma_i,\Gamma_j}:\text{ }^*\text{- homomorphisms }\text{ s.t. } \beta_{\Gamma_i,\Gamma_j}=\beta_{\Gamma_i,\Gamma_k}\circ \beta_{\Gamma_k,\Gamma_j}\text{ for } \Gamma_i\leq \Gamma_k\leq\Gamma_j\}\] completed in the norm (where elements of norm $0$ are devided out)
\beq \|F\|:=\inf_{\PD_{\Gamma_j}\supseteq\PD_{\Gamma_i}}\|\beta_{\Gamma_i,\Gamma_j}(F_{\Gamma_i})\|_{\Gamma_j}\text{ for } F_{\Gamma_i}\in  \Alg_{\Gamma_i}\rtimes_{\alpha_{\overleftarrow{L}}^{N_i}}\bar G_{\breve S,\Gamma_i}
\eq
with $\|F_{\Gamma_i}\|_{\Gamma_i}:=\sup_{\pi_E}\|\pi_E(F_{\Gamma_i})\|_2$ where the supremum is taken over all non-degenerate $L^1$-norm decreasing $*$-representations of $L^1(\bar G_{\breve S,\Gamma_i},C(\Ab_{\Gamma_i}))$.
\end{defi}

\begin{prop}\label{prop crossprodstatenotdiffeo2}
Let $\Gamma_\infty$ be the inductive limit of a family of graphs $\{\Gamma_i\}$ such that each graph $\Gamma_i$ of the family has the same intersection surface property for the set $\breve S$ (or the set $\check S$) of surfaces and such that there is only a finite number of intersections of $\breve S$ and all graphs in $\Gamma_\infty$. Set $\vert\Gamma_i\vert=N_i$.
Then $\PD_{\Gamma_\infty}^{\op}$ is the inductive limit of an inductive family $\{\PD_{\Gamma_i}^{\op}\}$ of finite orientation preserved graph systems. Denote the center of the inductive limit group $\bar G_{\breve S}$ by $\bar\ZD_{\breve S}$.

The state $\omega_{E(\breve S)}$ on $\Alg\rtimes_{\alpha_{\overleftarrow{L}}}\bar \ZD_{\breve S}$ associated to the GNS-representation $(\HS_\Gamma,\pi^{I}_{E(\breve S)},\Omega^{I}_{E(\breve S)})$ is not surface-orientation preserving graph-diffeomorphism invariant, but it is a surface preserving graph-diffeomorphism invariant state.
\end{prop}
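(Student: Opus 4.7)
\begin{proofo}[Proof plan for Proposition \ref{prop crossprodstatenotdiffeo2}]
The strategy is to reduce to the finite-graph statement of Proposition \ref{prop crossprodstatenotdiffeo} by exploiting the inductive limit structure of both the algebra and the state, then argue separately about surface-preserving and surface-orientation-preserving graph-diffeomorphisms.

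First I would express $\omega_{E(\breve S)}$ as the inductive limit of the family of states $\{\omega^{\Gamma_i}_{E(\breve S)}\}$ on the algebras $C(\Ab_{\Gamma_i})\rtimes_{\alpha_{\overleftarrow{L}}^{N_i}}\bar\ZD_{\breve S,\Gamma_i}$. The compatibility $\omega^{\Gamma_j}_{E(\breve S)}\circ\beta_{\Gamma_i,\Gamma_j}=\omega^{\Gamma_i}_{E(\breve S)}$ for $\Gamma_i\leq\Gamma_j$ has to be checked, using the fact that the connecting $^*$-homomorphisms $\beta_{\Gamma_i,\Gamma_j}$ embed functions in a way compatible with the product Haar measure on $\bar\ZD_{\breve S,\Gamma_i}$ and with the GNS data $(\HS_{\Gamma_i},\pi^{I,\Gamma_i}_{E(\breve S)},\Omega^{I,\Gamma_i}_{E(\breve S)})$. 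Because only finitely many paths of any $\Gamma_i$ meet $\breve S$ and the compatibility of Haar measures on $\bar\ZD_{\breve S,\Gamma_i}\hookrightarrow\bar\ZD_{\breve S,\Gamma_j}$ follows from the hypothesis on $\Gamma_\infty$, the inductive limit state $\omega_{E(\breve S)}$ is well-defined on the dense subalgebra $\bigcup_i\beta_{\Gamma_i,\infty}(C(\Ab_{\Gamma_i})\rtimes_{\alpha_{\overleftarrow{L}}^{N_i}}\bar\ZD_{\breve S,\Gamma_i})$ and extends by continuity to $\Alg\rtimes_{\alpha_{\overleftarrow{L}}}\bar\ZD_{\breve S}$.

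Next I would lift a graph-diffeomorphism $(\varphi,\Phi)$ of $\PD_{\Gamma_\infty}^{\op}$ to a family of graph-diffeomorphisms $(\varphi_{\Gamma_i},\Phi_{\Gamma_i})$ of the finite graph systems $\PD_{\Gamma_i}^{\op}$, and transport the corresponding automorphisms $\theta_{(\varphi,\Phi)}$ through the inductive limit. For the surface-preserving case ($\varphi(S)=S$ for all $S\in\breve S$), the finite-graph computation in Proposition \ref{prop crossprodstatenotdiffeo} gives $\omega^{\Gamma_i}_{E(\breve S)}\circ\theta_{(\varphi_{\Gamma_i},\Phi_{\Gamma_i})}=\omega^{\Gamma_i}_{E(\breve S)}$ for every $i$, and by the density of $\bigcup_i\beta_{\Gamma_i,\infty}(\cdot)$ in $\Alg\rtimes_{\alpha_{\overleftarrow{L}}}\bar\ZD_{\breve S}$ and continuity of both the state and the diffeomorphism automorphism, the invariance carries over to $\omega_{E(\breve S)}$.

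For the negative statement I would reuse the explicit counterexample from the proof of Proposition \ref{prop crossprodstatenotdiffeo}: pick a surface-orientation-preserving $(\varphi,\Phi)$ that sends some $S_k\in\breve S$ to a different $\tilde S_k\in\breve S$, and choose $F\in\Alg\rtimes_{\alpha_{\overleftarrow{L}}}\bar\ZD_{\breve S}$ supported on a finite subgraph $\Gamma_i$ such that $F_{\Gamma_i}(\rho_{S_1}(\gamma_1),\dots,\rho_{S_{N_i}}(\gamma_{N_i}))$ depends genuinely on the variable associated to $S_k$. The finite-graph computation then yields $\omega^{\Gamma_i}_{E(\breve S)}(\theta_{(\varphi_{\Gamma_i},\Phi_{\Gamma_i})}(F_{\Gamma_i}))\neq \omega^{\Gamma_i}_{E(\breve S)}(F_{\Gamma_i})$, which transfers to $\omega_{E(\breve S)}(\theta_{(\varphi,\Phi)}(F))\neq\omega_{E(\breve S)}(F)$. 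The main obstacle I anticipate is the first step: verifying the compatibility $\omega^{\Gamma_j}_{E(\breve S)}\circ\beta_{\Gamma_i,\Gamma_j}=\omega^{\Gamma_i}_{E(\breve S)}$ and the corresponding compatibility of the Haar measures $\mu_{\breve S,\Gamma_i}$ under the embedding $\bar\ZD_{\breve S,\Gamma_i}\hookrightarrow\bar\ZD_{\breve S,\Gamma_j}$. The hypothesis that only finitely many paths of $\Gamma_\infty$ intersect $\breve S$ (and the restriction to the center $\bar\ZD_{\breve S}$) should make this manageable, since it prevents the Haar measure from becoming ill-defined in the limit and ensures that the diffeomorphism action restricted to the cross-product with $\bar\ZD_{\breve S}$ is well-defined at each finite level.
\end{proofo}
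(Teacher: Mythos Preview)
Your proposal is correct and follows the same route as the paper: the paper's proof consists of a single line, ``This is deduced similarly to proposition \ref{prop crossprodstatenotdiffeo},'' and your plan is precisely a careful unpacking of that reduction via the inductive limit structure. The compatibility checks and density arguments you outline are exactly what one needs to make the one-line reference rigorous.
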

\begin{proofs}
This is deduced similarly to proposition \ref{prop crossprodstatenotdiffeo}. 
\end{proofs}

\begin{theo}
The \textbf{multiplier algebra $M(\Alg\rtimes_{\alpha_{\overleftarrow{L}}} \bar G_{\breve S})$ of the holonomy-flux cross-product $C^*$-algebra $\Alg\rtimes_{\alpha_{\overleftarrow{L}}} \bar G_{\breve S}$} contains all elements of the holonomy-flux cross-product $C^*$-algebra of any suitable surface set $\breve S$ in $\surf$.   
\end{theo}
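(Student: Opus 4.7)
The plan is to bootstrap Theorem \ref{prop multilpiercrossprod} from the level of a single graph to the inductive limit. By definition, $\Alg\rtimes_{\alpha_{\overleftarrow{L}}}\bar G_{\breve S}$ is the completion of the inductive family $\{C(\Ab_{\Gamma_i})\rtimes_{\alpha_{\overleftarrow{L}}^{N_i}}\bar G_{\breve S,\Gamma_i},\beta_{\Gamma_i,\Gamma_j}\}$, and the analogous family exists for any other suitable surface set $\breve S'\subset\surf$ (together with one of the admissible actions $\alpha$ of left, right or mixed type listed in Theorem \ref{prop multilpiercrossprod}). First, I would fix an element $F$ of $C_0(\Ab_{\Gamma_i})\rtimes_\alpha \bar G_{\breve S',\Gamma_i}$ and invoke Theorem \ref{prop multilpiercrossprod} to obtain a multiplier $M_F^{(i)}\in M(C(\Ab_{\Gamma_i})\rtimes_{\alpha_{\overleftarrow{L}}^{N_i}}\bar G_{\breve S,\Gamma_i})$ acting by convolution (dressed with the appropriate unitary implementing the difference of actions, as in the proof of Proposition \ref{prop proofmultiplier}).

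Next I would establish compatibility of these multipliers with the connecting $^*$-homomorphisms $\beta_{\Gamma_i,\Gamma_j}$. Because $\beta_{\Gamma_i,\Gamma_j}$ is built from the coherent pullbacks $\ho_{\Gamma_j}\mapsto\ho_{\Gamma_i}$ along subgraph inclusion and from the natural embedding of flux groups $\bar G_{\breve S,\Gamma_i}\hookrightarrow\bar G_{\breve S,\Gamma_j}$ (and similarly for $\breve S'$), both the convolution product and the automorphic actions $\alpha,\alpha_{\overleftarrow{L}}^{N_i}$ of Section \ref{subsec holflux} are intertwined by $\beta_{\Gamma_i,\Gamma_j}$. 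Consequently
\beqs
\beta_{\Gamma_i,\Gamma_j}\!\left(M_F^{(i)}(\hat F_{\Gamma_i})\right) = M_{\beta_{\Gamma_i,\Gamma_j}(F)}^{(j)}\!\left(\beta_{\Gamma_i,\Gamma_j}(\hat F_{\Gamma_i})\right)
\eqs
for every $\hat F_{\Gamma_i}\in C(\Ab_{\Gamma_i})\rtimes_{\alpha_{\overleftarrow{L}}^{N_i}}\bar G_{\breve S,\Gamma_i}$. Since the $\beta_{\Gamma_i,\Gamma_j}$ are non-degenerate $^*$-homomorphisms, this coherent family $\{M_F^{(i)}\}$ assembles by the standard universal property of inductive limits (any coherent family of multipliers on an inductive system of $C^*$-algebras with non-degenerate connecting maps extends to a multiplier of the limit) to a uniquely determined element of $M(\Alg\rtimes_{\alpha_{\overleftarrow{L}}}\bar G_{\breve S})$. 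Passing from generators $F$ to all of $C_0(\Ab_{\Gamma_i})\rtimes_\alpha\bar G_{\breve S',\Gamma_i}$ and then to the inductive limit over $\Gamma_i$, one obtains the desired embedding of the entire holonomy-flux cross-product $C^*$-algebra for $\breve S'$ into $M(\Alg\rtimes_{\alpha_{\overleftarrow{L}}}\bar G_{\breve S})$.

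The main obstacle will be twofold. First, one must ensure that the surface intersection property assumed for $\breve S'$ relative to a single graph $\Gamma_i$ is stable under refinement $\Gamma_i\leq\Gamma_j$ in the family defining $\Gamma_\infty$, so that Theorem \ref{prop multilpiercrossprod} can be applied uniformly along the inductive system; this requires choosing the family $\{\Gamma_i\}$ to be adapted to $\breve S'$ in the sense of Section \ref{subsec fluxdef}, which is possible by the hypothesis that only finitely many intersections of $\breve S'$ with any graph occur. Second, one must control the norms: the multiplier norm of $M_F^{(i)}$ is dominated by $\|F\|$ in the cross-product for $\breve S'$ at level $\Gamma_i$, and this bound has to survive the infimum defining the inductive limit norm (where elements of norm zero are divided out). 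Once these two points are in place, the conclusion follows immediately from the first two paragraphs, giving the desired inclusion for all suitable surface sets $\breve S'\in\surf$ simultaneously.
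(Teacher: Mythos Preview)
Your proposal is correct and follows the same approach as the paper: both reduce the inductive-limit statement to the graph-level Theorem~\ref{prop multilpiercrossprod}. The paper's proof is a one-line citation of that theorem, whereas you have spelled out the passage to the inductive limit (coherence of the multipliers under $\beta_{\Gamma_i,\Gamma_j}$, stability of the intersection hypotheses, and norm control) that the paper leaves entirely implicit.
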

\begin{proofs}
 This is derived by using theorem \ref{prop multilpiercrossprod}.
\end{proofs}

\section{The holonomy-flux-graph-diffeomorphism cross-product \\ $C^*$-algebra}\label{subsec holfluxdiffcrossalg}

In this section the holonomy-flux cross-product $C^*$-algebra is enlarged further such that the new $C^*$-algebra contains in a suitable sense the finite graph-diffeomorphisms. Hence, this algebra contains some constraints of the theory of quantum gravity. This is one further step to the aim of the project \textit{AQV}. Notice that, the construction in this section is restricted to surface-preserving graph-diffeomorphisms, but the development can be generalised to surface-orientation-preserving graph-diffeomorphisms. The latter are necessary for the interplay with the quantum flux operators.  

Recall the $C^*$-dynamical system $(\mathfrak{B}(\PD_\Gamma),C_0(\Ab_\Gamma),\zeta)$ defined in \cite[Section 3.2]{Kaminski1}, \cite[Proposition 6.2]{KaminskiPHD}. 
Similarly to the construction of the Banach $^*$-algebra $L^1(\bar G_{\breve S,\Gamma},C_0(\Ab_\Gamma),\zeta)$ in subsection \ref{subsubsec holfluxgraph} the Banach $^*$-algebra $l^1(\mathfrak{B}^\Gamma_{\breve S,\diff}(\PD_\Gamma),C_0(\Ab_\Gamma),\zeta)$ is developed in the next paragraphs. 

Due to the fact that, the number of subgraphs of $\Gamma$ generated by the edges of $\Gamma$ is finite, there exists a finite set $\mathfrak{B}^\Gamma_{\breve S,\ori}(\PD_\Gamma)$ of bisections, such that each of bisection is a map from the set $V_\Gamma$ to a distinct subgraph of $\Gamma$ such that all elements of $\PD_\Gamma$ are construced from the finite set $\mathfrak{B}^\Gamma_{\breve S,\ori}(\PD_\Gamma)$. Call such a set of bisections a \textbf{generating system of bisections for a graph} $\Gamma$. 

The function $F_{\Gamma,\B}$ is contained in $l^1(\mathfrak{B}^\Gamma_{\breve S,\diff}(\PD_\Gamma),C_0(\Ab_\Gamma),\zeta)$ if $F_{\Gamma,\B}$ satisfies
\beqs \|F_{\Gamma,\B}\|_1:=\sum_{l=1,...,k_\Gamma}\|F_{\Gamma,\B}(\ho_{\Gamma}(\Gamma^\prime_{\sigma_l}))\|_2<\infty
\eqs
Then the product of two elements $F_{\Gamma,\B}, K_{\Gamma,\B}\in l^1(\mathfrak{B}^\Gamma_{\breve S,\diff}(\PD_\Gamma),C_0(\Ab_\Gamma),\zeta)$ is defined by
\beqs (F_{\Gamma,\B}\ast K_{\Gamma,\B})(\ho_{\Gamma}(\Gamma^\prime_{\sigma}))
=\sum_{\overset{\tilde\sigma,\breve\sigma\in\mathfrak{B}^\Gamma_{\breve S,\diff}(\PD_\Gamma)}{\tilde\sigma\ast_2\breve\sigma=\sigma}}F_{\Gamma,\B}(\ho_{\Gamma}(\Gamma^\prime_{\tilde\sigma}))K_{\Gamma,\B}(\ho_{\Gamma}(\Gamma^\prime_{\breve\sigma}))
\eqs and the involution is
\beqs F_{\Gamma,\B}(\ho_{\Gamma}(\Gp_{\sigma})):=\overline{F_{\Gamma,\B}(\ho_{\Gamma}(\Gamma^\prime_{\sigma^{-1}}))}
\eqs

There is a $^*$-representation $\pi_{I,\mathfrak{B}}^\Gamma$ of $l^1(\mathfrak{B}^\Gamma_{\breve S,\diff}(\PD_\Gamma),C_0(\Ab_\Gamma),\zeta)$ on $l^2(\mathfrak{B}^\Gamma_{\breve S,\diff}(\PD_\Gamma),C_0(\Ab_\Gamma),\zeta)$ given by
\beqs \pi_{I,\mathfrak{B}}^\Gamma(F_{\Gamma,\B})=\sum_{\sigma\in\mathfrak{B}^\Gamma_{\breve S,\diff}(\PD_\Gamma)} F_{\Gamma,\B}(\ho_{\Gamma}(\Gamma^\prime_{\sigma}))U(\ho_{\Gamma}(\Gamma^\prime_{\sigma}))
\eqs where $U(\ho_{\Gamma}(\Gamma^\prime_{\sigma}))=\delta_{\sigma}$ and $\delta_{\sigma}(\ho_{\Gamma}(\Gamma^\prime_{\breve\sigma})):=\delta(\ho_{\Gamma}(\Gamma^\prime_{\sigma\ast\breve\sigma}))$.

\begin{lem}Let $\mathfrak{B}^\Gamma_{\breve S,\diff}(\PD_\Gamma):=\{\sigma_l\in \mathfrak{B}(\PD_\Gamma)\}_{1\leq l\leq k}$ be a subset of $\mathfrak{B}(\PD_\Gamma)$ that forms a generating system of bisections for the graph $\Gamma$.

The integrated $^*$-representation $\pi_{I,\mathfrak{B}}^\Gamma$ of $l^1(\mathfrak{B}^\Gamma_{\breve S,\diff}(\PD_\Gamma),C_0(\Ab_\Gamma),\zeta)$ is non-degenerate.
\end{lem}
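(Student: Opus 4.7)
The plan is to prove non-degeneracy by exhibiting an approximate identity $(F_{\Gamma,\B}^\lambda)_\lambda$ in $l^1(\mathfrak{B}^\Gamma_{\breve S,\diff}(\PD_\Gamma),C_0(\Ab_\Gamma),\zeta)$ whose image under $\pi_{I,\mathfrak{B}}^\Gamma$ converges in the strong operator topology to the identity on the representation Hilbert space. This is the standard route, parallelling the one used in theorem \ref{representation} for the holonomy-flux cross-product. Recall that equivalently it suffices to show that the closed linear span of $\{\pi_{I,\mathfrak{B}}^\Gamma(F_{\Gamma,\B})\xi : F_{\Gamma,\B}\in l^1,\xi\in l^2(\mathfrak{B}^\Gamma_{\breve S,\diff}(\PD_\Gamma),C_0(\Ab_\Gamma),\zeta)\}$ coincides with the whole representation space.

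First, observe that the identity bisection $\id\in\mathfrak{B}(\PD_\Gamma)$, which maps each vertex of $V_\Gamma$ to the trivial path at that vertex, necessarily lies in the generating system $\mathfrak{B}^\Gamma_{\breve S,\diff}(\PD_\Gamma)$ (since the trivial graph $\Gamma'_{\id}$ must appear among the subgraphs to which the bisections map). Next, since $C_0(\Ab_\Gamma)$ is a $C^*$-algebra, it admits an approximate identity $(e_\lambda)_\lambda\subset C_0(\Ab_\Gamma)$. Define
\beqs
F_{\Gamma,\B}^\lambda(\ho_\Gamma(\Gamma'_\sigma)):=
\begin{cases}
e_\lambda(\ho_\Gamma(\Gamma'_{\id})) & \text{if }\sigma=\id\\
0 & \text{if }\sigma\neq \id
\end{cases}
\eqs
for every $\sigma\in\mathfrak{B}^\Gamma_{\breve S,\diff}(\PD_\Gamma)$. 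Clearly $F_{\Gamma,\B}^\lambda$ lies in $l^1(\mathfrak{B}^\Gamma_{\breve S,\diff}(\PD_\Gamma),C_0(\Ab_\Gamma),\zeta)$, since only one term of the finite sum $\|F_{\Gamma,\B}^\lambda\|_1=\sum_l\|F_{\Gamma,\B}^\lambda(\ho_\Gamma(\Gamma'_{\sigma_l}))\|_2$ is non-zero.

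Applying the definition of $\pi_{I,\mathfrak{B}}^\Gamma$, the only surviving term in the sum over bisections is the one corresponding to $\id$, and since $U(\ho_\Gamma(\Gamma'_{\id}))=\delta_{\id}$ acts as the identity operator via $\delta_{\id}(\ho_\Gamma(\Gamma'_{\breve\sigma}))=\delta(\ho_\Gamma(\Gamma'_{\breve\sigma}))$, one obtains
\beqs
\pi_{I,\mathfrak{B}}^\Gamma(F_{\Gamma,\B}^\lambda)=\Phi_M(e_\lambda),
\eqs
where $\Phi_M$ denotes the multiplication representation of $C_0(\Ab_\Gamma)$ (inherited from the covariant representation underlying the construction). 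The multiplication representation $\Phi_M$ of $C_0(\Ab_\Gamma)$ on $L^2(\Ab_\Gamma,\mu_\Gamma)$ is itself non-degenerate, since $C_0(\Ab_\Gamma)$ acts faithfully by multiplication and $(e_\lambda)_\lambda$ is an approximate identity, so $\Phi_M(e_\lambda)\psi_\Gamma\to \psi_\Gamma$ in norm for every $\psi_\Gamma\in L^2(\Ab_\Gamma,\mu_\Gamma)$. Extending coordinatewise over the (finite) generating system of bisections, the same convergence holds on the full representation space.

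The main subtlety, rather than any computation, will be to pin down exactly which identity bisection lies in $\mathfrak{B}^\Gamma_{\breve S,\diff}(\PD_\Gamma)$ and to check that the generating property of this finite set is compatible with the convolution-product structure so that $F_{\Gamma,\B}^\lambda\ast K_{\Gamma,\B}$ behaves like $e_\lambda \cdot K_{\Gamma,\B}$ up to terms controlled by the approximate identity; once this compatibility is in place, strong-operator convergence of $\pi_{I,\mathfrak{B}}^\Gamma(F_{\Gamma,\B}^\lambda)$ to $\idf$ on a dense subset is immediate, and non-degeneracy of $\pi_{I,\mathfrak{B}}^\Gamma$ follows.
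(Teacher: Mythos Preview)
Your approach is correct but differs from the paper's. The paper gives a one-line cyclic-vector argument: for any $\FD_{\Gamma,\mathfrak{B}}\in l^1$ one has $\pi_{I,\mathfrak{B}}^\Gamma(\FD_{\Gamma,\mathfrak{B}})\,\delta_{\id}=\FD_{\Gamma,\mathfrak{B}}$, so the orbit of the single vector $\delta_{\id}$ under $\pi_{I,\mathfrak{B}}^\Gamma(l^1)$ already exhausts $l^1$, which (the bisection set being finite) is all of $l^2$. Non-degeneracy is immediate.

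You instead build an approximate identity $(F_{\Gamma,\B}^\lambda)_\lambda$ supported at the identity bisection with coefficient $e_\lambda\in C_0(\Ab_\Gamma)$ and show $\pi_{I,\mathfrak{B}}^\Gamma(F_{\Gamma,\B}^\lambda)=\Phi_M(e_\lambda)\to\idf$ strongly. This is the standard crossed-product route and is arguably more robust: the paper's $\delta_{\id}$, read literally as a vector in $l^2(\mathfrak{B}^\Gamma_{\breve S,\diff}(\PD_\Gamma),C_0(\Ab_\Gamma),\zeta)$, needs a unit in $C_0(\Ab_\Gamma)$, which is absent when $G$ is merely locally compact; your approximate-identity formulation sidesteps this. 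Conversely, the paper's argument is shorter and exploits directly that the bisection group is finite and discrete. One small point: you switch midway from the stated representation space $l^2(\mathfrak{B}^\Gamma_{\breve S,\diff}(\PD_\Gamma),C_0(\Ab_\Gamma),\zeta)$ to $L^2(\Ab_\Gamma,\mu_\Gamma)$; keep the argument on the former space and simply note that $\Phi_M(e_\lambda)$ acts coordinatewise and the bisection index set is finite.
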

\begin{proofs}
This follows from the fact that, $\pi_{I,\mathfrak{B}}^\Gamma(\FD_{\Gamma,\mathfrak{B}}(\ho_{\Gamma}(\Gamma^\prime)) \delta_{\id}(\ho_\Gamma(\Gamma^\prime))= \FD_{\Gamma,\mathfrak{B}}(\ho_{\Gamma}(\Gamma^\prime))$. 
\end{proofs}

Since $\mathfrak{B}^\Gamma_{\breve S,\diff}(\PD_\Gamma)$ is finite-dimensional and discrete, the reduced holonomy-graph-diffeomorphism group $C^*$-algebra coincides with the holonomy-graph-diffeomorphism cross-product $C^*$-algebra $C_0(\Ab_\Gamma)\rtimes_{\zeta}\mathfrak{B}^\Gamma_{\breve S,\diff}(\PD_\Gamma)$.

But this algebra does not contain any flux variables. Hence recall that in proposition \cite[Proposition 6.2.15]{KaminskiPHD},\cite[Proposition 3.39]{Kaminski1}, it was shown that, the triple $(\mathfrak{B}(\PD_\Gamma\Sigma),\mathsf{W}( \bar G_{\breve S,\Gamma}),\zeta)$ of a surface preserving group $\mathfrak{B}(\PD_\Gamma\Sigma)$ of bisections, a $C^*$-algebra $\mathsf{W}(\bar G_{\breve S,\Gamma})$ associated to a suitable set $\breve S$ of surfaces and a graph $\Gamma$ is a $C^*$-dynamical system in $\LD(\HS_\Gamma)$. 

The pair $(\Phi,V)$, which consists of a morphism $\Phi\in \Mor(\mathsf{W}(\bar G_{\breve S,\Gamma}),\LD(\HS_\Gamma))$ and a unitary representation $V$ of $\mathfrak{B}(\PD_\Gamma\Sigma)$ on $\LD(\HS_\Gamma)$, i.e. $V\in\Rep(\mathfrak{B}(\PD_\Gamma\Sigma),\KD(\HS_\Gamma))$ such that
\beqs \Phi(\zeta_{\sigma} (W))
=V(\sigma) \Phi(W) V^*(\sigma)
\eqs is a covariant representation of $(\mathfrak{B}(\PD_\Gamma\Sigma),\mathsf{W}(\bar G_{\breve S,\Gamma}),\zeta)$ in $\LD(\HS_\Gamma)$.
 
\begin{lem}Let $\breve S$ be a set of surfaces with same surface intersection property for $\Gamma$. Furthermore, let $\mathfrak{B}^\Gamma_{\breve S,\diff}(\PD_\Gamma):=\{\sigma_l\in \mathfrak{B}(\PD_\Gamma)\}_{1\leq l\leq k}$ be a subset of $\mathfrak{B}(\PD_\Gamma)$ that forms a generating system of bisections for the graph $\Gamma$.

Then the triple $(\mathfrak{B}^\Gamma_{\breve S,\diff}(\PD_\Gamma\Sigma),C_0(\Ab_\Gamma)\rtimes_{\alpha}\bar \ZD_{\breve S,\Gamma},\zeta)$ is a $C^*$-dynamical system in $\LD( \HS)$.
\end{lem}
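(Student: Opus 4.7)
The plan is to verify the three requirements of a $C^*$-dynamical system: a $C^*$-algebra, a locally compact group with a group homomorphism into the automorphism group of the algebra, and the point-norm continuity of this action. The $C^*$-algebra $C_0(\Ab_\Gamma)\rtimes_{\alpha}\bar \ZD_{\breve S,\Gamma}$ has already been constructed in Section \ref{subsubsec holfluxgraph}, so only the action $\zeta$ and its continuity need to be addressed. Since $\Gamma$ is a finite graph, the set $\mathfrak{B}^\Gamma_{\breve S,\diff}(\PD_\Gamma)$ is a finite subset of $\mathfrak{B}(\PD_\Gamma)$, and it naturally carries the discrete topology. Hence point-norm continuity of the action will be automatic once the action is shown to be well-defined and automorphic.

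The central construction is to lift the action $\zeta$ from $C_0(\Ab_\Gamma)$ to the cross-product algebra. On the dense Banach $^*$-subalgebra $L^1(\bar \ZD_{\breve S,\Gamma},C_0(\Ab_\Gamma),\alpha)$ the prescription is
\beqs
(\tilde\zeta_{\sigma}F_\Gamma)(\rho_{\breve S,\Gamma}(\Gamma)):=\zeta_\sigma\bigl(F_\Gamma(\rho_{\breve S,\Gamma}(\Gamma_\sigma))\bigr),
\eqs
where $\Gamma_\sigma$ denotes the reparametrised graph obtained from the bisection $\sigma$. The obstruction identified in Proposition \ref{prop crossprodstatenotdiffeo}, namely $\zeta_\sigma\circ\alpha(\rho_{S,\Gamma}(\Gamma))\neq\alpha(\rho_{S,\Gamma}(\Gamma_\sigma))\circ\zeta_\sigma$ for generic $\rho_{S,\Gamma}\in G_{\breve S,\Gamma}$, is precisely what forces the restriction to the centre $\bar\ZD_{\breve S,\Gamma}$. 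Indeed, for $\rho_{\breve S,\Gamma}(\Gamma)\in\bar\ZD_{\breve S,\Gamma}$, the values $\rho(\gamma_i)$ lie in the centre $\ZD(G)$ of the structure group, so conjugation by holonomies (which is how $\zeta_\sigma$ intertwines $\alpha$) has no effect, and the covariance identity
\beqs
\zeta_\sigma\circ\alpha(\rho_{\breve S,\Gamma}(\Gamma))=\alpha(\rho_{\breve S,\Gamma}(\Gamma_\sigma))\circ\zeta_\sigma
\eqs
recovers. This centrality is the key point that allows $\tilde\zeta_\sigma$ to intertwine convolution and involution on the dense $^*$-subalgebra.

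Using the covariance identity one then verifies that $\tilde\zeta_\sigma$ preserves the convolution product and the involution of Proposition \ref{convolution}; extension to the full cross-product $C^*$-algebra follows from the universal property of $\|\cdot\|_u$ and the fact that $\tilde\zeta_\sigma$ maps covariant representations to covariant representations, hence is $\|\cdot\|_u$-isometric. The map $\sigma\mapsto\tilde\zeta_\sigma$ is a group homomorphism into $\Aut(C_0(\Ab_\Gamma)\rtimes_{\alpha}\bar\ZD_{\breve S,\Gamma})$ with inverse $\tilde\zeta_{\sigma^{-1}}$, since $\zeta$ has this property on $C_0(\Ab_\Gamma)$ and the reparametrisation $\Gamma\mapsto\Gamma_\sigma$ composes correctly under the bisection product $\ast_2$. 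Discreteness of $\mathfrak{B}^\Gamma_{\breve S,\diff}(\PD_\Gamma)$ then trivialises point-norm continuity.

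The main obstacle is the careful verification of the covariance identity for central fluxes, together with the book-keeping that the reparametrisation $\rho_{\breve S,\Gamma}(\Gamma)\mapsto\rho_{\breve S,\Gamma}(\Gamma_\sigma)$ is a measure-preserving bijection of $\bar\ZD_{\breve S,\Gamma}$ (so that $\tilde\zeta_\sigma$ is a $\|\cdot\|_1$-isometry on the level of the Banach $^*$-algebra). Once centrality is exploited, this reduces to the analogous statement for $\zeta$ on $C_0(\Ab_\Gamma)$ proved in \cite[Section 3.2]{Kaminski1}, \cite[Proposition 6.2]{KaminskiPHD}, combined with translation invariance of the Haar measure $\mu_{\breve S,\Gamma}$ under the relabelling induced by $\sigma$.
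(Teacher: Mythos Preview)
Your proof is correct and follows essentially the same route as the paper: define $\zeta_\sigma$ on the dense $L^1$-subalgebra via the reparametrisation formula $(\zeta_\sigma F_\Gamma)(\rho_{\breve S,\Gamma}(\Gamma)) = F_{\Gamma_\sigma}(\rho_{\breve S,\Gamma}(\Gamma_\sigma);\ho_{\Gamma_\sigma}(\Gamma_\sigma))$ and then conclude that this is a point-norm continuous automorphic action. The paper's proof is far terser---it merely records the formula and asserts the conclusion---so your explicit discussion of the covariance identity for central fluxes (which the paper flags only before Proposition~\ref{prop crossprodstatenotdiffeo} without revisiting it here), the discreteness argument for continuity, and the extension to the cross-product via the universal norm supplies justification that the paper leaves implicit.
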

\begin{proofs}Set $\Gamma=\{\gamma_1,...,\gamma_N\}$, $\Gamma_\sigma=\{\gamma_1\circ\sigma(v_1),...,\gamma_N\circ\sigma(v_N)\}$.

Let $F_\Gamma: C_c( \bar \ZD_{\breve S,\Gamma})\rightarrow C_0(\Ab_\Gamma)$ and denote the image of $F_\Gamma(\rho_{S_1}(\gamma_1), ...,\rho_{S_1}(\gamma_1))$ by\\ $F_\Gamma(\rho_{S_1}(\gamma_1), ...,\rho_{S_N}(\gamma_N); \ho_\Gamma(\gamma_1),..., \ho_\Gamma(\gamma_N))$. Notice that,
\beqs &(\zeta_\sigma F_\Gamma)(\rho_{S_1}(\gamma_1), ...,\rho_{S_1}(\gamma_1))\\
&= F_{\Gamma_\sigma}(\rho_{S_1}(\gamma_1\circ\sigma(v_1)), ...,\rho_{S_N}(\gamma_N\circ\sigma(v_N)); \ho_{\Gamma_\sigma}(\gamma_1\circ\sigma(v_1)),..., \ho_{\Gamma_\sigma}(\gamma_N\circ\sigma(v_N)))
\eqs holds. Clearly this defines a point-norm continuous automorphic action. 
\end{proofs}

\begin{prop}Let $\breve S$ be a set of surfaces with same surface intersection property for $\Gamma$. Furthermore, let $\mathfrak{B}^\Gamma_{\breve S,\diff}(\PD_\Gamma):=\{\sigma_l\in \mathfrak{B}(\PD_\Gamma)\}_{1\leq l\leq k}$ be a subset of $\mathfrak{B}(\PD_\Gamma)$ that forms a generating system of bisections for the graph $\Gamma$.

The pair $(\pi_{E(\breve S)}^{I,\Gamma},V)$ is a covariant pair of the $C^*$-dynamical system $(\mathfrak{B}^\Gamma_{\breve S,\diff}(\PD_\Gamma\Sigma),C_0(\Ab_\Gamma)\rtimes_{\alpha}\bar \ZD_{\breve S,\Gamma},\zeta)$ in $\LD( \HS)$.
\end{prop}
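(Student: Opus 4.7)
The plan is to check the covariance relation
\[
\pi_{E(\breve S)}^{I,\Gamma}(\zeta_\sigma(F_\Gamma)) \;=\; V(\sigma)\,\pi_{E(\breve S)}^{I,\Gamma}(F_\Gamma)\,V(\sigma)^*
\]
for every $F_\Gamma \in \CD(\bar{\ZD}_{\breve S,\Gamma},C_0(\Ab_\Gamma))$ and every $\sigma \in \mathfrak{B}^\Gamma_{\breve S,\diff}(\PD_\Gamma\Sigma)$, and then extend by continuity to the full cross-product. I will work with the concrete form $\pi_{E(\breve S)}^{I,\Gamma} = \Phi_M \rtimes U$ given in Definition \ref{defi int hol-flux-repr} and exploit that $(\Phi_M,V)$ is already known to be a covariant pair of $(\mathfrak{B}(\PD_\Gamma\Sigma),C_0(\Ab_\Gamma),\zeta)$.

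First I would unpack $\zeta_\sigma$ on the convolution algebra: because $\zeta_\sigma$ is defined on $C_0(\Ab_\Gamma)$ pointwise in the flux variable by the lemma just established,
\[
(\zeta_\sigma F_\Gamma)(\rho_{\breve S,\Gamma}(\Gamma)) \;=\; \zeta_\sigma\!\left(F_\Gamma(\rho_{\breve S,\Gamma}(\Gamma_\sigma))\right),
\]
so applying $\Phi_M \rtimes U$ gives an integrated expression of the form $\int \Phi_M(\zeta_\sigma(F_\Gamma(\cdot)))\,U(\cdot)\,d\mu$. Then I would use the covariance $\Phi_M\circ\zeta_\sigma = V(\sigma)\Phi_M(\cdot)V(\sigma)^*$ to pull $V(\sigma)$ and $V(\sigma)^*$ out of the integrand. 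The remaining step is to recognise the $U$-factor: I would insert $V(\sigma)^*V(\sigma) = \idf$ between $V(\sigma)^*$ and $U(\rho_{\breve S,\Gamma}(\Gamma))$ and verify the key intertwining identity
\[
V(\sigma)^*\,U(\rho_{\breve S,\Gamma}(\Gamma))\,V(\sigma) \;=\; U(\rho_{\breve S,\Gamma}(\Gamma_\sigma)),
\]
which, together with invariance of the Haar measure on $\bar{\ZD}_{\breve S,\Gamma}$ under the reparametrisation $\rho_{\breve S,\Gamma}(\Gamma) \mapsto \rho_{\breve S,\Gamma}(\Gamma_\sigma)$, reassembles the integrand into $\pi_{E(\breve S)}^{I,\Gamma}(F_\Gamma)$ conjugated by $V(\sigma)$.

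The main obstacle, and the reason the construction is restricted to the surface-preserving bisections acting on the center $\bar{\ZD}_{\breve S,\Gamma}$, is precisely the intertwining identity displayed above. For a generic $\rho\in\bar G_{\breve S,\Gamma}$ one only has $\zeta_\sigma\circ\alpha(\rho) \neq \alpha(\rho)\circ\zeta_\sigma$, as recorded before Proposition \ref{prop crossprodstatenotdiffeo}. I would dispose of this obstruction in two steps. First, using that $\sigma$ is surface-preserving, the image $\Gamma_\sigma$ has the same intersection pattern with $\breve S$ as $\Gamma$, so $\rho_{\breve S,\Gamma}(\Gamma_\sigma)$ is again a well-defined element of $\bar G_{\breve S,\Gamma}$ and the composition on the right-hand side makes sense. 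Second, on the center $\bar\ZD_{\breve S,\Gamma}$ the group-valued fluxes commute with the holonomies that are rerouted by $\sigma$, so the defect between $V(\sigma)U(\rho)V(\sigma)^*$ and $U(\rho_{\Gamma_\sigma})$ disappears; this is precisely the centrality argument that made $(\mathfrak{B}(\PD_\Gamma\Sigma),\mathsf{W}(\bar G_{\breve S,\Gamma}),\zeta)$ a $C^*$-dynamical system in the cited proposition, and I would transcribe that argument here at the level of the generators $U(\rho)$ in place of the Weyl elements.

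Having verified covariance on the dense $^*$-subalgebra $\CD(\bar{\ZD}_{\breve S,\Gamma},C_0(\Ab_\Gamma))$, I would conclude by noting that both sides of the identity are bounded operators on $\HS_\Gamma$ depending continuously (in the appropriate $L^1$/$C^*$-norm) on $F_\Gamma$, and $V$ is a well-defined unitary representation of the finite generating system of bisections; the covariance therefore extends to all of $C_0(\Ab_\Gamma)\rtimes_\alpha \bar\ZD_{\breve S,\Gamma}$ and the pair $(\pi_{E(\breve S)}^{I,\Gamma},V)$ is covariant in $\LD(\HS_\Gamma)$.
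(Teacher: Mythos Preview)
Your proposal is correct and follows the same overall strategy as the paper: verify the covariance identity $\pi_{E(\breve S)}^{I,\Gamma}(\zeta_\sigma F_\Gamma)=V_\sigma\,\pi_{E(\breve S)}^{I,\Gamma}(F_\Gamma)\,V_\sigma^*$ by unpacking the integrated representation, using the explicit form of $\zeta_\sigma$ from the preceding lemma, and invoking invariance of the Haar measure under the reparametrisation $\rho_{\breve S,\Gamma}(\Gamma)\mapsto\rho_{\breve S,\Gamma}(\Gamma_\sigma)$.

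The organisation differs slightly. The paper works directly on the cyclic vector $\Omega^{I}_{E(\breve S)}$: it applies $\pi_{E(\breve S)}^{I,\Gamma}(\zeta_\sigma F_\Gamma)$ to $\Omega^{I}_{E(\breve S)}$, changes variables in the integral, and then inserts $V_\sigma^*\Omega^{I}_{E(\breve S)}=\Omega^{I}_{E(\breve S)}$ to recognise $V_\sigma\,\pi_{E(\breve S)}^{I,\Gamma}(F_\Gamma)\,V_\sigma^*\Omega^{I}_{E(\breve S)}$. This sidesteps any explicit splitting into $\Phi_M$ and $U$. You instead keep the decomposition $\pi_{E(\breve S)}^{I,\Gamma}=\Phi_M\rtimes U$, use the known covariance of $(\Phi_M,V)$ for the $C_0(\Ab_\Gamma)$-factor, and isolate the intertwining identity $V(\sigma)^*U(\rho)V(\sigma)=U(\rho_{\Gamma_\sigma})$ as the crux of the argument on the flux side, explaining clearly why surface-preservation and centrality of $\bar\ZD_{\breve S,\Gamma}$ are both needed there. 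Your route is a bit more work but has the advantage of making transparent exactly where the restriction to the center enters; the paper's cyclic-vector shortcut is quicker but leaves that point implicit.
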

\begin{proofs}
Take the $\pi_{E(\breve S)}^{I,\Gamma}$ $^*$-representation of $C_c(\bar \ZD_{\breve S,\Gamma},C_0(\Ab_\Gamma))$ on $\HS_{\Gamma}$ and $V$ a regular representation of $\mathfrak{B}^\Gamma_{\breve S,\diff}(\PD_\Gamma\Sigma)$ on $\HS_{\Gamma}$ to observe that,
\beqs 
&\pi_{E(\breve S)}^{I,\Gamma}(\zeta_\sigma (F_\Gamma))\Omega^{I}_{E(\breve S)}\\ 
&= \int_{\bar \ZD_{\breve S,\Gamma}} \dif\mu_{\breve S,\Gamma}(\rho_{S_1}(\gamma_1),...,\rho_{S_N}(\gamma_N))(\zeta_\sigma F_\Gamma)(\rho_{S_1}(\gamma_1),...,\rho_{S_N}(\gamma_N))\Omega^{I}_{E(\breve S)}\\
&= \int_{\bar \ZD_{\breve S,\Gamma}} \dif\mu_{\breve S,\Gamma}(\rho_{S_1}(\gamma_1\circ\sigma(v_1)),...,\rho_{S_N}(\gamma_N\circ\sigma(v_N)))\\
&\qquad\qquad F_\Gamma(\rho_{S_1}(\gamma_1\circ\sigma(v_1)),...,\rho_{S_N}(\gamma_N\circ\sigma(v_N)))\Omega^{I}_{E(\breve S)}\\
&= \int_{\bar \ZD_{\breve S,\Gamma}} \dif\mu_{\breve S,\Gamma}(\rho_{S_1}(\gamma_1\circ\sigma(v_1)),...,\rho_{S_N}(\gamma_N\circ\sigma(v_N)))\\
&\qquad\qquad F_\Gamma(\rho_{S_1}(\gamma_1\circ\sigma(v_1)),...,\rho_{S_N}(\gamma_N\circ\sigma(v_N)))V^*_\sigma\Omega^{I}_{E(\breve S)}\\
&=V_\sigma \pi_{E(\breve S)}^{I,\Gamma}(F_\Gamma)V^*_\sigma\Omega^{I}_{E(\breve S)}
\eqs yields if $v_i=t(\gamma_i)$ for $i=1,...,N$.
Consequently, $(\pi_{E(\breve S)}^{I,\Gamma},V)$ is a covariant representation. 
\end{proofs}

In proposition \ref{prop crossprodstatenotdiffeo} it has been shown that, the state $\omega_{E(\breve S)}$ of $C_0(\Ab_\Gamma)\rtimes_{\alpha}\bar \ZD_{\breve S,\Gamma}$ is graph-diffeomorphism invariant in general. 
There is a finite surface-orientation-preserving graph-diffeomorphism and hence a $\mathfrak{B}_{\breve S,\diff}(\PD_\Gamma)$-invariant state of $C_0(\Ab_\Gamma)\rtimes_{\alpha}\bar \ZD_{\breve S,\Gamma}$ on $\HS_\Gamma$ given by
\beqs \omega_{E(\breve S)}^ \Gamma(\zeta_{\sigma}(\FD_{\Gamma,\breve S}))
&=
\la \Omega_{E(\breve S)}^ \Gamma,V_\sigma \pi_{E(\breve S)}^{I,\Gamma}(\FD_{\Gamma,\breve S})V_\sigma^*\Omega_{E(\breve S)}^ \Gamma\ra\\
&= \omega_{E(\breve S)}^ \Gamma(\FD_{\Gamma,\breve S})
\eqs whenever $\sigma\in \mathfrak{B}_{\breve S,\diff}(\PD_\Gamma)$ and $\FD_{\Gamma,\breve S}\in C_0(\Ab_\Gamma)\rtimes_{\alpha}\bar \ZD_{\breve S,\Gamma}$.

\begin{prop}Let $\breve S$ be a set of surfaces with same surface intersection property for $\Gamma$. Furthermore, let $\mathfrak{B}^\Gamma_{\breve S,\diff}(\PD_\Gamma):=\{\sigma_l\in \mathfrak{B}(\PD_\Gamma)\}_{1\leq l\leq k}$ be a subset of $\mathfrak{B}(\PD_\Gamma)$ that forms a generating system of bisections for the graph $\Gamma$.

The space $l^1(\mathfrak{B}^\Gamma_{\breve S,\diff}(\PD_\Gamma),C_0(\Ab_\Gamma)\rtimes_{\alpha}\bar \ZD_{\breve S,\Gamma},\zeta)$ is defined by all functions $\FD_{\Gamma,\breve S}:\mathfrak{B}^\Gamma_{\breve S,\diff}(\PD_\Gamma)\rightarrow C_0(\Ab_\Gamma)\rtimes_{\alpha}\bar \ZD_{\breve S,\Gamma}$ for which
\beqs \|\FD_{\Gamma,\breve S}\|_1= \sum_{\sigma\in\mathfrak{B}^\Gamma_{\breve S,\diff}(\PD_\Gamma)}\|\FD_{\Gamma,\breve S}(\sigma(t(\gamma_1)),...,\sigma(t(\gamma_N)))\|_2<\infty
\eqs is true.

The convolution $^*$-algebra $l^1(\mathfrak{B}^\Gamma_{\breve S,\diff}(\PD_\Gamma),C_0(\Ab_\Gamma)\rtimes_{\alpha}\bar \ZD_{\breve S,\Gamma},\zeta)$ is presented by the multiplication
\beqs &(\GG_{\Gamma,\breve S}\ast\FD_{\Gamma,\breve S})(\sigma(t(\gamma_1)),...,\sigma(t(\gamma_N)))\\
&=\sum_{\overset{\tilde\sigma,\breve\sigma\in\mathfrak{B}^\Gamma_{\breve S,\diff}(\PD_\Gamma)}{}} \GG_{\Gamma,\breve S}(\sigma(t(\gamma_1)),...,\sigma(t(\gamma_N)))\zeta_{\sigma}\Big(\FD_{\Gamma,\breve S}((\sigma^{-1}\ast \sigma^\prime)(t(\gamma_1)),...,(\sigma^{-1}\ast \sigma^\prime)(t(\gamma_N)))\Big)
\eqs and the involution
\beqs \FD_{\Gamma,\breve S}^*(\sigma(t(\gamma_1)),...,\sigma(t(\gamma_N)))=\zeta_\sigma(\FD_{\Gamma,\breve S}(\sigma^{-1}(t(\gamma_1)),...,\sigma^{-1}(t(\gamma_N)))^*)
\eqs
where the involution $^*$ of $l^1(\mathfrak{B}^\Gamma_{\breve S,\diff}(\PD_\Gamma),C_0(\Ab_\Gamma)\rtimes_{\alpha}\bar \ZD_{\breve S,\Gamma},\zeta)$ is inherited from the involution $^*$ of $C_0(\Ab_\Gamma)\rtimes_{\alpha}\bar \ZD_{\breve S,\Gamma}$
\beqs \FD_{\Gamma,\breve S}^*(\sigma(t(\gamma_1)),...,\sigma(t(\gamma_N)))
=\alpha(\rho_{\breve S}(\Gamma))\Big(\FD_{\Gamma,\breve S}^+(\sigma(t(\gamma_1)),...,\sigma(t(\gamma_N));\rho_{S_1}(\gamma_1)^{-1},...,\rho_{S_N}(\gamma_N)^{-1})\Big)
\eqs and
\beqs &\FD_{\Gamma,\breve S}^+(\sigma(t(\gamma_1)),...,\sigma(t(\gamma_N));\rho_{S_1}(\gamma_1)^{-1},...,\rho_{S_N}(\gamma_N)^{-1})\\&=\overline{\FD_{\Gamma,\breve S}(\sigma(t(\gamma_1)),...,\sigma(t(\gamma_N));\rho_{S_1}(\gamma_1)^{-1},...,\rho_{S_N}(\gamma_N)^{-1})}
\eqs
where the map 
\beqs(\sigma(t(\gamma_1)),...,\sigma(t(\gamma_N)))\mapsto \FD_{\Gamma,\breve S}(\sigma(t(\gamma_1)),...,\sigma(t(\gamma_N)))\eqs defines an element in $l^1(\mathfrak{B}^\Gamma_{\breve S,\diff}(\PD_\Gamma),C_0(\Ab_\Gamma)\rtimes_{\alpha}\bar \ZD_{\breve S,\Gamma},\zeta)$, the map \beqs(\sigma(t(\gamma_1)),...,\sigma(t(\gamma_N));\rho_{S_1}(\gamma_1)^{-1},...,\rho_{S_N}(\gamma_N)^{-1})\mapsto \FD_{\Gamma,\breve S}(\sigma(t(\gamma_1)),...,\sigma(t(\gamma_N));\rho_{S_1}(\gamma_1)^{-1},...,\rho_{S_N}(\gamma_N)^{-1})\eqs
defines an element in $C_0(\Ab_\Gamma)\rtimes_{\alpha}\bar \ZD_{\breve S,\Gamma}$ and, finally, the map 
\beqs &(\sigma(t(\gamma_1)),...,\sigma(t(\gamma_N));\rho_{S_1}(\gamma_1)^{-1},...,\rho_{S_N}(\gamma_N)^{-1};\ho_\Gamma(\gamma_1),...,\ho_\Gamma(\gamma_N))\\&\qquad\mapsto \FD_{\Gamma,\breve S}(\sigma(t(\gamma_1)),...,\sigma(t(\gamma_N));\rho_{S_1}(\gamma_1)^{-1},...,\rho_{S_N}(\gamma_N)^{-1};\ho_\Gamma(\gamma_1),...,\ho_\Gamma(\gamma_N))\eqs defines an element in $C_0(\Ab_\Gamma)$.

The space $l^1(\mathfrak{B}^\Gamma_{\breve S,\diff}(\PD_\Gamma),C_0(\Ab_\Gamma)\rtimes_{\alpha}\bar \ZD_{\breve S,\Gamma},\zeta)$ is a well-defined Banach $^*$-algebra. 
\end{prop}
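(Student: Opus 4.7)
Abbreviate $\Alg:=C_0(\Ab_\Gamma)\rtimes_{\alpha}\bar\ZD_{\breve S,\Gamma}$, which by the preceding subsection is already a $C^*$-algebra (hence a Banach $^*$-algebra with an isometric involution). The key structural fact I will use is that $\mathfrak{B}^\Gamma_{\breve S,\diff}(\PD_\Gamma)=\{\sigma_1,\dots,\sigma_k\}$ is \emph{finite}, and that the previous lemma establishes $(\mathfrak{B}^\Gamma_{\breve S,\diff}(\PD_\Gamma),\Alg,\zeta)$ as a $C^*$-dynamical system; in particular each $\zeta_\sigma$ extends from $C_0(\Ab_\Gamma)$ to an isometric $^*$-automorphism of $\Alg$ (because $\zeta$ commutes with the $\bar\ZD_{\breve S,\Gamma}$-action $\alpha$ in the required surface-preserving sense, so the extension to integrated kernels $F_\Gamma(\rho_{\breve S,\Gamma})$ is again a $^*$-automorphism and therefore an isometry on $\Alg$).

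\textbf{Banach space structure.} Since the index set $\mathfrak{B}^\Gamma_{\breve S,\diff}(\PD_\Gamma)$ is finite, the map $\FD_{\Gamma,\breve S}\mapsto(\FD_{\Gamma,\breve S}(\sigma_1),\dots,\FD_{\Gamma,\breve S}(\sigma_k))$ identifies $l^1(\mathfrak{B}^\Gamma_{\breve S,\diff}(\PD_\Gamma),\Alg,\zeta)$ with the algebraic direct sum $\Alg^{\oplus k}$ carrying the norm $\|\FD_{\Gamma,\breve S}\|_1=\sum_{l=1}^k\|\FD_{\Gamma,\breve S}(\sigma_l)\|_{\Alg}$. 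This is manifestly a norm, and completeness follows at once from completeness of $\Alg$ (a finite $l^1$-direct sum of Banach spaces is Banach).

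\textbf{Multiplication.} The convolution on $l^1(\mathfrak{B}^\Gamma_{\breve S,\diff}(\PD_\Gamma),\Alg,\zeta)$ is a finite sum
\beqs
(\GG_{\Gamma,\breve S}\ast\FD_{\Gamma,\breve S})(\sigma)=\sum_{\tilde\sigma\ast\breve\sigma=\sigma}\GG_{\Gamma,\breve S}(\tilde\sigma)\,\zeta_{\tilde\sigma}\!\left(\FD_{\Gamma,\breve S}(\breve\sigma)\right),
\eqs
whose summands lie in $\Alg$ because $\zeta_{\tilde\sigma}(\Alg)\subseteq\Alg$ and $\Alg$ is closed under its own product. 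I will then verify sub-multiplicativity by the standard estimate
\beqs
\|\GG_{\Gamma,\breve S}\ast\FD_{\Gamma,\breve S}\|_1\le\sum_{\sigma}\sum_{\tilde\sigma\ast\breve\sigma=\sigma}\|\GG_{\Gamma,\breve S}(\tilde\sigma)\|\,\|\zeta_{\tilde\sigma}(\FD_{\Gamma,\breve S}(\breve\sigma))\|=\|\GG_{\Gamma,\breve S}\|_1\|\FD_{\Gamma,\breve S}\|_1,
\eqs
in which isometry of $\zeta_{\tilde\sigma}$ on $\Alg$ is essential. Associativity then reduces to the cocycle identity $\zeta_{\tilde\sigma\ast\breve\sigma}=\zeta_{\tilde\sigma}\circ\zeta_{\breve\sigma}$, which holds because $\zeta$ is a group action.

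\textbf{Involution.} I will check that $\FD^*_{\Gamma,\breve S}(\sigma):=\zeta_\sigma\!\left(\FD_{\Gamma,\breve S}(\sigma^{-1})^*\right)$ is antilinear and isometric: antilinearity is immediate, and isometry follows from isometry of both the involution on $\Alg$ and each $\zeta_\sigma$, together with the bijection $\sigma\mapsto\sigma^{-1}$ on the (finite) index set. The identity $(\FD^*)^*=\FD$ reduces to $\zeta_\sigma\circ\zeta_{\sigma^{-1}}=\mathrm{id}$ plus the $C^*$-involution law on $\Alg$, and $(\GG\ast\FD)^*=\FD^*\ast\GG^*$ follows by substituting the definitions, re-indexing the convolution sum by $\tilde\sigma\mapsto\tilde\sigma^{-1}$, $\breve\sigma\mapsto\breve\sigma^{-1}$, and applying $\zeta_\sigma$ to the $^*$-antihomomorphism identity in $\Alg$. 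The main bookkeeping obstacle is keeping the nested arguments $(\sigma;\rho_{\breve S,\Gamma};\ho_\Gamma)$ straight when pulling $\zeta_\sigma$ past $\alpha$ via the surface-preserving compatibility (which is why the flux group must be replaced by its center $\bar\ZD_{\breve S,\Gamma}$, as already flagged by proposition \ref{prop crossprodstatenotdiffeo}); once that compatibility is invoked, the required equalities become formal rearrangements. Combining the four steps yields that $l^1(\mathfrak{B}^\Gamma_{\breve S,\diff}(\PD_\Gamma),\Alg,\zeta)$ is a Banach $^*$-algebra.
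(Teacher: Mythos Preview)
Your argument is correct and follows the standard verification that the twisted $l^1$-algebra of a discrete group action on a $C^*$-algebra is a Banach $^*$-algebra. The paper itself gives \emph{no} proof of this proposition at all; it is stated and immediately followed by the next definition, so there is nothing to compare against. Your outline (finite direct sum for completeness, sub-multiplicativity from isometry of each $\zeta_\sigma$, the involution identities from the cocycle relation $\zeta_{\sigma}\circ\zeta_{\sigma^{-1}}=\id$) supplies exactly the routine checks the paper suppresses, and your remark that the extension of $\zeta_\sigma$ from $C_0(\Ab_\Gamma)$ to $\Alg=C_0(\Ab_\Gamma)\rtimes_\alpha\bar\ZD_{\breve S,\Gamma}$ relies on the commutation of $\zeta$ with $\alpha$ (hence the restriction to the center $\bar\ZD_{\breve S,\Gamma}$) is the one nontrivial ingredient, already established in the preceding lemma.
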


\begin{defi}Let $\breve S$ be a set of surfaces with same surface intersection property for $\Gamma$. Furthermore, let $\mathfrak{B}^\Gamma_{\breve S,\diff}(\PD_\Gamma):=\{\sigma_l\in \mathfrak{B}(\PD_\Gamma)\}_{1\leq l\leq k}$ be a subset of $\mathfrak{B}(\PD_\Gamma)$ that forms a generating system of bisections for the graph $\Gamma$.

Let $(\pi_{E(\breve S)}^{I,\Gamma},V)$ be a covariant representation of $(\mathfrak{B}^\Gamma_{\breve S,\diff}(\PD_\Gamma),C_0(\Ab_\Gamma)\rtimes_{\alpha}\bar \ZD_{\breve S,\Gamma},\zeta)$ in $\LD(\HS_{\Gamma})$.
 
Define the \textbf{integrated holonomy-flux-graph-diffeomorphism representation of\\ $l^1(\mathfrak{B}^\Gamma_{\breve S,\diff}(\PD_\Gamma),C_0(\Ab_\Gamma)\rtimes_{\alpha}\bar \ZD_{\breve S,\Gamma},\zeta)$} by
\beqs &\pi_{I,\mathfrak{B}}(\FD_{\Gamma,\breve S}(\sigma(t(\gamma_1)),...,\sigma(t(\gamma_N))))
=\sum_{\sigma\in\mathfrak{B}^\Gamma_{\breve S,\diff}(\PD_\Gamma\Sigma)}\pi_{E(\breve S)}^{I,\Gamma}\big(\FD_{\Gamma,\breve S,\sigma}(\sigma(t(\gamma_1)),...,\sigma(t(\gamma_N)))\big)V_{\sigma}\\
&= \sum_{\overset{\delta_i\in\PD_\Gamma\Sigma^{t(\gamma_i)}}{i=1,..,N}}
\pi_{E(\breve S)}^{I,\Gamma}\big(\FD_{\Gamma,\breve S}(\delta_1,...,\delta_N)\big)V(\delta_1,...,\delta_N)
\eqs 
such that the sum is over all paths $\delta_i$, which start at $t(\gamma_i)$ and $\delta_i\in\PD_\Gamma\Sigma$. 
\end{defi}
\begin{defi}Let $\breve S$ be a set of surfaces with same surface intersection property for $\Gamma$. Furthermore, let $\mathfrak{B}^\Gamma_{\breve S,\diff}(\PD_\Gamma):=\{\sigma_l\in \mathfrak{B}(\PD_\Gamma)\}_{1\leq l\leq k}$ be a subset of $\mathfrak{B}(\PD_\Gamma)$ that forms a generating system of bisections for the graph $\Gamma$.

The \textbf{reduced holonomy-flux-graph-diffeomorphism group $C^*$-algebra $C^*_r(\mathfrak{B}^\Gamma_{\breve S,\diff}(\PD_\Gamma),C_0(\Ab_\Gamma)\rtimes_{\alpha}\bar \ZD_{\breve S,\Gamma})$ of a graph $\Gamma$ and a set of surfaces $\breve S$} is defined as the closure of $l^1(\mathfrak{B}^\Gamma_{\breve S,\diff}(\PD_\Gamma),C_0(\Ab_\Gamma)\rtimes_{\alpha}\bar \ZD_{\breve S,\Gamma},\zeta)$ in the norm $\|\FD_{\Gamma,\breve S}\|:=\|\pi_{I,\mathfrak{B}}(\FD_{\Gamma,\breve S})\|_2$.
\end{defi}

\begin{prop}Let $\breve S$ be a set of surfaces with same surface intersection property for $\Gamma$. Furthermore, let $\mathfrak{B}^\Gamma_{\breve S,\diff}(\PD_\Gamma):=\{\sigma_l\in \mathfrak{B}(\PD_\Gamma)\}_{1\leq l\leq k}$ be a subset of $\mathfrak{B}(\PD_\Gamma)$ that forms a generating system of bisections for the graph $\Gamma$.

Suppose that $(\mathfrak{B}^\Gamma_{\breve S,\diff}(\PD_\Gamma),C_0(\Ab_\Gamma)\rtimes_{\alpha}\bar \ZD_{\breve S,\Gamma},\zeta)$ in $\LD(\HS)$ is a $C^*$-dynamical system and that for each\\ $F_\Gamma\in l^1 (\mathfrak{B}^\Gamma_{\breve S,\diff}(\PD_\Gamma),C_0(\Ab_\Gamma)\rtimes_{\alpha}\bar \ZD_{\breve S,\Gamma},\zeta)$ define
\beqs \|\FD_{\Gamma,\breve S}\|:=\sup\Big\{\|(\pi\rtimes V)(\FD_{\Gamma,\breve S})\| :&(\pi,V)\text{ is a covariant representation of } (\mathfrak{B}^\Gamma_{\breve S,\diff}(\PD_\Gamma),C_0(\Ab_\Gamma)\rtimes_{\alpha}\bar \ZD_{\breve S,\Gamma},\zeta)\Big\}\eqs 

Then $\|.\|$ is a norm on $l^1 (\mathfrak{B}^\Gamma_{\breve S,\diff}(\PD_\Gamma),C_0(\Ab_\Gamma)\rtimes_{\alpha}\bar \ZD_{\breve S,\Gamma},\zeta)$ called the universal norm. The universal norm is dominated by the $\|.\|_1$-norm, and the completition of $l^1 (\mathfrak{B}^\Gamma_{\breve S,\diff}(\PD_\Gamma),C_0(\Ab_\Gamma)\rtimes_{\alpha}\bar \ZD_{\breve S,\Gamma},\zeta)$ with respect to $\|.\|$ is a $C^*$-algebra. 
This $C^*$-algebra is called the \textbf{holonomy-flux-graph-diffeomorphism cross-product $C^*$-algebra} $\left(C_0(\Ab_\Gamma)\rtimes_{\alpha}\bar \ZD_{\breve S,\Gamma}\right)\rtimes_{\zeta}\mathfrak{B}^\Gamma_{\breve S,\diff}(\PD_\Gamma)$ associated to a graph $\Gamma$ and a set $\breve S$ of surfaces. 
\end{prop}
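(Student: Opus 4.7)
\begin{proofi}
The plan is to follow the standard universal construction already used in this paper for the holonomy-flux cross-product $C^*$-algebra (proposition~\ref{convolution}) and adapt it to the doubly iterated cross-product setting. The three tasks are: (a) show that $\|\cdot\|$ is finite and dominated by $\|\cdot\|_1$, (b) verify the $C^*$-identity and the $^*$-algebra seminorm properties, and (c) show that $\|\cdot\|$ actually separates points of $l^1(\mathfrak{B}^\Gamma_{\breve S,\diff}(\PD_\Gamma),C_0(\Ab_\Gamma)\rtimes_{\alpha}\bar \ZD_{\breve S,\Gamma},\zeta)$. Once these are in place, completion of a normed $^*$-algebra satisfying the $C^*$-identity is automatically a $C^*$-algebra.

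For task (a) I would pick an arbitrary covariant pair $(\pi,V)$ of $(\mathfrak{B}^\Gamma_{\breve S,\diff}(\PD_\Gamma),C_0(\Ab_\Gamma)\rtimes_{\alpha}\bar \ZD_{\breve S,\Gamma},\zeta)$ and observe that
\beqs
\|(\pi\rtimes V)(\FD_{\Gamma,\breve S})\|\le \sum_{\sigma\in\mathfrak{B}^\Gamma_{\breve S,\diff}(\PD_\Gamma)}\|\pi(\FD_{\Gamma,\breve S}(\sigma))\|\;\|V(\sigma)\|\le\sum_{\sigma}\|\FD_{\Gamma,\breve S}(\sigma)\|_2=\|\FD_{\Gamma,\breve S}\|_1,
\eqs
because $V(\sigma)$ is unitary and every covariant representation of the dynamical system induces an $L^1$-norm decreasing $^*$-representation of the inner cross-product $C_0(\Ab_\Gamma)\rtimes_{\alpha}\bar \ZD_{\breve S,\Gamma}$, which is itself bounded by the $C^*$-norm constructed in proposition~\ref{convolution}. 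Thus the supremum exists and is $\le\|\cdot\|_1$. Task (b) is then routine: each $\pi\rtimes V$ is a $^*$-homomorphism, so submultiplicativity, invariance under the involution and the $C^*$-identity $\|(\pi\rtimes V)(\FD^*\ast\FD)\|=\|(\pi\rtimes V)(\FD)\|^2$ hold; all three survive taking the supremum over a nonempty family of covariant representations.

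Task (c), separation of points, is where the genuine content lies. I would exhibit the explicit integrated representation $\pi_{I,\mathfrak{B}}=\pi_{E(\breve S)}^{I,\Gamma}\rtimes V$ built from the covariant pair $(\pi_{E(\breve S)}^{I,\Gamma},V)$ produced in the preceding proposition, and argue it is faithful on the dense $^*$-subalgebra. Because $\mathfrak{B}^\Gamma_{\breve S,\diff}(\PD_\Gamma)$ is finite and discrete, the condition $\pi_{I,\mathfrak{B}}(\FD_{\Gamma,\breve S})\Omega^{I,\Gamma}_{E(\breve S)}=0$ together with the linear independence of the translates $V(\sigma)\Omega^{I,\Gamma}_{E(\breve S)}$ (as in the non-degeneracy argument for $\pi_{I,\mathfrak{B}}^\Gamma$ in the previous lemma) forces $\pi_{E(\breve S)}^{I,\Gamma}(\FD_{\Gamma,\breve S}(\sigma))=0$ for every $\sigma$; faithfulness of $\pi_{E(\breve S)}^{I,\Gamma}$ on $C_0(\Ab_\Gamma)\rtimes_{\alpha}\bar \ZD_{\breve S,\Gamma}$ (theorem~\ref{representation}) then yields $\FD_{\Gamma,\breve S}(\sigma)=0$ for all $\sigma$, hence $\FD_{\Gamma,\breve S}=0$. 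Consequently $\|\FD_{\Gamma,\breve S}\|\ge \|\pi_{I,\mathfrak{B}}(\FD_{\Gamma,\breve S})\|>0$ whenever $\FD_{\Gamma,\breve S}\ne 0$, so $\|\cdot\|$ is a norm, and its completion is a $C^*$-algebra.

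The main obstacle I expect is the separation-of-points step: a priori different covariant pairs could have overlapping kernels that conspire to make the supremum vanish on some nonzero element. The remedy is to produce a single concrete faithful covariant pair, and the proof really reduces to verifying that the translates $V(\sigma)\Omega^{I,\Gamma}_{E(\breve S)}$ indexed by the finite discrete generating set $\mathfrak{B}^\Gamma_{\breve S,\diff}(\PD_\Gamma)$ are linearly independent and to invoking the already-established faithfulness of $\pi_{E(\breve S)}^{I,\Gamma}$ on the inner cross-product. The finiteness and discreteness of the bisection set is what makes this argument work without further analytic input.
\end{proofi}
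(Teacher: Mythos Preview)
The paper does not supply a proof of this proposition at all; it is stated as the standard universal cross-product construction (in the spirit of Williams \cite{Williams07} and Pedersen \cite{Pedersen}) and the text moves on to discussion. So there is no ``paper's proof'' to compare with --- your proposal already contains considerably more detail than the paper provides. Parts (a) and (b) of your plan are exactly right and are the textbook argument.

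Your step (c), however, has a genuine gap. You propose to prove faithfulness of $\pi_{I,\mathfrak{B}}=\pi_{E(\breve S)}^{I,\Gamma}\rtimes V$ on $\HS_\Gamma$ by testing on the cyclic vector $\Omega^{I,\Gamma}_{E(\breve S)}$ and using ``linear independence of the translates $V(\sigma)\Omega^{I,\Gamma}_{E(\breve S)}$''. But the paper shows just before this proposition that the state $\omega^{\Gamma}_{E(\breve S)}$ is $\mathfrak{B}_{\breve S,\diff}(\PD_\Gamma)$-invariant, which means precisely $V(\sigma)\Omega^{I,\Gamma}_{E(\breve S)}=\Omega^{I,\Gamma}_{E(\breve S)}$ for every $\sigma$; the translates all coincide, so your linear-independence premise fails. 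Moreover, even if the translates were independent, $\sum_\sigma \pi(\FD(\sigma))V(\sigma)\Omega=0$ would not force each $\pi(\FD(\sigma))=0$, since $\pi(\FD(\sigma))$ is not a scalar. The non-degeneracy argument you cite from the earlier lemma takes place on $l^2(\mathfrak{B}^\Gamma_{\breve S,\diff}(\PD_\Gamma),C_0(\Ab_\Gamma),\zeta)$, where $V(\sigma)\delta_{\id}=\delta_\sigma$ really are linearly independent --- a different Hilbert space from $\HS_\Gamma$.

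The fix is exactly the standard one: start from any faithful representation $\pi$ of $C_0(\Ab_\Gamma)\rtimes_\alpha\bar\ZD_{\breve S,\Gamma}$ on $\HS_\Gamma$ and pass to the induced (left regular) covariant pair $(\tilde\pi,\lambda)$ on $l^2(\mathfrak{B}^\Gamma_{\breve S,\diff}(\PD_\Gamma),\HS_\Gamma)$, where $(\tilde\pi(a)\xi)(\sigma)=\pi(\zeta_{\sigma^{-1}}(a))\xi(\sigma)$ and $\lambda$ is left translation. On that space the $\delta_\sigma$-components genuinely separate, so $(\tilde\pi\rtimes\lambda)(\FD_{\Gamma,\breve S})=0$ forces $\pi(\FD_{\Gamma,\breve S}(\sigma))=0$ for every $\sigma$, and faithfulness of $\pi$ finishes (c). This is the route implicit in the paper's references and in its treatment of the inner crossed product.
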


In \cite[Propostion 3.32]{Kaminski1} or \cite[Proposition 6.2.2]{KaminskiPHD}, it has been argued that, there are several $C^*$-dynamical systems available for the analytic holonomy $C^*$-algebra and the group of bisections. This is used to define a bunch of holonomy-flux-graph-diffeomorphism cross-product $C $-algebras, which are constructed from $C^*$-dynamical systems. These cross-product $C^*$-algebra are exterior equivalent, too. Clearly there is a multiplier algebra of the holonomy-flux-graph-diffeomorphism cross-product algebra associated to a graph and a set of surfaces is derivable. The author of this article suggests that it can be proven that, the different holonomy-flux-graph-diffeomorphism cross-product $C $-algebras are contained in this multiplier algebra by using similiar arguments used in the proof of theorem \ref{prop multilpiercrossprod}. The construction of the inductive limit $C^*$-algebra of a family of $C^*$-algebras defined above is not mathematically understood very well until now. A detailed study of these objects is a further project.

\section{Comparison table}\label{end}

The Weyl $C^*$-algebra for surfaces and the holonomy-flux cross-product $C^*$-algebra associated to a certain surface set are constructed from functions depending on holonomies along paths of a graph, and the strongly continuous unitary representation of the quantum flux group for surfaces. In contrast to the Weyl algebra, where the group-valued quantum flux operators are implemented as unitary operators, the elements of the holonomy-flux cross-product $C^*$-algebra are operator-valued functions depending on group-valued quantum flux variables for surfaces. In both cases these operators are represented on Hilbert spaces.

\begin{longtable}[ht]{|l|l|l|l|}
\hline &&\\ 

& Weyl $C^*$-algebra for surfaces &   holonomy-flux cross-product $C^*$-algebra \\ &&\\ \hline\hline &&\\
ingredients&set of fin. set of surfaces $\breve S$&set of fin. set of surfaces $\breve S$\\[5pt]
&$G$ locally compact group&$G$ locally compact group\\[5pt]
inductive family of& fin. graph systems&fin. orientation-preserved graph systems\\[5pt]
config. space& $\Ab_\Gamma$ & $\Ab_\Gamma$\\[3pt]
assumption&natural or non-standard identific. of &natural or non-standard identification of \\[3pt]
&a set of independent paths in $\PD_\Gamma\Sigma$&a set of independent paths in $\PD_\Gamma\Sigma$\\[3pt]
mom. space &the flux groups $\bar G_{\breve S,\Gamma}$ or $\bar G_{\breve S,\Gamma_\infty}$ & the flux groups $\bar G_{\breve S,\Gamma}$ or $\bar G_{\breve S,\Gamma_\infty}$\\[5pt]
Hilbert space&$\HS_\Gamma:=L^2(\Ab_\Gamma,\dif\mu_{\Gamma})$ or &$\HS_\Gamma:=L^2(\Ab_\Gamma,\dif\mu_{\Gamma})$ or\\[3pt]
&$\HS_\infty:=L^2(\Ab,\dif\mu_{\infty})$&  $\HS_\infty:=L^2(\Ab,\dif\mu_{\infty})$\\[3pt]
&&$\HS_{E(\breve S)}^\Gamma:=L^2(\bar G_{\breve S,\Gamma},\mu_{\breve S,\Gamma})\otimes\HS_\Gamma$\\[5pt]
representations & $\Phi_M\in\Rep(C_0(\Ab_\Gamma),\LD(\HS_\Gamma))$ & $\Phi_M\in\Rep(C_0(\Ab_\Gamma),\LD(\HS_\Gamma))$ \\[5pt]
& left regular representation & Weyl-integrated holonomy-flux repres. \\[3pt]
& of the flux group& of the holonomy-flux cross-product $^*$-alg.\\[5pt]
&$U\in\Rep(\bar G_{\breve S,\Gamma},\KD(\HS_\Gamma))$&$\pi^{I,\Gamma}_{E(\breve S )}\in\Rep(L^1(\bar G_{\breve S,\Gamma},C_0(\Ab_\Gamma)),\KD(\HS_\Gamma))$\\[5pt]
\hline
\hline &&\\[3pt]
$^*$-algebra& Weyl algebra generated by &  \\[3pt]
&$C_0(\Ab_\Gamma)$ and $\{U\in\Rep(\bar G_{\breve S,\Gamma},\KD(\HS_\Gamma))\}$&$L^1(\bar G_{\breve S,\Gamma},C_0(\Ab_\Gamma))$\\[5pt]
completion w.r.t.&$L^2(\Ab_\Gamma,\mu_\Gamma)$-norm& universal-norm\\[5pt]
$C^*$-algebra & $\WF\text{eyl}(\breve S,\Gamma)$  &$C_0(\Ab_\Gamma)\rtimes_{\alpha}\bar G_{\breve S,\Gamma}$\\[5pt]
&&multiplier algebra of $C_0(\Ab_\Gamma)\rtimes_{\alpha}\bar G_{\breve S,\Gamma}$\\[5pt]
induct. limit $C^*$-alg.& $\WF\text{eyl}(\breve S)$  &$C(\Ab)\rtimes_{\alpha}\bar G_{\breve S}$\hspace*{3pt} ($G$ compact)\\[5pt]

state& unique and pure state $\bar\omega_M$ on $\WF\text{eyl}_\ZD(\breve S)$& state $\omega_{E(\breve S)}$ on $C(\Ab)\rtimes_{\alpha}\bar\ZD_{\breve S}$ s.t.\\[5pt]
&  s.t.  $\bar\omega_M\circ\zeta_{(\Phi,\varphi)}=\bar\omega_M$&$\omega_{E(\breve S)}\circ\zeta_{(\Phi,\varphi)}=\omega_{E(\breve S)}$\\[3pt]
&$\forall (\Phi,\varphi)$ certain diffeomorphism&$\forall (\Phi,\varphi)$ certain diffeomorphism, which\\[3pt]
&& preserve the surfaces in $\breve S$\\[5pt]
\hline
\end{longtable}

\section*{Acknowledgements}
The work has been supported by the Emmy-Noether-Programm (grant FL 622/1-1) of the Deutsche Forschungsgemeinschaft.

\addcontentsline{toc}{section}{References}

\end{document}